\newcommand{\qedsymb}{\hfill{\rule{2mm}{2mm}}}
\newenvironment{proof}[1][]{\begin{trivlist}
\item[\hspace{\labelsep}{\bf\noindent Proof#1:\/}] }{\qedsymb\end{trivlist}}
\mathchardef\ordinarycolon\mathcode`\:     
\def\vcentcolon{\mathrel{\mathop\ordinarycolon}} \begingroup
\active \lowercase{\endgroup \let :\vcentcolon }
\newcommand{\ignore}[1]{}
\newtheorem{theorem}{Theorem}[section]
\newtheorem{definition}[theorem]{Definition}
\newtheorem{claim}[theorem]{Claim}
\newtheorem{lemma}[theorem]{Lemma}
\newtheorem{coro}[theorem]{Corollary}
\newcommand{\smfrac}[2]{\mbox{$\frac{#1}{#2}$}}
\newcommand{\ket}[1]{|#1\rangle}
\newcommand{\bra}[1]{\langle#1|}
\newcommand{\ketbra}[2]{|#1\rangle\langle#2|}
\newcommand{\integer}{\mathbb{Z}}
\newcommand{\QMA}{{\sf{QMA}}}
\newcommand{\QMAEXP}{{\sf{QMA_{EXP}}}}
\newcommand{\BQEXP}{{\sf{BQEXP}}}
\newcommand{\NEXP}{{\sf{NEXP}}}
\newcommand{\EXP}{{\sf{EXP}}}
\newcommand{\BQP}{{\sf{BQP}}}
\newcommand{\NTIME}{{\sf{NTIME}}}
\newcommand{\Pclass}{{\sf{P}}}
\newcommand{\NP}{{\sf{NP}}}
\newcommand{\PSPACE}{{\sf{PSPACE}}}
\newcommand{\EXPSPACE}{{\sf{EXPSPACE}}}
\newcommand{\mysymbol}[1]{{\mbox{\raisebox{-0.3em}{\epsfysize=1.2em\epsfbox{#1}}}}}
\newcommand{\leftend}{\mysymbol{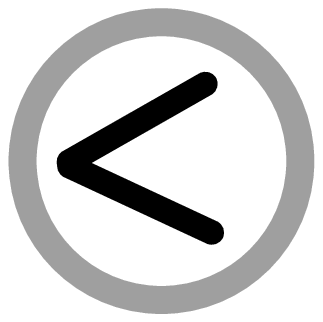}}
\newcommand{\rightend}{\mysymbol{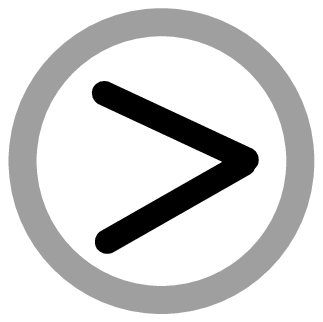}}
\newcommand{\variable}{\mysymbol{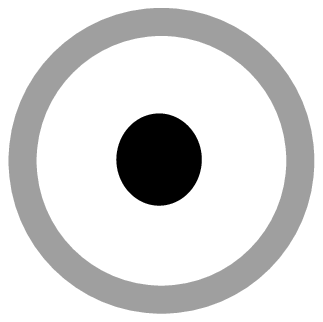}}
\newcommand{\arrR}{\mysymbol{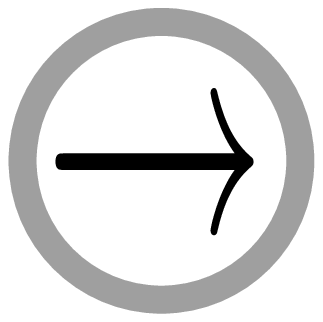}}
\newcommand{\arrRzero}{\mysymbol{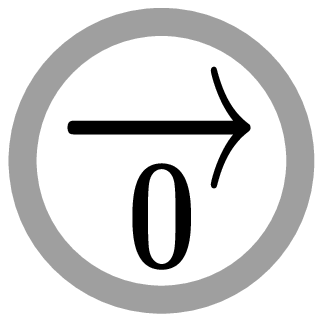}}
\newcommand{\arrRone}{\mysymbol{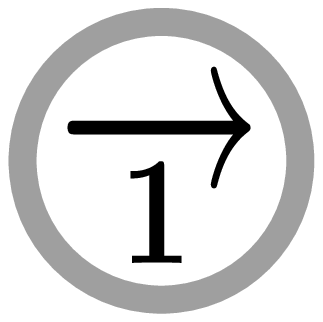}}
\newcommand{\arrRtwo}{\mysymbol{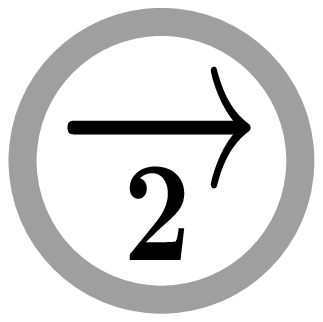}}
\newcommand{\arrL}{\mysymbol{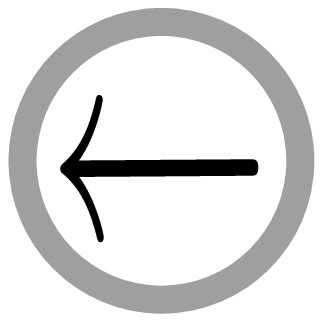}}
\newcommand{\arrLzero}{\mysymbol{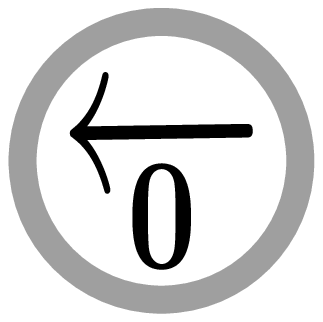}}
\newcommand{\arrLone}{\mysymbol{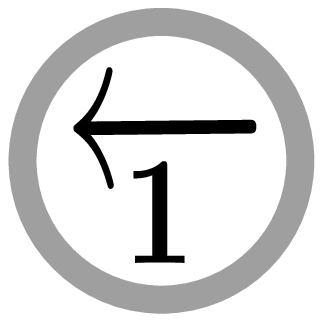}}
\newcommand{\arrLtwo}{\mysymbol{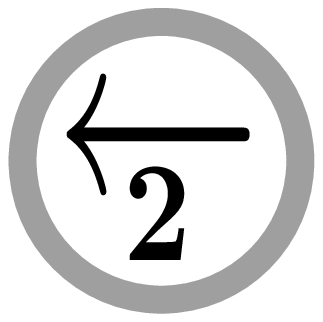}}
\newcommand{\zero}{\mysymbol{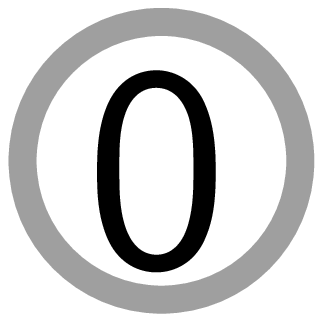}}
\newcommand{\zeroB}{\mysymbol{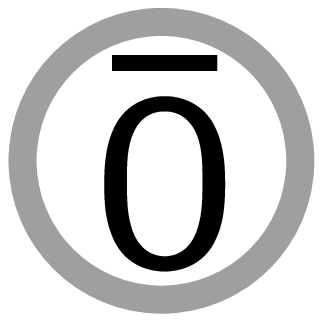}}
\newcommand{\one}{\mysymbol{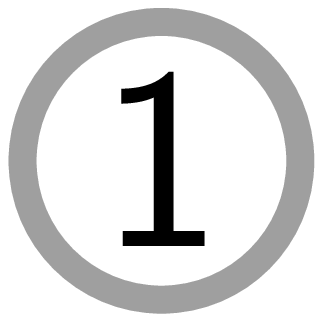}}
\newcommand{\oneB}{\mysymbol{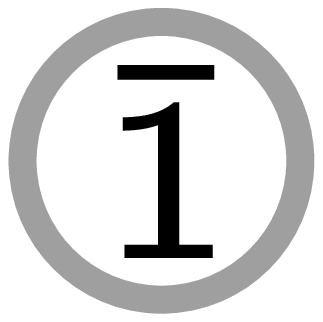}}
\newcommand{\two}{\mysymbol{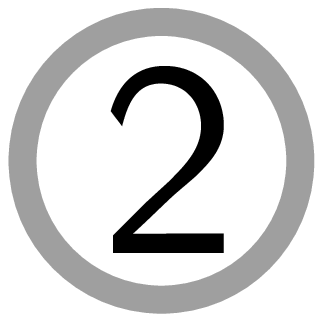}}
\newcommand{\blankR}{\mysymbol{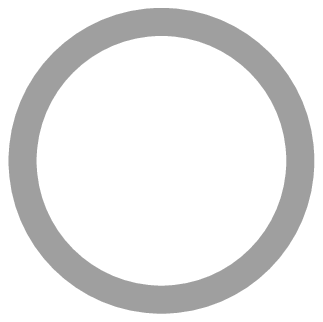}}
\newcommand{\blankL}{\mysymbol{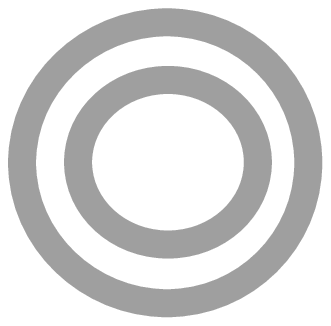}}
\newcommand{\stateS}{\mysymbol{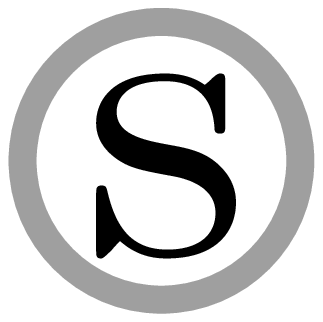}}
\newcommand{\stateA}{\mysymbol{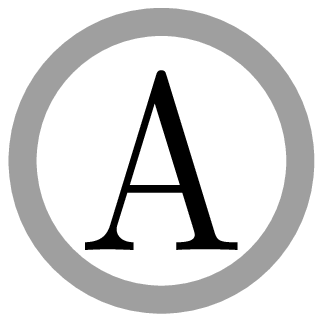}}
\newcommand{\stateB}{\mysymbol{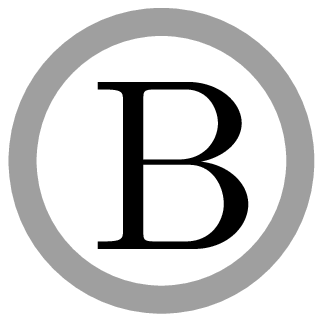}}
\newcommand{\stateLine}{\mysymbol{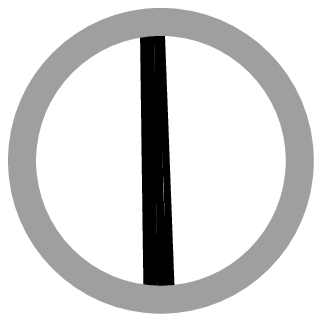}}
\newcommand{\arrRA}{\mysymbol{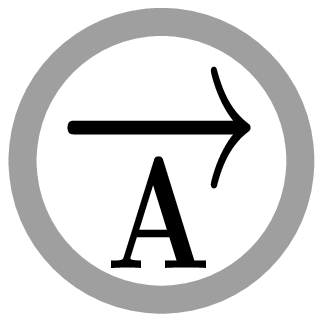}}
\newcommand{\arrRB}{\mysymbol{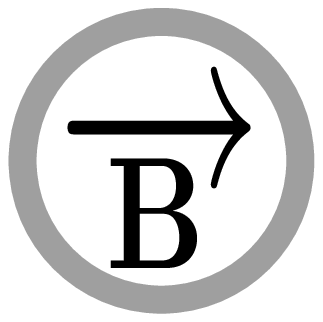}}
\newcommand{\arrLA}{\mysymbol{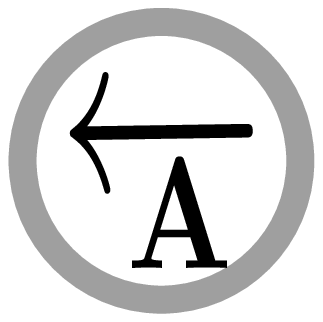}}
\newcommand{\arrLB}{\mysymbol{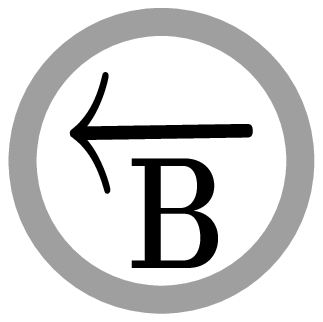}}
\newcommand{\arrLAzero}{\mysymbol{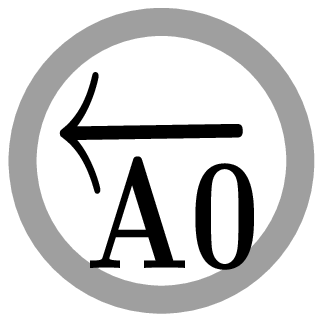}}
\newcommand{\arrLBzero}{\mysymbol{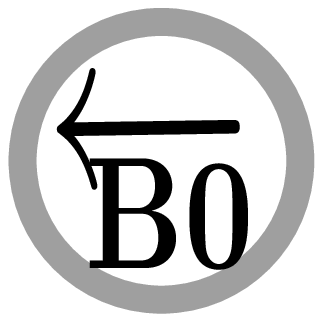}}
\newcommand{\arrLAone}{\mysymbol{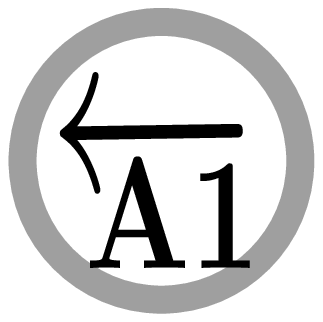}}
\newcommand{\arrLBone}{\mysymbol{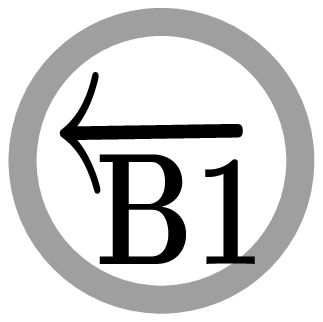}}
\newcommand{\arrLAtwo}{\mysymbol{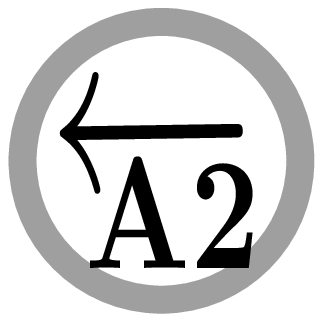}}
\newcommand{\arrLBtwo}{\mysymbol{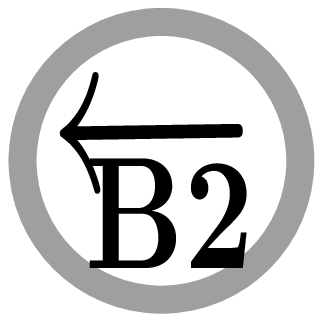}}
\newcommand{\arrRAzero}{\mysymbol{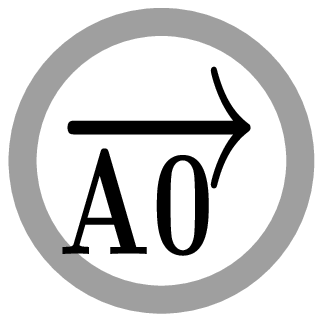}}
\newcommand{\arrRBzero}{\mysymbol{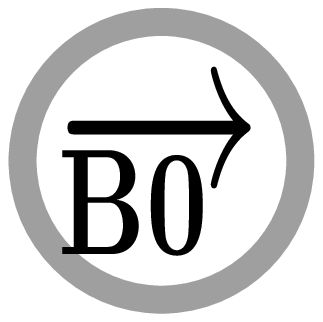}}
\newcommand{\arrRAtwo}{\mysymbol{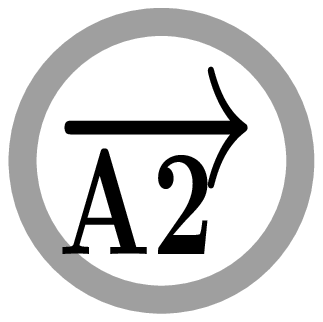}}
\newcommand{\zeroVerA}{\mysymbol{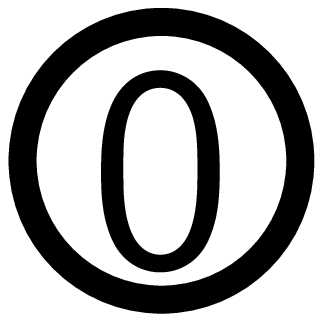}}
\newcommand{\oneVerA}{\mysymbol{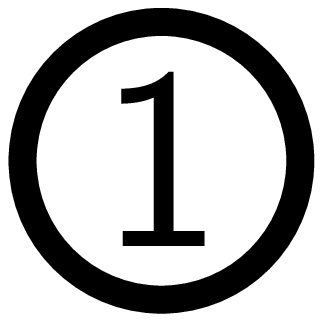}}
\newcommand{\twoVerA}{\mysymbol{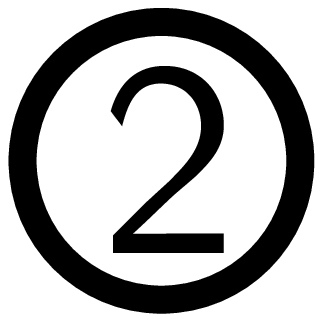}}
\newcommand{\zeroVerB}{\mysymbol{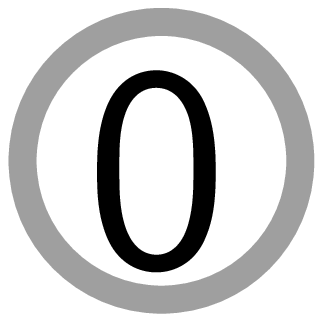}}
\newcommand{\oneVerB}{\mysymbol{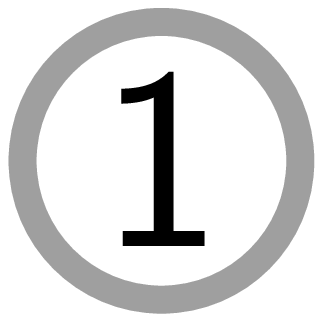}}
\newcommand{\twoVerB}{\mysymbol{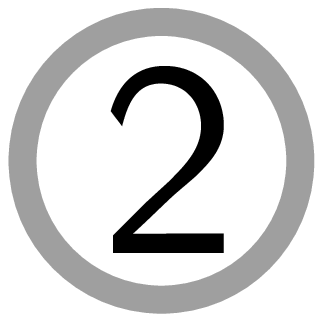}}
\newcommand{\tileC}{\mysymbol{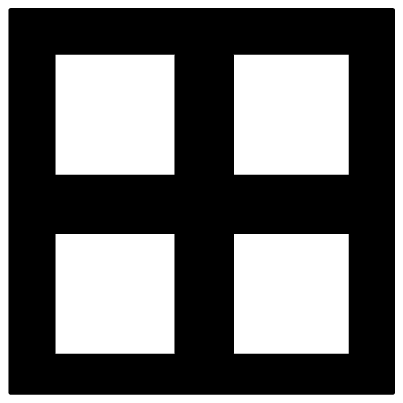}}       
\newcommand{\tileV}{\mysymbol{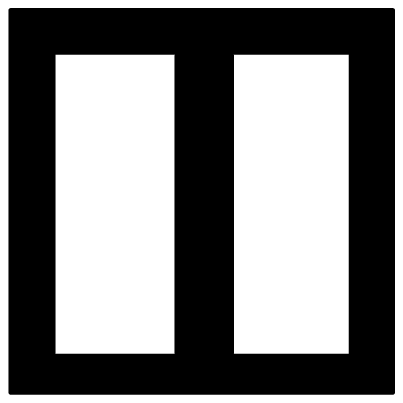}}       
\newcommand{\tileH}{\mysymbol{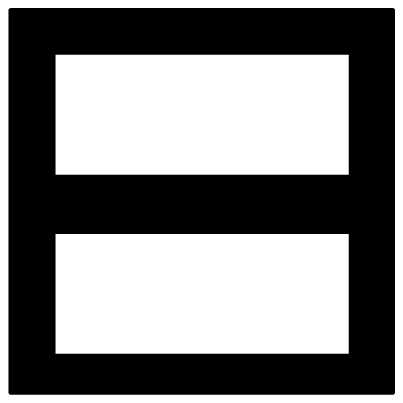}}       
\newcommand{\tileW}{\mysymbol{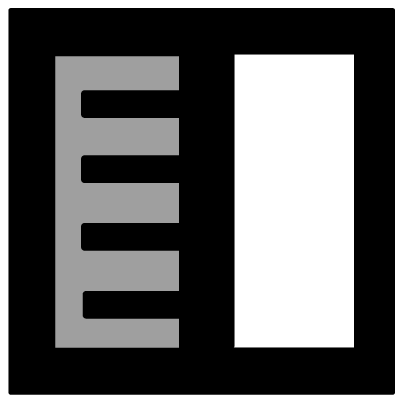}}      
\newcommand{\tileN}{\mysymbol{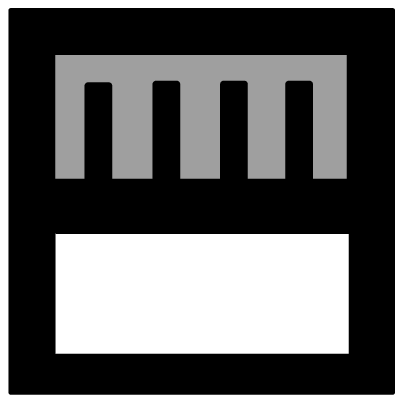}}       
\newcommand{\tileS}{\mysymbol{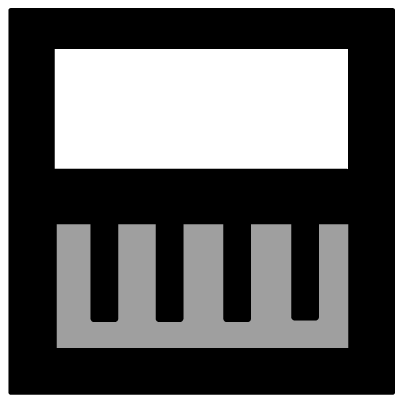}}       
\newcommand{\tileE}{\mysymbol{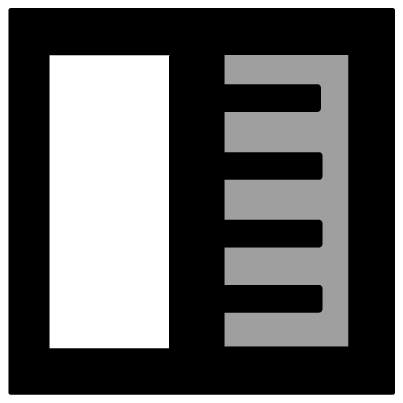}}       
\newcommand{\tileNW}{\mysymbol{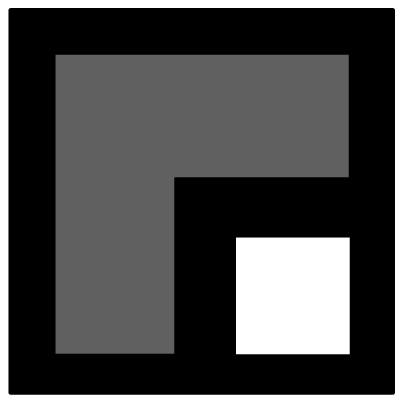}}     
\newcommand{\tileNE}{\mysymbol{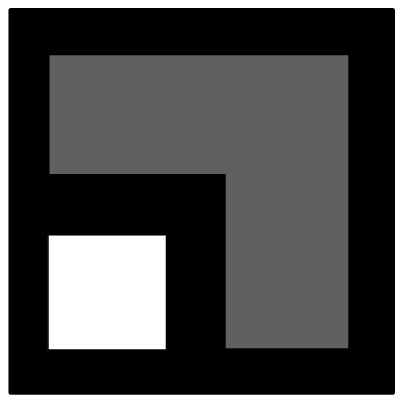}}     
\newcommand{\tileSE}{\mysymbol{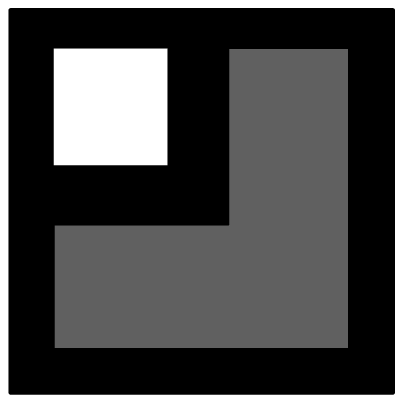}}      
\newcommand{\tileSW}{\mysymbol{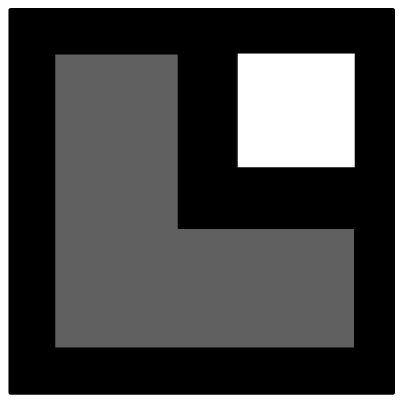}}      
\newcommand{\tileWhite}{\mysymbol{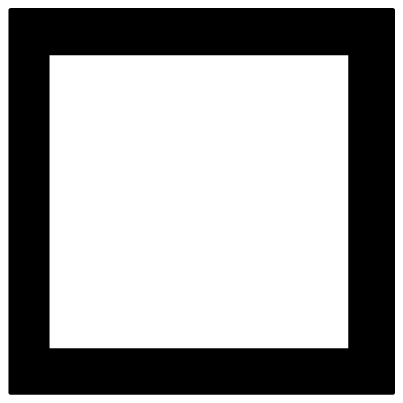}}   
\newcommand{\tileBlack}{\mysymbol{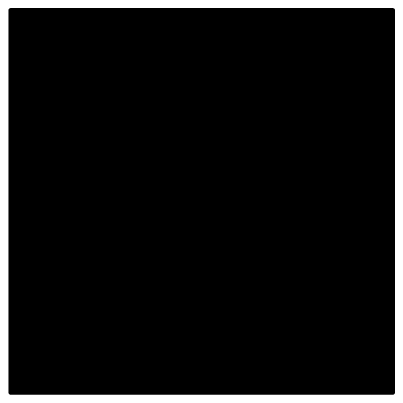}}   
\newcommand{\tileLight}{\mysymbol{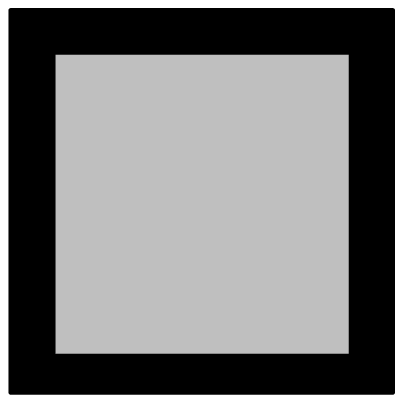}}   
\newcommand{\tileDark}{\mysymbol{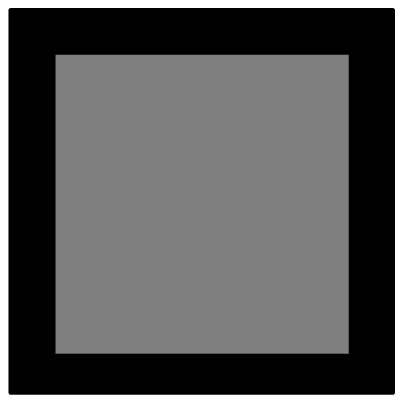}}    
\newcommand{\tileBlackD}{\mysymbol{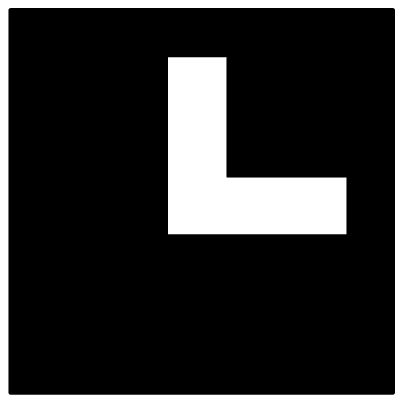}}  
\newcommand{\tileWhiteU}{\mysymbol{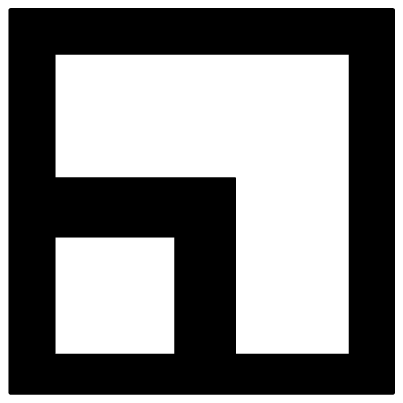}}  
\newcommand{\tileHvar}{\mysymbol{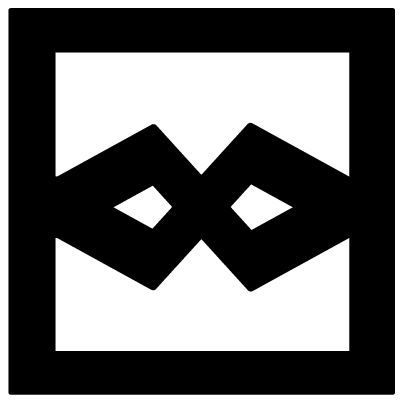}}  
\newcommand{\tileHrev}{\mysymbol{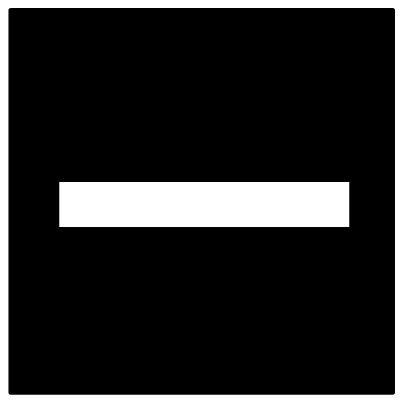}}       
\newcommand{\tileHvarrev}{\mysymbol{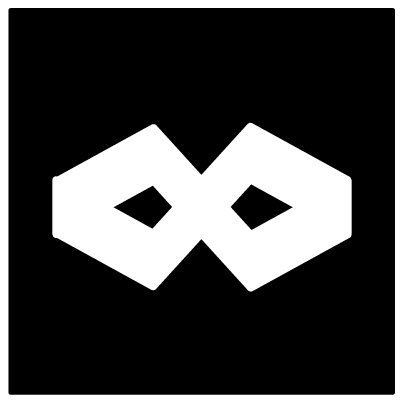}}  
\newcommand{\tileVvar}{\mysymbol{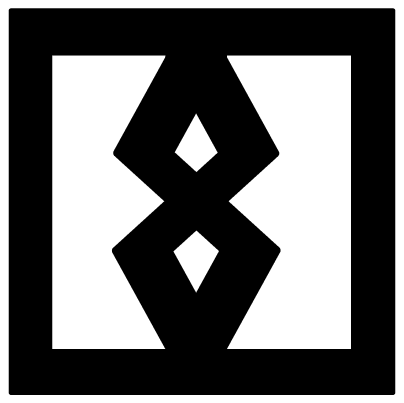}}  
\newcommand{\tileVrev}{\mysymbol{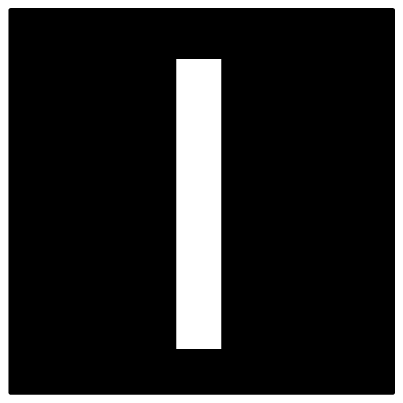}}       
\newcommand{\tileVvarrev}{\mysymbol{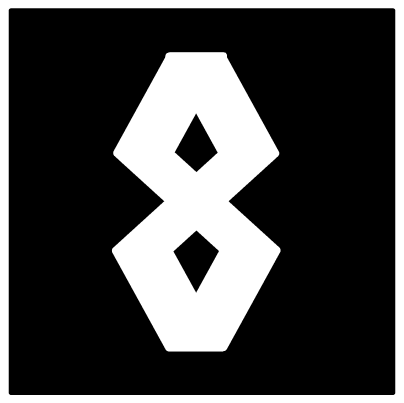}}  
\newcommand{\tileHH}{\mysymbol{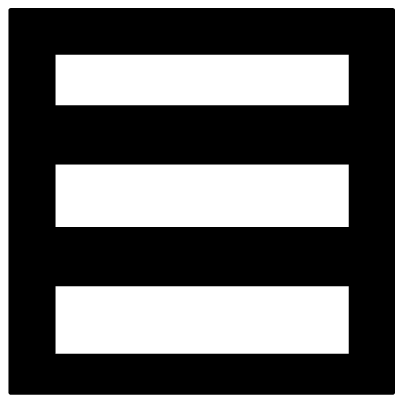}}  
\newcommand{\tileHHrev}{\mysymbol{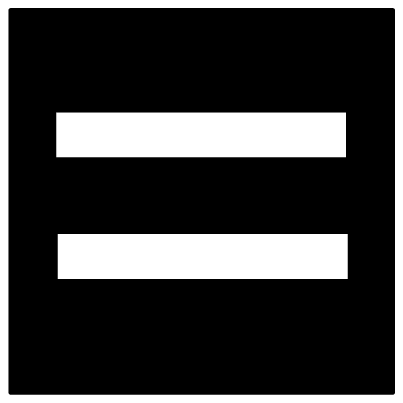}}  
\newcommand{\tileVV}{\mysymbol{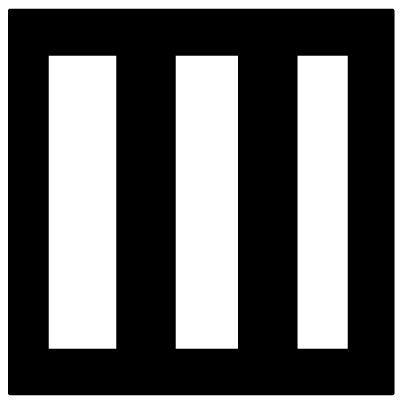}}  
\newcommand{\tileVVrev}{\mysymbol{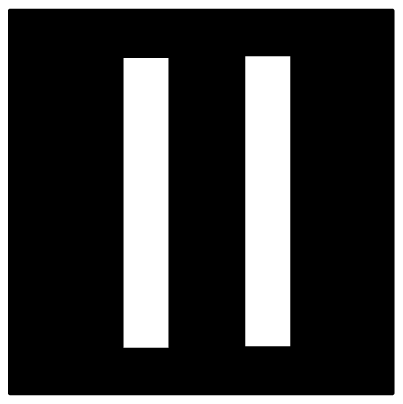}}  
\newcommand{\tileCrev}{\mysymbol{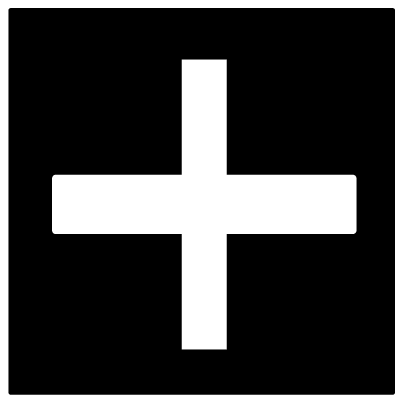}}       
\newcommand{\tileCrosshatch}{\mysymbol{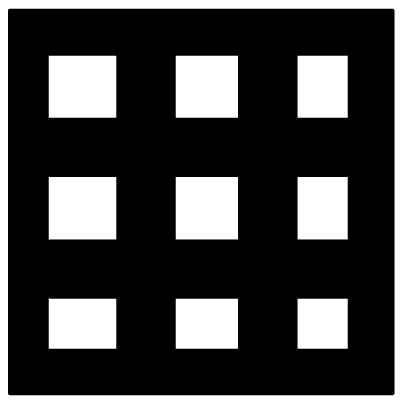}}  
\newcommand{\tileCrosshatchrev}{\mysymbol{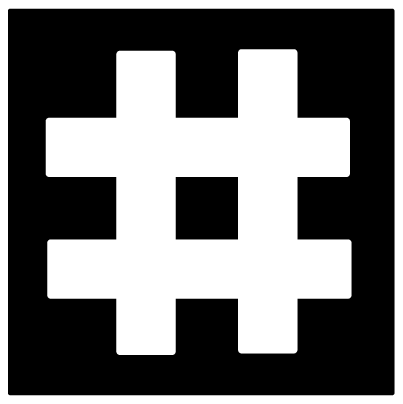}}  
\newcommand{\tileHVV}{\mysymbol{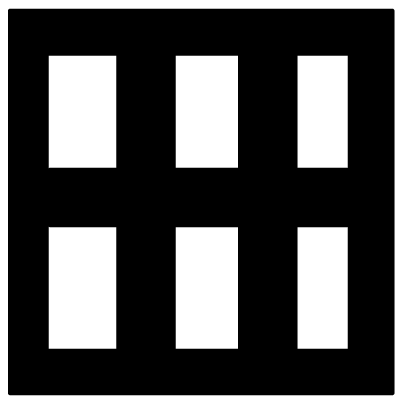}}  
\newcommand{\tileHVVrev}{\mysymbol{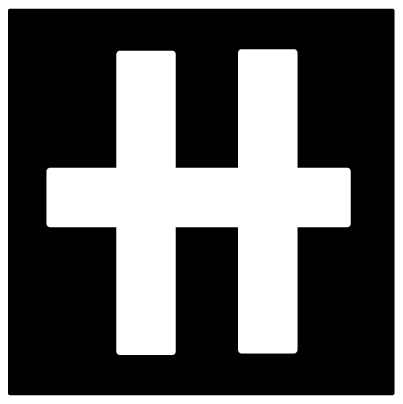}}  
\newcommand{\tileHHV}{\mysymbol{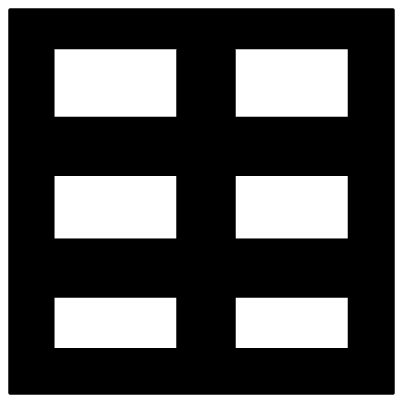}}  
\newcommand{\tileHHVrev}{\mysymbol{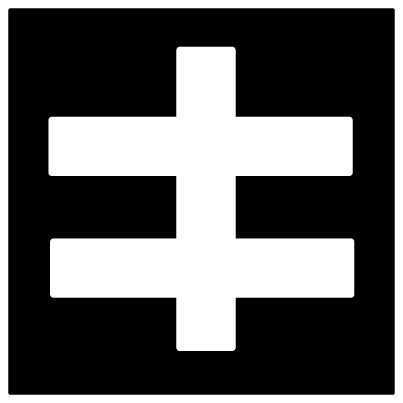}}  
\newcommand{\tileCirc}{\mysymbol{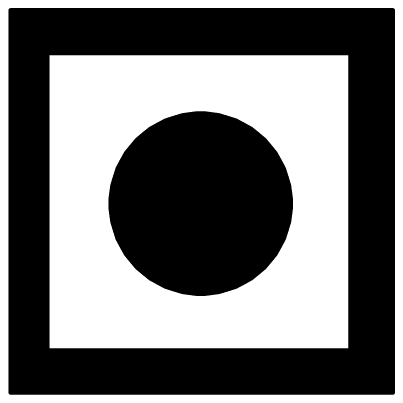}}  
\newcommand{\tileCircrev}{\mysymbol{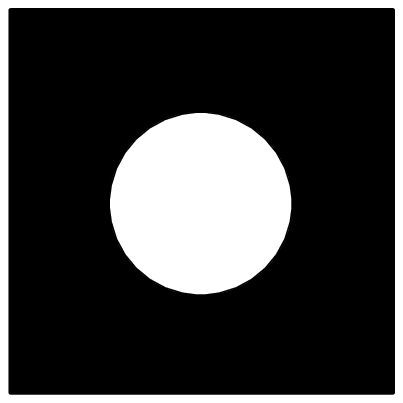}}  
\newcommand{\tileRing}{\mysymbol{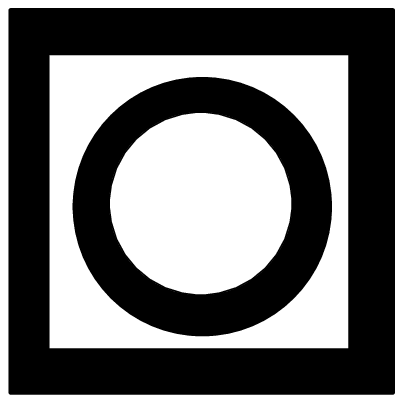}}  
\newcommand{\tileRingrev}{\mysymbol{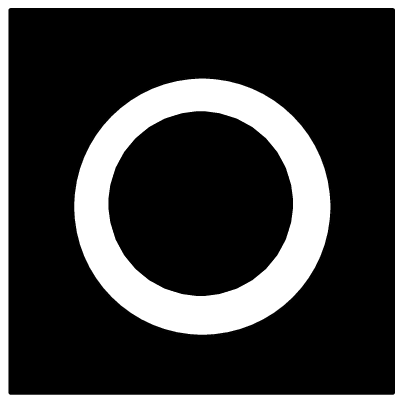}}  
\newcommand{\tileD}{\mysymbol{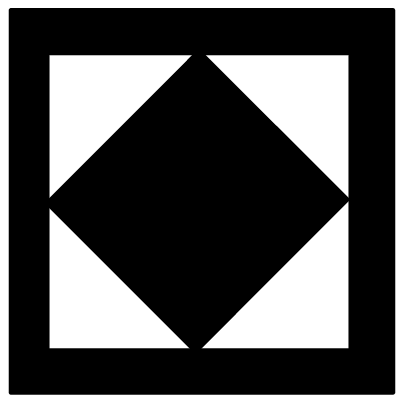}}  
\newcommand{\tileDrev}{\mysymbol{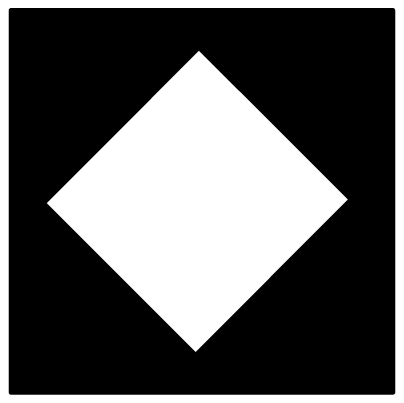}}  
\newcommand{\tileDHol}{\mysymbol{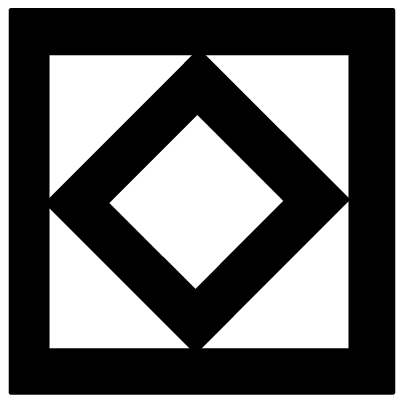}}  
\newcommand{\tileDHolrev}{\mysymbol{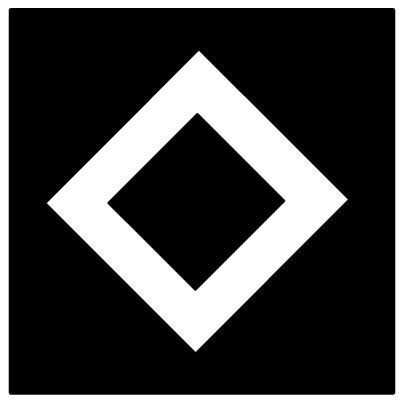}}  
\newcommand{\tileSquare}{\mysymbol{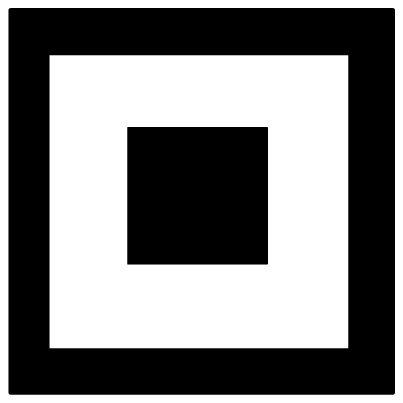}}  
\newcommand{\tileSquarerev}{\mysymbol{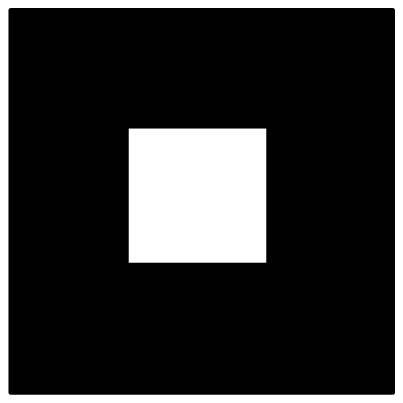}}  
\newcommand{\tileX}{\mysymbol{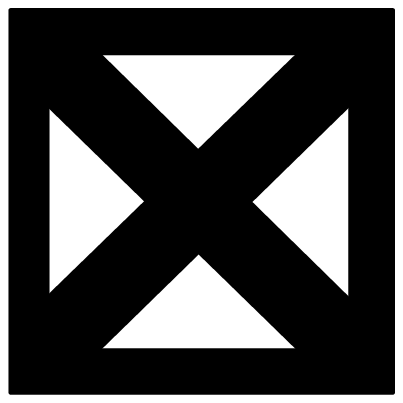}}  
\newcommand{\tileXrev}{\mysymbol{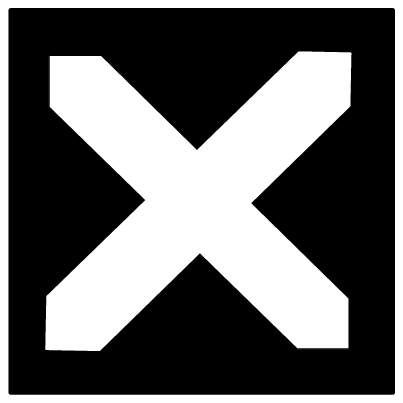}}  
\newcommand{\tilevariable}{*}       
\newcommand{\threecellsL}[2]{\begin{array}{|@{}c@{}|@{}c@{}|} \hline  \leftend &
	\begin{array}{@{}c@{}} #1 \\ \hline  #2 \\ \end{array}\\ \hline  \end{array}}
\newcommand{\threecellsR}[2]{\begin{array}{|@{}c@{}|@{}c@{}|} \hline
	\begin{array}{@{}c@{}} #1 \\ \hline  #2  \end{array} & \rightend \\ \hline  \end{array}}
\newcommand{\fourcells}[4]{\begin{array}{|@{}c@{}|@{}c@{}|} \hline  #1 & #3 \\ \hline #2 & #4 \\ \hline   \end{array}}
\newcommand{\threecellsLRefl}[2]{\begin{array}{|@{}c@{}|@{}c@{}|} \hline  \stateLine &
	\begin{array}{@{}c@{}} #1 \\ \hline  #2 \\ \end{array}\\ \hline  \end{array}}
\newcommand{\threecellsRRefl}[2]{\begin{array}{|@{}c@{}|@{}c@{}|} \hline
	\begin{array}{@{}c@{}} #1 \\ \hline  #2 \\ \end{array} & \stateLine \\ \hline  \end{array}}
\newcommand{\eightcells}[8]{\begin{tabular}{|@{}c@{}| @{}c@{}| @{}c@{}|@{}c@{}|@{}c@{}|@{}c@{}|} \hline
\leftend &
	\begin{tabular}{@{}c@{}} #1 \\ \hline  #5   \end{tabular} &
	\begin{tabular}{@{}c@{}} #2 \\ \hline  #6   \end{tabular} &
	\begin{tabular}{@{}c@{}} #3 \\ \hline  #7   \end{tabular} &
	\begin{tabular}{@{}c@{}} #4 \\ \hline  #8   \end{tabular} &
	 \rightend \\ \hline  \end{tabular}}
\newcommand{\sixcells}[6]{\begin{array}{|@{}c@{}| @{}c@{}|} \hline
          #1 & #2 \\ \hline
          #3 & #4 \\ \hline
          #5 & #6 \\ \hline
          \end{array}}
\newcommand{\twotilesvert}[2]{\begin{array}{@{}c@{}}  #1 \\ #2 \\ \end{array}}
\def\calH{{\cal{H}}}
\def\calS{{\cal{S}}}
\def\calT{{\cal{T}}}
\def\mns{{\mbox{-}}}
\newcommand{\rTIH}[1]{${#1}$-DIM TIH}
\newcommand{\ITIH}{ITIH}
\newcommand{\tiling}{TILING}
\title{The Quantum and Classical Complexity of Translationally Invariant Tiling and Hamiltonian Problems}
\author{
Daniel Gottesman
\and Sandy Irani
}
\begin{document}

\maketitle


\begin{abstract}
We study the complexity of a class of problems involving satisfying
constraints which remain the same under translations in one or more
spatial directions.  In this paper, we show hardness of a classical tiling
problem on an $N \times N$ $2$-dimensional grid and a quantum problem involving
finding the ground state energy of a $1$-dimensional quantum system of
$N$ particles.  In both cases, the only input is $N$, provided in
binary.  We show that the classical problem is $\NEXP$-complete and
the quantum problem is $\QMAEXP$-complete.  Thus, an algorithm
for these problems which runs in time polynomial in $N$ (exponential
in the input size) would imply that
$\EXP = \NEXP$ or $\BQEXP = \QMAEXP$, respectively. Although tiling in general
is already known to be $\NEXP$-complete, to our knowledge, all
previous reductions require that either the set of tiles and their constraints
or some varying boundary conditions be given as part of the input.
In the problem considered here, these are fixed, constant-sized parameters
of the problem. Instead, the problem instance is encoded solely in the
size of the system.
\end{abstract}

\tableofcontents

\section{Introduction}

One perennial difficulty with practical applications of hardness
results is that the practically interesting instances of a hard language may not
themselves form a hard class.  One approach to solving this problem is the
difficult theory of average-case complexity \cite{levin86,bcg89},
in which one can show that ``typical'' cases of some language are
hard.  In this paper we take a different approach.  In many cases,
practically interesting instances possess some shared property, such
as a symmetry, that distinguish them from the general instance and
might, in principle, make those instances easier.  We will study
such an example and show that, even in a system possessing a great
deal of symmetry, it is still possible to prove a hardness result.

Specifically, we consider the related problems of determining
whether there is a possible tiling of an $r$-dimensional grid
with some fixed set of classical tiles and of finding the lowest energy
state (or {\em ground state}) of a quantum system involving interactions
only between neighboring particles on an $r$-dimensional grid.
The ground state energy of a system is considered one of the
basic properties of a physical system, and over the last few decades,
physicists have developed a number of heuristics that have been successful
in finding the ground state energy in many special cases.  On the
other hand, in earlier work~\cite{focsVersion,QMA1D}, we have shown that in the most
general case, even in a $1$-dimensional quantum system, finding the ground
state is a computationally difficult problem (modulo the usual complexity-theoretic
assumptions).  However, the construction presented in~\cite{focsVersion} involves
a system which is completely unnatural from a physical point of view.  The most
interesting physical systems frequently possess an additional symmetry: translational
invariance.  In this paper, we will show that even a $1$-dimensional translationally-invariant system can be hard.

One interesting feature of our proof which may have more general
applicability is that the only free parameter for the language we
consider is the size of the system.  This is frequently the case
for interesting systems: there is a basic set of rules of constant
size, and we wish to study the effect of those rules when the system
to which the rules apply becomes large.  In practice, many such systems
seem difficult to solve, but it is hard to see how to prove a
complexity-theoretic hardness result, since that requires reducing
a general problem in some complexity class to the language under consideration,
and there doesn't seem to be room in the language to fit all the needed
instances.
Usually, this difficulty is circumvented by modifying the problem slightly, to
add additional parameters in which we can encode the description of the
instance we wish to simulate.

To illustrate, let us present the classical tiling problem we study in
this paper: We are given a set of square tiles which come
in a variety of colors. The area to be tiled is a square
area whose size is an integer multiple of the length of a tile.
We are given horizontal constraints indicating
which pairs of colors can be placed next to each other in the horizontal
direction and another set of constraints in the vertical direction.
We specify a particular color which must go in the four corners of the grid.
The description of the tile colors, placement constraints
and boundary conditions
are fixed for all inputs of the problems.
The input is just a number $N$ written in binary and
we wish to know whether an $N \times N$ grid can be properly tiled given
these constraints.  We show that this problem is $\NEXP$-complete.  Note
that the input in this case is size $\log N$, so an
algorithm to solve our tiling problem that runs in time polynomial
in $N$ would imply that $\NEXP = \EXP$.
While it is possible that $\Pclass \neq \NP$ and yet $\NEXP = \EXP$, this seems
unlikely to be the case.

This version of tiling is equivalent to the more common Wang Tiles
\cite{wang} in that
any set of tiles can be transformed into a set of Wang Tiles (and vice versa)
such that there is
a one-to-one correspondence between valid tilings on an $N \times N$ grid.
For an {\em infinite} grid, the problem is undecidable.  Intuitively, it
makes sense that it is also hard (for some sets of tiles) for a finite grid,
since there are exponentially many possible tilings, and it is impossible to tell
locally whether a given partial tiling can be extended indefinitely.  Indeed,
there are prior results showing that related tiling problems are $\NEXP$-complete,
but to our knowledge, all previous reductions require that either the set of tiles
and their constraints or some varying boundary conditions be given as part of the input
\cite{lp97,boas97}.
For instance, one may specify the placement of some number of tiles and ask whether
it is possible to extend that partial tiling to a tiling of the full square.
Even though that problem had been proven hard, the more natural problem of
whether it is possible to efficiently find a tiling of the empty grid remained
open.

Many $\NEXP$-complete problems are succinct versions of familiar
combinatorial problems \cite{GW83, PY86} in which the input has
some special structure which allows for a more compact representation.
For example, consider the problem of finding an independent set in a graph
where the graph is specified by indicating the number of
nodes in binary and providing a compact rule (or circuit) to determine if
two nodes are connected. Traditionally, the rule is included
as part of the input, which potentially allows for a more expressive language.
The analog to our work would be for the rule
to have a constant-sized description, fixed for
all inputs.  A philosophically similar approach has been taken by Valiant~\cite{valiant79},
who considers counting objects (for instance graphs) of a variable size $N$ with
a fixed rule (such as containing a certain fixed set of subgraphs).

The basic idea of our construction is to reduce from an instance $x$ of some
language in $\NEXP$ by
encoding $x$ in the binary expansion of $N$, the size of the grid.
It is well known that a finite set of tiling rules can be used to
implement a universal Turing Machine.  We need some way to express the
program for the Turing Machine to run, and that program must grow with
the size of $x$.  Previous constructions managed this by resorting to either
polylog~$N$ different tile types or varying boundary conditions to encode
$x$, but those are both fixed, constant-sized parameters in our version of
the problem. Instead, we use the tiles to implement a binary counter which
converts $N$ into binary and then uses it as an input to a universal Turing
Machine.

The other problem we consider is finding the ground state energy of a
quantum system.  The state of a quantum system with $N$ qubits
is a vector in a Hilbert space of dimension $2^N$. We will be considering a
slightly more general version in which an individual particle has its state
in a space of dimension $d$, in which case the state of
a system of $N$ such particles is a vector in a $d^N$-dimensional
Hilbert space.
One of the postulates of quantum mechanics states that any physical
property of a system  that can be
measured (e.g. location, momentum, energy)
corresponds to a linear operator. For an $N$-particle system,
it can be expressed as a $d^N \times d^N$ matrix over the complex numbers.
If the property is measured, then the outcome must be an eigenvalue of
the corresponding linear operator and the state of the system after
the measurement is in the eigenspace corresponding to
the outcome. Thus, the problem of finding the energy
for the lowest energy state is the same as determining the lowest
eigenvalue for the  energy operator (also
called the {\em Hamiltonian} for the system).
The difficulty, of course, is that the Hamiltonian matrix is exponentially
large in the size $N$ of the system.

We are typically interested in systems whose Hamiltonians are {\em local}
in that they can be expressed as a sum of terms each of which acts
non-trivially only on a constant-sized subset of the particles in the system.
Although the term ``local'' does not imply anything about the physical location
of the particles, it is motivated by the idea that particles only interact
when they are physically close to each other. We are therefore interested
in extending this even further and examining particles that are located
in a geometrically $r$-dimensional space where only particles within a fixed distance
can interact.
A particularly natural model to consider, then, is a system of particles on
an $r$-dimensional grid, where the terms of the Hamiltonian operate only on
neighboring pairs of particles in the grid.
Note that although the full matrix representation of a Hamiltonian is exponentially
large in the size of the system, a local Hamiltonian has a compact representation:
each term can be expressed as a constant-sized matrix, and there can
only be polynomially many such terms.

Kitaev introduced the class $\QMA$, the quantum analog of $\NP$,
and showed that the problem of determining the ground state energy of a system defined by
a local Hamiltonian is $\QMA$-hard~\cite{Kitaev:book}.
Thus, we do not hope to solve it even on a quantum computer.
With an additional promise, the problem is $\QMA$-complete: there exist two values
$a > b$, such that $a-b \ge 1/{\mathrm{poly}}(N)$, where it is guaranteed that the ground
state energy is at most $b$ or at least $a$, and one wants to determine only which
of the two alternatives holds.  In other words, we wish to determine the energy with precision $(a-b)/2$.
The problem is still hard even for two-dimensional systems
on qubits or one-dimensional systems of particles of constant Hilbert space
dimension~\cite{Oliveira:05a,focsVersion}.

Despite these worst-case results, numerical methods have been successful at
determining ground state energies for many quantum systems, especially in one dimension.
What are the differences between these hard $\QMA$-complete problems and the more tractable
systems studied by numerical physicists?
One feature of the $\QMA$-completeness constructions is that the individual terms
of the Hamiltonian are position-dependent. Essentially, the computation performed
by a quantum verifier circuit is encoded into the Hamiltonian so that a low
energy state exists if and only if there is a quantum witness that causes a
verifier to accept. Thus, the terms of the Hamiltonian encode, among other things,
individual gates in a quantum circuit. In contrast, many quantum systems of physical
interest are much more uniform in that they consist of a single Hamiltonian term that
is simultaneously applied to each pair of neighboring particles along a particular
dimension.  Such a system is called {\em translationally invariant}.

Since highly symmetric systems are rather natural, a number of researchers have
studied the computational power of translationally invariant
quantum systems.  For instance,
\cite{NWtranslation} gives a $20$-state translation-invariant modification of
the
construction from \cite{focsVersion} (improving on a $56$-state construction by \cite{JWZtranslation})
that can be used for universal $1$-dimensional adiabatic computation.
These modifications require that the system be initialized to a particular
configuration in which each particle is in a state that encodes some additional
information.
The terms of the Hamiltonian, although identical, act differently
on different particles depending on their state. The ground state is
therefore degenerate and one determines which ground state
is reached by ensuring that the system starts in a particular state.
Kay~\cite{Kaytranslation} gives a construction showing that
determining the ground state energy of a one dimensional nearest-neighbor
Hamiltonian is $\QMA$-complete even with all two-particle terms identical.
The construction does, however, require position-dependent one-particle terms.
Irani has demonstrated ground state complexity in one-dimensional
translationally-invariant systems by showing that such systems can
have ground states with a high degree of quantum entanglement \cite{irani09}.
While quantum entanglement is closely related to the performance of
numerical heuristics in practice, the particular states in this
construction are easy to compute.

In contrast, we show that there exist $1$-dimensional translationally-invariant
quantum systems with nearest-neighbor interactions for which finding the ground
state energy is complete for $\QMAEXP$, a quantum analogue of $\NEXP$.  As
with the classical result, the only parameter which varies in the language is
$N$, the number of particles, and we must use $N$ to encode the instance from
which we wish to reduce.  The quantum result uses a similar idea to the classical
result: we arrange for a control particle to shuttle between the ends of the system
and count the number of particles.  The binary encoding for the number of
particles is then used as an input to
a quantum Turing Machine.

One consequence of our result is that it is now possible to talk about the hardness
of a specific Hamiltonian term rather than the hardness of a class of Hamiltonians.
Since a system with a computationally difficult Hamiltonian cannot find its own ground
state, it is likely that such a system will behave like a spin glass at low temperatures.
The usual models of spin glasses have randomly chosen coefficients, causing a breakdown
of translational invariance.  The systems we construct in this paper are different, with
a completely ordered, translationally-invariant Hamiltonian, even though the ground states
are quite complicated.  Any disorder in the system is emergent rather than put in by hand,
a property that these spin glasses would share with structural glasses.  Some other systems
with ``self-induced disorder'' have been introduced previously, although not in the context
of computational complexity~\cite{BM94}.
In all of our hardness constructions, we construct specific two-particle terms designed
to let us prove that the resulting Hamiltonian problems are hard, but one can imagine going the
other direction, and studying the complexity of a particular Hamiltonian term given
to you.  However, we currently have no techniques for doing so.

It is worth noting that the one-dimensional version of the classical tiling problem
is very easy: it is in $\Pclass$
(see section~\ref{sec:oneDclassical} for the algorithm).  That is, it can be solved
in a time polylog~$N$, whereas it appears the quantum problem takes time $\exp(N)$,
even on a quantum computer (unless $\QMAEXP = \BQEXP$, where $\BQEXP$ is like $\BQP$,
but with exponential circuits).  Translational invariance does seem to
simplify the $1$-dimensional classical case, reducing poly($N$) time to polylog($N$) time, but it doesn't help very much in the quantum case.

Note that the classical tiling problem is a special case of the ground state energy
problem for quantum systems where the Hamiltonian is diagonal in the standard basis
with only $1$ or $0$ entries. Any ground state of such a system
is a classical state in which the state of each particle is specified by
one of the $d$ possible standard basis states, which correspond to the possible
tile colors.  A pair of tiles $(t_i,t_j)$ is allowed by the tiling rules iff the corresponding
$\ket{t_i t_j} \bra{t_i t_j}$ term of the Hamiltonian is $0$, so that allowed tilings have
$0$ total energy, whereas a forbidden tiling has energy at least $1$.

\section{Problems and Results}

The paper contains a variety of different but related results involving classical tiling and quantum Hamiltonian problems with translational invariance.  In this section, we will summarize the different variants, giving the proofs and more detailed discussion of each variant in later sections.  While there are certain recurring techniques, the details of the different cases vary considerably.  As a consequence, the proof of each major result is largely self-contained.

\begin{definition}
{\bf \tiling}

\noindent
{\bf Problem Parameters:} A set of tiles $T = \{t_1,\ldots,t_m\}$.
A set of horizontal constraints $H \subseteq T \times T$ such that if
$t_i$ is placed to the left of $t_j$, then it must be the case that
$(t_i,t_j) \in H$. A set of vertical constraints $V \subseteq T \times T$ such that if
$t_i$ is placed below $t_j$, then it must be the case that
$(t_i,t_j) \in V$. A designated tile $t_1$ that must be placed in the four corners of the grid.

\noindent
{\bf Problem Input:} Integer $N$, specified in binary.

\noindent
{\bf Output:} Determine whether there is a valid tiling of an $N \times N$ grid.
\end{definition}

\begin{theorem}
\label{th:unweighted-tiling}
\tiling\ is \NEXP-complete.
\end{theorem}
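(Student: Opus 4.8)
The plan is to establish both directions. Membership of \tiling\ in \NEXP\ is routine: on input $N$ in binary (of size $\Theta(\log N)$), a nondeterministic machine guesses a tile for each of the $N^2$ cells --- using $2^{O(\log N)}$ nondeterministic bits --- and then deterministically verifies all $\Theta(N^2)$ horizontal and vertical adjacency constraints together with the four corner conditions in time $\poly(N) = 2^{O(\log N)}$, which is exponential in the input size. The work is in the hardness direction.

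Let $L \in \NEXP$ be decided by a nondeterministic Turing machine $M_L$ running in time $2^{q(n)}$ on length-$n$ inputs, and fix once and for all a universal nondeterministic Turing machine $U$. Given $x$ with $|x| = n$, the reduction outputs in time $\poly(n)$ a binary integer $N$ whose bit string encodes the pair $(\langle M_L\rangle, x)$, padded so that $N$ exceeds the time (hence tape) usage of $M_L$ on $x$; such an $N$ has only $O(q(n))$ bits and is written down directly. The fixed tile set $T$, the fixed constraint sets $H$ and $V$, and the fixed corner tile $t_1$ are designed so that the $N \times N$ grid is tileable if and only if $U$ --- equivalently $M_L$ --- has an accepting computation on $x$: in one direction an accepting path is laid out row by row and the leftover part of the grid is filled with a trivial periodic pattern, and in the other any valid tiling is read off as such a computation.

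The tile set is built in layers. A constant-size gadget implements a \emph{binary counter} that exploits the geometry of the grid: beginning at the left boundary with value $0$ and incrementing by one every constant number of columns --- with each carry propagated vertically inside an ``increment block'' so that all constraints remain nearest-neighbor --- the counter reaches the right boundary holding a value $\Theta(N)$ in binary, a string of $\Theta(\log N)$ bits, which is copied into a designated tape region to form the input to $U$. A standard Turing-machine-to-tiling gadget then forces row $i+1$ to encode a (nondeterministically chosen) successor configuration of the configuration in row $i$, with horizontal constraints enforcing well-formedness of each row; since $N$ rows are available, $U$ runs for $2^{\Theta(\log N)}$ steps, exponential in its input length --- exactly the budget an \NEXP\ verification of $x$ requires once $N$ is chosen suitably larger than $2^{q(n)}$. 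Transition tiles are arranged so that only an accepting state licenses the switch to the periodic padding that must abut the pinned corners, so that a rejecting or non-halting computation leaves the grid untileable.

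The main obstacle is that several of the operations required --- counting, carry propagation, copying the counter value onto the tape, and synchronizing the end of the counting phase with the start of the simulation --- are intrinsically long-range, whereas the only available constraints act between neighboring tiles. Each is handled by spreading it over $O(1)$ extra rows or columns and threading ``signal'' tiles that carry information one cell at a time, which enlarges $T$ and changes $N$ only by constant factors (so that $x \mapsto N$ remains polynomial-time and efficiently invertible). A second delicate point is the boundary: because the four corners are pinned to $t_1$ and both grid dimensions equal the single parameter $N$, the padding region that fills the grid after $U$ halts must be rigidly determined and reachable only from an accepting configuration, so that ``tileable'' is genuinely equivalent to ``accepts'', rather than accidentally always, or never, satisfiable.
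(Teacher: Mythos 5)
Your overall strategy is the same as the paper's: encode the instance in the binary expansion of $N$, let a fixed tile set implement (i) a counter that recovers that string from the grid size and (ii) a Turing-machine-to-tiling simulation of a nondeterministic machine on it, with the pinned corners anchoring the construction. (Using one universal machine $U$ with $\langle M_L\rangle$ folded into $N$, rather than a tile set tailored to each language, is a harmless variation, and your horizontal counter with vertical carry propagation is a plausible alternative to the paper's counting TM run row-by-row.)

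The genuine gap is the boundary mechanism, which is where most of the actual work lies and which your sketch leaves unspecified. With only the four corner cells pinned to $t_1$ and all other constraints translation-invariant, nothing you have described forces the rest of the perimeter to be recognizable as a perimeter, and hence nothing forces the counter to start at the left edge in its initial configuration, forces the simulation of $U$ to begin at all, tells the counter when it has ``reached the right boundary,'' or guarantees that the top of the grid is reached only through an accepting configuration. Without such a mechanism the claimed equivalence can fail in either direction: a tiling consisting of your ``trivial periodic pattern'' everywhere except in constant-size neighborhoods of the four corners could be valid regardless of whether $M_L$ accepts (making the instance always tileable), or conversely over-constraining the padding could make it never tileable; you explicitly flag this danger but do not resolve it. The paper resolves it by introducing dedicated border tiles $\tileW,\tileN,\tileS,\tileE$ whose rules make them placeable only along the respective edges, so that the corner tiles force a complete frame around the grid, and it is the frame rows that enforce the counter's start configuration, the copy of its output onto the second layer, the unique initial head of the simulated machine, and the acceptance check at the opposite edge. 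Any correct write-up of your proposal needs an analogous argument. Two smaller points: the column-to-row copy of the $\Theta(\log N)$-bit counter output is a transpose and cannot be done in $O(1)$ extra rows (it needs $\Theta(\log N)$, which is harmless but contradicts your claim), and the ``synchronization'' of the counting and simulation phases again silently relies on the missing boundary structure.
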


We give the proof in section~\ref{sec:tiling}. The basic idea is that the
corner tiles are used to create a border around the perimeter of the grid which
allows us to implement special rules at the top and bottom rows.
The interior of the grid is tiled in two layers, each of which implements
the action of a Turing machine. The first TM proceeds from top to bottom
on layer 1 and the second proceeds from bottom to top on layer 2.
The first TM takes no input and acts
as a binary counter for $N$ steps. The bottom row
of  the first layer then
holds a binary number that is $\Theta(N^{1/k})$. The rules for the lower boundary are then used
to copy the output from the binary counter to the bottom row of layer 2, which
acts as the input to a generic non-deterministic Turing machine.
The rules for the top boundary check whether
the final configuration on layer 2 is an accepting  state.

Note that it is important that we chose to have the input $N$ provided in binary.
If it were instead given in unary, there would only be one instance per problem size,
and the problem would be trivially in $\Pclass$/poly.  Thus, in order to prove
a meaningful hardness result, we are forced to move up the exponential hierarchy
and prove the problem is $\NEXP$-complete rather than $\NP$-complete.

A common convention for this tiling problem
is to only specify the boundary condition tile in a single
corner of the grid. This does not work in our case, so we instead use specified
tiles in all four corners to mark out the boundary of the grid to be tiled.
We have considered other versions of the classical translationally-invariant
tiling problem to understand to what extent the precise definition of the
problem is important.  The boundary conditions, as noted above, are a critical component.
As well as fixing the tiles at the $4$ corners of the square, we have
considered periodic boundary conditions (so we are actually tiling a torus)
and open boundary conditions, where any tile is allowed at the edges of the
square.  The case of periodic boundary conditions is particularly interesting because
it is truly translationally invariant, unlike our usual formulation
where the boundaries break the translational symmetry.  We show this
case is also hard, but with a more complicated reduction than in our
standard \tiling\ problem.
Another variant is to make the problem more similar to the quantum
Hamiltonian problem by assigning a cost to any pair of adjacent tiles, and
allowing the costs to be different from $0$ or $1$.  This is like a weighted version
of tiling and corresponds to a
Hamiltonian which is diagonal in the standard basis but does not have any
other constraints.

We have also considered problems with additional symmetry
beyond the translational invariance.
If we have {\em reflection symmetry},
then if $(t_i, t_j) \in H$, then $(t_j, t_i) \in H$ as well, and if $(t_i, t_j) \in V$,
then $(t_j, t_i) \in V$ also.  That is, the tiling constraints to the left and right are
the same, as are the constraints above and below.  However, if we only have reflection
symmetry, there can still be a difference between the horizontal and vertical directions.
If we have {\em rotation symmetry}, we have reflection symmetry and also $(t_i, t_j) \in H$
iff $(t_i, t_j) \in V$.  Now the direction does not matter either.
These additional symmetries are well motivated from a physical point of view since many
physical systems exhibit reflection or rotation symmetry.
Finally, we have studied the one-dimensional
version of the problem as well as the two-dimensional version.  See Table~\ref{table:variants}
for a summary of our results. Proofs are given in Section \ref{sec:variants}.

Some variants of \tiling\ we consider are easy but in a strange non-constructive
sense in that
there exists $N_0 \in \integer^+ \cup \{\infty\}$ such
that if $N < N_0$, there exists a valid tiling, and if $N \geq N_0$,
then there does not exist a tiling (or sometimes the other way around).  However, $N_0$ is uncomputable as
a function of $(T, H, V)$.
These cases are denoted as ``$\Pclass$, uncomputable'' in Table~\ref{table:variants} (with a question
mark if we have not been able to prove whether $N_0$ is computable or not).
Note
that this does not exclude the existence of a (potentially slower) algorithm to solve
particular instances; indeed, all the classes in Table~\ref{table:variants} are included
in $\NEXP$.  For these variants, we know that there is an efficient algorithm, so a
hardness result can be ruled out, but since the algorithm depends on an uncomputable
parameter, it may be that the problem remains hard in practice.

\begin{table*}[!t]
\begin{centering}
\begin{tabular}{rllll}
                            & $2$-D, no symmetry    & $2$-D, reflection sym.        & $2$-D, rotation sym.     & $1$-D \\
\multicolumn{1}{l}{BC on all corners} &             &                               &                          &   \\
            unweighted      &   $\NEXP$-complete    & $\Pclass$, uncomputable?      & $\Pclass$                & $\Pclass$ \\
            weighted        &   $\NEXP$-complete    & $\NEXP$-complete              & $\Pclass$                & $\Pclass$ \\
\multicolumn{1}{l}{BC on $0$ or $1$ corner \ \ } &  &                               &                          &   \\
            unweighted      & $\Pclass$, uncomputable &         $\Pclass$           & $\Pclass$                & $\Pclass$ \\
            weighted        &   $\NEXP$-complete    & $\NEXP$-complete              & $\Pclass$         	   & $\Pclass$ \\
\multicolumn{1}{l}{Periodic BC}       &             &                               &                          &   \\
            unweighted      &   $\NEXP$-complete*   & $\Pclass$, uncomputable?      & $\Pclass$                & $\Pclass$ \\
            weighted        &   $\NEXP$-complete    & $\NEXP$-complete              & $\Pclass$                & $\Pclass$ \\
\end{tabular}
\caption{Summary of the variants of \tiling.  ``BC'' is short for ``boundary condition.''
``$\Pclass$, uncomputable'' means that the associated problem is in $\Pclass$, but
an essential parameter of the efficient algorithm we found is uncomputable (with a question mark if we are not sure whether it is uncomputable).  ``$\NEXP$-complete*'' means complete under an expected poly-time randomized reduction or a deterministic polyspace reduction.}
\label{table:variants}
\end{centering}
\end{table*}

Now we turn to the quantum problem.  First we need to define the class $\QMAEXP$.
It will be a bit more convenient to work with quantum Turing Machines than quantum
circuits.  The definition is the same as $\QMA$ except that the witness and the
length of the computation for the verifier (which is a quantum Turing Machine)
can be of size $2^{n^k}$ on an input of length $n$.

\begin{definition}
A language $L$ is in $\QMAEXP$ iff there exists a $k$ and a Quantum
Turing Machine $M$ such that
for each instance $x$ and any $\ket{\psi}$ on $O(2^{|x|^k})$ qubits,
on input $(x,\ket{\psi})$,
$M$ halts in $O(2^{|x|^k})$ steps. Furthermore,
 (a) if $x \in L_{\rm yes}$, $\exists\, \ket{\psi}$
 such that
$M$ accepts $(x,\ket{\psi})$ with probability at least $2/3$.
 (b) if $x
\in L_{\rm no}$, then $\forall\, \ket{\psi}$,
$M$ accepts $(x,\ket{\psi})$ with
probability at most $1/3$.
\end{definition}

\begin{definition}
{\bf \rTIH{r} (Translationally-Invariant Hamiltonian)}

\noindent
{\bf Problem Parameter:} $r$ Hamiltonian terms $H_1, \ldots, H_r$
that each operate on two finite dimensional particles, specified with a constant number of bits.
Two polynomials $p$ and $q$.

\noindent
{\bf Problem Input:} Integer $N$, specified in binary.

\noindent
{\bf Promise:} Consider an $N^r$-dimensional grid of particles and the Hamiltonian resulting from applying $H_i$
to each pair of neighboring particles along dimension $i$. The ground state energy of this system is either
at most $p(N)$ or at least $p(N)+1/q(N)$.

\noindent
{\bf Output:} Determine whether the ground state energy of the system is at most $p(N)$
or at least $p(N)+1/q(N)$.
\end{definition}

The following theorem is the main result for the quantum case and
shows that the problem will likely be, in general,  difficult.
Note that typically, one is willing to spend time that is polynomial in the
size of the system (which is in turn exponential in the size of the input).
It follows from the result that if there is a quantum algorithm that finds the ground state energy
in time that is polynomial in the size of the system then
$\QMAEXP = \BQEXP$.

\begin{theorem}
\label{th:qmaexp}
\rTIH{1} is $\QMAEXP$-complete.
\end{theorem}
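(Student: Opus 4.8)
There are the usual two directions. For containment in $\QMAEXP$: the input $N$ has length $|x|=\Theta(\log N)$, so a state on the $N$ particles lives on $O(N\log d)=2^{O(|x|)}$ qubits, an admissible $\QMAEXP$ witness. The verifier receives a candidate ground state $\ket{\psi}$, picks a uniformly random edge $i\in\{1,\dots,N-1\}$, measures the constant-size positive semidefinite operator $H_1$ (of norm $\le c$) on particles $i,i+1$, and rejects with probability (measured value)$/c$. Its expected rejection probability is $\langle\psi|\sum_i H_1^{(i,i+1)}|\psi\rangle/(c(N-1))$, so a state of energy $\le p(N)$ is accepted with probability at least $1-p(N)/(c(N-1))$, while every state of energy $\ge p(N)+1/q(N)$ is accepted with probability at most $1-(p(N)+1/q(N))/(c(N-1))$. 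The resulting gap, of order $1/(Nq(N))=2^{-O(|x|)}$, is amplified to a constant by $\mathrm{poly}(N)=2^{O(|x|)}$ independent repetitions, all within the allowed $2^{O(|x|^k)}$ running time. This is just Kitaev's local-Hamiltonian verification procedure \cite{Kitaev:book}, rescaled, so \rTIH{1} $\in \QMAEXP$.

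\textbf{The reduction.} Let $L\in\QMAEXP$ be verified by a quantum Turing machine $M$ that, on input $(x,\ket{\psi})$ with $|x|=n$ and $\ket{\psi}$ on $2^{n^k}$ qubits, halts in $2^{n^k}$ steps; after standard amplification we may assume $M$'s acceptance probability is within $2^{-n^k}$ of $1$ in the yes case and of $0$ in the no case. Given $x$, the reduction outputs an integer $N=N(x)$, writable in time $\mathrm{poly}(n)$, with two properties realized simultaneously: (i) $N\ge 2^{n^k}$, with enough extra room to also house the counter register below, so that a chain of $N$ particles can host both a $2^{n^k}$-step simulation of $M$ and its $2^{n^k}$-qubit witness; and (ii) the binary expansion of $N$ contains $x$ in a fixed, easily located window, the remaining bits being a canonical padding. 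Since $\log N=\mathrm{poly}(n)$ this is a legitimate polynomial-time reduction, and from the viewpoint of the \rTIH{1} problem its own input length is $\log N$ and everything below is exponential in it, exactly matching the $\QMAEXP$ definition. The two-particle term $H_1$, the particle dimension $d$, and the polynomials $p,q$ depend only on $M$, never on $x$.

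\textbf{The Hamiltonian and its analysis.} Following Kitaev and the earlier one-dimensional construction of \cite{focsVersion}, we build $H_1=H_{\mathrm{legal}}+H_{\mathrm{prop}}+H_{\mathrm{pen}}$, folded into a single translationally-invariant two-local operator. $H_{\mathrm{legal}}$ penalizes locally forbidden adjacent pairs, so that each configuration it annihilates is a chain split into a left end marker, a $\Theta(\log N)$-cell counter register, a computation region, and a right end marker; the end-marker states are chosen so that the unique term in which each participates is unpenalized only at position $1$ or $N$, and a propagating ``boundary-seen'' flag then forces those markers actually to be present — this is how the boundary is enforced with no position-dependent term, the role played by position-dependent terms in the non-translationally-invariant $1$-D constructions. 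A single control particle shuttles between the two ends; because a quantum Turing machine head is itself a roving local object, $H_{\mathrm{prop}}$ is a Feynman--Kitaev-style propagation Hamiltonian in which time is encoded geometrically (the position of the control particle together with the chain's state), so that one fixed two-local rule implements one computational step: the sweeps first advance a binary counter until the register holds the binary expansion of the chain length $N$, then decode $x$ from that register and copy it onto the input cells of the computation region while zeroing its work cells, and then run a universal quantum Turing machine simulating $M$ on $(x,\ket{\psi})$, where $\ket{\psi}$ is whatever the designated witness cells held at time $0$. Finally $H_{\mathrm{pen}}$ imposes on the final clock configuration an energy cost equal to $M$'s rejection probability. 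Taking $p(N)$ to be the energy of the honest history state built from an optimal witness: if $x\in L$, this history state has energy $\le p(N)$; if $x\notin L$, the standard geometric and clock-Hamiltonian bounds (Kitaev's lemma on the least eigenvalue of a sum of positive semidefinite terms with trivially-intersecting kernels, together with the inverse-polynomial spectral gap of the propagation term, in the refined nearest-neighbor form) give energy $\ge p(N)+1/q(N)$ for a suitable polynomial $q$. Padding $M$'s computation so that the accept/reject signal dominates the counter-and-clock overhead keeps this gap inverse-polynomial in $N$.

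\textbf{Main obstacle.} The real work is doing all of this with a \emph{single}, fully position-independent two-local term: forcing the global block structure and both boundaries with no distinguished site, initializing $M$'s work tape with no initialization term, and --- the genuinely new ingredient relative to earlier $1$-D $\QMA$-hardness proofs --- using the chain length $N$ itself as the ``clock'' that is converted to binary and fed to the machine, all while keeping the final energy promise gap inverse-polynomial in $N$. The translationally-invariant boundary enforcement and the counter-to-Turing-machine plumbing are where essentially all the difficulty lies; once they are in place, the yes/no energy analysis follows the by-now-standard history-state template.
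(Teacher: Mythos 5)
Your overall route is the same as the paper's: a history-state Hamiltonian in which a single control particle shuttling along the chain serves as the clock, a reversible binary counter converts the chain length $N$ into a binary string from which the input $x$ is recovered, a universal verifier QTM is then run on that string together with a witness stored on a dedicated track, and a penalty on the final clock configuration punishes rejection; the spectral analysis is the standard clock/path argument, and your explicit containment argument (random-edge energy measurement plus amplification, within the exponential resources $\QMAEXP$ allows) is a correct supplement to what the paper leaves implicit.

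The one step where your mechanism, as stated, would fail is the boundary enforcement on the open chain. You assert that illegal pairs make the end markers legal only at positions $1$ and $N$ and that ``a propagating boundary-seen flag then forces those markers actually to be present.'' With translationally invariant terms that are all penalties (positive semidefinite), this cannot work: a marker-free configuration that is locally consistent on every edge (e.g., every particle in the blank state with the flag set) incurs zero penalty, because there is no term acting across the nonexistent edge to the left of site $1$ that could detect the missing marker --- any flag propagated by pair constraints can simply be set everywhere. Such marker-free states would then have (near-)zero energy in both yes and no instances, destroying soundness. The paper's fix is not a penalty but an energy \emph{bonus}: the checking and propagation terms are multiplied by $3$ and the one-body term $I - \ketbra{\leftend}{\leftend} - \ketbra{\rightend}{\rightend}$ is added at every site, so a properly bracketed chain saves energy $2$ relative to a marker-free one, while a marker in the interior costs more in illegal pairs than it saves; this is also why the threshold $p(N)$ comes out extensive (about $N-2$) rather than being ``the energy of the honest history state'' of a purely penalizing Hamiltonian, as you define it. The alternative --- arranging the legal-pair structure so that every sufficiently long legal string must contain markers --- forces the chain to split into segments of arbitrary lengths, decoupling the segment length from $N$; handling that requires the extra machinery the paper develops only for the cycle and the infinite chain (the odd-$N$ Track-0 device, respectively the $H_{size,N}$ term), and you propose neither. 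Aside from this point (and the unstated but necessary second clock track giving the $\Theta(N^2)$-step ``minute hand'' that sequences initialization, counting, and computation, since each TM step consumes a full sweep), your sketch follows the paper's construction.
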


The theorem immediately implies that \rTIH{r} is $\QMAEXP$-complete
for any $r \ge 1$ since we can always take $H_i=0$ for $i \ge 2$ which results in a system of
$N^{r-1}$ independent lines with $N$ particles.  We prove theorem~\ref{th:qmaexp} in
section~\ref{sec:quantum}.

As is common in $\QMA$-completeness results,
the construction for Theorem \ref{th:qmaexp} creates a Hamiltonian whose ground state is
a uniform superposition of a sequence of states which represent a particular process.
A portion of the Hilbert space for the system holds a clock which allows us to control
the length of the sequence and ensures that the states in the sequence are mutually orthogonal.
That is, the $t^{th}$ state has  the form $\ket{\phi_t}\ket{t}$, where $\ket{\phi_t}$
is the $t^{th}$ state in the process we wish to simulate, and the overall ground state
will be $\sum_t \ket{\phi_t}\ket{t}$.
The size of the system controls the number of time steps for which the clock runs.
In the case of the construction presented here, the process consists of two main phases.
The first phase is the execution of a Turing machine which simply
increments a binary counter. The clock ensures that this TM is run for $N-3$ steps
after which a number that is $\Theta(N^{1/k})$ is encoded in binary in the state of the
quantum system. This state is then used as  the input to an arbitrary quantum Turing
machine which is executed in the second phase. This QTM implements a verifier which
is also allowed a quantum witness of length $\Theta(N)$. Finally, there is an energy
term which penalizes any non-accepting computation of the verifier.

We can also consider variants of \rTIH{1}.  If we use periodic boundary conditions instead
of open boundary conditions, we get the same result (see section~\ref{sec:quantumperiodic}).
If we add reflection symmetry, the problem also remains $\QMAEXP$-complete with open or
periodic boundary conditions (see section~\ref{sec:quantumreflection}).

Another case of particular physical interest is the infinite chain. Of course, if the
Hamiltonian term is fixed and the chain is infinite, the ground state energy of the system is a single number and there is
not a computational problem with an infinite family of inputs to study.
Instead, we look at a variation where the two-particle Hamiltonian term is
the input to  the problem and the size of the input is the number of bits required
to specify the term. We then ask: if this term is applied to each pair of neighboring
particles in an infinite chain, what is the ground energy per particle?

\begin{definition}
{\bf \ITIH\ (Infinite-Translationally-Invariant Hamiltonian)}

\noindent
{\bf Problem Parameter:}
Three polynomials $p$, $q$, and $r$. $d$, the dimension of a particle.

\noindent
{\bf Problem Input:} A Hamiltonian $H$ for two $d$-dimensional particles, with matrix entries which are multiples of $1/r(N)$.

\noindent
{\bf Promise:} Consider an infinite chain
of particles and the Hamiltonian resulting from applying $H$
to each pair of neighboring particles. The ground state energy per
particle of this system is either
at most $1/p(N)$ or at least $1/p(N)+1/q(N)$.

\noindent
{\bf Output:} Determine whether the ground state energy
per particle of the system is at most $1/p(N)$
or at least $1/p(N)+1/q(N)$.
\end{definition}

We prove the following theorem:

\begin{theorem}
\label{th:iqmaexp}
\ITIH\ is $\QMAEXP$-complete.
\end{theorem}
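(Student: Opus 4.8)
The plan is to prove both inclusions. For \emph{membership in $\QMAEXP$}, let $E^{\mathrm{ring}}_{L}$ be the ground energy of the $L$-cycle carrying $H$ on every edge. Cutting and regluing the ground states of two optimal rings gives $E^{\mathrm{ring}}_{L_{1}+L_{2}}\le E^{\mathrm{ring}}_{L_{1}}+E^{\mathrm{ring}}_{L_{2}}+O(\|H\|)$, so $L\mapsto E^{\mathrm{ring}}_{L}+O(\|H\|)$ is subadditive and, by Fekete's lemma, the ground energy per particle of the infinite chain, $e(H)=\lim_{L}E^{\mathrm{ring}}_{L}/L$, equals $\inf_{L}\big(E^{\mathrm{ring}}_{L}+O(\|H\|)\big)/L$; hence $E^{\mathrm{ring}}_{L}/L\in\big[e(H)-O(\|H\|/L),\,e(H)\big]$. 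For $L^{*}=\Theta\big(q(N)\|H\|\big)$ — which is $2^{O(N)}$ for an input $H$ of size $N$, since $\|H\|\le 2^{O(N)}$ — the number $E^{\mathrm{ring}}_{L^{*}}/L^{*}$ pins $e(H)$ to within $1/(3q(N))$. The verifier: Merlin sends exponentially many identical copies of a ground state of the $L^{*}$-cycle; Arthur estimates the average energy per edge to precision $o(1/q(N))$ (by sampling a random term, as in Kitaev's local-Hamiltonian protocol) and accepts iff the estimate is at most $1/p(N)+1/(2q(N))$. In a yes-instance $E^{\mathrm{ring}}_{L^{*}}/L^{*}\le e(H)\le 1/p(N)$, while in a no-instance every state on the $L^{*}$-cycle has energy per edge at least $e(H)-O(\|H\|/L^{*})\ge 1/p(N)+2/(3q(N))$, so Merlin cannot cheat; all quantities are of size $2^{\poly(N)}$, as $\QMAEXP$ permits.

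For \emph{$\QMAEXP$-hardness} we reduce from an arbitrary language in $\QMAEXP$, with verifier $V$ using witness length and running time $2^{n^{c}}$ on inputs of length $n$. Given $x\in\{0,1\}^{n}$, we output a two-particle term $\tilde H$ on particles of a fixed large-but-constant dimension, of description size $\poly(n)$, built on the clock and propagation machinery of Section~\ref{sec:quantum}. Distinguished ``bracket'' states partition any low-energy configuration of the infinite chain into finite segments, and inside a segment the history-state construction performs, in order: (i) extraction onto the work tape of the bits of $x$ from a designated matrix entry $\phi_{x}$ of $\tilde H$ — a rational of the required granularity whose binary expansion is $x$, using the standard gadget for encoding an input into a Hamiltonian parameter, with the size parameter $N=|\tilde H|$ and the denominator polynomial $r$ chosen compatibly so that $\phi_{x}$ is a multiple of $1/r(N)$; (ii) simulation of $V(x,\cdot)$, with the quantum witness occupying the remainder of the segment; and (iii) once $V$ halts, emission of an end-of-computation marker at the fixed, $x$-determined position $\ell_{0}=2^{\poly(n)}$, past which every particle is forced into a ``slack'' state. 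Besides the propagation and bracket terms, $\tilde H$ carries three diagonal penalties: an ``out'' projector onto rejecting halting configurations with a large coefficient $\Phi=\ell_{0}^{O(1)}$; an ``out-of-space'' projector (a bracket abutting an unfinished computation) with coefficient $\Phi'=\ell_{0}^{O(1)}$; and a one-particle penalty $\tau=\Theta(1/p(n))$ on each slack particle. We take the problem polynomials with $1/p(n),1/q(n)=1/\poly(n)$.

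In a yes-instance the periodic configuration built from length-$\ell_{0}$ accepting-history segments separated by single (cost-free) brackets is a zero-energy state, so $e(\tilde H)=0\le 1/p(n)$. In a no-instance we bound the energy density of an arbitrary configuration by the standard Kitaev argument, segment by segment, as a function of the segment length $\ell$: a segment with $\ell<\ell_{0}$ must pay the out-of-space penalty, contributing density at least $\Omega\big(\Phi'/\ell_{0}^{O(1)}\big)$; a segment with $\ell=\ell_{0}$ must pay the out penalty, contributing density at least $\Omega\big(\Phi/\ell_{0}^{O(1)}\big)$ despite the $1/\poly$ spectral-gap loss; and a segment with $\ell>\ell_{0}$ carries $\ell-\ell_{0}$ slack particles, each penalised in \emph{every} basis state of the history (hence with no spectral-gap loss), contributing density at least $\Omega\big(\tau(\ell-\ell_{0})/\ell+\Phi/\ell^{O(1)}\big)$. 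Choosing $\Phi$ and $\Phi'$ to be suitably large fixed powers of $\ell_{0}$ makes the minimiser over $\ell$ satisfy $\ell/\ell_{0}\ge 2$, so the density of every configuration is at least $(2/3)\tau\ge 1/p(n)+1/q(n)$. Configurations that are not segmented histories — a bracket-free ``infinite computation'', or a segment violating the propagation rules — are handled exactly as in the proof of Theorem~\ref{th:qmaexp}: they either accrue slack or out penalties at the same density, or contain a local rule violation whose cost, together with the clock terms, forces comparable energy. The two directions together give that \ITIH\ is $\QMAEXP$-complete.

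The step I expect to be the main obstacle is the no-instance lower bound: showing that \emph{every} state of the infinite chain — not just the well-behaved segmented histories — has energy density at least $1/p(n)+1/q(n)$. This requires a faithful infinite-volume (segmented) version of Kitaev's geometric lemma; a careful separation of the penalties that do and do not survive the $1/\poly$ spectral-gap loss — the slack penalty does, the ``out'' penalty does not, which is precisely why $\Phi$ must be taken exponentially large; and an argument forbidding configurations that dilute the penalties with anomalously long segments. A secondary technical point is forcing a valid segment's computational footprint to be rigidly $\ell_{0}(x)$, so that any overlong segment necessarily carries penalised slack.
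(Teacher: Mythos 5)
Your hardness reduction has two genuine problems, and they sit exactly at the heart of the theorem. First, step (i) — extracting the bits of $x$ from a designated matrix entry $\phi_x$ of $\tilde H$ ``using the standard gadget for encoding an input into a Hamiltonian parameter'' — is not a standard gadget you can invoke; making a translationally invariant history-state Hamiltonian read out the binary expansion of one of its own coefficients is a substantial construction in its own right (it is essentially the phase-estimation machinery developed much later for the spectral-gap undecidability result), and nothing in Section~\ref{sec:quantum} provides it. The paper avoids this entirely: the instance is encoded in the \emph{coefficients of a diagonal size-selection term}, $H_{size,N}=\frac{1}{N}I-2\ketbra{\leftend}{\leftend}+\frac{T_N}{N-2}\ketbra{\arrRzero}{\arrRzero}$ (specified to $O(\log N)$ bits), whose energy balance forces low-energy segments to have length exactly $N$; the segment then learns the input by \emph{counting its own length} with the counting machine $M_{BC}$ already built into the finite-chain construction. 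Second, and more fatally, your no-instance bound fails. The ``out'' and ``out-of-space'' projectors act only at particular clock times, so a segment of length exactly $\ell_0$ containing a rejecting computation can carry a history state whose amplitude profile is smoothly damped toward the penalized times: the propagation cost of such damping is $O(1/T^2)$ with $T=\mathrm{poly}(\ell_0)$, \emph{independently of $\Phi$ and $\Phi'$}. Scaling those coefficients to $\ell_0^{O(1)}$ therefore does not push the minimizer to $\ell\geq 2\ell_0$; exact-length rejecting segments have energy density exponentially small in $n$, so the claimed soundness threshold $1/p(n)+1/q(n)$ with $p,q=\mathrm{poly}(n)$ is unattainable by this construction. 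The paper's penalties for wrong segment length are time-\emph{uniform} diagonal terms (they charge every configuration in the history), which is why they survive the spectral-gap loss — this is made quantitative via the Kempe–Kitaev–Regev projection lemma for $l\leq 2N$ and a direct counting argument for $l>2N$ — and the paper accepts a promise gap that is only inverse-polynomial in $N$ (hence exponentially small in the input size), with yes-instances at $\leq \epsilon/N^3$ and no-instances at $\Omega(1/N^9)$ per particle. Your own closing remark correctly identifies the out-penalty/gap-loss obstacle, but the proposed remedy (exponentially large $\Phi$) does not overcome it.

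On containment: the paper does not spell out a $\QMAEXP$ membership argument, and your ring-based protocol is a reasonable sketch, but note that the inequality $E^{\mathrm{ring}}_L/L\leq e(H)$ is false in general (an odd frustrated ring already violates it); you only get $E^{\mathrm{ring}}_L/L\leq e(H)+O(\lVert H\rVert/L)$, which still suffices for your protocol after adjusting thresholds, and the identification of the infinite-chain energy density with the ring limit, as well as the soundness of energy estimation on possibly entangled copies, need the standard but non-trivial justifications. The yes-instance energy in your reduction is also not exactly $0$ once the verifier has error $\epsilon>0$ and the out projector carries weight $\Phi$; this is fixable by amplification but must be said. The essential missing content, however, is in the hardness direction described above.
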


As a corollary of Theorem~\ref{th:qmaexp}, the following version of $N$-REPRESENTABILITY~\cite{LCV} is also $\QMAEXP$-complete: Given a density matrix
$\rho$ on two $d$-state particles, is it within $\epsilon$ of a state $\rho'$ such that there exists a translationally-invariant pure state
$\ket{\psi}$ for $N$ particles arranged in a circle for which $\rho'$ is the marginal state of two adjacent particles?  $\rho$ is a parameter of the
problem, and $N$, given in binary, is the only input, as in our Hamiltonian problem.  We can reduce to this version of $N$-REPRESENTABILITY by starting
with \rTIH{1} on a circle.  Then there is always a translationally-invariant pure ground state $\ket{\psi}$ of the Hamiltonian $H$.  By breaking the
Hilbert space of two $d$-state particles up into small balls, we can get a finite set of density matrices $\rho$ to try.  For each one, if we can solve
$N$-REPRESENTABILITY, we can determine if $\rho$ can be extended to a candidate ground state $\ket{\phi}$, and if so we can determine the energy of
$\ket{\phi}$, since it is just equal to $N \mathrm{tr}(H_1 \rho)$.  Trying all possible $\rho$, we can thus find the ground state energy of $H$, up to
some $\epsilon$-dependent precision.

\section{Hardness of \tiling}
\label{sec:tiling}

The construction will make use of a
binary counter Turing machine $M_{BC}$
which
starts with a blank semi-infinite tape. The head begins in a designated start
state in the left-most position of the tape. $M_{BC}$ will generate all
binary strings in lexicographic order. More specifically, there is a
function $f: \mathbb{Z} \rightarrow \{0,1\}^*$ such that for some
constant $N_0$ and
every $N \ge N_0$, if $M_{BC}$ runs for $N$ steps, then the string $f_{BC}(N)$
will be written on the tape with the rest of the tape blank.
Moreover there are constants $c_1$ and $c_2$ such that
if $n$ is the length of the string $f_{BC}(N)$ and
$N \ge N_0$, then
$2^{c_1 n} \le N \le 2^{c_2 n}$.
We will also assume that for any binary string $x$, we can compute $N$
such that $f_{BC}(N) = x$ in time that is polynomial in the length of $x$.  In some of the variations of the problem we consider in
section~\ref{sec:variants}, we will need to put additional restrictions on $N$ (such as requiring $N$ to be odd), and in those cases, we still require
that we can find an $N$ with the appropriate restrictions such that $f_{BC}(N) = x$.

Using a standard padding argument, we can reduce any language in
$\NEXP$ to $\NTIME(2^{c_1 n})$. If $L$ is in $\NTIME(2^{n^k})$, the reduction
consists of padding an input $x$ so that its length is $|x|^k/c_1$ \cite{papa95}.
Thus, we will take an arbitrary non-deterministic Turing machine $M$ which accepts
a language $L$ in time $2^{c_1 n}$ and reduce it to \tiling.
The tiling rules and boundary conditions
will be specific to the Turing machine $M$ but will be independent
of any particular input. The reduction for Theorem \ref{th:unweighted-tiling}
then will take an input string $x$ and output
integer $N$ such that $f_{BC}(N-3)=x$. The tiling rules will have the property that
a string $x$ is in $L$ if and only if an $N \times N$ grid can be tiled according
to the tiling rules.

{\bf Proof of Theorem \ref{th:unweighted-tiling}}.
The boundary conditions for the $N \times N$ grid will be that the four corners of the grid must have a designated tile type $\tileC$.
(We actually only need to use two corners as described in Section \ref{sec:boundaryconditions}.)
First we will specify a set of boundary tiles and their constraints.
In addition to \tileC\,
there are four other kinds of boundary tiles: $\tileW$, $\tileN$, $\tileS$, $\tileE$.
We will call the rest of the tiles {\em interior} tiles.
The tiling rules for the boundary tiles are summarized in table~\ref{table:fourcorners}.
\begin{table}
\begin{centering}
\begin{tabular}{lc|cccccc}
     &             &                \multicolumn{6}{c}{Tile on right} \\
     &             & $\tileC$ & $\tileW$ & $\tileE$ & $\tileN$ & $\tileS$ & $\tilevariable$ \\
 \hline
     & $\tileC$    &     N    &    N     &    N     &     Y    &     Y    &    N   \\
Tile & $\tileW$    &     N    &    N     &    Y     &     N    &     N    &        \\
on   & $\tileE$    &     N    &    N     &    N     &     N    &     N    &    N   \\
left & $\tileN$    &     Y    &    N     &    N     &     Y    &     N    &    N   \\
     & $\tileS$    &     Y    &    N     &    N     &     N    &     Y    &    N   \\
 & $\tilevariable$ &     N    &    N     &          &     N    &     N    &        \\
\end{tabular}
\qquad
\begin{tabular}{lc|cccccc}
       &             &                \multicolumn{6}{c}{Tile on top} \\
       &             & $\tileC$ & $\tileW$ & $\tileE$ & $\tileN$ & $\tileS$ & $\tilevariable$ \\
 \hline
       & $\tileC$    &     N    &    Y     &    Y     &     N    &     N    &    N   \\
Tile   & $\tileW$    &     Y    &    Y     &    N     &     N    &     N    &    N   \\
on     & $\tileE$    &     Y    &    N     &    Y     &     N    &     N    &    N   \\
bottom & $\tileN$    &     N    &    N     &    N     &     N    &     N    &    N   \\
       & $\tileS$    &     N    &    N     &    N     &     Y    &     N    &        \\
   & $\tilevariable$ &     N    &    N     &    N     &          &     N    &        \\
\end{tabular}
\caption{The tiling rules for boundary tiles.  $\tilevariable$ represents any interior tile.  ``N'' indicates a disallowed pairing of tiles.  For two boundary tiles, ``Y'' represents an allowed pairing.  Some of the rules for interior tiles are not specified, because they depend on the specific interior tile.}
\label{table:fourcorners}
\end{centering}
\end{table}

\tileW\ will mark the left side of the grid, \tileN\ the top of the
grid, \tileS\ the bottom of the grid, and \tileE\ the right side
of the grid. (See figure~\ref{fig:fourcorners}.)
\begin{figure}
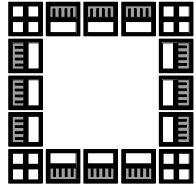

\begin{centering}
\begin{tabular}{c@{\extracolsep{0.1em}}c@{}c@{}c@{}c}
\tileC & \tileN & \tileN & \tileN & \tileC \\
\tileW & & & & \tileE \\
\tileW & & & & \tileE \\
\tileW & & & & \tileE \\
\tileC & \tileS & \tileS & \tileS & \tileC
\end{tabular}
\caption{The only allowed tiling of the sides of a $5 \times 5$ grid.}
\label{fig:fourcorners}
\end{centering}
\end{figure}
To show this, note the following facts:
\begin{itemize}
\item Nothing can go to the left of a $\tileW$ tile which means that
the only place a $\tileW$ tile could go is the left-most boundary.

\item Similarly, $\tileN$ tiles can only go in the
top row, $\tileS$ tiles can only go in the bottom row, and $\tileE$ tiles can only go in the right-most column.

\item No interior tile can border a $\tileC$ tile in any direction. Furthermore
a $\tileC$ cannot border on itself in any direction. This means that the
only possible locations for a $\tileC$ are the four corners since those
are the only places which can be surrounded by $\tileW$, $\tileN$, $\tileS$, or $\tileE$
tiles. Since the boundary conditions state that $\tileC$ tiles must go in
the corners, those are exactly the locations that will hold $\tileC$
tiles.

\item The only tiles that can go above or below a $\tileC$ tile are $\tileW$ and $\tileE$ tiles, and
$\tileE$ cannot go on the west boundary.  Thus, the tiles on the west boundary adjacent to the corners
must be $\tileW$ tiles.

\item Only $\tileW$ or $\tileC$ can go above or below a $\tileW$ tile, so the entire west boundary, except for the corners, will be $\tileW$ tiles.

\item Similar logic shows that the entire east boundary, except for the corners, will be $\tileE$ tiles.  Also, the entire south boundary, except for the corners, will be $\tileS$ tiles, and the entire north boundary, except for the corners, will be $\tileN$ tiles.

\end{itemize}

The remainder of the grid will be tiled in two layers.
The constraints on the two layers only interact at the bottom of the grid, so
we describe each layer separately. The actual type for an
interior tile is specified by a pair denoting its
layer 1 type and layer 2 type.
The bottom layer will be used to simulate the Turing machine $M_{BC}$.
The top boundary of the grid will be used to ensure that $M_{BC}$
begins with the proper initial conditions. Then the rules will enforce that
each row of the tiling going downwards advances the Turing machine $M_{BC}$
by one step. At the bottom of the grid, the output is copied onto layer 2.
Layer 2 is then used to simulate a generic non-deterministic Turing machine
on the input copied from layer 1. The lower left corner is used to
initialize the state of $M$ and the constraints enforce that
each row going upwards advances the Turing machine $M$ by one step.
Finally, the only states of $M$ that are allowed to be below an $\tileN$ tile
are accepting states.
Since each Turing machine only executes for $N-3$ steps and the grid has space
for $N-2$ tape symbols, the right end of the tape will never be reached.

Although it is well known that tiling rules are Turing complete \cite{berger66},
we review the ideas here in order to specify the details in our
construction. We will assume that the Turing machine $M$ is encoded
in a tiling that goes from bottom to top. This can easily be reversed for
$M_{BC}$ which goes from top to bottom.
The non-deterministic Turing machine $M$ is specified by
a triplet $(\Sigma, Q, \delta)$, with
designated blank symbol $\# \in \Sigma$, start state $q_0 \in Q$ and
accept state $q_A \in Q$.
There are three varieties of tiles, designated by elements of
$\Sigma$ (variety 1), $\Sigma \times Q \times \{r,l\}$ (variety 2) and
$\Sigma \times Q \times \{R,L\}$ (variety 3).  Variety 1 represents the state of the tape away from the Turing machine head.  Variety 2 represents the
state of the tape and head when the head has moved on to a location but before it has acted.  The $\{r, l\}$ symbol in a variety 2 tile tells us from which direction the head came in its last move.  Variety 3 represents the state of the tape and head after
the head has acted, and the $\{R,L\}$ symbol tells us which way the head moved.

The tiling rules are given in table~\ref{table:TM}.
\begin{table}
\begin{centering}
\begin{tabular}{lc|cccccc}
     &             &                \multicolumn{6}{c}{Tile on right} \\
     &             & $[b]$ & $[b,q',r]$ & $[b,q',l]$ & $[b,q',R]$ & $[b,q',L]$ & $\tileE$ \\
 \hline
     & $[a]$       &   Y   &     Y      &      N     &      Y     &     N      &   Y     \\
Tile & $[a,q,r]$   &   N   &     N      &      N     &      N     & If $q=q'$  &   N     \\
on   & $[a,q,l]$   &   Y   &     N      &      N     &      N     &     N      &   Y     \\
left & $[a,q,R]$   &   N   &     N      & If $q=q'$  &      N     &     N      &   N     \\
     & $[a,q,L]$   &   Y   &     N      &      N     &      N     &     N      &   Y     \\
     & $\tileW$    &   Y*  &     Y     & If $q'=q_0$ &      Y     &     N      &   Y     \\
\end{tabular}

\bigskip
\begin{tabular}{lc|ccccc}
       &             &                \multicolumn{5}{c}{Tile on top} \\
       &             & $[b]$    & $[b,q',r]$ & $[b,q',l]$                & $[b,q',R]$ & $[b,q',L]$ \\
 \hline
       & $[a]$       & If $a=b$ &  If $a=b$  &  If $a=b$, $q' \neq q_0$  &     N      &     N   \\
Tile   & $[a,q,r]$   &    N     &     N      &     N                     & If TM rule & If TM rule \\
on     & $[a,q,l]$   &    N     &     N      &     N                     & If TM rule & If TM rule \\
bottom & $[a,q,R]$   & If $a=b$ &  If $a=b$  &  If $a=b$, $q' \neq q_0$  &     N      &     N   \\
       & $[a,q,L]$   & If $a=b$ &  If $a=b$  &  If $a=b$, $q' \neq q_0$  &     N      &     N   \\
\end{tabular}
\caption{The tiling rules to simulate a Turing machine.  ``N'' indicates a disallowed pairing of tiles, and ``Y'' represents an allowed pairing.  Pairings with ``if'' statements are allowed only if the condition is satisfied.  ``If TM rule'' means the pairing is allowed only if $(a,q) \rightarrow (b,q',L/R)$ is one of the valid non-deterministic moves of the Turing machine being simulated.  The ``Y*'' entry will be modified later to get the correct starting configuration for the Turing machine. }
\label{table:TM}
\end{centering}
\end{table}
The table below gives an example of a section of tiles that
encodes the move $(a,q) \rightarrow (b,q',L)$ of the Turing machine from the bottom
row to the middle row and the move $(c,q') \rightarrow (f,q'',R)$ from
the middle row to the top row:

\[
\begin{array}{|c|c|c|}
\hline
[f,q'',R] & [b,q'',l] & [d] \\
\hline
[c,q',r] & [b,q',L] & [d] \\
\hline
[c] & [a,q,r] & [d,q,L]\\
\hline
\end{array}
\]

The lower row shows the head in the square with the $a$. The $[d,q,L]$
is from the previous TM move.
The tile $[b,q',L]$ in the middle row
enforces that the tiler is committing to executing the
step $(a,q) \rightarrow (b,q',L)$, although there may have been
other possible non-deterministic choices. The $[c,q',r]$ tile to the left of
the $[b,q',L]$ shows the new location and state of the head after the first move.
The $[b,q',L]$ tile now just acts as a $[b]$ tile for purposes of
the tiling above. The tile $[f,q'',R]$ enforces that the
tiler is committing to executing the step $(c,q') \rightarrow (f,q'',R)$.
The $[b,q'',l]$ tile to the right of
the $[f,q'',R]$ shows the new location and state of the head after the second move.

To be consistent with the horizontal tiling rules, a row must consist of some number of variety 1 tiles, along with adjacent pairs consisting of a variety 2 tile and a variety 3 tile.  The variety 2 and 3 tiles must be matched in the following sense: $q = q'$, and either we have $r$ on the left and $L$ on the right or $R$ on the left and $l$ on the right.
At the west edges of the row, we could possibly have just a single variety 2 tile by itself.
However, this can only happen if the tile is of the form $[a,q,l]$.
The rules enforce then that a \tileW\ tile can only go to the left of a variety 2 tile if
the Turing Machine state is $q_0$, which
is the starting head state of the machine.  We can assume without loss of
generality that the Turing machine never transitions back to the $q_0$ state, and
never has a transition that would move it left from the first location on the tape.
Given that we have one row of this form, the next row up must have the
same number of pairs of variety 2 and variety 3 tiles, and furthermore,
each pair must be shifted one position left or right, with the Turing
machine performing an allowed transition, as in the example.  If one row
has exactly one variety 2 tile $[a,q_0,l]$ in the leftmost position, then every row
above it in a valid tiling will also have exactly one variety 2 tile, and the $j$th
row up will correctly represent the state of the Turing machine tape and head after $j$ steps.

For our particular tiling, we will need additional rules for some pairs of boundary tiles and interior tiles.  These additional rules will be different for layers $1$ and $2$, and some of them will couple the two layers.  In addition, we will need to slightly modify some of the horizontal tiling rules.  The new and modified rules are summarized in table~\ref{table:TMboundaries}.  Also, for layer $1$, recall that we reverse ``above'' and ``below'' for the regular Turing machine simulation rules (table~\ref{table:TM}), so that time goes downwards instead of upwards.
\begin{table}
Additional layer $1$ rules:
\smallskip
\begin{centering}
\begin{tabular}{lc|ccccc}
\multicolumn{2}{l|}{Boundary Tile} &                \multicolumn{5}{c}{Adjacent interior tile} \\
\multicolumn{2}{l|}{Location} & $[a]$ & $[a,q,r]$ & $[a,q,l]$          & $[a,q,R]$ & $[a,q,L]$ \\
 \hline
Bottom & $\tileS$    &    Y        &     Y     &         Y             &    Y      &    Y     \\
Top    & $\tileN$    & If $a=\#$   &     N     & If $a=\#$ and $q=q_0$ &    N      &    N      \\
Left   & $\tileW$    & If $a \neq \#$ &  Y     & If $q=q_0$            &    Y      &    N      \\
\end{tabular}

\bigskip

\end{centering}
\noindent Additional layer $2$ rules:
\smallskip
\begin{centering}
\begin{tabular}{lc|ccccc}

\multicolumn{2}{l|}{Boundary Tile} &                \multicolumn{5}{c}{Adjacent interior tile} \\
\multicolumn{2}{l|}{Location} & $[a]$         & $[a,q,r]$  & $[a,q,l]$                            & $[a,q,R]$ & $[a,q,L]$ \\
 \hline
Bottom  & $\tileS$ & If $a$ matches layer $1$ &     N      & If $q=q_0$ and $a$ matches layer $1$ &    N      &    N      \\
Top     & $\tileN$ &   Y                      & If $q=q_A$ & If $q=q_A$                           &    Y      &    Y      \\
Left    & $\tileW$ & If $a \not\in \Sigma_{M_{BC}}$ &  Y   & If $q=q_0$                           &    Y      &    N      \\
\end{tabular}

\caption{Additional tiling rules between certain boundary and interior tiles for layers $1$ and $2$.  For layer $2$, the alphabet symbol $a \in \Sigma$ must match the alphabet symbol for the corresponding layer $1$ tile when on the bottom interior row.}
\label{table:TMboundaries}
\end{centering}
\end{table}

We would like to start out the
Turing machine $M_{BC}$ with $[\#, q_0, l]$ in the leftmost location
followed by $[\#]$ tiles.  This is ensured by the additional rules for layer $1$: The top interior row must consist of only these two types of tiles, and the leftmost location cannot be $[\#]$.  Since there cannot be any variety 3 tiles, there can only be one $[\#, q_0, l]$ in the leftmost location.
We can assume without loss of generality that
the Turing machine overwrites the leftmost $\#$ on the tape
and never writes a $\#$ there again.

The rest of the layer 1 rules just enforce the rules for the Turing machine
$M_{BC}$.

Now in order to copy the output from $M_{BC}$ to the input tape
for $M$, we restrict the kinds of tiles that can go above $\tileS$ tiles.
All the alphabet characters in the bottom interior row of layer 2 must
match the alphabet characters for layer 1.  That is, the output of $M_{BC}$ is
copied onto the input of $V$.

We also want to ensure that the starting configuration of $V$ has
only one head in the leftmost location.  To accomplish this,
we forbid a $\tileW$ to go next to an $[a]$ tile for $a \in \Sigma_{M_{BC}}$, so
the leftmost tile in the 2nd row from the bottom of layer 2 must be $[a,q_0,l]$.  Since there
can be no variety 3 tiles in the 2nd row, that must be the only variety 2 tile
in the row.  A little care must be taken to overwrite the leftmost input tape
character with something that is not in the alphabet of $M_{BC}$.
This is because we have forbidden having an $[a]$ tile to the right
of a $\tileW$ for any $a \in \Sigma_{M_{BC}}$, and this prohibition
applies to {\em all} rows.
The information encoded in the left-most tape symbol can be retained by
having a new $a'$ symbol in $\Sigma_M$ for every $a \in \Sigma_{M_{BC}}$.

Finally, the only variety 2 tiles on layer 2 which we allow below a
$\tileN$ tile must be of the form $[a,q_A, r/l]$, where $q_A$ is the accepting state.
Thus, there is a valid tiling if and only if the non-deterministic TM $M$ accepts on input $x$ in $N-3$ steps.

\section{Variants of the Classical Tiling Problem}
\label{sec:variants}

We now turn to studying variants of \tiling.  There are variety of different boundary
conditions we could consider.  We can also consider changing the absolute prohibitions
on certain adjacent tiles to a soft condition by assigning different weights to the
different adjacent pairs of tiles.  We can add additional reflection or rotation symmetry
to the tiling rules.
Also, we prove that in one dimension, \tiling\ and all the above variants are easy.

All of the hardness results are proven by reducing from \tiling, perhaps with a restriction
(usually straightforward) on the value of $N$.  For instance, when discussing WEIGHTED
\tiling\ with periodic boundary conditions, we require $N$ to be odd.  \tiling\ with odd $N$
is clearly still $\NEXP$-complete, as we can either use a universal Turing Machine $M$ that performs
some processing on the input $x$, or simply a different counting Turing Machine $M_{BC}$ that
counts more slowly, effectively ignoring the least significant bit of $N$.  The
other restrictions on $N$ we use similarly result in hard subclasses of \tiling.

To prove the variants hard, we have a main layer, which duplicates the tiling rules
given to us, with perhaps some small variations (e.g., adding extra tiles).  We also
add additional layers with tiling rules that force a structure that duplicates the
conditions of the standard \tiling\ problem.  The details of the additional layer differ
with each construction, and are provided in the discussion below.

\subsection{Boundary Conditions}
\label{sec:boundaryconditions}

Our choice of fixing the tiles in all four corners of the square is unusual.  See,
for instance, Papadimitriou~\cite{papa95}, who fixes just the tile in a single
corner.  If we had instead chosen that convention, the \tiling\ problem would
become easy, in an annoying non-constructive sense:
\begin{theorem}
Define a variant of \tiling\ where a designated tile $t_1$ is placed
in the upper right corner of the grid, and the other corners are
unconstrained.  Then there exists $N_0 \in \integer^+ \cup \{\infty\}$ such
that if $N < N_0$, there exists a valid tiling, and if $N \geq N_0$,
then there does not exist a tiling.  However, $N_0$ is uncomputable as
a function of $(T, H, V)$.
\end{theorem}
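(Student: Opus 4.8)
The plan is to treat the statement as two claims: (i) the set $S=\{N : \text{the } N\times N \text{ grid admits a valid tiling with }t_1\text{ in the upper-right corner}\}$ is downward closed, which forces it to have the stated threshold form; and (ii) a reduction from the blank-tape halting problem shows the threshold $N_0$ is uncomputable. Claim (i) is quick: given a valid tiling of the $N\times N$ grid, delete its bottom row and its leftmost column. What remains occupies an $(N-1)\times(N-1)$ grid, every adjacency among the surviving tiles was already legal, and the upper-right corner — the only boundary condition — still carries $t_1$. Hence $N\in S\Rightarrow N-1\in S$, so $S$ is empty, all of $\integer^+$, or an initial segment; putting $N_0=\min(\integer^+\setminus S)$ (and $N_0=\infty$ when $S=\integer^+$) yields exactly the dichotomy in the statement.

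For (ii), I would turn an arbitrary deterministic Turing machine $M$ (normalized so that its tape is semi-infinite and its head starts at the closed end) into tiling rules $(T_M,H_M,V_M)$ for which the $N\times N$ grid is tileable iff $M$, started on blank tape, runs for at least $N-c$ steps, $c$ a fixed constant. The corner tile $t_1$ is used to grow two borders: a horizontal rule permitting only a ``top-edge'' tile $t_{\mathrm{top}}$ immediately to the left of $t_1$ and of $t_{\mathrm{top}}$, together with a vertical rule forbidding anything above $t_{\mathrm{top}}$, forces the whole top row (minus the corner) to be $t_{\mathrm{top}}$ and confines $t_{\mathrm{top}}$ to that row; symmetrically a ``right-edge'' tile $t_{\mathrm{right}}$ is forced down the rightmost column and confined there. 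The remaining $(N-1)\times(N-1)$ interior is then tiled by the standard Turing-machine gadget of Table~\ref{table:TM}, oriented with time running downward: the tile directly below a $t_{\mathrm{top}}$ is forced to be a special ``initial blank'' symbol, distinct from the computational blank and allowed only immediately beneath $t_{\mathrm{top}}$, and because that symbol is forbidden immediately to the left of $t_{\mathrm{right}}$ while a state-$q_0$ head tile is permitted there, the top interior row is forced to encode $M$'s start configuration with the head at the rightmost interior cell. Declaring that no tile may be placed below a halting head tile, a straightforward row-by-row induction shows the interior tiling is uniquely determined and can be completed precisely when $M$ has not yet halted by the time the bottom interior row is reached.

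It follows that $N_0=\infty$ iff $M$ never halts on blank tape; since $M\mapsto(T_M,H_M,V_M)$ is clearly computable and blank-tape halting is undecidable, $N_0$ is not a computable function of $(T,H,V)$. (Alternatively one can invoke the known undecidability of tiling the quarter-plane with a prescribed tile at the corner, together with a König's-lemma argument showing $N_0=\infty$ iff the quarter-plane anchored at $t_1$ admits a tiling.) The main obstacle — and the only real difference from the proof of Theorem~\ref{th:unweighted-tiling} — is that with just the single upper-right corner available, the left and bottom edges cannot be pinned, so nothing can be anchored to them; the ``one-shot'' initial-blank device above is what simultaneously lets the top row be forced uniformly and still hand a clean tape to the simulation below. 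The rest is routine bookkeeping: verifying that no border tile leaks into the interior, and that the head cannot run off the forced right edge within the available steps (in $N-2$ steps it can move left at most $N-2$ cells, exactly matching the $N-1$ interior columns).
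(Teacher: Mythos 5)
Your downward-closure step (crop the bottom row and leftmost column, which cannot touch the pinned corner) is exactly the paper's argument, and your parenthetical alternative for uncomputability --- K\"onig's lemma plus the known undecidability of tiling the quarter-plane with a prescribed tile at its corner --- is, in substance, the entirety of the paper's proof of the second claim. Had you led with that, the proposal would simply match the paper.

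The route you actually develop, a direct reduction from blank-tape halting with ``the $N \times N$ grid is tileable iff $M$ runs at least $N-c$ steps,'' has a genuine gap at the step ``a straightforward row-by-row induction shows the interior tiling is uniquely determined.'' With only the upper-right corner pinned, the west edge of the grid is not policed, and the Turing-machine rules of Table~\ref{table:TM} permit a lone variety-2 tile of the form $[a,q,l]$ to sit in the leftmost column of any row: the horizontal requirement that it be matched by an adjacent variety-3 tile is vacuous there, since its would-be partner lies off the grid. In the paper's own construction this is precisely what the west border column is for (only a $q_0$-head may sit beside it, and $M$ never re-enters $q_0$); in the two-adjacent-corner variant the paper instead modifies the rules so that only $[\#]$ may appear to the right of $[\#]$, killing head creation at the unpinned right edge. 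Your ``one-shot initial blank'' device forces only the top interior row; it does nothing to prevent a fresh head from materializing at the left edge in row $3,4,\ldots$, acting for a while (overwriting tape cells) and then sliding off the left edge again, all without ever sitting next to the genuine head. Such spurious heads can alter cells that the real head later reads when it sweeps left, so the interior is \emph{not} uniquely determined, and for a halting machine $M$ you have not excluded ``cheating'' tilings of arbitrarily large grids in which the simulated computation is corrupted and never triggers the no-tile-below-a-halting-head penalty. If such tilings exist, $N_0=\infty$ even though $M$ halts, and the reduction no longer decides halting. To repair this you must either add rules that provably forbid head creation at an unpinned edge (this is exactly the bookkeeping you wave off, and it is the nontrivial part) or fall back on your parenthetical compactness argument, which is the paper's proof.
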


This theorem follows immediately from the observation that with this boundary
condition, a valid tiling for an $N \times N$ grid can be cropped by removing
the leftmost column and bottommost row to give a valid tiling for the
$(N-1) \times (N-1)$ grid.  We know $N_0$ must be uncomputable because the
question of whether there is an infinite tiling is uncomputable.  Still, if
we fix $(T, H, V)$, we know there exists a straightforward algorithm
to solve this variant of tiling: simply determine if $N < N_0$.  We just
do not know $N_0$, so we do not know precisely what algorithm to use.

On the other hand, if we fix boundary conditions in two corners, that is
already enough for hardness.  There are two cases to consider: when the
two corners are adjacent and when they are opposite.  The techniques used
for showing these two cases are hard are similar to those used in some other
variants, so we omit the details.  When the two corners are opposite, we can
create a boundary much like that of figure~\ref{fig:fourcorners}, but with two
new unique corner tiles.  When the two corners are adjacent, we can modify the
tiling rules given in section~\ref{sec:tiling} in order to avoid the need for the
right boundary.  We only need special tiling conditions at the right boundary to
prevent a new TM head from appearing there.  If instead, we insist that only $[\#]$
can appear to the right of $[\#]$, it is also impossible to create a new TM head on
the right end.

Another interesting case is when we have periodic boundary conditions.  That
is, we consider the top row to be adjacent to the bottom row, and the leftmost column is
adjacent to the rightmost column.  Essentially, we are tiling a torus.  This
case is particularly interesting because it is truly translationally invariant,
unlike our usual formulation, where the boundaries break the translational
invariance.
\begin{theorem}
\label{thm:periodic}
Define PERIODIC \tiling\ as a variant of \tiling\ with periodic boundary conditions
on an $N \times N$ grid.  PERIODIC \tiling\ is $\NEXP$-complete under an expected
poly-time randomized reduction or a deterministic polyspace reduction.
\end{theorem}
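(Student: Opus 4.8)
The plan is to reduce from a language $L$ accepted by a nondeterministic machine $M$ in time $2^{c_1 n}$, exactly as in Section~\ref{sec:tiling}, and to reuse the two-layer Turing-machine simulation there; the one change is that the role played by the designated corner tiles must be taken over by a \emph{forced seam}. On a torus there are no corners with which to anchor the grid, so instead the fixed, constant-size tile set must be engineered so that every valid tiling of the $N\times N$ torus is obliged to contain one distinguished horizontal row and one distinguished vertical column of special ``seam'' tiles. A pair of crossing seams cuts the torus open into an $(N-s)\times(N-s)$ square ($s$ the constant seam width), whose interior is then tiled exactly as in the proof of Theorem~\ref{th:unweighted-tiling}: layer~$1$ runs the binary counter $M_{BC}$, layer~$2$ runs the verifier $M$, the horizontal-seam tiles impose the initial condition for $M_{BC}$ on one side and the accepting check for $M$ on the other, and the vertical-seam tiles supply the tape-endpoint markers. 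Once a unique seam of each orientation is present, tileability of the torus is equivalent to tileability of the induced square, and hence to $x\in L$, provided $N$ is chosen with $f_{BC}(N-s)=x$ after suitable padding of $x$.

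The first technical point is to force \emph{at least one} seam of each orientation, ruling out ``seamless'' tilings --- in particular the dangerous degenerate tiling in which every tile is an inert interior tile carrying no Turing-machine head, which if permitted would tile the torus for every $N$ and wreck the reduction. I would add an auxiliary layer carrying a value in $\integer/M\integer$ for a fixed composite constant $M$, with rules forcing this value to increase by $1$ going upward, except inside a seam tile, where an arbitrary change is allowed. Traversing a vertical cycle of length $N$, a seamless column would accumulate a net change of $N \bmod M$; since $M$ is composite and (as we shall arrange) $N$ is prime, $M\nmid N$, so this is impossible and every column must meet a seam cell; horizontal adjacency rules for seam tiles then force these cells to line up into full rows. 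A symmetric layer counting horizontally forces at least one vertical seam. Hence every valid torus tiling contains seams of both orientations.

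Second, we must force \emph{exactly one} seam of each orientation. Several parallel horizontal seams would cut the torus into a stack of independent blocks, each a copy of the grid construction, and this has to be prevented: a short block could encode a short string lying in $L$ and make the torus tileable even when $x\notin L$. I would add a further layer whose rules force the horizontal seams to be equally spaced, so that $a$ horizontal seams imply $a\mid N$; as each block has height at least a fixed constant exceeding $1$, taking $N$ prime forces $a=1$, and likewise exactly one vertical seam, leaving a single block that encodes precisely the instance $x$. The reduction must therefore output a prime $N$ with $f_{BC}(N-s)=x$ after padding. Allowing $x$ to be padded with up to $\poly(|x|)$ don't-care bits that $M$ ignores, the admissible values of $N$ cover an interval of exponential length which, by standard bounds on gaps between primes, contains a prime; but locating a prime in such an interval is not known to be doable in deterministic polynomial time. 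Random sampling from the interval followed by an AKS primality test finds one in expected polynomial time, and exhaustive search over the interval finds one in polynomial space --- exactly the two reduction notions appearing in the statement.

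The main obstacle is the design and analysis of the seam-forcing and spacing layers and the verification that all of these rules close up consistently around \emph{both} cycles of the torus simultaneously: one must show that the $\integer/M\integer$ layer admits no seamless closed configuration, that the only way to escape every ill-formed or short-period local pattern on the torus is to realize a full block of the intended shape, that the equal-spacing constraint is unsatisfiable with more than one seam when $N$ is prime, and that the crossing point of the two seams (the analogue of the corner tile in Section~\ref{sec:tiling}) can itself be tiled consistently. Packaging all of these conditions into a single constant-size tile set, while the only varying input remains the number $N$, is where essentially all the work lies.
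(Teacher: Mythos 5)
Your high-level strategy coincides with the paper's: take $N$ prime, add constant-size auxiliary layers that force one horizontal and one vertical border on the torus (existence from a counting-mod-$M$ obstruction, which is the same trick as the paper's odd-$N$ checkerboard; uniqueness from a spacing argument plus primality of $N$), run the two-layer construction of Section~\ref{sec:tiling} inside the cut-open square with the seams supplying the boundary conditions, and obtain the two reduction modes from prime-existence and prime-density results in short intervals. On the number-theoretic side your sketch is essentially the paper's; just note that ``standard bounds on gaps between primes'' must mean a Hoheisel/Ingham-type theorem giving primes (indeed, the asymptotic count of primes, which is what the randomized expected-poly-time bound needs) in intervals $[y, y+y^{\theta}]$ with $\theta<1$ --- the Prime Number Theorem or Bertrand's postulate is not enough, since even with heavy padding the admissible interval has length only $y^{\theta}$ for some $\theta<1$.

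The genuine gap is in the uniqueness step. You posit ``a further layer whose rules force the horizontal seams to be equally spaced,'' but equal spacing is a non-local property, and constructing a constant-size, translation-invariant rule set that enforces it is precisely the heart of this proof; your proposal never supplies the gadget, and it is doubtful that one exists that constrains the horizontal seams independently of the vertical ones, since any local mechanism for comparing distances between seams needs some propagating signal anchored to a transverse reference. The paper's solution couples the two directions: one layer lets horizontal and vertical border lines form (with crossings marked and adjacent parallel lines forbidden), a second layer forces the complement of the lines to decompose into rectangles whose sides and corners are locally marked, and a diagonal line of special tiles is forced to start at the upper-left interior corner of every rectangle and can terminate only at its lower-right corner, so each rectangle must be a square. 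All rectangles being $M \times M$ squares is what yields equal spacing in both directions, giving $N = k(M+1)$ with $M \geq 1$, and primality of $N$ forces $k=1$; the corner markings then also hand the four-corner boundary conditions to the data layers. Without some analogue of this distance-comparison gadget, your assertion that $a$ horizontal seams imply $a \mid N$ has no support, and the degenerate tilings you correctly identify as dangerous (many short blocks, or unequally spaced seams) are not excluded, so the reduction does not yet go through.
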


Since we are dealing with the class $\NEXP$, even an exponential-time reduction
is meaningful.  For instance, if there is an algorithm to solve PERIODIC \tiling\ which
takes time poly($N$), then, as a consequence of theorem~\ref{thm:periodic}, $\EXP = \NEXP$.

\begin{proof}
To prove this, we will show that, for appropriate $N$, we can introduce an effective
horizontal and vertical border at some point inside the square.  The actual location
of the borders will not be specified at all, but we will choose conditions so that there
is exactly one horizontal border and one vertical border.  Those borders will then
act like the usual edges of the grid for the standard $4$-corners boundary condition.

We will specialize to $N$ which is an odd prime.  We will add two additional layers of
tiles over those for the usual \tiling\ $\NEXP$-completeness result (section~\ref{sec:tiling}).
The new layer $1$ has $7$ different possible types of tile: \tileH, \tileV, \tileC, \tileWhite,
\tileWhiteU, \tileBlack, and \tileBlackD.  The new layer $2$ has $10$ types of tile: \tileN,
\tileS, \tileE, \tileW, \tileNW, \tileNE, \tileSW, \tileSE, \tileLight, and \tileDark.

The \tileH, \tileV, and \tileC\ tiles will create the vertical and horizontal borders.
\tileBlack\ and \tileWhite\ are used to make sure that there is at least one of each kind
of border, and the \tileBlackD\ and \tileWhiteU\ tiles will create a diagonal line within
the rectangle defined by the borders.  Layer $2$ is used to mark the directions next to
the border and make sure the diagonal line goes from the upper left corner of the
rectangle to the bottom right corner, which ensures that the rectangle is a square.  When
$N$ is prime, this means there can only be one horizontal and one vertical border.  The structure
of tiling that we would like to achieve is shown in figure~\ref{fig:periodicnoweights}.

\begin{figure}
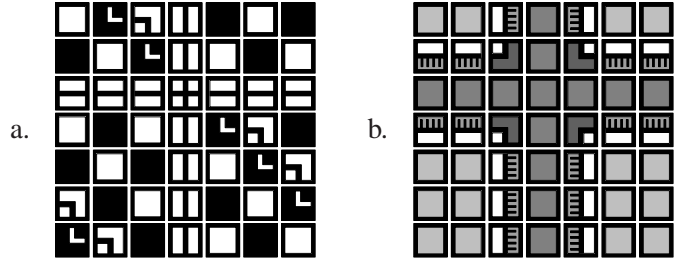

\begin{centering}
a.
\begin{tabular}{c@{\extracolsep{0.1em}}c@{}c@{}c@{}c@{}c@{}c}
\tileWhite  & \tileBlackD & \tileWhiteU & \tileV & \tileBlack  & \tileWhite  & \tileBlack \\
\tileBlack  & \tileWhite  & \tileBlackD & \tileV & \tileWhite  & \tileBlack  & \tileWhite \\
\tileH      & \tileH      & \tileH      & \tileC & \tileH      & \tileH      & \tileH \\
\tileWhite  & \tileBlack  & \tileWhite  & \tileV & \tileBlackD & \tileWhiteU & \tileBlack \\
\tileBlack  & \tileWhite  & \tileBlack  & \tileV & \tileWhite  & \tileBlackD & \tileWhiteU \\
\tileWhiteU & \tileBlack  & \tileWhite  & \tileV & \tileBlack  & \tileWhite  & \tileBlackD \\
\tileBlackD & \tileWhiteU & \tileBlack  & \tileV & \tileWhite  & \tileBlack  & \tileWhite
\end{tabular}
\quad
b.
\begin{tabular}{c@{\extracolsep{0.1em}}c@{}c@{}c@{}c@{}c@{}c}
\tileLight  & \tileLight & \tileE & \tileDark & \tileW  & \tileLight  & \tileLight \\
\tileS  & \tileS  & \tileSE & \tileDark & \tileSW  & \tileS  & \tileS \\
\tileDark      & \tileDark      & \tileDark      & \tileDark & \tileDark      & \tileDark      & \tileDark \\
\tileN  & \tileN  & \tileNE  & \tileDark & \tileNW & \tileN & \tileN \\
\tileLight  & \tileLight  & \tileE  & \tileDark & \tileW  & \tileLight & \tileLight \\
\tileLight  & \tileLight  & \tileE  & \tileDark & \tileW  & \tileLight & \tileLight \\
\tileLight  & \tileLight  & \tileE  & \tileDark & \tileW  & \tileLight & \tileLight
\end{tabular}
\caption{A possible arrangement of layers $1$ (a.) and $2$ (b.) for PERIODIC \tiling.}
\label{fig:periodicnoweights}
\end{centering}
\end{figure}

\medskip

{\bf Layer 1:~} The tiling rules for layer 1 are summarized in table~\ref{table:periodic1}.

\begin{table}
\begin{centering}
\begin{tabular}{lc|ccccccc}
     &               &                \multicolumn{7}{c}{Tile on right} \\
     &               & $\tileH$ & $\tileV$ & $\tileC$ & $\tileWhite$ & $\tileWhiteU$ & $\tileBlack$ & $\tileBlackD$ \\
 \hline
     & $\tileH$      &     Y    &    N     &    Y     &       N      &        N      &      N       &    N  \\
Tile & $\tileV$      &     N    &    N     &    N     &       Y      &        N      &      Y       &    Y   \\
on   & $\tileC$      &     Y    &    N     &    N     &       N      &        N      &      N       &    N   \\
left & $\tileWhite$  &     N    &    Y     &    N     &       N      &        N      &      Y       &    Y   \\
     & $\tileWhiteU$ &     N    &    Y     &    N     &       N      &        N      &      Y       &    N   \\
     & $\tileBlack$  &     N    &    Y     &    N     &       Y      &        N      &      N       &    N  \\
     & $\tileBlackD$ &     N    &    Y     &    N     &       N      &        Y      &      N       &    N  \\
\end{tabular}

\medskip

\begin{tabular}{lc|ccccccc}
      &               &                \multicolumn{7}{c}{Tile on top} \\
      &               & $\tileH$ & $\tileV$ & $\tileC$ & $\tileWhite$ & $\tileWhiteU$ & $\tileBlack$ & $\tileBlackD$ \\
 \hline
      & $\tileH$      &    N     &     N    &     N    &      Y       &      N        &       Y      &    Y  \\
Tile  & $\tileV$      &    N     &     Y    &     Y    &      N       &      N        &       N      &    N   \\
on    & $\tileC$      &    N     &     Y    &     N    &      N       &      N        &       N      &    N   \\
bottom & $\tileWhite$ &    Y     &     N    &     N    &      N       &      N        &       Y      &    Y   \\
      & $\tileWhiteU$ &    Y     &     N    &     N    &      N       &      N        &       Y      &    N   \\
      & $\tileBlack$  &    Y     &     N    &     N    &      Y       &      N        &       N      &    N  \\
      & $\tileBlackD$ &    Y     &     N    &     N    &      N       &      Y        &       N      &    N  \\
\end{tabular}
\caption{The tiling rules for layer 1 for PERIODIC \tiling.}
\label{table:periodic1}
\end{centering}
\end{table}

The rules for $\tileV$, \tileH, and \tileC\ imply that if we have a \tileV\ anywhere in a column, that column must only
contain \tileV\ and \tileC\ tiles, and if we have a \tileH\ anywhere in a row, that row can only
contain \tileH\ and \tileC\ tiles.  Furthermore, a \tileC\ tile must be surrounded by \tileV\ tiles
above and below and \tileH\ tiles to the left and right.
Thus, layer $1$ contains some number of vertical and horizontal
lines composed of \tileV\ and \tileH\ tiles intersecting at \tileC\ tiles.  No two horizontal or
vertical lines can be adjacent.

The next set of rules for layer 1 enforce that
the space between the lines must be filled by a checkerboard
of \tileWhite\ or \tileWhiteU\ alternating with \tileBlack\ or \tileBlackD\ tiles.  We ensure
this by forbidding \tileWhite\ and \tileWhiteU\ from being adjacent to themselves or each other
in any direction, and forbidding  \tileBlack\ and \tileBlackD\ from being adjacent to themselves or
each other in any direction. Since $N$ is odd, this ensures that it is not possible to tile the
entire torus with the checkerboard pattern and there must be at least one horizontal and one vertical
line.

Furthermore, if we ever
have a \tileBlackD\ tile anywhere, we want to be forced to have a diagonal line of \tileBlackD\
tiles continuing to the upper left and lower right, with a diagonal line of \tileWhiteU\ tiles
next to it (above and to the right), with both lines ending at a vertical or horizontal line.
We enforce this by requiring that a \tileWhiteU\ tile must have \tileBlackD\ both below it and to its left.
Above a \tileBlackD\ tile
we can have only \tileH\ or \tileWhiteU, and to the right of a \tileBlackD\ tile, we
must have either \tileV\ or \tileWhiteU.

\medskip

{\bf Layer 2:~}  The tiling rules for layer 2 are summarized in table~\ref{table:periodic2}.

\begin{table}
\begin{centering}
\begin{tabular}{lc|cccccccccc}
     &               &                \multicolumn{10}{c}{Tile on right} \\
     &               & $\tileN$ & $\tileS$ & $\tileE$ & $\tileW$ & $\tileNW$ & $\tileNE$ & $\tileSW$ & $\tileSE$ & $\tileLight$ & $\tileDark$ \\
 \hline
     & $\tileN$      &    Y     &    N     &     N    &     N    &     N     &     Y     &     N     &     N     &      N       &    N  \\
     & $\tileS$      &    N     &    Y     &     N    &     N    &     N     &     N     &     N     &     Y     &      N       &    N  \\
     & $\tileE$      &    N     &    N     &     N    &     N    &     N     &     N     &     N     &     N     &      N       &    Y  \\
Tile & $\tileW$      &    N     &    N     &     N    &     N    &     N     &     N     &     N     &     N     &      Y       &    N  \\
on   & $\tileNW$     &    Y     &    N     &     N    &     N    &     N     &     N     &     N     &     N     &      N       &    N  \\
left & $\tileNE$     &    N     &    N     &     N    &     N    &     N     &     N     &     N     &     N     &      N       &    Y  \\
     & $\tileSW$     &    N     &    Y     &     N    &     N    &     N     &     N     &     N     &     N     &      N       &    N  \\
     & $\tileSE$     &    N     &    N     &     N    &     N    &     N     &     N     &     N     &     N     &      N       &    Y  \\
     & $\tileLight$  &    N     &    N     &     Y    &     N    &     N     &     N     &     N     &     N     &      Y       &    N  \\
     & $\tileDark$   &    N     &    N     &     N    &     Y    &     Y     &     N     &     Y     &     N     &      N       &    Y  \\
\end{tabular}

\medskip

\begin{tabular}{lc|cccccccccc}
       &               &                \multicolumn{10}{c}{Tile on top} \\
       &               & $\tileN$ & $\tileS$ & $\tileE$ & $\tileW$ & $\tileNW$ & $\tileNE$ & $\tileSW$ & $\tileSE$ & $\tileLight$ & $\tileDark$ \\
 \hline
       & $\tileN$      &    N     &     N    &     N    &     N    &     N     &      N    &     N     &     N     &      N       &    Y  \\
       & $\tileS$      &    N     &     N    &     N    &     N    &     N     &      N    &     N     &     N     &      Y       &    N  \\
       & $\tileE$      &    N     &     N    &     Y    &     N    &     N     &      Y    &     N     &     N     &      N       &    N  \\
Tile   & $\tileW$      &    N     &     N    &     N    &     Y    &     Y     &      N    &     N     &     N     &      N       &    N  \\
on     & $\tileNW$     &    N     &     N    &     N    &     N    &     N     &      N    &     N     &     N     &      N       &    Y  \\
bottom & $\tileNE$     &    N     &     N    &     N    &     N    &     N     &      N    &     N     &     N     &      N       &    Y  \\
       & $\tileSW$     &    N     &     N    &     N    &     Y    &     N     &      N    &     N     &     N     &      N       &    N  \\
       & $\tileSE$     &    N     &     N    &     Y    &     N    &     N     &      N    &     N     &     N     &      N       &    N  \\
       & $\tileLight$  &    Y     &     N    &     N    &     N    &     N     &      N    &     N     &     N     &      Y       &    N  \\
       & $\tileDark$   &    N     &     Y    &     N    &     N    &     N     &      N    &     Y     &     Y     &      N       &    Y  \\
\end{tabular}
\caption{The tiling rules for layer 2 for PERIODIC \tiling.}
\label{table:periodic2}
\end{centering}
\end{table}

We wish \tileN\ to mark the top side of the rectangles delineated by the layer $1$
\tileH\ and \tileV\ tiles. \tileE\ will mark the right side, \tileW\ the left side,
and \tileS\ the bottom side, with \tileNW, \tileNE, \tileSW, and \tileSE\ marking
the upper left, upper right, lower left, and lower right corners, respectively.  \tileLight\
will be inside the rectangle and \tileDark\ will go over the layer $1$ \tileH\ and \tileV\
tiles.

Looking at the horizontal tiling rules, we see that if we have a \tileN\ anywhere, there must be a
horizontal line of \tileN\ that either goes all the way around the grid (due to the periodic boundary
conditions) or ends at \tileNW\ on the left and \tileNE\ on the right.  If it ends, then, because of the vertical tiling rules, below
the \tileNW\ tile is a vertical line of \tileW\ tiles which end at a \tileSW\ tile, and below the
\tileNE\ tile is a vertical line of \tileE\ tiles which end at a \tileSE\ tile.  The \tileSE\
and \tileSW\ tiles must be in the same row, and between them is a line of \tileS\ tiles, forming a closed rectangle.
Following
this line of logic for the other tiles, we find that layer $2$ must be a collection of rectangles,
horizontal stripes, and vertical stripes.

The remaining rules enforce that we have \tileDark\ on the outside of the rectangles and
\tileLight\ on the inside. That is, if any of the border tiles has a solid color on one edge,
it must be adjacent on that side to a matching solid color tile.
These rules imply that inside the
rectangle delineated by the \tileN, \tileS, \tileE, and \tileW\ tiles are only \tileLight\ tiles,
and immediately outside it are \tileDark\ tiles.
Each rectangle is outlined by \tileN, \tileS, \tileE, and \tileW\
tiles on the sides, with \tileNW, \tileNE, \tileSW, and \tileSE\ on the corners, and is full of
\tileLight\ inside.  A vertical stripe has \tileW\ on its left and \tileE\ on its right, and
a horizontal stripe has \tileN\ above it and \tileS\ below it.  Both vertical and horizontal stripes
have only \tileLight\ inside them.  The rectangles and stripes cannot be adjacent and are separated
by \tileDark\ tiles.

\medskip

{\bf Interactions between layers 1 and 2:~}  The layer $1$ and layer $2$ tiles must pair up as indicated in
table~\ref{table:periodiccompatibility}.

\begin{table}
\begin{centering}
\begin{tabular}{lc|cccccccccc}
        &               &                \multicolumn{10}{c}{Layer 2 tile} \\
        &               & $\tileN$ & $\tileS$ & $\tileE$ & $\tileW$ & $\tileNW$ & $\tileNE$ & $\tileSW$ & $\tileSE$ & $\tileLight$ & $\tileDark$ \\
 \hline
        & $\tileH$      &    N     &    N     &    N     &     N    &     N     &     N     &     N     &     N     &      N       &    Y  \\
        & $\tileV$      &    N     &    N     &    N     &     N    &     N     &     N     &     N     &     N     &      N       &    Y  \\
Layer 1 & $\tileC$      &    N     &    N     &    N     &     N    &     N     &     N     &     N     &     N     &      N       &    Y  \\
tile    & $\tileWhite$  &    Y     &    Y     &    Y     &     Y    &     N     &     Y     &     Y     &     N     &      Y       &    N  \\
        & $\tileWhiteU$ &    Y     &    Y     &    Y     &     Y    &     N     &     Y     &     Y     &     N     &      Y       &    N  \\
        & $\tileBlack$  &    Y     &    Y     &    Y     &     Y    &     N     &     Y     &     Y     &     N     &      Y       &    N  \\
        & $\tileBlackD$ &    N     &    N     &    N     &     N    &     Y     &     N     &     N     &     Y     &      Y       &    N  \\
\end{tabular}
\caption{The allowed pairs of layer $1$ and layer $2$ tiles for PERIODIC \tiling.}
\label{table:periodiccompatibility}
\end{centering}
\end{table}

These conditions tell us that the horizontal and vertical lines on layer $1$ formed by \tileH, \tileV, and \tileC\ must match exactly the locations of the \tileDark\ tiles on layer $2$.  This implies that in fact layer $2$
can only contain rectangles which must be lined up with the space between the horizontal and
vertical lines on layer $1$.  On layer $1$, each rectangle delineated by the horizontal and
vertical lines must have a \tileBlackD\ in the upper left corner, which starts a diagonal line
of \tileBlackD\ tiles extending towards to the bottom right.  Since it must end at the border
of the rectangle, but a \tileBlackD\ tile cannot be in the same spot as a layer $2$ \tileE\
or \tileS\ tile, the only place the diagonal line can end is at the lower right corner.  Thus,
the rectangle must actually be a square.  See figure~\ref{fig:periodicnoweights} for an example
of an allowed tiling.

However, the only way for all the rectangles formed by the horizontal and vertical layer $1$
lines to be squares is if the spacing between them is equal.  They then form $M \times M$
squares arrayed in a $k \times k$ grid for a total of $k^2$ squares.  It follows that $N = k (M+1)$.
But when $N$ is prime, then $k$ must be $1$. ($M$ cannot be $0$ since the horizontal and
vertical lines cannot be adjacent.)  Thus, the only allowed tiling is to produce a single
$(N-1) \times (N-1)$ square on layer $1$.  We can then consider some data layers of
tiles which implement the rules from section~\ref{sec:tiling}.  The layer $2$ \tileNW, \tileNE,
\tileSW, and \tileSE\ tiles mark out the corner of the square, so we can put a condition on
the data layers that enforce the corner boundary conditions on those locations.

\medskip

We do need to be careful of one aspect, however, since we are now restricted to a size of
square which is $1$ less than a prime number.  For any input $x$, we need to choose a prime
$N$.  $N$ should be not much bigger than $x$ ($\log N = \text{poly}(\log x)$), and it must
be possible for the universal TM implemented by the data layers to deduce $x$ in a reasonable
time.  We will show a method of finding a prime $N$ such that
the $1/3$ most significant bits represent $x$.
Let $n_0 (x) = 2 \lceil \log x \rceil$ (that is, basically twice the
number of bits in $x$) and let $N_0 (x)= x 2^{n_0(x)}$ (that is, the binary
expansion of $N_0(x)$ is that of $x$ followed by $(n_0(x) - 1)$ $0$s).
We want to show that there exists a prime number in the range $[N_0(x), N_0(x+1))$.
In fact, we will show that there is a prime in a somewhat narrower range so that it
can be found by exhaustive search in polynomial space. Furthermore, the primes in
this range are sufficiently plentiful that the expected time to find
a prime number by random selection will be polynomial.

It is known that there exists a constant $\theta < 1$ such that
\begin{equation}
\lim_{y \rightarrow \infty} [\pi(y+y^\theta) - \pi(y)] = \frac{y^\theta}{\log y},
\label{eq:primedensity}
\end{equation}
where $\pi(y)$ is the number of primes less than or equal to $y$~\cite{ingham}.
(For our purposes, it is sufficient to take $\theta = 2/3$, but smaller $\theta$ is
possible.)
That is, for sufficiently large $y$, the number of primes in the interval $[y, y+y^\theta]$
is approximately $1/\log y$ times the size of the interval.%
\footnote{This is the result one might expect from the Prime Number Theorem, but that
theorem is not strong enough, as it is compatible with having a large interval with a
low density of primes.}
Then
\begin{equation}
N_0(x)^\theta = x^\theta 2^{\theta n_0(x)} \leq 2^{\theta (\lceil \log x \rceil + n_0(x))} \leq 2^{n_0(x)}.
\end{equation}

In particular, it follows that $N_0(x) + N_0(x)^\theta \leq N_0 (x+1)$.  If we can find
a prime $N$ in the interval $I(x) = [N_0(x), N_0(x) + N_0(x)^\theta)$, the TM can thus easily
deduce $x$ by looking at the most significant bits.  There is at least one prime in $I(x)$, so we
can certainly find one with an exhaustive search, which can be done with polynomial space.
Furthermore, by (\ref{eq:primedensity}), if we choose a random $N \in I(x)$, for sufficiently
large $x$, there is a probability about $1/\log(N_0(x))$ that $N$ is prime, which we can
verify in $\text{poly}(\log x)$ time.  Thus, we get a randomized reduction to PERIODIC \tiling\
which runs in expected time $\text{poly}(\log x)$.

\end{proof}

\subsection{Weighted Constraints}
\label{sec:weighted}

Now we consider the case where the constraints count different amounts.

\begin{definition}
{\bf WEIGHTED \tiling}

\noindent
{\bf Problem Parameters:} A set of tiles $T = \{t_1,\ldots,t_m\}$.
A set of horizontal weights $w_H: T \times T \rightarrow \integer$, such that if
$t_i$ is placed to the left of $t_j$, there is a contribution of $w_H (t_i, t_j)$ to
the total cost of the tiling.  A set of vertical weights
$w_V: T \times T \rightarrow \integer$, such that if
$t_i$ is placed below $t_j$, there is a contribution of $w_V (t_i, t_j)$ to
the total cost of the tiling. A polynomial $p$.  Boundary conditions (a tile
to be placed at all four corners, open boundary conditions, or periodic
boundary conditions).

\noindent
{\bf Problem Input:} Integer $N$, specified in binary.

\noindent
{\bf Output:} Determine whether there is a tiling of an $N \times N$ grid such
that the total cost is at most $p(N)$.
\end{definition}

Note that we can shift all the weights by a constant $r$: $w_H' (t_i, t_j) = w_H (t_i, t_j) - r$, $w_V' (t_i, t_j) = w_V (t_i, t_j) - r$.  This has the effect of shifting the total cost by $2rN(N-1)$.  Therefore, we may assume without loss of generality that $p(N) = o(N^2)$.

\begin{theorem}
\label{thm:weighted}
WEIGHTED \tiling\ is $\NEXP$-complete, with any of the three choices of boundary conditions.
\end{theorem}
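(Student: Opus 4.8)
The plan is to reduce from \tiling\ (Theorem~\ref{th:unweighted-tiling}), or from its restriction to odd $N$, which is still $\NEXP$-complete for the reasons noted at the start of Section~\ref{sec:variants}. Two features of the weighted problem are used throughout. First, as observed just before the statement, shifting all weights by a constant lets us take $p(N)$ to be essentially any fixed polynomial, in particular a small constant. Second, since the weights are integers, once $p(N)$ is a small constant we may treat a constraint of weight $2$ (or any large constant) as effectively hard for the purpose of deciding whether the total cost is at most $p(N)$. The four-corner case is then immediate: given $(T,H,V,t_1,N)$, put $w_H(t_i,t_j)=0$ if $(t_i,t_j)\in H$ and $1$ otherwise, similarly for $w_V$, keep the corner tile, and set $p(N)=0$; a valid tiling costs $0$ and an invalid one at least $1$. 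The work is in the open and periodic cases, where the corner conditions must be manufactured.

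For open boundary conditions I would add a new ``frame'' layer on top of the \tiling\ construction, with tile types for the four corners, the four edges, and the interior. Straightness rules (an edge tile may be followed along its line only by another edge tile of the same kind or by the appropriate corner) together with corner-completion rules, all imposed as weight-$2$ penalties, force any occurrence of a corner or edge tile to extend to a full rectangular frame. A reward of $-1$ on a single adjacency that occurs exactly once per frame (say the top-left-corner/north-edge pair) makes having a frame advantageous. Finally I add an edge-preference penalty of $+1$ on every adjacency lying on the \emph{outward} side of an edge or corner tile (anything above a north-edge tile, anything left of a west-edge tile, and so on). On the genuine grid boundary these outward cells do not exist, so an edge-hugging frame pays no such penalty, whereas a frame bounding a proper sub-rectangle pays $+1$ for each interior-edge cell with an outward neighbour, which for even the smallest admissible frame outweighs the single unit of reward. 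Hence the unique minimum-cost configuration of this layer is one frame hugging the whole boundary, of total weight exactly $-1$; coupling it to the original tile layer so that the four frame-corner tiles force $t_1$ beneath them reproduces the \tiling\ corner condition inside the forced square, and with $p(N)=-1$ the combined instance has cost at most $-1$ iff the \tiling\ instance is a yes instance.

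For periodic boundary conditions I would reduce from \tiling\ with $N$ odd and reuse the grid-line layer of the construction in the proof of Theorem~\ref{thm:periodic}: horizontal and vertical lines of line-tiles crossing at distinguished crossing tiles, with the rectangles between them filled by a proper two-colouring. The simplification available in the weighted setting is to penalise only the crossing tile, with weight $+1$ per crossing, leaving the rest of the line-and-checkerboard structure at weight $0$. Then two or more crossings cost at least $2$; zero crossings means there are no grid lines, so the whole odd $N\times N$ torus would have to carry a proper two-colouring, which is impossible and forces at least $N$ monochromatic adjacencies; and exactly one crossing (one horizontal and one vertical line) costs exactly $1$ and cuts the square torus into a single $(N-1)\times(N-1)$ region, which is automatically a square, so the diagonal-line gadget of Theorem~\ref{thm:periodic} is not needed here. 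The four corners of this region are identified locally by their adjacency to the two lines, so I couple them to the \tiling\ corner tile as before and set $p(N)=1$. Since an odd $N$ encoding the given \tiling\ input is produced by a trivial deterministic computation, this reduction is deterministic polynomial time, so unlike Theorem~\ref{thm:periodic} there is no asterisk.

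The main obstacle in both constructions is ruling out the degenerate configurations. In the open case this is exactly the role of the edge-preference penalties: they must be strong enough that every frame other than the boundary-hugging one, and every nested collection of frames, is strictly worse than the single unit of reward, while not penalising the intended configuration at all (which works because boundary cells have no outward neighbour and so the all-interior tiling and the edge-hugging frame are the only candidates for the minimum). In the periodic case it is the requirement that the line-free configuration be expensive while the line structure itself stays free apart from the one penalised crossing; this is precisely the parity obstruction to two-colouring an odd torus, so the choice of odd $N$ is essential. In each case one still has to check the routine compatibility conditions between the new structure layer and the original tile layer, and verify that the Turing-machine simulation of Section~\ref{sec:tiling} runs correctly inside the region that has been forced into existence.
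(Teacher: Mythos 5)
Your four-corners case is fine (it is the paper's own observation that the unweighted result applies directly), and your periodic case is essentially the paper's argument: the paper likewise drops the primality/diagonal machinery of Theorem~\ref{thm:periodic}, restricts to odd $N$, keeps the line/checkerboard structure as effectively hard constraints, and charges each crossing (there $+2$, with threshold $+2$) so that the minimum forces exactly one horizontal and one vertical line bounding a single $(N-1)\times(N-1)$ square. One small repair is needed in your case analysis: ``zero crossings means there are no grid lines'' is not quite right, since one can have only horizontal lines (or only vertical lines) with no crossing; this is still excluded because every remaining row (resp.\ column) is then an odd wrap-around cycle that must violate the hard checkerboard constraints, which is how the paper argues it.

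The genuine gap is in your open-boundary construction. With open boundary conditions a pairwise weight can never penalize the \emph{absence} of a neighbour, so your ``corner-completion'' rules cannot force a line of edge tiles to terminate in a corner tile: the line may simply run into the edge of the grid. Concretely, take the configuration whose top row is the NW-corner tile followed by north-edge tiles all the way to the right edge (no NE tile), with west-edge tiles below the NW tile down to the bottom row (no SW tile) and interior tiles everywhere else. This violates none of your weight-$2$ rules, incurs no outward penalties (every edge tile hugs the boundary), and still collects the single $-1$ reward, so it ties your intended frame at cost $-1$ while forcing only one corner of the main layer. One forced corner is not enough to run the reduction of Section~\ref{sec:tiling} (in particular the bottom boundary, where the output of $M_{BC}$ is copied onto layer $2$, never forms), so ``no'' instances can still reach cost $-1$ and the reduction breaks. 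The paper's construction avoids exactly this failure mode: it uses no edge tiles and no single global reward; instead each of the four corner tile types independently earns $-1$ only when sitting in its own grid corner (where its outward neighbours do not exist) and costs at least $+2$ net anywhere else, and the threshold is $-4$, so achieving the target cost certifies all four corners simultaneously. Your scheme can be repaired in the same spirit (one reward per corner type and threshold $-4$, or simply dropping the frame in favour of per-corner rewards), but as written the single-reward frame argument does not go through.
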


\begin{proof}

The case of the four-corners boundary condition follows immediately from the result for unweighted \tiling.

{\bf WEIGHTED \tiling\ with open boundary conditions:}
It is sufficient to take all the weights to be $0$, $+2$, $+4$, or $-1$,
and $c=-4$.  We will reduce from \tiling; the main point is to show we can fix
the boundary conditions on the corners.  To do so, we take the \tiling\ instance
and add a new layer of tiles consisting of five special types of tile, \tileNE, \tileNW, \tileSW, \tileSE, and \tileWhite.
The weights are summarized in table~\ref{table:weightedopen}.
\begin{table}
\begin{centering}
\begin{tabular}{lc|rrrrr}
     &               &                \multicolumn{5}{c}{Tile on right} \\
     &               & $\tileNW$ & $\tileNE$ & $\tileSW$ & $\tileSE$ & $\tileWhite$ \\
 \hline
     & $\tileNW$     &     +4    &     +4    &     +4    &     +4    &      -1    \\
Tile & $\tileNE$     &     +4    &     +4    &     +4    &     +4    &      +2    \\
on   & $\tileSW$     &     +4    &     +4    &     +4    &     +4    &      -1    \\
left & $\tileSE$     &     +4    &     +4    &     +4    &     +4    &      +2    \\
     & $\tileWhite$  &     +2    &     -1    &     +2    &     -1    &       0    \\
\end{tabular}
\qquad
\begin{tabular}{lc|rrrrr}
       &               &                \multicolumn{5}{c}{Tile on top} \\
       &               & $\tileNW$ & $\tileNE$ & $\tileSW$ & $\tileSE$ & $\tileWhite$ \\
 \hline
       & $\tileNW$     &    +4     &    +4     &    +4     &    +4     &     +2     \\
Tile   & $\tileNE$     &    +4     &    +4     &    +4     &    +4     &     +2     \\
on     & $\tileSW$     &    +4     &    +4     &    +4     &    +4     &      0     \\
bottom & $\tileSE$     &    +4     &    +4     &    +4     &    +4     &      0     \\
       & $\tileWhite$  &     0     &     0     &    +2     &    +2     &      0     \\
\end{tabular}
\caption{The tiling weights for WEIGHTED \tiling\ with open boundary conditions.}
\label{table:weightedopen}
\end{centering}
\end{table}

Suppose we have any configuration which contains the tile \tileNW, and suppose we replace that
tile by \tileWhite, leaving the rest of the configuration the same.  If there is any tile to the
left of the replaced tile, the cost of that edge decreases by at least $2$.  Similarly, if there
is any tile above the replaced tile, the cost of the above edge decreases by at least $2$.  The only way
for the overall cost of the tiling to increase is if the replaced \tileNW\ tile was located in the upper left
corner.  Similarly, if we replace \tileNE\ by \tileWhite, the cost decreases unless possibly the \tileNE\
tile is located in the upper right corner.  Similarly, we can more cheaply replace \tileSW\ and \tileSE\
by \tileWhite\ tiles unless the \tileSW\ and \tileSE\ tiles are located in the lower left and lower right
corners, respectively.  Thus, the cheapest tilings have \tileWhite\ tiles everywhere but the corners,
and it is easy to verify that the optimal tiling of the new layer has total cost $-4$, with \tileNE, \tileNW, \tileSW, and \tileSE\ in
the proper corners, and \tileWhite\ elsewhere.  Figure~\ref{fig:openBCweighted} shows the optimal
configuration of the new layer.
\begin{figure}
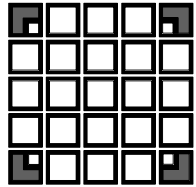

\begin{centering}
\begin{tabular}{c@{\extracolsep{0.1em}}c@{}c@{}c@{}c}
\tileNW & \tileWhite & \tileWhite & \tileWhite & \tileNE \\
\tileWhite & \tileWhite & \tileWhite & \tileWhite & \tileWhite \\
\tileWhite & \tileWhite & \tileWhite & \tileWhite & \tileWhite \\
\tileWhite & \tileWhite & \tileWhite & \tileWhite & \tileWhite \\
\tileSW & \tileWhite & \tileWhite & \tileWhite & \tileSE \\
\end{tabular}
\caption{The preferred arrangement of the new layer in WEIGHTED \tiling\ with open boundary conditions.}
\label{fig:openBCweighted}
\end{centering}
\end{figure}

Finally, we insist that \tileNE, \tileNW, \tileSW, or \tileSE\
in the special layer must correspond to the usual corner tile $t_1$ in the main
layer.  Then the optimal tiling overall has the main layer constrained in exactly the way it would be
in the standard \tiling\ problem.

\medskip

{\bf WEIGHTED \tiling\ with periodic boundary conditions:} We can of course
apply Theorem~\ref{thm:periodic}, but when we allow weights, there is a simpler solution
which avoids the caveats about the reduction.

We now consider only odd $N$, not necessarily prime.  We will use the weights $0$, $+1$, and $+3$,
and set $c=+2$.  Thus, to have a good enough tiling, we can have two pairings with weight $+1$,
and all the others must have weight $0$.  The weights we use are summarized in table~\ref{table:weightedperiodic}.

\begin{table}
\begin{centering}
\begin{tabular}{lc|rrrrr}
       &               &                \multicolumn{5}{c}{Tile on right} \\
       &               & $\tileWhite$ & $\tileBlack$ & $\tileH$ & $\tileV$ & $\tileC$ \\
 \hline
       & $\tileWhite$  &     +3       &       0      &    +3    &     0    &    +3    \\
Tile   & $\tileBlack$  &      0       &      +3      &    +3    &     0    &    +3    \\
on     & $\tileH$      &     +3       &      +3      &     0    &    +3    &    +1    \\
left   & $\tileV$      &      0       &       0      &    +3    &    +3    &    +3    \\
       & $\tileC$      &     +3       &      +3      &    +1    &    +3    &    +3    \\
\end{tabular}
\qquad
\begin{tabular}{lc|rrrrr}
       &               &                \multicolumn{5}{c}{Tile on top} \\
       &               & $\tileWhite$ & $\tileBlack$ & $\tileH$ & $\tileV$ & $\tileC$ \\
 \hline
       & $\tileWhite$  &      +3      &       0      &     0    &    +3    &    +3    \\
Tile   & $\tileBlack$  &       0      &      +3      &     0    &    +3    &    +3    \\
on     & $\tileH$      &       0      &       0      &    +3    &    +3    &    +3    \\
bottom & $\tileV$      &      +3      &      +3      &    +3    &     0    &     0    \\
       & $\tileC$      &      +3      &      +3      &    +3    &     0    &    +3    \\
\end{tabular}
\caption{The tiling weights for WEIGHTED \tiling\ with periodic boundary conditions.}
\label{table:weightedperiodic}
\end{centering}
\end{table}

In order to avoid any pairing with cost $+3$, we cannot have \tileWhite\ and \tileBlack\ next
to each other.  That is, any region with just these two tile types must be in a checkerboard
pattern.  Of course, when $N$ is odd, we cannot fill the whole grid that way.  Indeed, in each
row and column, there must be at least one \tileH, \tileV, or \tileC\ tile.

If we have a \tileC\ tile anywhere, it must have a \tileH\ tile adjacent to it to the left and
right, and a \tileV\ tile adjacent to it above and below.
Whenever we have a \tileV\ tile, it must form a vertical line with only \tileV\ and \tileC\ tiles
wrapping all the way around the torus.  Similarly, if we have a \tileH\ tile, there must be
a horizontal line with only \tileH\ and \tileC\ tiles.  Furthermore, if there is a single \tileC\ tile
in the horizontal line, the overall cost of the edges within the line is $+2$, and if there is more
than one \tileC\ tile, the cost is larger.

Since \tileH\ tiles cannot be adjacent vertically, and \tileV\ tiles cannot be adjacent horizontally,
nor can either be adjacent to the other in any direction, the only way to avoid a full row or column with
only \tileBlack\ and \tileWhite\ tiles is to have at least one horizontal line and at least one vertical
line.  Whenever a horizontal line and vertical line intersect, there must be a \tileC\ at the crossing.
Each crossing has a cost of $+2$, so to minimize the total cost, we can have only one crossing, and thus
just one horizontal line of {\tileH}'s and one vertical line of {\tileV}'s.
Those horizontal and vertical lines will determine the boundary of an $(N-1) \times (N-1)$
grid, and we can set the boundary conditions at the corners via adjacency to the
\tileH\ and \tileV\ tiles.  The arrangement is much like that of figure~\ref{fig:periodicnoweights}a,
but with \tileBlackD\ replaced by \tileBlack\ and \tileWhiteU\ replaced by \tileWhite.

\end{proof}

\subsection{One-Dimensional Tiling}
\label{sec:oneDclassical}

In one dimension, the tiling problem becomes the following:

\begin{definition}
{\bf $1$-DIM \tiling}

\noindent
{\bf Problem Parameters:} A set of tiles $T = \{t_1,\ldots,t_m\}$.
A set of constraints $H \subseteq T \times T$ such that if
$t_i$ is placed to the left of $t_j$, then it must be the case that
$(t_i,t_j) \in H$. A designated tile $t_1$ that must be placed at the ends of
the line.

\noindent
{\bf Problem Input:} Integer $N$, specified in binary.

\noindent
{\bf Output:} Determine whether there is a valid tiling of a line of length $N$.
\end{definition}

We can also define a WEIGHTED $1$-DIM \tiling\ problem analogously to the WEIGHTED \tiling\ problem in section~\ref{sec:weighted}.

\begin{theorem}
$1$-DIM \tiling\ and WEIGHTED $1$-DIM \tiling\ are in $\Pclass$.
\end{theorem}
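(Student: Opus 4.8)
The plan is to recognize $1$-DIM \tiling\ as a reachability question about walks of prescribed length in a fixed, constant-size directed graph. Let $G$ be the digraph on vertex set $T$ with an edge from $t_i$ to $t_j$ exactly when $(t_i,t_j)\in H$. A valid tiling of a line of $N$ tiles is precisely a walk $t_1 = s_1, s_2, \ldots, s_N = t_1$ in $G$, i.e.\ a walk of length $N-1$ (that is, $N$ vertices) from $t_1$ back to $t_1$. Since $T$, $H$, and $t_1$ are fixed parameters and the only input is $N$ given in binary, the input has size $\Theta(\log N)$, so it suffices to decide this in time $\mathrm{poly}(\log N)$.

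For the unweighted problem I would form the $m\times m$ Boolean adjacency matrix $A$ of $G$ (with $m=|T|$ a constant) and compute $A^{N-1}$ by repeated squaring: writing the exponent $N-1$ in binary costs $O(\log N)$ Boolean matrix multiplications of $m\times m$ matrices, each taking $O(m^3)$ bit operations. Then $(A^{N-1})_{t_1 t_1}=1$ iff a valid tiling exists, and the degenerate cases of very small $N$ (walk length $0$, handled by $A^0=I$) are immediate. The key point is that repeated squaring turns the exponent, which is essentially the input, into only $O(\log N)$ matrix operations, so nothing linear in $N$ is ever done.

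For WEIGHTED $1$-DIM \tiling\ I would run the same computation over the min-plus (tropical) semiring: set $B_{ij}=w_H(t_i,t_j)$ (putting $B_{ij}=+\infty$ for any pairs that are additionally hard-forbidden, and using the matrix with $0$ on the diagonal and $+\infty$ off it as the identity), and take the ``product'' to be entrywise $+$ composed with $\min$ in the usual way. Then $(B^{\otimes(N-1)})_{t_1 t_1}$ is exactly the minimum total cost over all tilings of a line of $N$ tiles with $t_1$ at both ends, computed again by $O(\log N)$ min-plus matrix multiplications; the algorithm answers ``yes'' iff this value is at most $p(N)$, which is a fixed polynomial computable in $\mathrm{poly}(\log N)$ time. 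Two small points need checking: first, every walk here has exactly $N-1$ edges and $|w_H|$ is bounded by a constant, so every finite entry of the intermediate matrices is $O(N)$ in absolute value and hence $O(\log N)$ bits long, keeping all arithmetic polynomial in the input size; second, negative weights cause no trouble because the walk length is fixed, so no negative-cycle divergence can occur. The same framework handles the other boundary conditions in both the weighted and unweighted cases: for open boundaries take the extremal entry over all choices of first and last tile, and for periodic (cyclic) boundaries replace the endpoint-diagonal entry by $\min_i (B^{\otimes N})_{ii}$ (a closed walk of length $N$). There is essentially no serious obstacle; the only thing one must be disciplined about is never doing work polynomial in $N$ rather than in $\log N$, which the repeated-squaring trick ensures.
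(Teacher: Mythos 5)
Your proof is correct, but it takes a genuinely different route from the paper. You treat a length-$N$ tiling as a walk with $N-1$ edges in the fixed constant-size digraph on $T$ and compute the relevant power of the adjacency matrix (Boolean for the unweighted case, min-plus/tropical for the weighted case) by repeated squaring, so only $O(\log N)$ constant-size matrix products are needed; the bookkeeping you flag --- entries of intermediate tropical matrices are $O(N)$ in magnitude and hence $O(\log N)$ bits, negative weights are harmless because the number of edges is fixed, and $p(N)$ is evaluable in $\mathrm{poly}(\log N)$ time --- is exactly what is needed, and the endpoint/open/periodic variants are handled by reading off the appropriate entry or the minimum diagonal entry. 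The paper instead decomposes an arbitrary walk into a simple path plus a multiset of simple cycles, proves that a multiset of cycles is realizable iff its underlying set is, and then reduces the question ``is there a combination of cycle lengths summing to $N'$ (of low total cost)?'' to a fixed-instance Unbounded Knapsack problem solved by gcd/lcm arguments plus a constant-size look-up table for small residual lengths. Your approach is shorter, more uniform across the variants, and avoids the somewhat delicate allowed-multiset lemma; what the paper's cycle-decomposition buys is structural information about \emph{which} optimal solutions exist (which simple paths and cycles can achieve the minimum), and that structural classification --- not just the optimal value --- is reused later, e.g.\ in the analysis of minimal-cost row pairs for WEIGHTED \tiling\ with reflection symmetry, where the algorithm is rerun with node costs added and the classes of minimizers matter. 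As a proof of this particular theorem, your argument is complete and valid.
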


\begin{proof}
{\bf Unweighted case:}
Let us create a directed graph with $m$ nodes.  The $i$th node corresponds to the
$i$th tile type $t_i$, and there is an edge between $i$ and $j$ iff $(t_i, t_j) \in H$.
We wish to know whether there is a path of length exactly $N$ starting and ending at $t_1$.
Certainly this can be done in time ${\rm poly}(N)$, but we actually wish to do it
in time ${\rm polylog}(N)$.  We will also generalize slightly to allow the left end of the line
to have tile $t_0$ and the right end to have tile $t_1$.

First, let us make a table of all paths which start and end at the same node
and have  no other repeated nodes.
Such a path is called a {\em simple cycle}.
A path is
a \emph{simple path} if it is a simple cycle or it contains no repeated nodes.
Let $M$ be the
set of all simple cycles, and let $l(p)$ be the length of the path $p$.  Clearly when $p$ is
a simple path,
$l(p) \leq m+1$, which is constant, so constructing the table of simple cycles takes constant time.

Given any path $p$, we can decompose it into a simple path and a multiset $P$ of simple cycles by progressively removing simple cycles from $p$ until
we are left with a simple path, adding each removed cycle to $P$.

\begin{definition}
A multiset $P$ of simple cycles is \emph{allowed} for the simple path $p$ if there exists a path $q$ which can be decomposed as above into exactly the
multiset $P$ plus $p$.
\end{definition}

\begin{claim}
The multiset $P$ is allowed for a path $p$ iff $P$'s underlying set $Q$ (with multiple occurrences removed) is allowed for $p$.
\end{claim}

\begin{proof}[ of claim]
$\Rightarrow:$ If we have any path through the node $t_i$, we can insert a simple cycle for $t_i$ to get another possible path.  If $q$ is a path which can be
decomposed to the set $Q$, then every node on every simple cycle in $P$ appears in $q$, so we can insert as many copies as we need of the simple
cycles in $P$.

$\Leftarrow:$ Conversely, let $q$ be a path for the allowed multiset $P$.  We can express $q$ as a tree, with the root being the simple path $p$ and each node
being a simple cycle.  Cycle
$c'$ is a child of $c$ if they share a node and $c'$ is not the parent of $c$.
The tree can be obtained by iteratively taking two consecutive
occurrences of a node $t_i$ along a cycle such that there are no repetitions
between the occurrences of $t_i$ and identifying the two occurrences of $t_i$.
The cycle is then removed (except for one of the occurrences of $t_i$) and the process
iterated until no cycles remain.
We wish to modify the tree so that the resulting path $q'$ has the same multiset $P$
as its decomposition, but with the property that there is a
subtree rooted at $p$ containing just one of every simple cycle in $Q$.  Then we
can take the path $q$ corresponding to that subtree, and $q$ has
the set $Q$ for its decomposition.

That is, we need that within the tree, for every $r \in Q$, there exists an
occurrence $c$ of $r$ such that the path from $p$ to $c$ in the tree
does not pass through two occurrences of any simple cycle.  Since we can
move any node corresponding to a simple cycle for $t_i$ to be a child of
\emph{any} simple cycle that contains $t_i$, we can put the tree in the
desired format by moving one instance of each $r \in Q$ up the tree until it
lies off of the first cycle which contains $t_i$.
\end{proof}

\begin{claim}
\label{claim:sumpaths}
There is a tiling of a line of length $N$ from $t_0$ to $t_1$ iff $\exists$ simple path $p$, and function $f: M \rightarrow \integer^+ \cup \{0\}$ such that:
\begin{enumerate}
\item $p = (t_{i_0}, t_{i_1}, \ldots, t_{i_L})$, with $i_0 = 0$, $i_L = 1$, $l(p) = L+1$,
\item $L+1 + \sum_{q \in M} f(q) l(q) = N$,
\item $f^{-1} (\integer^+)$ is allowed for $p$.
\end{enumerate}
\end{claim}

Given an allowed multiset $P$ of simple cycles, the function $f(q)$ represents the
number of times the particular simple cycle $q$ appears in the multiset,
which means that $f^{-1}(\integer^+)$ is the underlying set for $P$.
To see that Claim \ref{claim:sumpaths} is true, observe that
the resulting path contains $N$ nodes, as desired.  Conversely,
given $p$ and $f$, we can create an allowed multiset $P$ for $p$ by
including each simple cycle $q$ a number of times equal to $f(q)$.

There are only a constant number of simple paths $p$ and allowed sets $P$ of simple cycles.  By running over $(p, P)$ and applying
claim~\ref{claim:sumpaths}, the $1$-DIM \tiling\ problem reduces to determining if there

exists $f': P \rightarrow \integer^+ \cup \{0\}$  such that $ \sum_{q \in P} f'(q)l(q) = N'$, $N' = N - (L+1) - \sum_{q \in P} l(q)$.  (We take $f'(q)
= f(q)-1$, with $f$ given by the claim.)  With fixed $(p,P)$, we must therefore solve the following problem: Given
a set of $m'$ positive integers $a_k$ (the set of distinct $l(q)$ for $q \in P$), do there exist
non-negative integers $b_k$ (the corresponding $f'(q)$, summed over any cycles with equal lengths)
such that $\sum_k a_k b_k = N'$?  This is a special case of the Unbounded Knapsack Problem
\cite{lueker75} where the set of objects is fixed and only the total cost allowed varies.

Note that, if we allowed $b_k$ to be negative, this would just be answered by determining if
$N'$ is a multiple of $\gcd(a_1, \ldots, a_{m'})$.  Indeed, if $N'$ is greater than $m'g$, where $g$ is
the least common multiple of the $a_k$, this is all we need to determine: If $\gcd(a_1, \ldots, a_{m'}) | r$,
then we can write $r = \sum_k a_k c_k$, with the property that $a_k c_k \geq -g$ for all $k$.
Thus, if $N' = m'g + r$, then $N' = \sum_k a_k (c_k + g/a_k)$, providing our solution.
Conversely, if $\gcd(a_1, \ldots, a_{m'}) \not| N'$, then it is not possible that $\sum_k a_k b_k = N'$.
For $N' < m'g$, matters are more complicated, but since $g \leq (m')!$ is a constant, we can
answer those cases via a look-up table.

\medskip

{\bf Weighted case:}
The $1$-DIM \tiling\ problem remains easy even if we take a weighted variant.
In this case, the directed graph we get is a complete graph, but the edges have a
weight associated with them, and we want to determine the minimal cost path of length
$N$ from $t_0$ to $t_1$.  Now we need to count both the length and the cost of each
simple cycle.  That is, we need to learn if $\exists$ simple path $p$,
$f: M \rightarrow \integer^+ \cup \{0\}$ such that:
\begin{enumerate}
\item $p = (t_{i_0}, t_{i_1}, \ldots, t_{i_L})$, with $i_0 = 0$, $i_L = 1$, $l(p) = L+1$,
\item $L+1 + \sum_{q \in M} f(q) l(q) = N$,
\item $c(p) + \sum_{q \in M} f(q) c(q) \leq c$,
\item $f^{-1} (\integer^+)$ is allowed for $p$,
\end{enumerate}
where $c(q)$ is the cost of path $q$.

Once we fix $(p,P)$, we again get a set $a_k$ of positive integer lengths $l(q)$
for $q \in P$, but we also have $c_k$, the costs $c(q)$ of $q \in P$.
We can assume $f'(q)=0$ for all but
one cycle $q \in P$ for each length --- we should pick the one with the lowest cost $c(q)$.
Now we wish to find non-negative integers $b_k$ that minimize $\sum_k b_k c_k$ subject to
$\sum_k a_k b_k = N'$, which is again a special case of the Unbounded Knapsack Problem.
We again set $g \leq m!$ to be the least common multiple of the $a_k$ (i.e., the lcm of $1, \ldots, m$).
We create a look-up table to give an optimal path for $N < mg$.  When $N$ is a multiple
of $g$, note that we can find an optimal path by simply calculating which simple cycle
has the lowest ratio $c_k/a_k$; then we just use that cycle $g/a_k$ times.

For other values of $N' = dg + r$, with $d \geq m$, we just look at the optimal path for
$(m-1)g + r$ and add $(d-m+1)g/a_k$ copies of the optimal simple cycle.  This will be an
optimal path for $N'$. To show this, note that we can assume without loss of generality that
an optimal solution will not have $b_k \geq g/a_k$ for more than one value of $k$: Suppose there
were two values $k_1$, $k_2$ for which $b_k \geq g/a_k$, with $c_{k_1}/a_{k_1} \leq c_{k_2}/a_{k_2}$.
Then we can shift $b_{k_1} \mapsto b_{k_1} + g/a_{k_1}$ and $b_{k_2} \mapsto b_{k_2} - g/a_{k_2}$,
leaving the total length the same
without increasing the cost.  Furthermore, using the same logic, we know that the one
value $k_1$ for which $b_k \geq g/a_k$ must have a minimal value of $c_k/a_k$.  For $N' > mg$,
there must actually exist such a value of $k_1$, so the optimal path must have been formed in
the way we have described.

\end{proof}

\subsection{Additional Symmetry}

We consider two additional kinds of symmetry.  If we have {\em reflection symmetry},
then  $(t_i, t_j) \in H$ implies $(t_j, t_i) \in H$ as well, and  $(t_i, t_j) \in V$
implies  $(t_i, t_j) \in V$ also.  That is, the tiling constraints to the left and right are
the same, as are the constraints above and below.  However, if we only have reflection
symmetry, there can still be a difference between the horizontal and vertical directions.
If we have {\em rotation symmetry}, we have reflection symmetry and also $(t_i, t_j) \in H$
iff $(t_i, t_j) \in V$.  Now the direction does not matter either.

Note that these types of reflection and rotation symmetries assume that we can reflect
or rotate the tiling rules without simultaneously reflecting or rotating the tiles.
For instance, if a tile $t_i$ has a pattern on it (such as \tileNE) that looks different when it is turned
upside down, then when we reflect vertically, we also could reflect the tile, producing
a new tile $R(t_i)$ (\tileSE\ in this case).  We could define a reflection symmetry for this type of tile too:
$(t_i, t_j) \in H$ iff $(R(t_j), R(t_i)) \in H$, etc., but we just get the same complexity
classes as for the case with no additional reflection symmetry.  This is because we can
add an extra layer of tiles with arrows on them and put on a constraint that any adjacent arrow
tiles must point the same direction.  While either direction will work (or any of the four
directions in the case of rotation symmetry), one direction ends up preferred in any given
potential tiling, so by looking at the arrow tile in a given spot, we can effectively reproduce
rules that have no reflection symmetry. Thus, we address the case where the rules have
reflection or rotation symmetry without simultaneously reflecting or rotating the tiles.

\begin{theorem}
\label{theorem:symmetry}
When the constraints for \tiling\ have reflection symmetry, there exist $N_e, N_o \in \integer^+ \cup
\{\infty\}$ such that for even $N \geq N_e$ or odd $N \geq N_o$, a valid tiling exists, while
for even $N < N_e$ and odd $N < N_o$, there is no tiling (except for $N=1$, when there is a
trivial tiling).
\begin{itemize}
\item When we have open boundary conditions, either $N_e = N_o = \infty$ or $N_e = N_o = 1$.
\item When we have the four corners boundary condition, either $N_o = 3$ or $N_e = N_o = \infty$.
\item When we have periodic boundary conditions, either $N_e = 2$ or $N_e = N_o = \infty$.
\item When we have rotation symmetry, $N_e$ and $N_o$ are computable.
\end{itemize}
\end{theorem}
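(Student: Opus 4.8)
The plan is to prove the whole theorem — including the mere existence of the thresholds $N_e,N_o$ — from two elementary manipulations that become available once the rules are reflection symmetric: \emph{inserting a mirrored pair} of adjacent columns (or rows), and \emph{folding} a tiling onto a pair of its columns (or rows). First I would establish that, under reflection symmetry, a valid tiling of an $N\times N$ grid always extends to a valid tiling of an $(N+2)\times(N+2)$ grid: pick two adjacent interior columns $c_{j_0},c_{j_0+1}$ of the tiling and splice in the columns $c_{j_0+1},c_{j_0}$ between them; the three new horizontal adjacencies are $(c_{j_0},c_{j_0+1})$, $(c_{j_0+1},c_{j_0})$, $(c_{j_0},c_{j_0+1})$, all in $H$ by validity of the old tiling plus reflection symmetry, and each spliced column is internally valid (it is a copy of an old column). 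Doing the same with a mirrored pair of interior rows gives the $(N+2)\times(N+2)$ tiling, and since all splices are interior the four corner tiles (or the torus wrap-around) are untouched. Hence for each parity the set of tileable sizes is upward closed once nonempty, which yields $N_e,N_o\in\integer^+\cup\{\infty\}$ with the claimed meaning; the $N=1$ grid has no adjacent pair and is always trivially tileable, so it is the lone exception.

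Next comes the fold. If the $N\times N$ grid is tileable for some $N\ge 2$, then its first two columns $c_1,c_2$ satisfy $(c_1,c_2)\in H$ and, by reflection symmetry, $(c_2,c_1)\in H$; thus the column sequence $c_1,c_2,c_1,c_2,\dots$ of any length is valid, giving a tiling of an $N\times M$ rectangle for every $M$, and folding that rectangle's first two rows the same way gives an $L\times M$ tiling for every $L$. For open boundary conditions this immediately yields a tiling of every $M\times M$ grid, so either nothing of size $\ge 2$ is tileable ($N_e=N_o=\infty$) or everything is ($N_e=N_o=1$). For the four-corners condition, $c_1$ carries the designated tile $t_1$ in its top and bottom cells, so folding the columns as $c_1,c_2,c_1$ and then folding the first two rows of the resulting $N\times 3$ tiling produces a $3\times 3$ tiling all four of whose corners are $t_1$; hence if any $N\ge 2$ is tileable then the $3\times 3$ grid is, and combined with interior-insertion monotonicity this forces $N_o=3$, while otherwise $N_e=N_o=\infty$. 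For periodic boundary conditions the same double fold, now respecting the wrap-around, turns any tileable torus of size $N\ge 2$ into a tileable $2\times 2$ torus, so either $N_e=2$ or $N_e=N_o=\infty$.

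For the rotation-symmetry case we have $H=V$, so a valid tiling is precisely a graph homomorphism from the $N\times N$ grid graph to the fixed graph $G=(T,R)$ with $R=H=V$ symmetric, subject to the relevant boundary condition (corners mapped to $t_1$, or none). I would decide the thresholds by a finite case analysis on $G$. The grid graph is bipartite, and the bipartition class memberships of its four corners depend only on the parity of $N$: for odd $N\ge 3$ all four corners lie in one class, so a tiling exists iff $t_1$ has a neighbour in $G$ (two-colour the grid, putting $t_1$ on the corner class); for even $N$ the corners split two-and-two across the classes, which forces the top row to be an odd closed walk at $t_1$, possible iff $t_1$ lies in a non-bipartite component of $G$ (in particular if $(t_1,t_1)\in R$, via the constant map). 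In each branch the answer is a fixed small constant or $\infty$, obtained by testing finitely many small grids and then invoking the monotonicity above; the easier open and periodic cases are handled the same way (there everything reduces to whether $R$ is empty and whether $t_1$'s component is non-bipartite). Hence $N_e$ and $N_o$ are computable under rotation symmetry.

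The delicate point — and the only place the real work lies — is the converse in the even-$N$ four-corners case under rotation symmetry: showing that when $t_1$ sits in a non-bipartite component of $G$, a full grid homomorphism with all four corners equal to $t_1$ actually exists for every sufficiently large even $N$, not merely the odd closed walk required along the top row. I would handle this by building the homomorphism row by row from an explicit top row that walks out to an odd cycle of $G$ and back to $t_1$, propagating it downward while maintaining consistency with the bottom-row constraint, and checking that this succeeds once $N$ exceeds a bound determined by $G$. This is routine but somewhat intricate; everything else in the theorem follows from the two elementary moves described above.
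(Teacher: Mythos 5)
Your treatment of the existence of $N_e,N_o$ and of the first three bullets is correct and is essentially the paper's argument: the mirrored insertion of a pair of adjacent columns/rows is exactly the paper's observation that under reflection symmetry $ABAB$ is legal whenever $AB$ is (the paper implements it by duplicating rows and columns near the boundary, figure~\ref{fig:reflection}, while you splice in the interior -- same move, and yours keeps the corners untouched automatically), and your ``fold'' $c_1,c_2,c_1$ followed by $r_1,r_2,r_1$ is a slightly cleaner packaging of the paper's reduction to a $2\times 2$ patch with the corner tile (figure~\ref{fig:reflectioncorners}) for the four-corners case, and of the extraction of a periodic $2\times 2$ tiling for the torus case. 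Those parts are fine.

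The rotation-symmetry bullet is where your route genuinely diverges, and it is also where you leave the real work undone. Your necessity analysis (odd $N$: all corners in one bipartition class, so it suffices that $t_1$ has an $R$-neighbour; even $N$: the top row is an odd closed walk at $t_1$, so $t_1$ must lie in a non-bipartite component) is right, but the converse for even $N$ with the four-corners condition -- which you explicitly defer as ``routine but somewhat intricate'' row-by-row propagation -- is precisely the content of the paper's proof, and your sketch does not say how a row determines the next one while still landing $t_1$ at both ends of the bottom row. The paper closes this with a single construction: any valid tiling of one side, i.e.\ a walk $s_1,\dots,s_N$ in $G=(T,R)$ with $s_1=s_N=t_1$ (an odd closed walk at $t_1$ when $N$ is even), extends to the whole square by making the tiling constant along diagonals, $T_{ab}=s_{((a+b-2)\bmod(N-1))+1}$; every adjacency then pairs cyclically consecutive vertices of the walk (the wrap pair $(s_{N-1},s_1)$ is in $R$ because $s_N=s_1$), and all four corners receive $s_1=t_1$ (figure~\ref{fig:rotation}). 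With this, tileability of the $N\times N$ square is \emph{equivalent} to tileability of one side, and since closed walks at $t_1$ can be lengthened by $2$ along any incident edge, $N_e$ and $N_o$ are just the minimal even and odd side lengths, which are directly computable -- no ``bound determined by $G$'' or auxiliary finite grid search is needed. Two smaller points: in the periodic case there is no corner constraint, so the relevant criterion is non-bipartiteness (odd girth) of some component of $G$ rather than of $t_1$'s component; and your insertion step should be phrased for any adjacent pair of columns (spliced strictly between them), since for $N=2,3$ there are no ``interior'' columns.
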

When we have reflection symmetry but not rotation symmetry, we have been unable to determine
so far whether $N_e$ and $N_o$ are computable for the four corners and periodic boundary conditions,
respectively.

\begin{proof}
To prove the theorem, simply note that given a valid tiling of an $N \times N$ grid, with $N \geq 4$,
we can extend it to a valid tiling of an $(N+2) \times (N+2)$ grid.  The main observation is that we
can repeat existing patterns when we have reflection symmetry, because if $AB$ is a legal configuration,
so is $ABAB$.

In order to extend a tiling, we can do the following: strip off the leftmost column and bottommost row, and replace them with duplicates of the next
two rows and columns.  We can fill in the corner by duplicating the bottom left $2 \times 2$ square once we have stripped off the rows.  Then the
original leftmost column and bottommost row became the new leftmost and bottommost column and row, with some duplication to lengthen them to the right
size.  (See figure~\ref{fig:reflection}.)

\begin{figure}
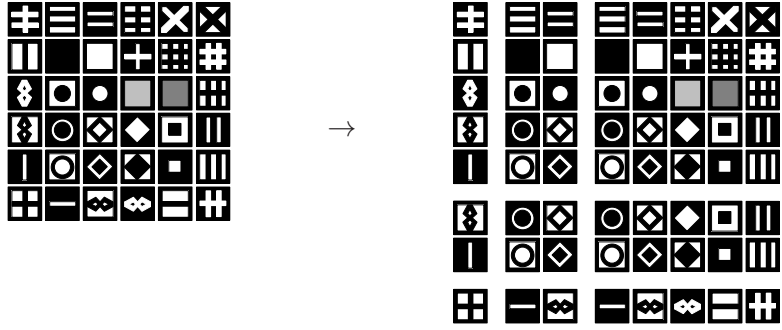

\begin{centering}
\begin{tabular}[t]{c@{\extracolsep{0.1em}}c@{}c@{}c@{}c@{}c}
\tileHHVrev & \tileHH & \tileHHrev & \tileHHV & \tileXrev & \tileX \\
\tileV & \tileBlack & \tileWhite & \tileCrev & \tileCrosshatch & \tileCrosshatchrev \\
\tileVvarrev & \tileCirc & \tileCircrev & \tileLight & \tileDark & \tileHVV \\
\tileVvar & \tileRingrev & \tileDHol & \tileDrev & \tileSquare & \tileVVrev \\
\tileVrev & \tileRing & \tileDHolrev & \tileD & \tileSquarerev & \tileVV \\
\tileC & \tileHrev & \tileHvar & \tileHvarrev & \tileH & \tileHVVrev \\
\end{tabular}
\qquad
\begin{tabular}[t]{c}
\\
\\
\\
$\rightarrow$ \\
\end{tabular}
\qquad
\begin{tabular}[t]{c@{\extracolsep{0.1em}\hspace{0.5em}}c@{}c@{\hspace{0.5em}}c@{}c@{}c@{}c@{}c}
\tileHHVrev & \tileHH & \tileHHrev & \tileHH & \tileHHrev & \tileHHV & \tileXrev & \tileX \\
\tileV & \tileBlack & \tileWhite & \tileBlack & \tileWhite & \tileCrev & \tileCrosshatch & \tileCrosshatchrev \\
\tileVvarrev & \tileCirc & \tileCircrev & \tileCirc & \tileCircrev & \tileLight & \tileDark & \tileHVV \\
\tileVvar &\tileRingrev &  \tileDHol & \tileRingrev & \tileDHol & \tileDrev & \tileSquare & \tileVVrev \\
\tileVrev & \tileRing & \tileDHolrev & \tileRing & \tileDHolrev & \tileD & \tileSquarerev & \tileVV
\vspace{0.5em} \\
\tileVvar & \tileRingrev & \tileDHol & \tileRingrev & \tileDHol & \tileDrev & \tileSquare & \tileVVrev \\
\tileVrev & \tileRing & \tileDHolrev & \tileRing & \tileDHolrev & \tileD & \tileSquarerev & \tileVV
\vspace{0.5em} \\
\tileC & \tileHrev & \tileHvar & \tileHrev & \tileHvar & \tileHvarrev & \tileH & \tileHVVrev \\
\end{tabular}
\caption{Extending a tiling of a $6 \times 6$ grid to $8 \times 8$ when the tiling conditions have a reflection symmetry.  The
spaces on the right mark the repeated rows and columns.}
\label{fig:reflection}
\end{centering}
\end{figure}

This strategy handles even more general boundary conditions than the three main cases we consider.  In particular,
it also works if the tiling rules on the sides of the grid are completely different from the tiling rules in the interior
of the grid.  When the tiling rules are the same on the sides as in the center, except perhaps for the
corners, we can copy the outermost two rows and columns, so this strategy works for $N \geq 2$.

\medskip

{\bf Four corners boundary conditions:} When $N$ is odd and we have the four-corners boundary condition,
we can look to see if there is a tiling of a $2 \times 2$ grid with the corner tile in the upper right corner.  If so,
we can duplicate the right column to add a column on the left, and then duplicate
the top row to add a row on the
bottom.  (See figure~\ref{fig:reflectioncorners}.)  This gives us a tiling
of the $3 \times 3$ grid with all four corners correctly tiled.  Conversely,
if there is no $2 \times 2$ tiling containing one of the corner tiles, then there cannot be a tiling of any $N \times N$
grid with $N>1$, which means that $N_o = N_e  = \infty$.

\begin{figure}
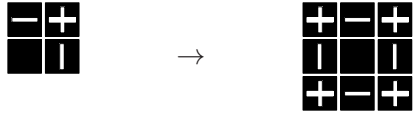

\begin{centering}
\begin{tabular}[t]{c@{\extracolsep{0.1em}}c}
\tileHrev & \tileCrev \\
\tileBlack & \tileVrev \\
\end{tabular}
\qquad
\begin{tabular}[t]{c}
\\
$\rightarrow$ \\
\end{tabular}
\qquad
\begin{tabular}[t]{c@{\extracolsep{0.1em}}c@{}c}
\tileCrev & \tileHrev & \tileCrev \\
\tileVrev & \tileBlack & \tileVrev \\
\tileCrev & \tileHrev & \tileCrev \\
\end{tabular}
\caption{Extending the tiling of a corner to a tiling of a $3 \times 3$ grid with fixed corner boundary conditions.}
\label{fig:reflectioncorners}
\end{centering}
\end{figure}

\medskip

{\bf Open boundary conditions:} In the case of completely open boundary conditions, we need
only check a $2 \times 2$ grid to see if there is a valid tiling.  If not, there cannot be a tiling of
any size $N > 2$ either.  If there is, we can extend it as in figure~\ref{fig:reflectioncorners} to get a tiling of
the $3 \times 3$ grid as well, and extend as in figure~\ref{fig:reflection} to get a tiling of any size grid.

\medskip

{\bf Periodic boundary conditions:} When we have periodic boundary conditions, a valid tiling of a  $2 \times 2$
grid with open boundary conditions gives us a valid periodic tiling of a $2 \times 2$ grid as well, so we
can extend it to all even $N$.  However, if we try to apply the strategy of figure~\ref{fig:reflectioncorners}
to extend it to odd $N$, we potentially ruin the periodicity, so we do not know if $N_o$ is computable
in this case. If there is no tiling of a $2 \times 2$
grid with open boundary conditions, then there can not be a tiling for any $N$ with
periodic boundary conditions and $N_o = N_e = \infty$.

\medskip

{\bf Rotation symmetry:} When we have rotation symmetry (with any boundary conditions), it will suffice to compute $N_e$ and $N_o$ as the minimum even
and odd lengths that allow us to tile a single side of the square.  If a tiling of one side exists, we can use this same tiling on all four sides and
then fill in the center using diagonal stripes of identical tiles, as in figure~\ref{fig:rotation}.  In the case where there are special rules for the
boundary, we will need to tile a side plus the adjacent row/column as preparation, but this presents little additional difficulty.
\begin{figure}
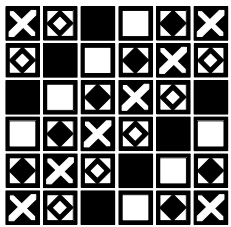

\begin{centering}
\begin{tabular}{c@{\extracolsep{0.1em}}c@{}c@{}c@{}c@{}c}
\tileXrev & \tileDHol & \tileBlack & \tileWhite & \tileD & \tileXrev \\
\tileDHol & \tileBlack & \tileWhite & \tileD & \tileXrev & \tileDHol \\
\tileBlack & \tileWhite & \tileD & \tileXrev & \tileDHol & \tileBlack \\
\tileWhite & \tileD & \tileXrev & \tileDHol & \tileBlack & \tileWhite \\
\tileD & \tileXrev & \tileDHol & \tileBlack & \tileWhite & \tileD \\
\tileXrev & \tileDHol & \tileBlack & \tileWhite & \tileD & \tileXrev \\
\end{tabular}
\caption{A tiling of a $6 \times 6$ square with rotation symmetry.}
\label{fig:rotation}
\end{centering}
\end{figure}

To tile a single side, we can use an approach similar to the previous $1$-dimensional case.  However, now matters are much simpler, since the graph is
now undirected.  There are thus always many size $2$ cycles, so we need only find the minimal even- and odd-length cycles for $t_1$.  That sets an
upper bound on $N_e$ and $N_o$.  It might be one of these can be made smaller, but that is straightforward to check as well.
\end{proof}

The weighted cases with reflection or rotation symmetry are more difficult, so we treat them in separate subsections.

\subsection{Weighted Tiling With Reflection Symmetry}

In this subsection, we consider WEIGHTED \tiling\ with reflection symmetry.

\subsubsection{Open or Four Corners Boundary Conditions}

\begin{theorem}
WEIGHTED \tiling\ is \NEXP-complete with either open boundary conditions or boundary conditions fixed at the corners.
\end{theorem}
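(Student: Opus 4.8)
The plan is to reduce from \tiling\ (Theorem~\ref{th:unweighted-tiling}): I fix once and for all a reflection-symmetric weighted rule set and a polynomial $p$, and map a \tiling\ instance with input $N$ to the instance with input $N' = 2N+1$. The fundamental obstruction is that with reflection-symmetric weights the cost of a row equals that of the reversed row, and likewise for columns, so the total cost of a tiling equals the cost of its left--right mirror and of its top--bottom mirror; the rules can therefore never break the left--right or the top--bottom orientation, whereas the Turing-machine simulation of Section~\ref{sec:tiling} is strongly oriented. To resolve this I would arrange, using auxiliary layers, that any minimum-cost tiling of the $N'\times N'$ grid contains one vertical and one horizontal ``line'' of special tiles, crossing at the centre, which cut the grid into four $N\times N$ squares; the line tiles force the data layer to be mirror-symmetric across each line, so the orientation of each square is fixed by the grid boundary together with the lines, and inside one square the construction of Section~\ref{sec:tiling} is carried out.

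For the line structure I would reuse the devices of the PERIODIC \tiling\ reduction (Theorem~\ref{thm:periodic}): tiles that can only form non-adjacent vertical and horizontal lines meeting at crossing tiles, plus a ``diagonal'' gadget that forces each rectangle cut out by the lines to be a square --- here the diagonal must be allowed in either handedness, which is forced by reflection symmetry and is consistent because the four squares carry diagonals of alternating handedness. Because $N'$ is odd, the two lines cannot be adjacent, and all four squares are genuinely square, the lines are forced to the centre and each square has side $N=(N'-1)/2$. The line tiles come in one variety $T_X$ for each tile type $X$ that may sit next to a line, with $T_X$ requiring $X$ on both of the two sides across which it is a boundary; this keeps the number of tile types constant while forcing mirror symmetry across the line. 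All of these rules carry weight $0$ on the intended adjacencies and $+3$ otherwise, which is reflection-symmetric by construction. With the four-corners boundary condition the grid corners already hold the designated tile $t_1$, and a cross-layer rule then reproduces the corner boundary condition of Section~\ref{sec:tiling} at the four corners of each square (the grid corners, the line endpoints on the grid edges, and the central crossing). With open boundary conditions I would first synthesize effective corners by an extra layer of corner-marker tiles with reflection-symmetric weights --- paying $+3$ when two markers meet and a small bonus for a marker next to the filler tile, tuned so that the only cheap configurations place the four markers at the four grid corners --- and then proceed exactly as before.

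Inside a square the two-layer construction of Section~\ref{sec:tiling} is run essentially unchanged, using the two-corners variant from Section~\ref{sec:boundaryconditions} so that no special right or bottom boundary row is needed and the only boundary behaviour required is at two corners and (for preventing spurious heads) at the other two edges, which the line tiles can supply. Since the mirror-symmetric geometry resolves the orientation of the square, the asymmetric rules of Tables~\ref{table:TM} and~\ref{table:TMboundaries} need not be symmetrized by hand: a legal transition pattern in one square is the mirror of a legal transition pattern in the adjacent square, so the single symmetric weight function ($0$ on a legal move, $+3$ otherwise) does the right thing in all four squares. A cost-$0$ tiling of the whole grid then exists iff the nondeterministic verifier of Section~\ref{sec:tiling} accepts $f_{BC}(N-3)$, which is exactly the \tiling\ question for the $N\times N$ grid; taking $p(N')=0$ completes the reduction, and membership in $\NEXP$ is immediate since one can guess the $(N')^2$ tiles and add up their polynomially bounded costs in time $2^{\poly(\log N')}$.

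The bulk of the work, as in the periodic case, is in ruling out cheap cheats. One must check that the special-tile layers are forced to contain exactly one vertical and one horizontal line (additional lines would make the squares too small, and no line would leave the orientation unconstrained), that the diagonal really forces each rectangle to be a genuine square, that the $T_X$ tiles cannot be circumvented so that the data across a line fails to be mirror-symmetric, and --- the subtlest point --- that a spurious Turing-machine head cannot be created for free along a line, which is where a square meets its mirror image. I would also expect to restrict $N'$ to a convenient residue class (here, $N'$ odd), which keeps \tiling\ $\NEXP$-complete as already noted for odd $N$; should forcing a single central cross require a primality-type condition on $N'$, as it does in the periodic reduction, one would get only an expected-polynomial-time randomized (or deterministic polyspace) reduction, but the statement of the theorem indicates that the odd-$N'$ square-packing argument is sufficient.
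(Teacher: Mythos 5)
There is a genuine gap at the heart of your proposal: you never solve the problem of implementing direction-dependent rules with reflection-symmetric weights at \emph{generic} locations in the grid. Since $w_H$ and $w_V$ must be symmetric, the data-layer rules you actually get are the symmetrized versions of Tables~\ref{table:TM} and~\ref{table:TMboundaries}, and your central-cross/mirror geometry supplies no local information in the interior of a quadrant that could condition a pair of adjacent data tiles on an orientation. Relying on ``each square is a mirror of its neighbour'' only establishes completeness (the intended tilings are cheap); it does nothing for soundness, and soundness in fact fails for the symmetrized rules. Concretely, symmetrizing the vertical rule ``$[a]$ below, $[b,q',l]$ above allowed iff $a=b$, $q'\neq q_0$'' also allows $[a,q,l]$ below with $[a]$ above, so a Turing-machine head can simply vanish partway up the tableau and the top-row acceptance check is then vacuously satisfied; similarly, symmetrizing the horizontal rules destroys the invariant that a row consists of variety-1 tiles plus matched variety-2/variety-3 pairs, so rows can mix left-reading and right-reading head encodings. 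Hence ``no'' instances would still admit zero-cost tilings, and the reduction is unsound. A secondary but also real problem: your $T_X$ line tiles only force the two columns (rows) immediately adjacent to a line to agree; symmetric constraints do not propagate this into genuine mirror symmetry of the adjacent squares, so even the boundary behaviour you want from the cross (e.g.\ suppressing spurious heads at a square's edge) is not actually enforced beyond one cell of depth. The open-boundary corner gadget you import from the asymmetric weighted case has the same flavour of difficulty, since ``penalize a marker with something to its left/above'' is not expressible symmetrically without further work.

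For contrast, the paper attacks exactly this missing point head-on: a weighted Layer~$1$ (with some negative weights and a delicate row-pair cost analysis) forces, at optimal cost $76-16N$, a border plus a corner-to-corner diagonal of special tiles whose immediate neighbourhood \emph{locally} breaks both reflection symmetries; Layers~$2$ and~$3$ then propagate this broken symmetry everywhere by forcing a period-$3$ cycling of three tile types along columns (respectively rows), using $N\equiv 1 \bmod 3$, so that any horizontally (vertically) adjacent unordered pair of auxiliary tiles determines which side is ``left'' (``up''). Only then can the oriented main-layer rules of section~\ref{sec:tiling} be encoded as symmetric weights on compound tiles. If you want to salvage your quadrant picture, you would still need an analogue of this everywhere-available orientation marker (e.g.\ a forced striping pattern of period $\geq 3$ in both directions, anchored to your cross and diagonals), at which point the four mirrored copies become unnecessary and you are essentially back to the paper's construction.
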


It is interesting to note that the total cost $p(N)$ of the satisfying tilings that appear in our proof is linear in $N$.  We have not been able to prove a result for these boundary conditions when the cost function is a constant.  (Recall that we can always shift the costs so that $p(N) = o(N^2)$.)

\begin{proof}

We will describe the costs for open boundary conditions, but they will imply fixed corner tiles, so the four-corners boundary conditions can use the same costs.  When we list costs, we will talk of pairings being ``forbidden.''  Of course, in WEIGHTED \tiling, no pair of adjacent tiles is completely forbidden, but we will assign a large cost (say $+30$) to ``forbidden'' pairings, and will show that a low-cost tiling can never include a forbidden pair.

The tiling will consist of three extra layers beyond those the main layers used to prove the basic \NEXP-completeness of \tiling~(as described in section~\ref{sec:tiling}).  Layer $1$ will break the reflection symmetry in both the horizontal and vertical directions, but the broken symmetry will only be visible at certain locations in the grid.  That is, there is not a unique optimal tiling of Layer $1$.  Any optimal tiling can be globally reflected in the horizontal and/or vertical directions to get another optimal tiling.  We will choose rules such that one of these reflections, applied to an optimal tiling, always produces a distinct tiling.  There will, in fact, be exactly $4$ optimal tilings of Layer $1$, related to each other by reflections. (There is actually another class of optimal tilings of Layer $1$ when it is considered by itself, but that class will be eliminated by the rules for Layers $2$ and $3$.)  Since the reflected tilings are different from each other, there are certain locations we can look at to determine which orientation of the four we actually have.  By choosing one of the four orientations to be a canonical reference point, we can globally define the four directions ``left,'' ``right,'' ``up,'' and ``down.''  We would like to use this information to implement direction-dependent rules for the main layers even though the underlying rules remain reflection-invariant.  To do this, we need to be able to look at an adjacent pair of tiles in Layer $1$ and determine a direction from that.  Since the rules are translation-invariant, we cannot make reference to the location of the tiles, only that they are adjacent in either the horizontal or vertical direction.  Nevertheless, for an optimal tiling of Layer $1$, it is possible in some locations in the grid to determine directions purely from the Layer $1$ tiles.  However, it is not possible everywhere.

Therefore, we use Layer $2$ to extend the broken horizontal reflection symmetry, so there is a locally visible distinction between ``left'' and ``right'' at all locations in the grid.  Layer $3$ similarly extends the broken vertical reflection symmetry, allowing us to define ``up'' and ``down'' at all locations in the grid.  Most of the work, and all of the non-trivial weights, go into Layer $1$.  Layers $2$ and $3$ only have weights $0$ and $+30$ (for forbidden pairings).

We will restrict $N$ to be even and $1 \bmod 3$.  That is, $N \equiv 4 \bmod 6$.  We will set the allowed cost to be $p(N) = 76 -16N$, and assume $N$ is large.

{\bf Layer $1$:}

Layer $1$ will have $9$ types of tile: \tileV, \tileH, \tileC, \tileBlack, \tileWhite, \tileDark, \tileLight, \tileRing, and \tileCirc.  \tileV, \tileH, and \tileC\ will form the outer border of the grid in an optimal tiling.  The interior of the grid will mostly consist of \tileBlack, \tileWhite, \tileDark, and \tileLight\ tiles.  By and large, these four tiles will alternate \tileBlack\ and \tileWhite\ or \tileDark\ and \tileLight\ horizontally, and \tileBlack\ and \tileLight\ or \tileDark\ and \tileWhite\ vertically.  However, there will also be a diagonal stripe of \tileRing\ tiles reaching from corner to corner, with \tileCirc\ tiles at the end.  At the left and right ends of the interior, adjacent to the \tileV\ tiles, we only permit \tileDark\ and \tileWhite\ tiles (alternating vertically) or \tileCirc\ tiles.  Because $N$ is even, the tiles to the left and right of the \tileRing\ and \tileCirc\ tiles will be different in a consistent way, allowing us to distinguish left and right near the central diagonal.  Similarly, the tiles above and below the \tileRing\ and \tileCirc\ tiles are different, allowing us to distinguish up and down.
See figure~\ref{fig:weightedreflection1} for an example of this tiling.

\begin{figure}
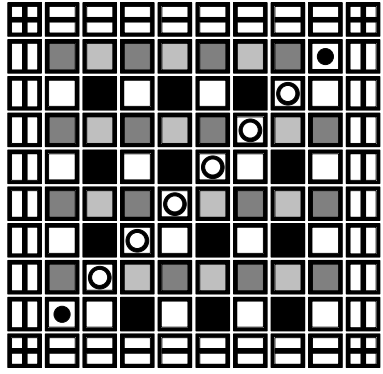

\begin{centering}
\begin{tabular}{c@{\extracolsep{0.1em}}c@{}c@{}c@{}c@{}c@{}c@{}c@{}c@{}c}
\tileC & \tileH     & \tileH     & \tileH     & \tileH     & \tileH     & \tileH     & \tileH     & \tileH     & \tileC \\
\tileV & \tileDark  & \tileLight & \tileDark  & \tileLight & \tileDark  & \tileLight & \tileDark  & \tileCirc  & \tileV \\
\tileV & \tileWhite & \tileBlack & \tileWhite & \tileBlack & \tileWhite & \tileBlack & \tileRing  & \tileWhite & \tileV \\
\tileV & \tileDark  & \tileLight & \tileDark  & \tileLight & \tileDark  & \tileRing  & \tileLight & \tileDark  & \tileV \\
\tileV & \tileWhite & \tileBlack & \tileWhite & \tileBlack & \tileRing  & \tileWhite & \tileBlack & \tileWhite & \tileV \\
\tileV & \tileDark  & \tileLight & \tileDark  & \tileRing  & \tileLight & \tileDark  & \tileLight & \tileDark  & \tileV \\
\tileV & \tileWhite & \tileBlack & \tileRing  & \tileWhite & \tileBlack & \tileWhite & \tileBlack & \tileWhite & \tileV \\
\tileV & \tileDark  & \tileRing  & \tileLight & \tileDark  & \tileLight & \tileDark  & \tileLight & \tileDark  & \tileV \\
\tileV & \tileCirc  & \tileWhite & \tileBlack & \tileWhite & \tileBlack & \tileWhite & \tileBlack & \tileWhite & \tileV \\

\tileC & \tileH     & \tileH     & \tileH     & \tileH     & \tileH     & \tileH     & \tileH     & \tileH     & \tileC
\end{tabular}
\caption{A tiling for Layer $1$ of a $10 \times 10$ grid for the WEIGHTED \tiling\ problem with reflection symmetry.  The reflection symmetry is broken in the immediate vicinity of the central diagonal.}
\label{fig:weightedreflection1}
\end{centering}
\end{figure}

To achieve this, we use the rules given in table~\ref{table:reflectionopen}.
\begin{table}
\begin{centering}
\begin{tabular}{c|rrrrrrrrr}
  \multicolumn{10}{c}{Horizontal tiling rules} \\
              & $\tileV$ & $\tileH$ & $\tileC$ & $\tileBlack$ & $\tileWhite$ & $\tileDark$ & $\tileLight$ & $\tileRing$ & $\tileCirc$ \\
 \hline
$\tileV$      &    30    &    30    &    30    &      30      &       6      &      6      &     30       &     30      &       7     \\
$\tileH$      &    30    &   -11    &     0    &      30      &      30      &     30      &     30       &     30      &      30     \\
$\tileC$      &    30    &     0    &    30    &      30      &      30      &     30      &     30       &     30      &      30     \\
$\tileBlack$  &    30    &    30    &    30    &      30      &       0      &     30      &     30       &      1      &      30     \\
$\tileWhite$  &     6    &    30    &    30    &       0      &      30      &     30      &     30       &      1      &       1     \\
$\tileDark$   &     6    &    30    &    30    &      30      &      30      &     30      &      0       &      1      &       1     \\
$\tileLight$  &    30    &    30    &    30    &      30      &      30      &      0      &     30       &      1      &      30     \\
$\tileRing$   &    30    &    30    &    30    &       1      &       1      &      1      &      1       &     30      &      30     \\
$\tileCirc$   &     7    &    30    &    30    &      30      &       1      &      1      &     30       &     30      &      30     \\
\end{tabular}

\medskip

\begin{tabular}{c|rrrrrrrrr}
  \multicolumn{10}{c}{Vertical tiling rules} \\
              & $\tileV$ & $\tileH$ & $\tileC$ & $\tileBlack$ & $\tileWhite$ & $\tileDark$ & $\tileLight$ & $\tileRing$ & $\tileCirc$ \\
 \hline
$\tileV$      &   -11    &   30     &     0    &      30      &      30      &     30      &      30      &     30      &      30     \\
$\tileH$      &    30    &   30     &    30    &       6      &       6      &      6      &       6      &     30      &       7     \\
$\tileC$      &     0    &   30     &    30    &      30      &      30      &     30      &      30      &     30      &      30     \\
$\tileBlack$  &    30    &    6     &    30    &      30      &      30      &     30      &       0      &      1      &      30     \\
$\tileWhite$  &    30    &    6     &    30    &      30      &      30      &      0      &      30      &      1      &       1     \\
$\tileDark$   &    30    &    6     &    30    &      30      &       0      &     30      &      30      &      1      &       1     \\
$\tileLight$  &    30    &    6     &    30    &       0      &      30      &     30      &      30      &      1      &      30     \\
$\tileRing$   &    30    &   30     &    30    &       1      &       1      &      1      &       1      &     30      &      30     \\
$\tileCirc$   &    30    &    7     &    30    &      30      &       1      &      1      &      30      &     30      &      30     \\
\end{tabular}
\caption{The tiling weights for layer 1 for WEIGHTED \tiling\ with reflection symmetry.  Since there is reflection symmetry, the horizontal and vertical tiling weight matrices are symmetric.}
\label{table:reflectionopen}
\end{centering}
\end{table}
\begin{claim}
\label{claim:reflayer1cost}
With these rules, for large $N$, the cost of an optimal tiling is $76 - 16N$.
\end{claim}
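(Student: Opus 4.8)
The statement is an exact equality, so I would establish a matching upper and lower bound on the minimum cost.

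\textbf{Upper bound.} Generalize the $10 \times 10$ pattern of Figure~\ref{fig:weightedreflection1} to an arbitrary large even $N$: put \tileC\ in the four corners, a row of $N-2$ \tileH\ tiles along the top and bottom, a column of $N-2$ \tileV\ tiles up the left and right, fill the interior $(N-2)\times(N-2)$ block with the two-colour checkerboard of \tileBlack/\tileWhite\ and \tileDark/\tileLight\ tiles (the two interior columns adjacent to the \tileV\ tiles carrying only \tileDark\ and \tileWhite, alternating), and overwrite the main anti-diagonal of the interior by \tileRing\ tiles, capped by a \tileCirc\ at each of its two ends (which abut a \tileV). Using Table~\ref{table:reflectionopen}, one checks that every adjacent pair is among the listed finite-cost pairs, and then computes the total by grouping the $2N(N-1)$ edges: the four boundary lines each contribute $-11(N-3)$ (from the edges between consecutive \tileH\ or consecutive \tileV\ tiles, plus $0$ at the \tileC\ joints); the interface edges joining a boundary line to the interior are almost all $6$, with an $O(1)$ correction of $7$'s at the two \tileCirc\ tiles; the interior checkerboard edges are all $0$; and the \tileRing/\tileCirc\ edges of the anti-diagonal contribute a further $\Theta(N)$, each being $1$ (or $7$ against a \tileV\ or \tileH). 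Summing, the $N^2$-order terms cancel and the total is exactly $76-16N$; taking $N$ large guarantees there is room for both \tileCirc\ endpoints and avoids any small-$N$ degeneracy.

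\textbf{Lower bound.} Here lies the real work: every finite-cost tiling must cost at least $76-16N$. First it suffices to treat tilings with no ``forbidden'' ($+30$) pair, since we already have a tiling of cost $O(N)$ and any single $+30$ adjacency can be eliminated by a local resubstitution that does not raise the cost. Next, Table~\ref{table:reflectionopen} dictates a rigid local structure of the $\{\tileV,\tileH,\tileC\}$ tiles: \tileC\ borders only \tileH\ horizontally and \tileV\ vertically; \tileH\ borders only \tileH\ or \tileC\ horizontally, so the \tileH\ tiles form horizontal segments terminating at a \tileC\ or the grid edge, and each \tileH\ pays at least $6$ to every vertical neighbour; symmetrically \tileV\ borders only \tileWhite, \tileDark, or \tileCirc\ horizontally and forms vertical segments, paying at least $6$ inward. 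A direct count shows that any such segment whose tiles have neighbours on \emph{both} perpendicular sides has strictly positive net cost, hence never occurs in an optimum, so all $\{\tileV,\tileH,\tileC\}$ tiles lie on the grid boundary; a further local argument (using that \tileV\ cannot be directly above or below a checkerboard tile, nor \tileH\ directly left or right of one) forces each side to be a full line with \tileC\ in all four corners, because every alternative boundary --- a missing side, a partial line, a \tileH\ or \tileV\ in a corner --- is costlier by $\Theta(N)$. With the boundary fixed, the interior must tile an $(N-2)\times(N-2)$ block with checkerboard, \tileRing, and \tileCirc\ tiles, the two extreme interior columns confined to $\{\tileDark,\tileWhite,\tileCirc\}$; a parity argument --- the side length $N-2$ is even and the two interior columns adjacent to the \tileV\ tiles land in complementary colour classes of the checkerboard, which is incompatible with the rule governing what may sit beside a \tileV\ unless the checkerboard phase shifts somewhere --- shows that at least one connected diagonal defect of \tileRing/\tileCirc\ tiles from one side of the block to the opposite side is unavoidable. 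Bounding the extra cost of such a defect (the $1$'s along it and the $7$'s at its ends) against the $\Theta(N)$ savings of the four boundary lines yields a total of at least $76-16N$, with equality precisely for tilings of the standard shape.

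\textbf{Expected main obstacle.} The upper bound is a bounded computation. All the difficulty is in the lower bound: excluding the many non-standard skeletons (interior segments, branch points, partial or absent boundary lines, \tileH\ or \tileV\ in a corner), and --- most delicately --- the parity/defect analysis of the interior block, where one must show that the \emph{cheapest} way to reconcile the checkerboard with the even side length and the two column restrictions costs exactly what the anti-diagonal in the construction costs, not a penny less. Every ingredient is a finite, local check against Table~\ref{table:reflectionopen}, but assembling them into a clean proof and getting the additive constant $76$ exactly right is what requires care.
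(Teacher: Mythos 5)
Your upper bound is fine, but the lower bound has genuine gaps. The first is the opening reduction: you assert that any $+30$ (``forbidden'') adjacency ``can be eliminated by a local resubstitution that does not raise the cost.'' This is not obvious and is not something you can wave through: because the table contains large negative weights ($-11$), a tiling with a few forbidden pairs can still have cost $\approx -16N$, and a forbidden pair may be the cheapest way to graft together two locally optimal but mutually incompatible structures (e.g.\ middle rows ending in \tileV\ glued under an all-\tileH\ top row with no \tileC); removing it locally can simply push the defect elsewhere or force a $\Theta(1)$-per-row restructuring. The paper never makes such a claim; it handles forbidden pairs quantitatively, first inside a pair of rows (one forbidden pair already forces $w'\geq -22+30$) and then by checking that any gluing of incompatible row-pair optima needs at least two forbidden pairs, which costs more than adapting the top rows.

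The second gap is the structural step ``any segment of $\{\tileV,\tileH,\tileC\}$ tiles with neighbours on both perpendicular sides has strictly positive net cost, hence never occurs in an optimum, so all such tiles lie on the grid boundary.'' Positive local cost does not imply absence from an optimum: the \tileRing/\tileCirc\ diagonal itself costs $+4$ per row locally and is nonetheless mandatory, precisely because of the parity constraint you invoke later. So the inference proves too much, and it also fails to exclude concrete competitors that are locally consistent, e.g.\ an interior column of \tileV\ tiles capped above and below by \tileC\ tiles embedded in the \tileH\ rows (all pairings allowed, cost $\approx N+O(1)$ over the defect-free baseline, i.e.\ cheaper per unit length than the \tileRing\ diagonal). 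Such configurations are ruled out only by a parity count (each block of checkerboard columns between \tileV-columns must have odd width, and the widths cannot all be odd when $N$ is even), not by local cost positivity -- and your sketch never carries out this counting, nor the analogous exclusion of partial boundary lines, branch points, or corners without \tileC, nor does it pin down the constant $76$. The paper avoids all of this by a different route: it writes $2w(T)$ as a sum of row-pair costs $w'(R_b,R_{b+1})$ and boundary terms $w''(R_1,R_2)$, $w''(R_{N-1},R_N)$, solves each row-pair problem exactly via the one-dimensional vertical-pair graph (simple paths plus simple cycles, Lemmas~\ref{lemma:refminrowpair} and~\ref{lemma:reftoprow}), and then compares the three consistent ways of combining the row-pair optima, which yields $76-16N$ together with the structure of all optimal tilings. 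If you want to pursue your geometric approach, you would need to replace both of the steps above with actual arguments of comparable strength; as written, the lower bound is a plan rather than a proof.
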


To show that the desired tiling is indeed optimal, let $T$ be some tiling, with $T_{ab}$ the identity of the tile in location $(a,b)$ in the grid.
Let $w(T)$ be the total cost of $T$ and let $w(S_{ab})$ be the total cost of the $2 \times 2$ square in location $(a,b)$: i.e.,
\begin{equation}
w(S_{ab}) = w_H (T_{ab}, T_{(a+1)b}) + w_V (T_{ab}, T_{a(b+1)}) + w_V (T_{(a+1)b}, T_{(a+1)(b+1)}) + w_H (T_{a(b+1)}, T_{(a+1)(b+1)}).
\end{equation}
Finally, let $w(R_b) = \sum_a w_H (T_{ab}, T_{(a+1)b})$ be the internal cost of row $b$ and $w(C_a) = \sum_b w_V (T_{ab}, T_{a(b+1)})$ be the internal cost of column $a$.  Then
\begin{equation}
2w(T) = \sum_{a,b = 1}^{N-1} w(S_{ab}) + w(R_1) + w(R_N) + w(C_1) + w(C_N).
\label{eq:costbyrows}
\end{equation}

The only negative weights are the $\begin{array}{c@{\extracolsep{0.1em}}c} \tileH \tileH \end{array}$ horizontal edges and the vertical edges between two \tileV\ tiles.  There are no possible $2 \times 2$ squares with negative cost, and the only $2 \times 2$ squares with total cost $0$ are
\begin{equation}
\begin{array}{c@{\extracolsep{0.1em}}c}  \tileDark & \tileLight \\ \tileWhite & \tileBlack \\ \end{array}
\end{equation}
and its three reflections.  Therefore, to get a minimal cost tiling, we should have as many $2 \times 2$ squares as possible be of that form, and have preferentially $\tileH$ and $\tileV$ around the edges of the grid.

We can break the whole cost down into pairs of rows.  Let us calculate the minimal value of
\begin{align}
w'(R_b, R_{b+1}) & \equiv w(R_b) + w(R_{b+1}) + 2\sum_{a=1}^N w_V (T_{ab}, T_{a(b+1)}) \\
& = \sum_{a=1}^{N-1} w(S_{ab}) + w_V (T_{1b}, T_{1(b+1)}) + w_V (T_{Nb}, T_{N(b+1)})
\label{eq:tworowscost}
\end{align}
for a particular assignment of tiles to a pair of rows.  We must decide a collection of $N$ vertical pairs of tiles to achieve the minimum cost.
Since
\begin{equation}
2w(T) = \sum_b w'(R_b, R_{b+1}) + w(R_1) + w(R_N),
\end{equation}
the minimal value of $w'$ for pairs of rows is a good guide to the minimal achievable total cost for the whole grid.  By equation~(\ref{eq:tworowscost}), if every square in the pair of rows had cost $0$, and the first and last pairs had the minimal edge cost $-11$, we would have $w'(R_b, R_{b+1}) = -22$, but we cannot quite achieve that:
\begin{lemma}
\label{lemma:refminrowpair}
When $N>4$ is even, the minimum value of $w'(R_b, R_{b+1})$ is $-12$.

Any pair of rows $R_b$, $R_{b+1}$ with $w'(R_b, R_{b+1})=-12$ has the following structure:  Each row starts and ends with \tileV\ tiles, has exactly one \tileCirc\ or \tileRing\ tile, and to the left and right of the \tileCirc\ or \tileRing, alternates either \tileLight\ and \tileDark\ or \tileWhite\ and \tileBlack.  The \tileCirc\ and/or \tileRing\ tiles in the two rows are diagonally adjacent to each other.  Furthermore, if one row has alternating \tileLight\ and \tileDark\ tiles to the right (left) of the \tileCirc\ or \tileRing\ tile, the other row has alternating \tileWhite\ and \tileBlack\ tiles to the right (left) of the \tileCirc\ or \tileRing\ tile, as required by the allowed pairs of adjacent tiles.

Beyond this structure, there are four classes of solutions:
\begin{itemize}
\item[a.] One row contains a \tileCirc\ tile adjacent to one of the \tileV\ tiles.  The other tile in the column with the \tileCirc\ tile is the same (\tileWhite\ or \tileDark) as the tile in the column with the \tileRing\ tile in the other row.

\item[b.] One row contains a \tileCirc\ tile adjacent to one of the \tileV\ tiles.  The other tile in the column with the \tileCirc\ tile is different than the tile in the column with the \tileRing\ tile in the other row.  In this case, one row contains no \tileLight\ or \tileDark\ tiles and the other row contains no \tileWhite\ or \tileBlack\ tiles.

\item[c.] Both rows have \tileRing\ tiles.  The two tiles in the same columns as the \tileRing\ tiles are the same.  In this case, each row contains at least one tile from each pair (\tileLight, \tileDark) and (\tileWhite, \tileBlack).

\item[d.] Both rows have \tileRing\ tiles.  The two tiles in the same columns as the \tileRing\ tiles are different.  In this case, each row contains tiles from only one of the pairs (\tileLight, \tileDark) and (\tileWhite, \tileBlack).
\end{itemize}

If one side (left or right) is forbidden to have a \tileV\ for one or both rows, the minimal value achievable is $-10$.  This can be achieved only through a pair of rows which uses \tileV\ at the other end of both rows, and alternates \tileLight\ and \tileDark\ elsewhere in one row, and \tileBlack\ and \tileWhite\ elsewhere in the other row.

If \tileV\ is forbidden for both ends of one or both rows, the minimal value achievable is $0$.  This can only be achieved by having one row alternating \tileLight\ and \tileDark, and the other row alternating \tileBlack\ and \tileWhite.
\end{lemma}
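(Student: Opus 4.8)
The starting point is the row‑pair identity already derived,
\[
w'(R_b,R_{b+1}) \;=\; \sum_{a=1}^{N-1} w(S_{ab}) \;+\; w_V(T_{1b},T_{1(b+1)}) \;+\; w_V(T_{Nb},T_{N(b+1)}),
\]
which also equals $w(R_b)+w(R_{b+1})+2\sum_{a} w_V(T_{ab},T_{a(b+1)})$, together with the two facts recorded above: every $2\times2$ square has $w(S_{ab})\ge 0$, with equality only for the canonical square and its three reflections; and the only negative weights are $w_H(\tileH,\tileH)=-11$ and $w_V(\tileV,\tileV)=-11$, the next smallest weight being $0$. A short preliminary step disposes of pathological rows: whenever $w'(R_b,R_{b+1})\le 0$, neither row carries a \tileH\ or \tileC\ in an interior column, and the only column in which both rows carry a \tileV\ is an end column. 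Indeed a \tileH-run, an interior \tileC, or a stray interior \tileV\ each creates an incident edge or a whole column of weight $\ge 6$ while saving provably less; and an interior column with \tileV\ in both rows is removed by splitting the pair at that column into two narrower row‑pairs and invoking sub‑additivity (this is where one uses the easy bound $w'\ge -20$ for an \emph{odd}-width row‑pair with \tileV\tileV\ at both ends, obtained from $w(R)\ge 6+6$ and $v\ge -22$). So for the main bound we may assume each row reads \tileV, a word over $\{\tileWhite,\tileBlack,\tileDark,\tileLight,\tileRing,\tileCirc\}$, \tileV, and that columns $1$ and $N$ are \tileV\tileV; in the variants, a side forbidden a \tileV\ simply contributes an edge term $\ge 0$ instead of $-11$.

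Next I would pin down the structure of a cheap row and extract the parity obstruction. From $w(S_{1b})\le \sum_a w(S_{ab}) = w'+22$ and $w_H(\tileV,\cdot)\ge 6$ (equality only for \tileWhite,\tileDark, with $=7$ for \tileCirc\ and $=30$ otherwise), the tiles adjacent to the two end \tileV's lie in $\{\tileWhite,\tileDark,\tileCirc\}$; propagating along the row using the cheap horizontal adjacencies shows the interior word is a checkerboard of \tileWhite/\tileBlack\ (or of \tileDark/\tileLight) in which a \tileRing\ is a defect whose two horizontal edges each cost $1$ and across which the checkerboard phase flips, and a \tileCirc\ occurs only adjacent to a \tileV\ (flipping the phase) or flanked by two $\{\tileWhite,\tileDark\}$ tiles (not flipping it). Since $N$ is even, columns $2$ and $N-1$ have opposite parity, so a pure checkerboard cannot place both of them in $\{\tileWhite,\tileDark\}$ as is needed to border the two \tileV's; hence each of $R_b,R_{b+1}$ must contain a phase‑flipping defect (a \tileRing\ or a boundary \tileCirc). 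Counting the horizontal cost, the two end edges contribute $\ge 12$ and a phase‑flipping defect contributes $\ge 2$ more, with any further defect another $\ge 2$, so $w(R_b)\ge 14$ and $w(R_{b+1})\ge 14$, and equality holds exactly when the row is \tileV, a checkerboard carrying a single \tileRing\ in an interior column or a single boundary \tileCirc, \tileV.

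For the vertical contribution $v=\sum_a w_V(T_{ab},T_{a(b+1)})$: columns $1$ and $N$ give $-11$ each; a column carrying a \tileRing\ or \tileCirc\ in one row contributes $\ge 1$ (and if it carried such a tile in \emph{both} rows it would cost $30$), and the two rows' defects lie in distinct columns, so $v\ge -22+2 = -20$. Combining, $w'(R_b,R_{b+1})=w(R_b)+w(R_{b+1})+2v\ge 14+14-40 = -12$, which is the claimed bound, and for equality all three estimates are tight: each row has exactly one defect, a \tileRing\ or a boundary \tileCirc; each defect column has vertical cost exactly $1$, so the other row fills it with a \tileWhite\ or \tileDark; and every remaining middle column has vertical cost $0$, i.e.\ the two rows are complementary (\tileWhite$\leftrightarrow$\tileDark, \tileBlack$\leftrightarrow$\tileLight) off the defects. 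Complementarity forces both rows to share the same checkerboard phase wherever neither has a defect, which is possible only when the two defect columns are adjacent; hence the two defect tiles are diagonally adjacent, and splitting on which row (if either) holds the boundary \tileCirc\ and on whether the ordinary tile in a defect column agrees with its diagonal neighbour yields precisely the four classes (a)--(d). The variant bounds follow from the same estimates with a forbidden edge term replaced by $0$: with only one \tileV\tileV\ column the adjacent square still costs $\ge 1$, so $w'\ge 1+0-11 = -10$, attained by two complementary pure checkerboards with \tileV's only on one side; with no \tileV\tileV\ column $w'\ge 0$, attained by two complementary pure checkerboards.

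I expect the real work, and the main obstacle, to be the equality analysis together with the preliminary exclusion of pathological rows: verifying that none of the three lower bounds can be slack in a way not covered by classes (a)--(d), and carefully tracking defects that sit next to the \tileV\tileV\ end columns, where a defect's cost mingles with the $-11$ edges and where the boundary‑\tileCirc\ case and the $N=4$ degeneracy must be distinguished. The inequality $w'\ge -12$ itself is the easy part once the structural reduction is in place.
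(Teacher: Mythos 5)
Your proposal is essentially correct, but it takes a genuinely different route from the paper's proof.  The paper converts the two-row problem into the one-dimensional weighted tiling problem of section~\ref{sec:oneDclassical} on the graph whose nodes are allowed vertical pairs of tiles (node cost $2w_V$ of the pair, edge cost the sum of the two horizontal weights), splits that graph into connected components, and enumerates minimum-cost simple paths and simple cycles; there the value $-12$ appears as a cost $-20$ odd-length path through $(\tileV,\tileV)$ and $(\tileWhite,\tileDark)$-type nodes, repaired to even length by inserting a $+8$ length-$3$ cycle containing two \tileRing/\tileCirc\ nodes, and the classes (a)--(d) are read off as the distinct minimum-cost even-length paths.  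You instead decompose $w'=w(R_b)+w(R_{b+1})+2\sum_a w_V$, prove $w(R)\geq 14$ per row via the checkerboard-parity obstruction (even $N$ forces at least one parity-fixing defect, namely an interior \tileRing\ or a \tileCirc\ next to a \tileV, each costing $+2$ beyond the $12$ for the two end edges), prove $\sum_a w_V \geq -20$, and then recover the structure from tightness: zero-cost vertical pairs force the two rows to be phase-complementary off the defect columns, so any column strictly between the two defects would cost $30$, forcing the defects to be diagonally adjacent, with the four classes following from which row (if either) carries the boundary \tileCirc\ and how the phases match.  Both arguments hinge on the same parity obstruction, and your accounting makes the origin of $-12=14+14-40$ quite transparent; what the paper's graph formulation buys is a systematic disposal of the configurations involving \tileH, \tileC, or interior \tileV\ (its second and third components, and the expensive edges out of the $(\tileV,\tileV)$ node), which in your plan must be excluded by the ad hoc preliminary step and the splitting/sub-additivity argument.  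Those estimates do check out (a run of $k$ \tileH\ tiles adds at least $12k$ to $2\sum_a w_V$ while saving only $11(k-1)$ horizontally; an interior $\tileV\tileV$ column yields $w'\geq -12-20+22=-10$), but they, together with the uniqueness statements in the $-10$ and $0$ variants, are the parts that still need to be written out in full, as you yourself note.  One small correction: in your equality analysis you assert that the tile opposite a defect must be \tileWhite\ or \tileDark.  That is true for a \tileCirc\ column, but a \tileRing\ has vertical cost $1$ with all four colored tiles --- indeed in the paper's optimal tiling a \tileBlack\ sits opposite a \tileRing\ --- and the identity of that opposite tile is exactly what distinguishes classes a and c from b and d; fortunately nothing downstream in your argument relies on the erroneous claim.
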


The four classes of minimal-cost solution given by the lemma depend on two factors.  First, do we have only \tileRing\ tiles, or one \tileCirc\ tile and one \tileRing\ tile?  Classes a and b have \tileCirc, whereas classes c and d do not.  Second, do we have the same alternating pair (\tileBlack, \tileWhite) or (\tileLight, \tileDark) on both sides of the \tileRing\ tile(s), or do we have different alternating pairs on the left and right?  Classes a and c have different alternating pairs, while classes b and d have the same pair on each side of the \tileRing\ tile in a particular row.  In our solution, we are primarily interested in classes b and d because when $N$ is even, they have an asymmetry between left and right in the vicinity of the \tileRing\ tiles.  However, at this stage we cannot rule out classes a and c.

\begin{proof}[ of lemma]
The minimal cost of a pair of rows can be calculated using the efficient $1$-dimensional tiling algorithm described in section~\ref{sec:oneDclassical}. We merge each vertically adjacent pair of tiles into a single node of the graph.  The difference from the algorithm of section~\ref{sec:oneDclassical} is that to minimize $w'(R_b, R_{b+1})$, we must now assign a cost to visiting a node equal to $2 w_V (T_{ab}, T_{a(b+1)})$ in addition to the cost of each edge (which is equal to $w_H (T_{ab}, T_{(a+1)b}) + w_H (T_{a(b+1)}, T_{(a+1)(b+1)})$).  We must adapt the algorithm slightly: When calculating the cost of a simple path, we include the cost of all the nodes, including the beginning and ending node.  When we calculate the cost of a simple cycle, we include the cost of all nodes, but only count the starting/ending node once.  Note that this creates a distinction between the cost accounting for a simple path and a simple cycle, even when the simple path being considered is actually a cycle.  We need to do this because the simple cycles will be inserted into a longer path at the location of the starting node of the cycle, whose cost is already calculated as part of the path.  The ending of the cycle introduces a second copy of that node into the path, so the cost of that node should be added in.

If there is even one forbidden pairing in the pair of rows, the overall cost must be at least $+8 = -22 + 30$.  Since we will be able to achieve a cost of $-12$, it will be sufficient to consider only paths and pairs of tiles involving allowed pairings.

Consider the graph whose nodes are allowed vertical pairs and edges are allowed horizontal $2 \times 2$ squares.  The nodes of the graph are weighted by twice the cost of the horizontal edge in the pair, and the edges of the graph are weighted by the sum of the cost of the two vertical edges in the square.  The graph breaks down into three connected components.  The first component contains the pairs
\begin{equation}
\twotilesvert{\tileV}{\tileV},\
\twotilesvert{\tileRing}{\tileWhite},\ \twotilesvert{\tileRing}{\tileBlack},\
\twotilesvert{\tileRing}{\tileDark},\ \twotilesvert{\tileRing}{\tileLight},\
\twotilesvert{\tileWhite}{\tileRing},\ \twotilesvert{\tileBlack}{\tileRing},\
\twotilesvert{\tileDark}{\tileRing},\ \twotilesvert{\tileLight}{\tileRing},\
\twotilesvert{\tileCirc}{\tileWhite},\ \twotilesvert{\tileCirc}{\tileDark},\
\twotilesvert{\tileWhite}{\tileCirc},\ \twotilesvert{\tileDark}{\tileCirc},\
\twotilesvert{\tileWhite}{\tileDark},\ \twotilesvert{\tileBlack}{\tileLight},\
\twotilesvert{\tileDark}{\tileWhite},\ \twotilesvert{\tileLight}{\tileBlack}.
\end{equation}
The second component contains the pairs
\begin{equation}
\twotilesvert{\tileC}{\tileV},\
\twotilesvert{\tileH}{\tileWhite},\ \twotilesvert{\tileH}{\tileBlack},\
\twotilesvert{\tileH}{\tileDark},\ \twotilesvert{\tileH}{\tileLight},\
\twotilesvert{\tileH}{\tileCirc}.
\end{equation}
The third component is just the vertical mirror image of the second component.  In the second and third components, there are no paths with total cost $w'$ less than or equal to $0$: The vertical pairs which include \tileH\ have cost per node of $+12$ or $+14$, and the minimal cost of an edge is $-11$.  The vertical pair which includes \tileC\ only has node cost $0$, but all the edges leaving it also have positive cost ($+6$ or $+7$).  We will be able to achieve path costs less than $0$ using the first component, so the second and third components are only useful at the top and bottom edges of the $N \times N$ grid.

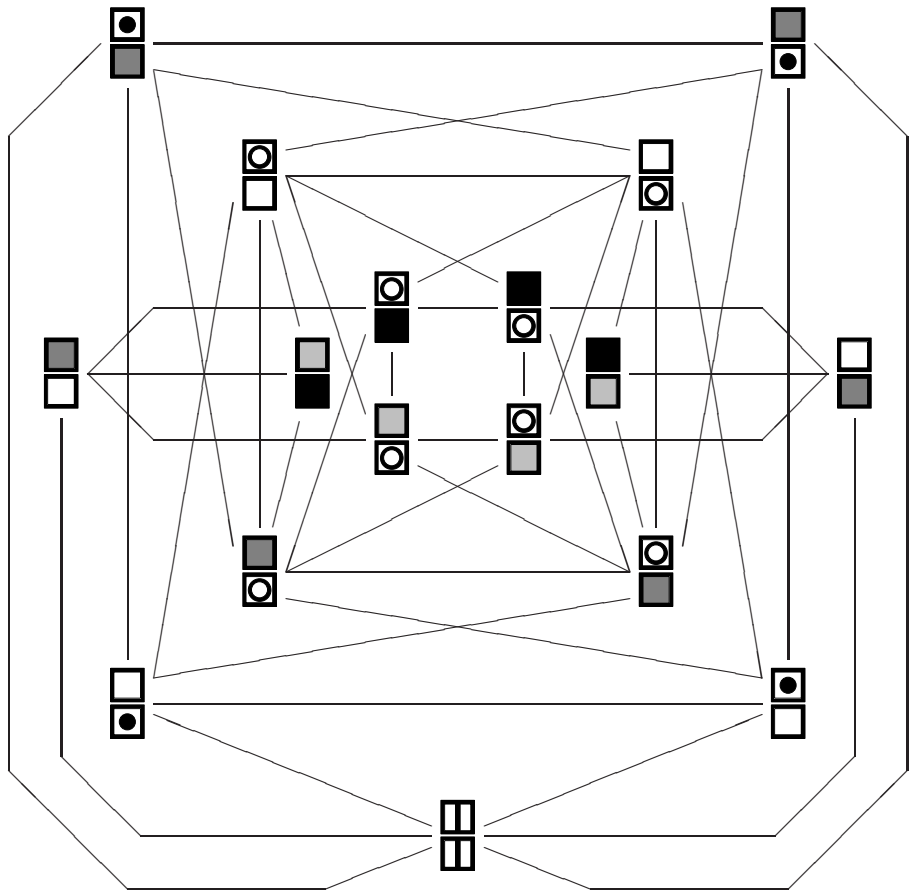
\begin{figure}
\begin{centering}
\begin{picture}(400,400)

\put(200,25){\makebox(0,0){$\twotilesvert{\tileV}{\tileV}$}}

\put(75,75){\makebox(0,0){$\twotilesvert{\tileWhite}{\tileCirc}$}}
\put(325,75){\makebox(0,0){$\twotilesvert{\tileCirc}{\tileWhite}$}}
\put(75,325){\makebox(0,0){$\twotilesvert{\tileCirc}{\tileDark}$}}
\put(325,325){\makebox(0,0){$\twotilesvert{\tileDark}{\tileCirc}$}}

\put(125,125){\makebox(0,0){$\twotilesvert{\tileDark}{\tileRing}$}}
\put(275,125){\makebox(0,0){$\twotilesvert{\tileRing}{\tileDark}$}}
\put(125,275){\makebox(0,0){$\twotilesvert{\tileRing}{\tileWhite}$}}
\put(275,275){\makebox(0,0){$\twotilesvert{\tileWhite}{\tileRing}$}}

\put(175,175){\makebox(0,0){$\twotilesvert{\tileLight}{\tileRing}$}}
\put(225,175){\makebox(0,0){$\twotilesvert{\tileRing}{\tileLight}$}}
\put(225,225){\makebox(0,0){$\twotilesvert{\tileBlack}{\tileRing}$}}
\put(175,225){\makebox(0,0){$\twotilesvert{\tileRing}{\tileBlack}$}}

\put(50,200){\makebox(0,0){$\twotilesvert{\tileDark}{\tileWhite}$}}
\put(145,200){\makebox(0,0){$\twotilesvert{\tileLight}{\tileBlack}$}}
\put(255,200){\makebox(0,0){$\twotilesvert{\tileBlack}{\tileLight}$}}
\put(350,200){\makebox(0,0){$\twotilesvert{\tileWhite}{\tileDark}$}}

\put(185,175){\line(1,0){30}}
\put(185,225){\line(1,0){30}}
\put(175,192){\line(0,1){16}}
\put(225,192){\line(0,1){16}}

\put(185,235){\line(2,1){80}}
\put(235,185){\line(1,3){30}}
\put(165,215){\line(-1,-3){30}}
\put(215,165){\line(-2,-1){80}}

\put(185,165){\line(2,-1){80}}
\put(235,215){\line(1,-3){30}}
\put(165,185){\line(-1,3){30}}
\put(215,235){\line(-2,1){80}}

\put(85,315){\line(1,-6){30}}
\put(85,315){\line(6,-1){180}}
\put(135,115){\line(6,-1){180}}
\put(285,265){\line(1,-6){30}}

\put(85,85){\line(1,6){30}}
\put(85,85){\line(6,1){180}}
\put(315,315){\line(-1,-6){30}}
\put(315,315){\line(-6,-1){180}}

\put(140,218){\line(-1,4){10}}
\put(140,182){\line(-1,-4){10}}
\put(135,200){\line(-1,0){75}}

\put(260,218){\line(1,4){10}}
\put(260,182){\line(1,-4){10}}
\put(265,200){\line(1,0){75}}

\put(60,200){\line(1,1){25}}
\put(85,225){\line(1,0){80}}
\put(60,200){\line(1,-1){25}}
\put(85,175){\line(1,0){80}}

\put(340,200){\line(-1,1){25}}
\put(315,225){\line(-1,0){80}}
\put(340,200){\line(-1,-1){25}}
\put(315,175){\line(-1,0){80}}

\put(85,325){\line(1,0){230}}
\put(75,308){\line(0,-1){216}}
\put(85,75){\line(1,0){230}}
\put(325,308){\line(0,-1){216}}

\put(135,125){\line(1,0){130}}
\put(135,275){\line(1,0){130}}
\put(125,142){\line(0,1){116}}
\put(275,142){\line(0,1){116}}

\put(190,29){\line(-5,2){105}}
\put(210,29){\line(5,2){105}}

\put(190,25){\line(-1,0){110}}
\put(80,25){\line(-1,1){30}}
\put(50,55){\line(0,1){128}}
\put(210,25){\line(1,0){110}}
\put(320,25){\line(1,1){30}}
\put(350,55){\line(0,1){128}}

\put(190,21){\line(-5,-2){40}}
\put(150,5){\line(-1,0){75}}
\put(75,5){\line(-1,1){45}}
\put(30,50){\line(0,1){240}}
\put(30,290){\line(1,1){35}}

\put(210,21){\line(5,-2){40}}
\put(250,5){\line(1,0){75}}
\put(325,5){\line(1,1){45}}
\put(370,50){\line(0,1){240}}
\put(370,290){\line(-1,1){35}}

\end{picture}
\caption{The first component of the graph of vertical pairs of tiles.}
\label{fig:tworowsmain}
\end{centering}
\end{figure}

Let us consider in more detail the first component, pictured in figure~\ref{fig:tworowsmain}.  First, the costs of the nodes: The pair with two \tileV\ tiles has node cost $-22$.  The $12$ nodes which involve a \tileRing\ or \tileCirc\ have node cost $+2$, and the remaining $4$ nodes have cost $0$.  However, the edges from the node with two \tileV\ tiles have large positive cost: $+12$ to connect to the nodes involving both \tileWhite\ and \tileDark, and $+13$ to connect to the $4$ nodes involving \tileCirc.  (It is not adjacent to the other nodes.)  The pairs involving \tileCirc\ or \tileRing\ have edge costs $+1$ (to connect to a node which does not involve \tileCirc, \tileRing, or \tileV), $+2$ (to connect to another node with a \tileCirc\ or \tileRing), or $+13$ (to connect to the two-\tileV\ node, which is only possible for the four \tileCirc\ nodes).  The remaining four nodes involve either the pair (\tileWhite, \tileDark) or the pair (\tileBlack, \tileLight).  Besides the connections listed above, they also connect to one of the other four nodes in this class with an edge of cost $0$.

Let us first consider the cost of simple cycles.  We are particularly interested in minimum-cost simple cycles of various lengths.  For a simple cycle, we count the node cost for the beginning/ending node only once.  It is thus easy to see that we cannot achieve a negative cost simple cycle --- the benefit (negative cost) of having a (\tileV, \tileV) node is outweighed by the high positive cost of the edges in and out of that node.  Therefore, the minimum cost of a cycle is $0$, and this is easy to achieve for a cycle of length 2, using a (\tileWhite, \tileDark) node and the matching (\tileBlack, \tileLight) node.

Thus, if our path contains any (\tileWhite, \tileDark) or (\tileBlack, \tileLight) node, we can extend it to a path whose length has the same cost and the same parity (even or odd) by inserting copies of an appropriate cost $0$ cycle.  In order to cover paths whose length might have a different parity than the simple path we started with, we will also need to insert a simple cycle of odd length.  There are no other cost $0$ simple cycles, so the minimum-cost odd length cycle will have positive cost, and we will only want to use at most one of them.  It is not possible to build an odd cycle of allowed transitions using just (\tileV, \tileV), (\tileBlack, \tileLight), and (\tileWhite, \tileDark) nodes, so we will need to include at least one node with a \tileCirc\ or \tileRing\ tile.  Since there is a cost associated to the nodes involving \tileCirc\ or \tileRing\ tiles, we would like to minimize the number of such nodes we use.  It is straightforward to see there are no odd cycles with only one \tileCirc\ or \tileRing\ node, so we will want cycles which use two of these nodes.

Let us first consider odd simple cycles that contain a (\tileBlack, \tileLight) or (\tileWhite, \tileDark) node.  In this case, we can achieve a cycle with total cost $+8$ and length $3$ by starting with a (\tileBlack, \tileLight) or (\tileWhite, \tileDark) node, followed by two nodes involving \tileRing, and then back to the original node.  There are a number of allowed paths of this form, such as
\begin{equation}
\twotilesvert{\tileWhite}{\tileDark} \ \twotilesvert{\tileBlack}{\tileRing} \
\twotilesvert{\tileRing}{\tileLight} \ \twotilesvert{\tileWhite}{\tileDark}.
\label{path:middlecorrect}
\end{equation}
There are no other allowed length $3$ cycles starting and ending on a (\tileBlack, \tileLight) or (\tileWhite, \tileDark) node.  There are longer simple cycles, but they also have larger cost.

The other minimal cost odd simple cycle starts on a (\tileV, \tileV) node, connecting from there to any of the four \tileCirc\ nodes, and from there to another \tileCirc\ node (with the \tileCirc\ in the opposite position), then back to the (\tileV, \tileV) node.  The total cost is $+10$.  This type of length 3 cycle costs more than the previously discussed class of length 3 cycles, so would only be useful if we wanted to extend a path that contained no (\tileBlack, \tileLight) or (\tileWhite, \tileDark) nodes.  This will not be needed, so we can ignore this simple cycle.

The next step is to consider minimal cost simple paths.  Our goal is to
achieve an overall negative cost, and it is clear the only way to achieve
that is through the use of (\tileV, \tileV) nodes.  In particular, the only
possible locations for the
(\tileV, \tileV) nodes are the beginning and the ending of our
path, since those are the only locations where the cost of edges in and out
 of the node does not overcome the negative cost of the node itself.
We will assume for now that the (\tileV, \tileV) nodes occur at both ends
of the path and address  the other cases later.

The minimum-cost way to leave a (\tileV, \tileV) node is by connecting to one of the two (\tileWhite, \tileDark) nodes.  Then we can connect directly back to (\tileV, \tileV), for a total cost of $-20$.  Extending this simple path using a cost $0$ length 2 simple cycle, we can create arbitrarily long paths of odd length with cost $-20$.  However, $N$ is even.  By using one of the cost $+8$ length 3 simple cycles, we can create paths of any even length ($N \geq 6$) with total cost $-12$.  This produces the paths of class d in the statement of the lemma.

Next, let us consider even-length simple paths that start and end on (\tileV, \tileV) nodes.  To achieve an even length, we again need to use at least two \tileCirc\ or \tileRing\ nodes, and to minimize the cost, we will want to use exactly two.  Since we have the beginning and ending (\tileV, \tileV) nodes, exactly two \tileCirc\ or \tileRing\ nodes, and cannot repeat any (\tileBlack, \tileLight) or (\tileWhite, \tileDark) node (because we want a \emph{simple} path), we will get a simple path of length $4$, $6$, or $8$.

For length 4, we have the simple path version of the simple cycle of length $3$ starting and ending with (\tileV, \tileV).  As a path, this has total cost $-12$.  However, it contains no (\tileBlack, \tileLight) or (\tileWhite, \tileDark) node, so cannot be extended to a longer path without increasing the cost.

Next, we can consider paths that include a \tileCirc\ tile.  We can indeed achieve a total cost of $-12$ using the paths
\begin{equation}
\twotilesvert{\tileV}{\tileV} \  \twotilesvert{\tileDark}{\tileCirc} \
\twotilesvert{\tileRing}{\tileWhite} \ \twotilesvert{\tileLight}{\tileBlack} \
\twotilesvert{\tileDark}{\tileWhite} \ \twotilesvert{\tileV}{\tileV}
\label{path:endcorrect}
\end{equation}
and
\begin{equation}
\twotilesvert{\tileV}{\tileV} \ \twotilesvert{\tileDark}{\tileCirc} \
\twotilesvert{\tileRing}{\tileDark} \ \twotilesvert{\tileBlack}{\tileLight} \
\twotilesvert{\tileWhite}{\tileDark} \ \twotilesvert{\tileV}{\tileV}
\label{path:endflipped}
\end{equation}
or variations.  These paths start with a \tileV\ node, continue to a \tileCirc\ node, then a \tileRing\ node, then a (\tileBlack, \tileLight) node, and finally a (\tileWhite, \tileDark) node, before returning to \tileV.  (Or we could have the same progression from right to left instead.)  The difference between the two classes is that in the class represented by (\ref{path:endflipped}) (class a in the statement of the lemma), one row (the top one in this case) contains both \tileDark\ and \tileBlack\ tiles, whereas in (\ref{path:endcorrect}) (class b in the statement of the lemma), both rows consist of either all \tileWhite\ and \tileBlack\ tiles or all \tileLight\ and \tileDark\ tiles (plus \tileV, \tileRing, and/or \tileCirc).
These paths all include (\tileBlack, \tileLight) and (\tileWhite, \tileDark) nodes, so can be extended to any even $N$ while maintaining the total cost $-12$.

There are two more types of simple paths of total cost $-12$.  In these, we use two \tileRing\ nodes, and the (\tileV, \tileV) nodes connect to (\tileWhite, \tileDark) nodes.  Either we have just these six nodes, for instance:
\begin{equation}
\twotilesvert{\tileV}{\tileV} \ \twotilesvert{\tileWhite}{\tileDark} \
\twotilesvert{\tileBlack}{\tileRing} \ \twotilesvert{\tileRing}{\tileBlack} \
\twotilesvert{\tileDark}{\tileWhite} \ \twotilesvert{\tileV}{\tileV},
\label{path:middleflipped}
\end{equation}
or we also include the two (\tileBlack, \tileLight) nodes, as in this path:
\begin{equation}
\twotilesvert{\tileV}{\tileV} \ \twotilesvert{\tileWhite}{\tileDark} \ \twotilesvert{\tileBlack}{\tileLight} \
\twotilesvert{\tileWhite}{\tileRing} \ \twotilesvert{\tileRing}{\tileWhite} \
\twotilesvert{\tileLight}{\tileBlack} \ \twotilesvert{\tileDark}{\tileWhite} \ \twotilesvert{\tileV}{\tileV}.
\label{path:middleflipped2}
\end{equation}
Both of these types of simple path give class c in the statement of the lemma.

We also need to consider the case where one or both sides are
forbidden to have \tileV.  When one side is forbidden to have \tileV,
our best strategy is to use the simple path (\tileV, \tileV) followed
by one of the (\tileWhite, \tileDark) nodes, and then extend using the
cost $0$ length $2$ simple cycle.  This path will have a cost of
$-10$. If both sides are forbidden \tileV,
then we should just use the cost $0$ length $2$ simple cycle
everywhere, for an overall cost of $0$.
\end{proof}

Now let us consider the top and bottom rows.  Again, we will consider a pair of rows, but with a new formula for costs to take into account the effect of the edge.  For the first and second rows, we should consider the following formula:
\begin{equation}
w''(R_1, R_2) = 2w(R_1) + w(R_2) + 2\sum_{a=1}^N w_V (T_{a1},T_{a2}).
\label{eq:toprow}
\end{equation}
We can define $w''(R_{N-1},R_N)$ similarly, counting $w(R_N)$ twice.  This formula reflects the appearance of $w(R_1)$ in equation (\ref{eq:costbyrows}).  Indeed, we have that
\begin{equation}
2 w(T) = w''(R_1, R_2) + \sum_{b=2}^{N-2} w'(R_b, R_{b+1}) + w''(R_{N-1}, R_N).
\label{eq:totalrowpairs}
\end{equation}

\begin{lemma}
\label{lemma:reftoprow}
The minimum value of $w''(R_1, R_2)$ is $24-10N$, using a path that has \tileH\ everywhere in the top row, and an alternating pattern of either (\tileLight, \tileDark) or (\tileWhite, \tileBlack) in the second row.

If one corner is required to have a \tileC\ tile, the minimum value of $w''(R_1, R_2)$ is $40-10N$.  This can be achieved by having \tileH\ elsewhere in the top row.  In the second row, under the \tileC\ tile there is a \tileV\ tile.  Elsewhere the second row alternates between either the pair (\tileLight, \tileDark) or the pair (\tileWhite, \tileBlack).

If the top row begins and ends with \tileC, the minimum value of $w''$ is $58-10N$, and this can be achieved only if the top row is \tileH\ in all other locations.  The other row must start and end with \tileV, and in between alternate either \tileBlack\ and \tileWhite\ tiles or \tileLight\ and \tileDark, with a \tileCirc\ tile at one end, just before or after the \tileV\ tile.
\end{lemma}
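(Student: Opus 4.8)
The plan is to compute the minimum of $w''(R_1,R_2)$ by reducing to a weighted shortest‑path problem, reusing the graph machinery from the proof of Lemma~\ref{lemma:refminrowpair} but with the cost bookkeeping adjusted so the top row is counted twice, as in the definition~(\ref{eq:toprow}). I would merge each vertically adjacent pair $(T_{a1},T_{a2})$ into one node of a graph, weighting that node by $2w_V(T_{a1},T_{a2})$ and each edge (a horizontal $2\times2$ square) by $2w_H(T_{a1},T_{(a+1)1})+w_H(T_{a2},T_{(a+1)2})$, so that a path of $N$ nodes has total weight exactly $w''(R_1,R_2)$. Reflection symmetry again makes this an undirected graph with weighted nodes, so I would run the same adapted version of the $1$‑dimensional algorithm of section~\ref{sec:oneDclassical} used in Lemma~\ref{lemma:refminrowpair} (counting the shared node of a spliced‑in simple cycle only once). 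Since one forbidden pairing already contributes at least $+30$, far above all three target values, it suffices to consider only allowed pairings throughout.

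Which pairs are allowed is unchanged from Lemma~\ref{lemma:refminrowpair}, so the graph again splits into the three components described there: the main component of figure~\ref{fig:tworowsmain}, the component containing $\twotilesvert{\tileC}{\tileV}$ together with the pairs $\twotilesvert{\tileH}{x}$ for $x\in\{\tileBlack,\tileWhite,\tileDark,\tileLight,\tileCirc\}$, and the vertical mirror of that component. The new phenomenon is that doubling the top row makes each \tileH--\tileH\ horizontal edge worth $-22$ rather than $-11$, so the component that puts \tileH\ tiles in the top row is now by far the cheapest, and it becomes advantageous to fill the whole top row with \tileH. I would rule out the alternatives quickly: the main component cannot use the doubled \tileH--\tileH\ edge at all, and the large positive edge costs out of the $\twotilesvert{\tileV}{\tileV}$ node mean any path through $N$ of its nodes has weight bounded below by a constant, hence is $\Theta(N)$ worse than $-10N$; in the mirror component the \tileH--\tileH\ edge sits in the non‑doubled row ($-11$) while the vertical‑pair nodes cost $+12$ or $+14$, so a path of $N$ nodes there has weight $\Theta(N)$, again far above the target.

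Inside the $\twotilesvert{\tileH}{x}$ component I would then do the case analysis, paralleling Lemma~\ref{lemma:refminrowpair}. With neither corner of the top row constrained, the cheapest path uses $N$ nodes $\twotilesvert{\tileH}{x}$ whose bottom entries alternate \tileBlack,\tileWhite\ (or \tileDark,\tileLight), giving $w''(R_1,R_2)=24-10N$ for large even $N$, realized by the configuration in figure~\ref{fig:weightedreflection1}. If one corner of the top row must be \tileC, I use the node $\twotilesvert{\tileC}{\tileV}$ there; the allowed‑pairs rules then force a node $\twotilesvert{\tileH}{\tileWhite}$ or $\twotilesvert{\tileH}{\tileDark}$ next to it, after which the second row alternates as before. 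Replacing a \tileH\ by \tileC\ at one corner loses a doubled $-11$ edge (a cost of $+22$), but saves $12$ on the doubled vertical edge (since $w_V(\tileV,\tileC)=0$ versus $6$ below \tileH) and pays $6$ for the new horizontal edge out of the \tileV, for a net of $+16$ and hence $40-10N$. If the top row must begin \emph{and} end with \tileC, I use $\twotilesvert{\tileC}{\tileV}$ at both ends; now, because $N$ is even, the $N-2$ interior positions of the second row cannot simply alternate between checkerboard tiles and still present a \tileV‑compatible tile (\tileWhite\ or \tileDark) at both ends — a parity obstruction — so the rules force a \tileCirc\ tile just inside one of the \tileV\ ends, and charging for this \tileCirc\ yields $58-10N$. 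In each case I would also establish the matching lower bound from the table of short simple paths and simple cycles, checking in particular that any \tileRing\ or \tileCirc\ tile used beyond what the boundary forces, or any deviation from the all‑\tileH\ top row, strictly raises the cost. The statement for $w''(R_{N-1},R_N)$ then follows by the vertical mirror symmetry.

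The part I expect to be the real work is pinning down the additive constants $24$, $40$, $58$. As in Lemma~\ref{lemma:refminrowpair}, the algorithm's node‑cost accounting distinguishes a tile pair appearing as an interior node of a path from the same pair appearing as the shared endpoint where a simple cycle is spliced in; the minimal simple path and the cost of the odd‑length cycle needed to reach the required parity of $N$ both enter, and it is this — together with the handling of the two grid ends and of the forced \tileCirc\ in the two‑corner case — that separates the three values. Making this precise requires carefully enumerating the short simple paths and simple cycles in the $\twotilesvert{\tileH}{x}$ component and how they interact with the boundary constraint, which is routine but must be done with care.
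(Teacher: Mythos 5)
Your plan follows essentially the same route as the paper's proof: the same merged-pair graph with node weights $2w_V$ and edge weights $2w_H(\mathrm{top})+w_H(\mathrm{bottom})$, the same observation that the doubled $-22$ \tileH--\tileH\ edge makes the component with \tileH\ in the top row dominant (the paper's figure of that component carries exactly the weights you describe), and the same three-case analysis via short simple paths extended by spliced-in simple cycles. One caution for the constant-chasing you defer: the configurations you exhibit actually give $22-10N$ and $38-10N$ in the first two cases (which is what the paper's own proof computes and then uses in the subsequent combination argument), not $24-10N$ and $40-10N$; for instance, all-\tileH\ on top over an alternating second row costs $2(-11)(N-1)+0+2\cdot 6N=22-10N$, and your ``net $+16$'' increment applied to that base yields $38-10N$, so the extra $2$ in the lemma statement appears to be a slip in the paper rather than a flaw in your approach, while the $58-10N$ value in the two-corner case checks out.
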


\begin{proof}[ of lemma]
Returning to the graph of allowed vertical pairs, the second component now becomes the most favorable: The component with \tileH\ at the top can produce a $-\Theta(N)$ total cost for the top pair of rows.  Similarly, the component with \tileH\ at the bottom can produce a $-\Theta(N)$ total cost for the bottom pair of rows.  The component of the graph with \tileH\ at the top is pictured in figure~\ref{fig:tworowsH}.
\begin{figure}
\begin{centering}
\begin{picture}(220,130)
\put(25,60){\makebox(0,0){$\twotilesvert{\tileH}{\tileLight}$}}
\put(70,60){\makebox(0,0){$\twotilesvert{\tileH}{\tileDark}$}}
\put(35,60){\line(1,0){25}}

\put(110,90){\makebox(0,0){$\twotilesvert{\tileC}{\tileV}$}}
\put(80,60){\line(2,3){20}}
\put(120,90){\line(2,-3){20}}

\put(110,30){\makebox(0,0){$\twotilesvert{\tileH}{\tileCirc}$}}
\put(80,60){\line(2,-3){20}}
\put(120,30){\line(2,3){20}}
\put(110,47){\line(0,1){36}}

\put(150,60){\makebox(0,0){$\twotilesvert{\tileH}{\tileWhite}$}}
\put(195,60){\makebox(0,0){$\twotilesvert{\tileH}{\tileBlack}$}}
\put(160,60){\line(1,0){25}}

\put(5,60){\makebox(0,0){$+12$}}
\put(65,82){\makebox(0,0){$+12$}}
\put(110,8){\makebox(0,0){$+14$}}
\put(110,112){\makebox(0,0){$0$}}
\put(155,82){\makebox(0,0){$+12$}}
\put(215,60){\makebox(0,0){$+12$}}

\put(47,65){\makebox(0,0){$-22$}}
\put(83,35){\makebox(0,0){$-21$}}
\put(85,87){\makebox(0,0){$+6$}}
\put(135,87){\makebox(0,0){$+6$}}
\put(120,60){\makebox(0,0){$+7$}}
\put(135,35){\makebox(0,0){$-21$}}
\put(172,65){\makebox(0,0){$-22$}}
\end{picture}
\caption{A component of the graph of vertical pairs of tiles.  The edge costs are calculated for the $w''$ formula appropriate for the top two rows.}
\label{fig:tworowsH}
\end{centering}
\end{figure}
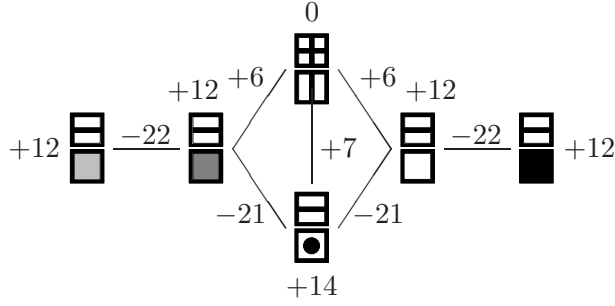
The node and edge weights in figure~\ref{fig:tworowsH} are calculated according to equation~(\ref{eq:toprow}).

There are two minimal-cost simple cycles, both of length 2 and cost $-20$: One contains (\tileH, \tileLight) and (\tileH, \tileDark), the other (\tileH, \tileWhite) and (\tileH, \tileBlack).  There are two minimal-cost odd-length cycles, both of length 3 and cost $+18$.  They involve (\tileC, \tileV), (\tileH, \tileCirc), and either (\tileH, \tileDark) or (\tileH, \tileWhite).

The minimal-cost simple paths have length $2$ and cost $+2$, e.g., (\tileH, \tileLight), (\tileH, \tileDark).  This can be extended to even $N$ with a cost $-20$ length 2 simple cycle to get a total cost of $22-10N$.

If we insist that the simple path start at (\tileC, \tileV), then the minimal-cost length $2$ simple path is (\tileC, \tileV) followed by (\tileH, \tileDark) or (\tileH, \tileWhite), which has cost $+18$.  The minimal-cost length $3$ simple path adds on (\tileH, \tileLight) or (\tileH, \tileDark) on the end, giving cost $+8$.  We can than get a minimal-cost path of length $N$ (for $N$ even) by taking the length-2 simple path and extending with a length-2 simple cycle, for total cost $38-10N$.

If we insist that our starting simple path both start and end at (\tileC, \tileV), our best strategy is to use the length 3 simple cycle as our simple path.  As a simple path, it has length 4 and also has cost $+18$.  We then get an even length path of total cost $58 - 10N$.
\end{proof}

As we can see, there is a mismatch between the optimal tilings of the top two rows and the middle pairs of rows.  The optimal tilings for a middle row have \tileV\ on both ends, but that would produce forbidden transitions (and a corresponding suboptimal pair of middle rows) if we tried also to put in the optimal tiling for the top two rows.

We wish to apply equation (\ref{eq:totalrowpairs}) to minimize the total cost.  We consider various different combinations of tiling the top (and bottom) pairs of rows and middle pairs of rows.

There are three combinations which involve no forbidden pairs.
\begin{enumerate}
\item The top and bottom pairs of rows use optimal tilings, and the middle pairs of rows use tilings which do not include \tileV\ on either end.  In that case, we find the total cost for Layer $1$ is $[2 (22-10N) + (N-3)(0)]/2 = 22 - 10N$.
\item The top and bottom rows each have a \tileC\ in one corner but not in the other.  The middle rows have \tileV\ on one end (between the \tileC\ tiles), but not in the other.  The optimal tiling subject to these constraints has a cost of $[2(38-10N) + (N-3)(-10)]/2 = 53-15N$.
\item We have \tileC\ tiles in all four corners and the middle pairs of rows use optimal tilings.  The total cost is then $[2 (58-10N) + (N-3)(-12)]/2 = 76 - 16N$.
\end{enumerate}
For sufficiently large $N$, the last choice is optimal.

If we were to use forbidden pairs, we could combine, for instance, an optimal tiling of the top pair of rows with an optimal tiling of most of the middle pairs of rows.  However, for that particular combination, we would need to use at least two forbidden pairs, and the cost of doing that is greater than the cost of switching the top pair of rows to a tiling with \tileC\ in both corners.  Similarly for other combinations that involve forbidden pairs.

Now let us investigate whether it is actually possible to achieve the cost $76-16N$ using the third combination (\tileC\ in all four corners).  By lemma~\ref{lemma:refminrowpair}, we have vertical lines of \tileV\ tiles on the left and right sides of the grid, and with a path of \tileCirc\ or \tileRing\ tiles somewhere in between them.  The path is diagonal everywhere, but could potentially move back and forth to the left and right.  If the dividing line is adjacent at some point to an end \tileV\ tile, it is a \tileCirc\ tile; otherwise, it is a \tileRing\ tile.  The dividing line splits the remaining locations into two or more connected components, each of which consists of rows of alternating \tileLight\ and \tileDark\ tiles and rows of alternating \tileWhite\ and \tileBlack\ tiles, with the two types of rows alternating as well.

In summary, we have \tileC\ at all four corners, \tileH\ everywhere else in the top and bottom rows, and \tileV\ everywhere on the left and right sides, except at the corners.  There is exactly one \tileCirc\ or \tileRing\ tile in each row.  In the second and $(N-1)$th rows, it must be a \tileCirc\ tile and must be located in either the second or $(N-1)$th column. Elsewhere, the \tileCirc\ or \tileRing\ tiles from two adjacent rows must be located diagonally from each other.  Since $N$ is even, it is not possible to do all of this if the \tileCirc\ tiles from the top and bottom rows are in the same column.  Therefore, the \tileCirc\ and \tileRing\ tiles must form a diagonal line reaching from one corner (e.g., the $(2,2)$ location) to the opposite corner (e.g., the $(N-1, N-1)$ location) of the interior.  The ends of the line are \tileCirc\ tiles, and the rest of the diagonal line is composed of \tileRing\ tiles.  On each side of the line, the interior of the grid is tiled by rows alternating between \tileBlack\ and \tileWhite\ tiles and \tileLight\ and \tileDark\ tiles.  Columns $2$ and $N-1$ contain only \tileWhite\ and \tileDark\ tiles, as well as \tileCirc\ and \tileH.

We thus get $8$ possible minimal cost solutions for Layer $1$.  There is the
arrangement of figure~\ref{fig:weightedreflection1}, and three rotations of it.
These correspond to cases b and d in Lemma \ref{lemma:refminrowpair}.
Alternatively, we can take the upper left corner or lower right corner of
figure~\ref{fig:weightedreflection1} and rotate it $180^\circ$ around the
center of the grid to fill in the other corner.  There is one rotation for
each of those solutions as well, with the diagonal line going from the upper
left to the lower right instead. These correspond to cases a and c in
Lemma \ref{lemma:refminrowpair}.
 The first four solutions, rotations of
figure~\ref{fig:weightedreflection1}, have a local breaking of reflection
symmetry in the vicinity of the diagonal line: In the orientation of
figure~\ref{fig:weightedreflection1}, immediately to the left of and
above the diagonal line, we have \tileBlack\ and \tileDark\ tiles,
whereas immediately to the right of and below the diagonal line, we
have \tileWhite\ and \tileLight\ tiles.  In contrast, while the other
four solutions also break vertical and horizontal reflection symmetry,
there is no local, translationally-invariant rule anywhere in the grid
that can allow us to distinguish the directions.

{\bf Layer 2:}

Layer $2$ will have only $3$ types of tile: \tileHH, \tileHHV, \tileCrosshatch.  The tiling rules will only specify permitted or forbidden pairs of adjacent tiles; forbidden pairs have a cost of $+30$, as before, and permitted pairs have a cost of $0$.  When the Layer $1$ tile is \tileV\ or \tileC, the Layer $2$ tile must be \tileHH.  For any other Layer $1$ tile, any of the three Layer $2$ tiles is possible.

For the vertical tiling rules, we only allow each type of tile to be adjacent to itself.  Horizontally, each type of tile is forbidden to be adjacent to itself.  Thus, to avoid any forbidden pairs of tiles, Layer $2$ must consist of vertical lines, all of one kind of tile, and no two adjacent lines can be the same.  Our goal is to have the three types of tile cycle: \tileHH\ followed by \tileHHV\ followed by \tileCrosshatch, and then back to \tileHH.  Then by looking at Layer $2$ for any horizontally adjacent pair of locations in the grid, we can distinguish left from right.

To achieve this, we determine whether two different Layer $2$ tiles can be adjacent horizontally depending on the Layer $1$ tiles underlying them.  If neither of the Layer $1$ tiles is a \tileCirc\ or \tileRing, any pair of Layer $2$ tiles is permitted.  If one of the Layer $1$ tiles is a \tileV, \tileC, or \tileH\ tile (although the latter two are forbidden in Layer $1$ to be horizontally adjacent to \tileCirc\ or \tileRing), then any pair of Layer $2$ tiles is permitted.  It is forbidden in Layer $1$ to have two \tileCirc\ or \tileRing\ tiles adjacent, but if it happens, we allow any pair of Layer $2$ tiles.  Otherwise, the following combinations are allowed.  We use brackets to indicate the pair of Layer $1$ and Layer $2$ tile, so $[\tileCirc \ \tileHH ]$ indicates a \tileCirc\ on Layer $1$ and \tileHH\ on Layer $2$ in a given location.  In all of the following pairs, we use \tileRing\ for one of the Layer $1$ tiles, but we have precisely the same rules if \tileCirc\ is substituted for \tileRing.
\begin{equation}
\begin{tabular}{cc|cc}
\multicolumn{2}{c|}{\bf{Allowed}} &  \multicolumn{2}{c}{\bf{Forbidden}} \\
$[\tileRing \ \tileHH ] \ [\tileWhite \ \tileHHV ]$ & $[\tileBlack \ \tileHH ] \ [\tileRing \ \tileHHV ]$ &
        $[\tileRing \ \tileHH ] \ [\tileBlack \ \tileHHV ]$ & $[\tileWhite \ \tileHH ] \ [\tileRing \ \tileHHV ]$ \\
$[\tileRing \ \tileHH ] \ [\tileLight \ \tileHHV ]$ & $[\tileDark \ \tileHH ] \ [\tileRing \ \tileHHV ]$ &
        $[\tileRing \ \tileHH ] \ [\tileDark \ \tileHHV ]$ & $[\tileLight \ \tileHH ] \ [\tileRing \ \tileHHV ]$ \\
$[\tileRing \ \tileHHV ] \ [\tileWhite \ \tileCrosshatch ]$ & $[\tileBlack \ \tileHHV ] \ [\tileRing \ \tileCrosshatch ]$ &
        $[\tileRing \ \tileHHV ] \ [\tileBlack \ \tileCrosshatch ]$ & $[\tileWhite \ \tileHHV ] \ [\tileRing \ \tileCrosshatch ]$ \\
$[\tileRing \ \tileHHV ] \ [\tileLight \ \tileCrosshatch ]$ & $[\tileDark \ \tileHHV ] \ [\tileRing \ \tileCrosshatch ]$ &
        $[\tileRing \ \tileHHV ] \ [\tileDark \ \tileCrosshatch ]$ & $[\tileLight \ \tileHHV ] \ [\tileRing \ \tileCrosshatch ]$ \\
$[\tileRing \ \tileCrosshatch ] \ [\tileWhite \ \tileHH ]$ & $[\tileBlack \ \tileCrosshatch ] \ [\tileRing \ \tileHH ]$ &
        $[\tileRing \ \tileCrosshatch ] \ [\tileBlack \ \tileHH ]$ & $[\tileWhite \ \tileCrosshatch ] \ [\tileRing \ \tileHH ]$ \\
$[\tileRing \ \tileCrosshatch ] \ [\tileLight \ \tileHH ]$ & $[\tileDark \ \tileCrosshatch ] \ [\tileRing \ \tileHH ]$ &
        $[\tileRing \ \tileCrosshatch ] \ [\tileDark \ \tileHH ]$ & $[\tileLight \ \tileCrosshatch ] \ [\tileRing \ \tileHH ]$ \\
\end{tabular}
\label{eq:layer2rules}
\end{equation}

If Layer $1$ is the tiling of figure~\ref{fig:weightedreflection1}, then the middle $N-2$ columns of Layer $2$ will, according the rules of (\ref{eq:layer2rules}), cycle between columns of \tileHH\ tiles, \tileHHV\ tiles, and \tileCrosshatch\ tiles, in that order from left to right.  Columns $1$ and $N$, since Layer $1$ contains \tileV\ tiles there, must be \tileHH\ tiles on Layer $2$.  Since adjacent columns must have different Layer $2$ tiles, column $2$ cannot be \tileHH\ tiles.  If column $2$ contained \tileCrosshatch\ tiles, since $N \equiv 1 \bmod 3$, then column $N-1$ would contain \tileHH\ tiles, which is forbidden, since column $N$ also contains \tileHH\ tiles.  Thus, column $2$ contains \tileHHV\ tiles and column $N-1$ contains \tileCrosshatch\ tiles, so the whole grid cycles between the three types of tiles for Layer $2$, as pictured in figure~\ref{fig:weightedreflection2}.  Similarly if Layer $1$ is a rotation of figure~\ref{fig:weightedreflection1}.
\begin{figure}
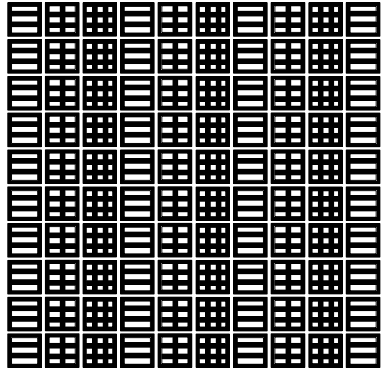

\begin{centering}
\begin{tabular}{c@{\extracolsep{0.1em}}c@{}c@{}c@{}c@{}c@{}c@{}c@{}c@{}c}
\tileHH & \tileHHV & \tileCrosshatch & \tileHH & \tileHHV & \tileCrosshatch & \tileHH & \tileHHV & \tileCrosshatch & \tileHH \\
\tileHH & \tileHHV & \tileCrosshatch & \tileHH & \tileHHV & \tileCrosshatch & \tileHH & \tileHHV & \tileCrosshatch & \tileHH \\
\tileHH & \tileHHV & \tileCrosshatch & \tileHH & \tileHHV & \tileCrosshatch & \tileHH & \tileHHV & \tileCrosshatch & \tileHH \\
\tileHH & \tileHHV & \tileCrosshatch & \tileHH & \tileHHV & \tileCrosshatch & \tileHH & \tileHHV & \tileCrosshatch & \tileHH \\
\tileHH & \tileHHV & \tileCrosshatch & \tileHH & \tileHHV & \tileCrosshatch & \tileHH & \tileHHV & \tileCrosshatch & \tileHH \\
\tileHH & \tileHHV & \tileCrosshatch & \tileHH & \tileHHV & \tileCrosshatch & \tileHH & \tileHHV & \tileCrosshatch & \tileHH \\
\tileHH & \tileHHV & \tileCrosshatch & \tileHH & \tileHHV & \tileCrosshatch & \tileHH & \tileHHV & \tileCrosshatch & \tileHH \\
\tileHH & \tileHHV & \tileCrosshatch & \tileHH & \tileHHV & \tileCrosshatch & \tileHH & \tileHHV & \tileCrosshatch & \tileHH \\
\tileHH & \tileHHV & \tileCrosshatch & \tileHH & \tileHHV & \tileCrosshatch & \tileHH & \tileHHV & \tileCrosshatch & \tileHH \\
\tileHH & \tileHHV & \tileCrosshatch & \tileHH & \tileHHV & \tileCrosshatch & \tileHH & \tileHHV & \tileCrosshatch & \tileHH \\
\end{tabular}
\caption{The permitted tiling of Layer $2$ for a $10 \times 10$ grid when Layer $1$ is the optimal tiling of figure~\ref{fig:weightedreflection1}.  Notice that at every location, it is possible to distinguish left and right by looking at the adjacent tiles.}
\label{fig:weightedreflection2}
\end{centering}
\end{figure}

On the other hand, if Layer $1$ is arranged according to one of the other optimal tilings, there will be no allowed tiling of Layer $2$.  This is because in two adjacent rows, we will have a configuration such as
\begin{equation}
\begin{array}{c@{\extracolsep{0.1em}}c}
\tileBlack & \tileRing \\
\tileRing & \tileBlack
\end{array}
\end{equation}
There is no way to tile Layer $2$ for these four locations consistent with both the rules of (\ref{eq:layer2rules}) and the constraint that only identical tiles can be vertically adjacent in Layer $2$.  Thus, in order to achieve a cost of $76 - 16N$, Layer $1$ must be in a tiling corresponding to one of the four rotations of figure~\ref{fig:weightedreflection1}.

{\bf Layer 3:}

Layer $3$ is very similar to Layer $2$.  There are again $3$ kinds of tiles: \tileVVrev, \tileHVVrev, and \tileCrosshatchrev.  When the Layer $1$ tile is \tileH, the Layer $3$ tile must be \tileVVrev; otherwise, any Layer $3$ tile is allowed.  The adjacency rules are effectively the same as Layer $2$, but with horizontal and vertical exchanged, and with \tileVVrev, \tileHVVrev, and \tileCrosshatchrev\ substituted for \tileHH, \tileHHV, and \tileCrosshatch, respectively.  Thus, the only permitted tiling of Layer $3$ has rows consisting of a single type of tile, with the three kinds of rows cycling.

{\bf Main Layers:}

To tile the main layers, we just use the rules from section~\ref{sec:tiling}.  Those rules distinguish between up and down and between left and right, so to achieve that, we look at the tiles in Layers $2$ and $3$.  If Layer $2$ is a \tileHH\ for one location and \tileHHV\ for the second location, we consider the first location to be to the left of the second location.  Similarly, we consider a \tileHHV\ Layer $2$ tile to be to the left of a \tileCrosshatch\ Layer $2$ tile, and a \tileCrosshatch\ Layer $2$ tile to be to the left of a \tileHH\ Layer $2$ tile.  We consider a \tileVVrev\ Layer $3$ tile to be above a \tileHVVrev\ Layer $3$ tile, a \tileHVVrev\ Layer $3$ tile to be above a \tileCrosshatchrev\ Layer $3$ tile, and a \tileCrosshatchrev\ Layer $3$ tile to be above a \tileVVrev\ Layer $3$ tile.  We can set the corner boundary conditions for the main layers by looking at Layer $1$: When the Layer $1$ tile is a \tileC\ tile, the main layer tile must be a corner tile.

If there is a valid tiling of the main layer, which occurs when the instance is a ``yes'' instance, then we achieve an overall cost $76 - 16N$.  If the instance is a ``no'' instance, then if Layer $1$ has an optimal arrangement, and Layers $2$ and $3$ are in the allowed tiling consistent with Layer $1$, then there cannot be a valid tiling of the main layer.  Thus, for a ``no'' instance, the cost must be greater than $76 - 16N$.

\end{proof}

\subsubsection{Periodic Boundary Conditions}

Now we turn to WEIGHTED \tiling\ with reflection symmetry and
periodic boundary conditions.  In this case, we are able to prove results for both constant and linear cost functions:

\begin{theorem}
For WEIGHTED \tiling\ with reflection symmetry and
periodic boundary conditions:
\begin{itemize}
\item When the cost function $p(N)$ is a constant, independent of $N$, the problem is in \Pclass.  In particular, there exist $N_e, N_o \in \integer^+ \cup \{\infty\}$ such that for even $N \geq N_e$ or odd $N \geq N_o$, a valid tiling exists, while for even $N$, $2c < N < N_e$ and odd $N$, $2c < N < N_o$, there is no tiling.  $N_e$ is computable.

\item When the cost function $p(N)$ is linear in $N$, the problem is \NEXP-complete.

\item When $N$ is even, the problem is in \Pclass\ for any cost function $p(N)$.

\end{itemize}

\end{theorem}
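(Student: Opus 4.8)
The plan is to treat the three bullets separately, reusing constructions developed earlier in the paper. The engine behind the two $\Pclass$ results is the observation that reflection symmetry lets us \emph{periodize} a tiling without increasing its cost.

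First I would settle the even-$N$ case (the third bullet), which also delivers the even threshold $N_e$ of the first bullet. The claim is that on an even torus the minimum cost is achieved by a tiling periodic with period $2$ in both directions. Write the cost of an arbitrary tiling as $\sum_i V(c_i) + \sum_i C_{i,i+1}$, where $c_i$ is the configuration of column $i$, $V(c_i)$ its internal vertical cost, and $C_{i,i+1}$ the horizontal interface cost between columns $i$ and $i+1$ (indices cyclic). Since $w_H$ is symmetric, taking two adjacent columns $c_i,c_{i+1}$ and repeating them $N/2$ times around the torus gives a tiling of cost $N\,C_{i,i+1}+\tfrac{N}{2}(V(c_i)+V(c_{i+1})) = N\bigl(C_{i,i+1}+\tfrac{1}{2}(V(c_i)+V(c_{i+1}))\bigr)$, and averaging this over $i$ returns exactly the original cost, so some $i$ does no worse. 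Applying the analogous move vertically (which preserves horizontal period $2$) leaves a tiling determined by a single $2\times 2$ block with tiles $a,b$ on top and $c,d$ below --- any such block is legal since in WEIGHTED \tiling\ no pair is forbidden --- of cost exactly $\tfrac{N^2}{2}\bigl(w_H(a,b)+w_H(c,d)+w_V(a,c)+w_V(b,d)\bigr)$ on the even torus. Hence the minimum cost is $\tfrac{N^2}{2}\mu$, where $\mu$ is the minimum of this bracket over all $a,b,c,d\in T$, a constant computable from the fixed parameters; so for any $p$ the question ``is the minimum cost at most $p(N)$?'' reduces to testing $\tfrac{N^2}{2}\mu \le p(N)$, which takes time $\poly(\log N)$. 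Specializing to $p(N)=c$: this holds for all sufficiently large even $N$ when $\mu\le 0$, and for no sufficiently large even $N$ when $\mu>0$; in both cases the crossover $N_e$ is an explicit, hence computable, function of $\mu$ and $c$, and for even $N$ with $2c<N<N_e$ one checks directly that $\tfrac{N^2}{2}\mu>c$.

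For the odd-$N$ part of the first bullet I would prove monotonicity: for $N>2c$, a cost-$\le c$ tiling $T$ of the odd $N\times N$ torus extends to a cost-$\le c$ tiling of the $(N{+}2)\times(N{+}2)$ torus. To insert two rows just after row $R_j$, duplicate the pair $R_j,R_{j+1}$; by symmetry of $w_V$ the cost changes by $\Delta_j = 2\,V(R_j,R_{j+1}) + w(R_j) + w(R_{j+1})$ (interface cost plus the internal horizontal cost of the two new rows). Summing over $j$ gives $\sum_j \Delta_j = 2\,\mathrm{cost}(T)\le 2c$, so $\min_j \Delta_j \le 2c/N < 1$, hence $\le 0$ since all weights are integers; choosing that $j$ keeps the cost $\le c$, and repeating the argument with columns produces the $(N{+}2)\times(N{+}2)$ tiling. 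Thus the set of odd $N>2c$ that admit a valid tiling is upward closed, with a threshold $N_o$. As in the unweighted periodic case, I do not expect $N_o$ to be computable --- deciding whether $N_o<\infty$ is bound up with the undecidability of the existence of an infinite tiling --- which is why the theorem claims computability only of $N_e$. Combining, the procedure ``output yes iff ($N$ even and $\tfrac{N^2}{2}\mu\le c$) or ($N$ odd and $N\ge N_o$)'' is a (non-constructive) polynomial-time algorithm, placing the problem in $\Pclass$.

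For the second bullet I would reduce from \tiling, restricted to a residue class of $N$ (e.g.\ $N\equiv 4\bmod 6$, still $\NEXP$-complete), amalgamating three ingredients all of which have appeared already. (i) A border layer using $\tileH,\tileV,\tileC$ tiles together with a $\tileBlack$/$\tileWhite$ checkerboard and weights $0,+1,+3$, exactly as in the periodic case of Theorem~\ref{thm:weighted}: on a torus of odd side this forces at least one horizontal and one vertical line, and minimizing cost forces exactly one of each with a single crossing of cost $+2$, carving out an $(N{-}1)\times(N{-}1)$ square; these rules are already reflection-symmetric. (ii) Inside the carved-out square, the Layer~$1$/Layer~$2$/Layer~$3$ construction from the four-corners reflection-symmetric case of the previous subsection, which uses the $\tileRing$/$\tileCirc$ diagonal and breaks horizontal and vertical reflection symmetry \emph{locally} at every cell. (iii) The Turing-machine layers of Section~\ref{sec:tiling}, run inside the square with corner conditions read off from adjacency to the border tiles and with ``left/right'' and ``up/down'' read off from Layers~$2$ and~$3$. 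The satisfying cost is $p(N)=O(1)+\bigl(76-16(N{-}1)\bigr)$, linear in $N$; a ``yes'' instance of \tiling\ realizes it and a ``no'' instance strictly exceeds it. The main obstacle, and the part needing the most care, is verifying that the periodic border machinery and the symmetry-breaking layers do not interfere --- in particular that the $\tileRing$/$\tileCirc$ diagonal inside the carved-out square behaves exactly as in the four-corners case --- and that the weight budget is tight enough that a forbidden pairing is never affordable.
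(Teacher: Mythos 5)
Your first and third bullets are correct and essentially the paper's own arguments: the row-pair duplication with the integrality step ($\min_j \Delta_j \le 2c/N < 1$, hence $\le 0$) is exactly the paper's constant-cost argument, and your two-stage column-then-row periodization for even $N$ reaches the same conclusion as the paper's identity $\sum_{a,b} w(S_{ab}) = 2c'$, namely that the optimum equals $N^2/2$ times the cost of the cheapest $2\times 2$ block, which is computable from the parameters; the paper just gets there in one step via $2\times 2$ squares rather than by periodizing columns and then rows.

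The gap is in the second bullet, which is the substantive part of the theorem. You propose to compose the \tileH/\tileV/\tileC\ carving layer from the periodic case of Theorem~\ref{thm:weighted} with the four-corners reflection-symmetric construction (Layers 1--3, target cost $76-16N$) placed ``inside'' the carved square, but that composition is exactly what cannot be taken for granted, and the paper does not attempt it. The four-corners Layer-1 optimality analysis is anchored to the genuine edge of the grid: the $-11$ edges between adjacent \tileH\ tiles and between adjacent \tileV\ tiles are affordable only because the outermost rows and columns pay connection costs on one side only, and Lemmas~\ref{lemma:refminrowpair} and~\ref{lemma:reftoprow} are statements about rows terminating at the grid boundary. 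On a torus the ``boundary'' is a floating carved line at an unknown position, every tile pays neighbors on both sides, so the value $76-16(N-1)$ is not even achievable without engineering new inter-layer couplings to replicate the one-sided discounts, and every wrap-around configuration of the interior layer that ignores or crosses the carving line has to be excluded afresh; this re-analysis is the actual content of the proof, not a verification to be done at the end. The congruence bookkeeping also breaks: the carving layer of Theorem~\ref{thm:weighted} requires odd $N$, while you restrict to $N\equiv 4\bmod 6$ (even), and once Layers 2 and 3 wrap around the torus their three-cycling forces $3\mid N$ rather than the ``$1\bmod 3$ between two anchored columns'' condition of the four-corners case. This is why the paper instead designs a new Layer 1 specific to the periodic setting --- nonnegative weights only, a single row of alternating \tileH\ and \tileHrev\ plus a \tileRing\ diagonal wrapping the torus, optimum $6N-2$, with $N\equiv 3\bmod 6$ --- which simultaneously plays the roles of your carving layer and of the symmetry-breaking anchor, and then adds a dedicated Layer 4 to mark the borders and corners of the $(N-1)\times(N-1)$ interior. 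Your plan names the right ingredients but leaves the hard part, the global optimality argument on the torus for the combined layers, undone.
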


\begin{proof}

{\bf Constant cost function:}
$p(N) = c$, some constant independent of $N$.  Let us suppose we have a tiling of the $N \times N$ grid with
periodic boundary conditions with total cost $c' \leq c$.
Let the rows of this tiling be $R_1, \ldots, R_N$, and
because of the periodic boundary conditions, let $R_{N+1} = R_1$.
Let $w(R_a) = \sum_b w_H (T_{ab}, T_{a(b+1)})$, where $T_{ab}$ is the tile in location $(a,b)$, and let $w(R_a, R_{a+1}) = \sum_b w_V(T_{ab}, t_{(a+1)b})$.  Then $c' = \sum_a (w(R_a) +
w(R_a, R_{a+1}))$.  Notice that if we duplicate two adjacent rows, say $R_a$ and $R_{a+1}$, then the total cost becomes $c' + w(R_a) + w(R_{a+1}) + 2
w(R_a, R_{a+1})$ (because of the reflection symmetry).  Now,
\begin{equation}
\sum_{a=1}^N \left[  w(R_a) + w(R_{a+1}) + 2 w(R_a, R_{a+1}) \right] = 2 \sum_{a=1}^N \left[ w(R_a) + w(R_a, R_{a+1}) \right] = 2c'.
\end{equation}
Thus, there must exist some $a$ for which
\begin{equation}
w(R_a) + w(R_{a+1}) + 2 w(R_a, R_{a+1}) \leq 2c'/N.
\end{equation}
When $c'$ is a constant, for $N > 2c'$, that means there is some $a$
for which
\begin{equation}
w(R_a) + w(R_{a+1}) + 2 w(R_a, R_{a+1}) \leq 0,
\end{equation}
since the
weights are integers.%
\footnote{If we generalize the definition of WEIGHTED \tiling\ to allow rational weights, we can simply rescale to get integer weights.  If we allow irrational weights, this argument fails, but it is possible to prove the same result using the ideas presented in the even $N$ case below. }
Then we can duplicate rows $R_a$ and $R_{a+1}$
without increasing the cost.  Similarly, there are two columns which
we can duplicate without increasing the cost.  Thus, we also have a
tiling for the $(N+2) \times (N+2)$ square grid with cost $c^{\prime\prime} \leq c' \leq c$.
The only difference from theorem~\ref{theorem:symmetry} is that there might be
some small exceptions with $N \leq 2c$ which have valid tilings but cannot be extended.
$N_e$ then is the smallest even value for $N$ such that $N_e \ge 2c$ and there is
a valid tiling of the $N_e \times N_e$ grid of weight $c' \le c$.
Similarly, $N_o$ is the smallest value for $N$ such that $N_o \ge 2c$ and there is
a valid tiling of the $N_o \times N_o$ grid of weight $c' \le c$.

Note that this argument requires that the cost $p(N)$ be a constant.  It fails if $p(N)$
grows at least linearly with $N$.

\medskip

{\bf Even $N$:}
When $N$ is even, we can easily compute the exact minimal cost achievable.  Consider all
possible $2 \times 2$ squares of tiles, and compute the cost of each by adding the costs of
the four adjacent pairs in the square.  Given any possible tiling of the periodic $N \times N$
grid of total cost $c'$, let $S_{a,b}$ be the $2 \times 2$ square whose top left corner is in
location $(a,b)$, and let $w(S_{a,b})$ be the cost of $S_{a,b}$.  Then $\sum_{a,b} w(S_{a,b}) = 2c'$.
Therefore, the minimal possible cost achievable for the $N \times N$ grid is $N^2 w/2$, where $w$
is the minimal cost of any possible $2 \times 2$ square.  When $N$ is even, this is actually achievable
by repeating a minimal cost square $N/2$ times in each direction.  The squares with their top left
corner in location $(a,b)$, with $a,b$ even, are exactly the square that we chose to repeat, and the
squares in other locations are reflections of that square, which have the same cost.

\medskip

{\bf Linear cost function:}

The construction uses similar ideas to the case with open boundary conditions, but differs in some details.

The argument now uses $4$ layers of tiles, plus a main layer implementing the tiling rules of section~\ref{sec:tiling}.  The first three layers work in much the same way as the three layers discussed in the previous proof for WEIGHTED \tiling\ with open boundary conditions.  The fourth layer defines borders that can be used to set the boundary conditions in the main layer.  We will consider odd $N$, with $N$ divisible by $3$ (so $N \equiv 3 \bmod 6$).  The allowed cost is $6N-2$.

{\bf Layer $1$:}

Layer $1$ uses tiles \tileWhite, \tileBlack, \tileLight, \tileDark, \tileH, \tileHrev, and \tileRing.  The tiles \tileWhite, \tileBlack, \tileLight, \tileDark, and \tileRing\ serve much the same purpose as in the proof for open boundary conditions, and indeed have the same adjacency rules between them.  Only the adjacency rules for \tileH\ and \tileHrev\ are new.

The desired optimal tiling for Layer $1$ has, for all but one row, a single \tileRing\ tile interrupting an alternating line of either \tileWhite\ and \tileBlack\ or \tileLight\ and \tileDark\ (with the two types of row alternating).  The remaining row consists of alternating \tileH\ and \tileHrev\ tiles, also interrupted by one \tileRing\ tile.  The \tileRing\ tiles form a diagonal line circling the torus.  An example of the optimal tiling is given in figure~\ref{fig:periodicweightedreflection}.

\begin{figure}
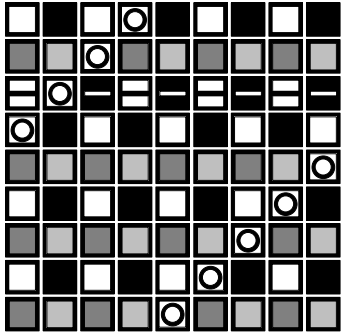

\begin{centering}
\begin{tabular}{c@{\extracolsep{0.1em}}c@{}c@{}c@{}c@{}c@{}c@{}c@{}c}
\tileWhite & \tileBlack & \tileWhite & \tileRing  & \tileBlack & \tileWhite & \tileBlack & \tileWhite & \tileBlack \\
\tileDark  & \tileLight & \tileRing  & \tileDark  & \tileLight & \tileDark  & \tileLight & \tileDark  & \tileLight \\
\tileH     & \tileRing  & \tileHrev  & \tileH     & \tileHrev  & \tileH     & \tileHrev  & \tileH     & \tileHrev  \\
\tileRing  & \tileBlack & \tileWhite & \tileBlack & \tileWhite & \tileBlack & \tileWhite & \tileBlack & \tileWhite \\
\tileDark  & \tileLight & \tileDark  & \tileLight & \tileDark  & \tileLight & \tileDark  & \tileLight & \tileRing  \\
\tileWhite & \tileBlack & \tileWhite & \tileBlack & \tileWhite & \tileBlack & \tileWhite & \tileRing  & \tileBlack \\
\tileDark  & \tileLight & \tileDark  & \tileLight & \tileDark  & \tileLight & \tileRing  & \tileDark  & \tileLight \\
\tileWhite & \tileBlack & \tileWhite & \tileBlack & \tileWhite & \tileRing  & \tileBlack & \tileWhite & \tileBlack \\
\tileDark  & \tileLight & \tileDark  & \tileLight & \tileRing  & \tileDark  & \tileLight & \tileDark  & \tileLight
\end{tabular}
\caption{An optimal tiling for the periodic boundary condition case with $N = 9$.}
\label{fig:periodicweightedreflection}
\end{centering}
\end{figure}

To achieve this, we use the adjacency rules given in table~\ref{table:reflectionperiodic}.

\begin{table}
\begin{centering}
\begin{tabular}{c|rrrrrrr}
  \multicolumn{8}{c}{Horizontal tiling rules} \\
              & $\tileH$ & $\tileHrev$ & $\tileBlack$ & $\tileWhite$ & $\tileDark$ & $\tileLight$ & $\tileRing$ \\
 \hline
$\tileH$      &    30    &      0      &      30      &      30      &     30      &     30       &      1      \\
$\tileHrev$   &     0    &     30      &      30      &      30      &     30      &     30       &      1      \\
$\tileBlack$  &    30    &     30      &      30      &       0      &     30      &     30       &      1      \\
$\tileWhite$  &    30    &     30      &       0      &      30      &     30      &     30       &      1      \\
$\tileDark$   &    30    &     30      &      30      &      30      &     30      &      0       &      1      \\
$\tileLight$  &    30    &     30      &      30      &      30      &      0      &     30       &      1      \\
$\tileRing$   &     1    &      1      &       1      &       1      &      1      &      1       &     30      \\
\end{tabular}

\medskip

\begin{tabular}{c|rrrrrrr}
  \multicolumn{8}{c}{Vertical tiling rules} \\
              & $\tileH$ & $\tileHrev$ & $\tileBlack$ & $\tileWhite$ & $\tileDark$ & $\tileLight$ & $\tileRing$ \\
 \hline
$\tileH$      &    30    &     30      &       1      &      30      &      1      &      30      &      2      \\
$\tileHrev$   &    30    &     30      &      30      &       1      &     30      &       1      &      2      \\
$\tileBlack$  &     1    &     30      &      30      &      30      &     30      &       0      &      1      \\
$\tileWhite$  &    30    &      1      &      30      &      30      &      0      &      30      &      1      \\
$\tileDark$   &     1    &     30      &      30      &       0      &     30      &      30      &      1      \\
$\tileLight$  &    30    &      1      &       0      &      30      &     30      &      30      &      1      \\
$\tileRing$   &     2    &      2      &       1      &       1      &      1      &       1      &     30      \\
\end{tabular}
\caption{The tiling weights for layer 1 for WEIGHTED \tiling\ with reflection symmetry and periodic boundary conditions.  Since there is reflection symmetry, the horizontal and vertical tiling weight matrices are symmetric.}
\label{table:reflectionperiodic}
\end{centering}
\end{table}


The only $2 \times 2$ squares with total cost $0$ involve one each of \tileWhite, \tileBlack, \tileLight, and \tileDark.  Since they must alternate colors both vertically and horizontally, but $N$ is odd, it is not possible to fill the whole torus with such squares.  Indeed, if we wish to avoid forbidden pairings, any region containing only \tileWhite, \tileBlack, \tileLight, and \tileDark\ tiles cannot include any path which is topologically non-trivial on the torus.  Such a path would circle the torus either vertically, horizontally, or both, and would therefore produce inconsistent constraints on the tile types.  Any tiling must therefore contain enough adjacent pairs of tiles in the grid with nonzero cost to block the non-trivial cycles.  Forbidden pairings have a high cost, so the cheapest way to block the non-trivial cycles is with \tileRing, \tileH, and \tileHrev\ tiles.  To block all kinds of non-trivial cycles, we need at least $2N-1$ \tileRing, \tileH, or \tileHrev\ tiles in the tiling.

Each \tileRing\ included in the tiling has a net cost of $4$, since for any tile (other than another \tileRing), a \tileRing\ placed adjacent to it in any direction costs exactly $1$ more than any other allowed tile in the same position.  We can account each \tileH\ or \tileHrev\ tile at an overall cost of $2$, since that is the minimum cost to place tiles both above and below it.  Therefore, \tileH\ and \tileHrev\ tiles are cheaper than \tileRing\ tiles.  However, \tileH\ and \tileHrev\ tiles are only allowed to be horizontally adjacent to each other or a \tileRing\ tile.  Therefore, the \tileH\ and \tileHrev\ tiles must form horizontal lines.  One such line can act as one of the two topologically non-trivial cycles, but a second such line is not helpful.  Furthermore, \tileH\ and \tileHrev\ must alternate horizontally, which means we cannot go all the way around a horizontal cycle with just them.  Therefore, the horizontal line must contain at least one \tileRing; the cost will be minimal if there is exactly one.

Thus, to minimize the cost, we must have a configuration which contains a large region of \tileWhite, \tileBlack, \tileLight, and \tileDark\ tiles, arranged to avoid forbidden pairings, plus a horizontal line containing $N-1$ alternating \tileH\ and \tileHrev\ tiles, and $N$ additional \tileRing\ tiles, for a total cost of $2(N-1) + 4N = 6N - 2$.  Furthermore, there must be exactly one \tileRing\ tile in each row (since there are only $N$ of them, and each row must have at least one in order to get the parity right).  The \tileRing\ tiles cannot be adjacent vertically, and they must form a continuous path (in order to form a second topologically non-trivial cycle), so they are adjacent diagonally.  Because $N$ is odd, the only way that the \tileRing\ tiles can form a closed path is therefore for them to form a single diagonal line.  In short, to achieve a cost $6N-2$, we must have a configuration much like that of figure~\ref{fig:periodicweightedreflection}.

{\bf Layers $2$ and $3$:}

Layers $2$ and $3$ work almost the same way as for the open boundary condition case.  For the purpose of the Layer $2$ and Layer $3$ rules, a Layer $1$ \tileHrev\ tile is treated the same way as a Layer $1$ \tileBlack\ tile, and a Layer $1$ \tileH\ tile is treated the same way as a Layer $1$ \tileWhite\ tile.  As before, Layer $2$ will cycle between columns of \tileHH, \tileHHV, and \tileCrosshatch, and Layer $3$ will cycle between rows of \tileVVrev, \tileHVVrev, and \tileCrosshatchrev\ tiles.  Since $N$ is divisible by $3$, it is possible to do this consistently, and there are no other configurations consistent with an optimal Layer $1$ tiling.  However, unlike the open boundary conditions case, there is nothing to fix which columns have \tileHH\ and which rows have \tileVVrev, so there are three possible Layer $2$ tilings and three possible Layer $3$ tilings which are consistent with any optimal Layer $1$ tiling.

{\bf Layer $4$:}

Layer $4$ is used to set the boundaries of the square $(N-1) \times (N-1)$ region used in the main layer to implement the construction of section~\ref{sec:tiling}.  It uses the tiles \tileH, \tileV, \tileC, \tileN, \tileE, \tileS, \tileW, \tileNW, \tileNE, \tileSE, \tileSW, and \tileWhite.  We consult Layer $1$ to determine the locations of boundaries, and Layers $2$ and $3$ to determine the orientation left/right and up/down. The tiling weights are given in table~\ref{table:reflectionperiodic4}.  We also have the rules in table~\ref{table:reflectionperiodic4compat} which specify which pairs of layer 4 tiles and layer 1 tiles can be in the same location.

\begin{table}
\begin{centering}
\begin{tabular}{lc|rrrrrrrrrrrr}
     &               &                \multicolumn{12}{c}{Layer 4 tile} \\
     &               & $\tileN$ & $\tileS$ & $\tileE$ & $\tileW$ & $\tileNW$ & $\tileNE$ & $\tileSW$ & $\tileSE$ & $\tileWhite$ & $\tileH$ & $\tileV$
& $\tileC$ \\
 \hline
     & $\tileH$      &    N     &     N    &     N    &     N    &     N     &     N     &     N     &     N     &      N       &    Y    &    N
&    N     \\
     & $\tileHrev$   &    N     &     N    &     N    &     N    &     N     &     N     &     N     &     N     &      N       &    Y    &    N
&    N     \\
Layer 1 & $\tileBlack$ &  Y     &     Y    &     Y    &     Y    &     Y     &     Y     &     Y     &     Y     &      Y       &    N    &    Y
&    N     \\
tile & $\tileWhite$  &    Y     &     Y    &     Y    &     Y    &     Y     &     Y     &     Y     &     Y     &      Y       &    N    &    Y
&    N     \\
     & $\tileDark$   &    Y     &     Y    &     Y    &     Y    &     Y     &     Y     &     Y     &     Y     &      Y       &    N    &    Y
&    N     \\
     & $\tileLight$  &    Y     &     Y    &     Y    &     Y    &     Y     &     Y     &     Y     &     Y     &      Y       &    N    &    Y
&    N     \\
     & $\tileRing$   &    Y     &     Y    &     Y    &     Y    &     Y     &     Y     &     Y     &     Y     &      Y       &    N    &    Y
&    Y     \\
\end{tabular}
\caption{The compatibility rules between layer 1 tiles and layer 4 tiles for WEIGHTED \tiling\ with reflection symmetry and periodic boundary conditions.}
\label{table:reflectionperiodic4compat}
\end{centering}
\end{table}

\begin{table}
\begin{centering}
\begin{tabular}{lc|rrrrrrrrrrrr}
     &               &                \multicolumn{12}{c}{Tile on right} \\
     &               & $\tileN$ & $\tileS$ & $\tileE$ & $\tileW$ & $\tileNW$ & $\tileNE$ & $\tileSW$ & $\tileSE$ & $\tileWhite$ & $\tileH$ & $\tileV$
& $\tileC$ \\
 \hline
     & $\tileN$      &     0    &    30    &    30    &    30    &    30     &     0     &    30     &    30     &     30       &   30    &    30
&   30     \\
     & $\tileS$      &    30    &     0    &    30    &    30    &    30     &    30     &    30     &     0     &     30       &   30    &    30
&   30     \\
     & $\tileE$      &    30    &    30    &    30    &    30    &    30     &    30     &    30     &    30     &     30       &   30    &     0
&   30     \\
     & $\tileW$      &    30    &    30    &    30    &    30    &    30     &    30     &    30     &    30     &      0       &   30    &    30
&   30     \\
Tile & $\tileNW$     &     0    &    30    &    30    &    30    &    30     &     0     &    30     &    30     &     30       &   30    &    30
&   30     \\
on   & $\tileNE$     &    30    &    30    &    30    &    30    &    30     &    30     &    30     &    30     &     30       &   30    &     0
&   30     \\
left & $\tileSW$     &    30    &     0    &    30    &    30    &    30     &    30     &    30     &     0     &     30       &   30    &    30
&   30     \\
     & $\tileSE$     &    30    &    30    &    30    &    30    &    30     &    30     &    30     &    30     &     30       &   30    &     0
&   30     \\
     & $\tileWhite$  &    30    &    30    &     0    &    30    &    30     &    30     &    30     &    30     &      0       &   30    &    30
&   30     \\
     & $\tileH$      &    30    &    30    &    30    &    30    &    30     &    30     &    30     &    30     &     30       &    0    &    30
&    0     \\
     & $\tileV$      &    30    &    30    &    30    &     0    &     0     &    30     &     0     &    30     &     30       &   30    &    30
&   30     \\
     & $\tileC$      &    30    &    30    &    30    &    30    &    30     &    30     &    30     &    30     &     30       &    0    &    30
&   30     \\
\end{tabular}

\medskip

\begin{tabular}{lc|rrrrrrrrrrrr}
       &               &                \multicolumn{12}{c}{Tile on top} \\
      &               & $\tileN$ & $\tileS$ & $\tileE$ & $\tileW$ & $\tileNW$ & $\tileNE$ & $\tileSW$ & $\tileSE$ & $\tileWhite$ & $\tileH$ & $\tileV$
& $\tileC$ \\
 \hline
       & $\tileN$      &    30    &     30   &    30    &    30    &    30     &     30    &    30     &    30     &     30       &    0    &    30
&   30     \\
       & $\tileS$      &    30    &     30   &    30    &    30    &    30     &     30    &    30     &    30     &      0       &   30    &    30
&   30     \\
       & $\tileE$      &    30    &     30   &     0    &    30    &    30     &      0    &    30     &    30     &     30       &   30    &    30
&   30     \\
       & $\tileW$      &    30    &     30   &    30    &     0    &     0     &     30    &    30     &    30     &     30       &   30    &    30
&   30     \\
Tile   & $\tileNW$     &    30    &     30   &    30    &    30    &    30     &     30    &    30     &    30     &     30       &    0    &    30
&   30     \\
on     & $\tileNE$     &    30    &     30   &    30    &    30    &    30     &     30    &    30     &    30     &     30       &    0    &    30
&   30     \\
bottom & $\tileSW$     &    30    &     30   &    30    &     0    &     0     &     30    &    30     &    30     &     30       &   30    &    30
&   30     \\
       & $\tileSE$     &    30    &     30   &     0    &    30    &    30     &      0    &    30     &    30     &     30       &   30    &    30
&   30     \\
       & $\tileWhite$  &     0    &     30   &    30    &    30    &    30     &     30    &    30     &    30     &      0       &   30    &    30
&   30     \\
       & $\tileH$      &    30    &      0   &    30    &    30    &    30     &     30    &     0     &     0     &     30       &   30    &    30
&   30     \\
       & $\tileV$      &    30    &     30   &    30    &    30    &    30     &     30    &    30     &    30     &     30       &   30    &     0
&    0     \\
       & $\tileC$      &    30    &     30   &    30    &    30    &    30     &     30    &    30     &    30     &     30       &   30    &     0
&   30     \\
\end{tabular}
\caption{The tiling weights for layer 4 for WEIGHTED \tiling\ with reflection symmetry and periodic boundary conditions.  Even though the underlying rules have reflection symmetry, we have presented them in a way without reflection symmetry.  The distinction between ``left'' and ``right'' in a horizontally adjacent pair is provided by the corresponding pair of layer 2 tiles, and the distinction between ``top'' and ``bottom'' for a vertically adjacent pair is provided by layer 3.}
\label{table:reflectionperiodic4}
\end{centering}
\end{table}



Suppose we have an optimal tiling of Layer $1$.  Then there is a row which contains \tileH\ and \tileHrev, with one \tileRing.  The locations with \tileH\ or \tileHrev\ on Layer $1$ must have \tileH\ on Layer $4$, but the \tileRing\ location cannot, so the only remaining possibility for that location is a \tileC.  Since \tileC\ must have \tileH\ adjacent to it horizontally, and there can be no other \tileH\ tiles on Layer $4$, this is the only location on Layer $4$ that contains a \tileC\ tile.  The rest of the column containing the \tileC\ must thus be all \tileV\ tiles.  That is, there is a vertical line of \tileV\ tiles intersecting a horizontal line of \tileH\ tiles at a \tileC\ location.  There can be no other \tileH, \tileV, or \tileC\ elsewhere in Layer $4$.


We consider the remaining tiles to be the ``interior'' of the tiling, defining an $(N-1) \times (N-1)$ grid.  The tiles \tileN, \tileNW, and \tileNE\ will mark the upper boundary of the interior, \tileE, \tileNE, and \tileSE\ mark the right edge of the interior, \tileS, \tileSW, and \tileSE\ mark the bottom row, and \tileW, \tileNW, and \tileSW\ indicate the leftmost column of the interior.  The corners of the interior are indicated by \tileNE, \tileSE, \tileSW, and \tileNW.  The remaining interior tiles are \tileWhite.  The resulting tiling pattern is given in figure~\ref{fig:periodicweightedreflection4}.

\begin{figure}
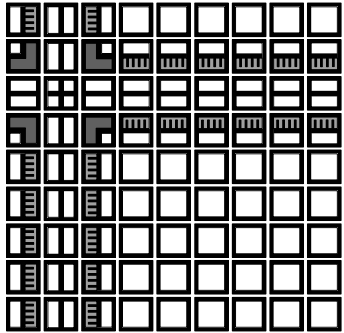

\begin{centering}
\begin{tabular}{c@{\extracolsep{0.1em}}c@{}c@{}c@{}c@{}c@{}c@{}c@{}c}
\tileE  & \tileV & \tileW  & \tileWhite & \tileWhite & \tileWhite & \tileWhite & \tileWhite & \tileWhite \\
\tileSE & \tileV & \tileSW & \tileS     & \tileS     & \tileS     & \tileS     & \tileS     & \tileS     \\
\tileH  & \tileC & \tileH  & \tileH     & \tileH     & \tileH     & \tileH     & \tileH     & \tileH     \\
\tileNE & \tileV & \tileNW & \tileN     & \tileN     & \tileN     & \tileN     & \tileN     & \tileN     \\
\tileE  & \tileV & \tileW  & \tileWhite & \tileWhite & \tileWhite & \tileWhite & \tileWhite & \tileWhite \\
\tileE  & \tileV & \tileW  & \tileWhite & \tileWhite & \tileWhite & \tileWhite & \tileWhite & \tileWhite \\
\tileE  & \tileV & \tileW  & \tileWhite & \tileWhite & \tileWhite & \tileWhite & \tileWhite & \tileWhite \\
\tileE  & \tileV & \tileW  & \tileWhite & \tileWhite & \tileWhite & \tileWhite & \tileWhite & \tileWhite \\
\tileE  & \tileV & \tileW  & \tileWhite & \tileWhite & \tileWhite & \tileWhite & \tileWhite & \tileWhite
\end{tabular}
\caption{The arrangement of layer $4$ when the Layer $1$ tiling is given by figure~\ref{fig:periodicweightedreflection}.}
\label{fig:periodicweightedreflection4}
\end{centering}
\end{figure}

{\bf Main Layers:}

In the main layers, we again define directions, breaking the reflection symmetry by looking at Layers $2$ and $3$.  We implement the protocol of section~\ref{sec:tiling} on an $(N-1) \times (N-1)$ grid defined by the interior locations determined by Layer $4$.  If the Layer $4$ tile is a \tileH, \tileV, or \tileC, the main layer tile must be an extra tile not used in the main set.  The extra tile type can only be in a location that has one of those three tile types on Layer $4$, and it can be adjacent to any other main layer tile.  Locations on Layer $4$ that have a \tileNE, \tileSE, \tileSW, or \tileNW\ tile must have a corner tile in the main layer.  The other Layer $4$ tiles can correspond to any main layer tile.  Thus, when Layer $1$ has an optimal tiling, and Layers $2$ -- $4$ have no forbidden pairings, the main layer only has a tiling without forbidden pairings if the universal Turing machine $M$ accepts on input $f_{BC}(N-1)$.  That is, there is an overall tiling of cost $6N-2$ iff the problem is a ``yes'' instance; otherwise, the cost is higher.

\end{proof}

\subsection{Weighted Tiling with Rotation Symmetry}

When we have full rotation symmetry, even the WEIGHTED \tiling\ problem is easy:

\begin{theorem}
For WEIGHTED \tiling\ with rotation symmetry, we have the following results:
\begin{itemize}

\item With open boundary conditions: Either there exists computable $N_0 \in \integer^+$ such that tiling is possible for $N
    \geq N_0$ and impossible for $N < N_0$ or there exists computable $N_0 \in \integer^+$ such that tiling is possible for $N < N_0$ and
    impossible for $N \geq N_0$.  ($N_0$ depends on the weights and maximum allowed cost $p(N)$.)

\item With periodic boundary conditions: One of the following three cases holds: There exists computable $N_0 \in \integer^+$
    such that tiling is possible for all $N \geq N_0$, there exists computable $N_0 \in \integer^+$ such that tiling is impossible for all $N \geq
    N_0$, or tiling is possible for all even $N$ and there exists computable $N_o \in \integer^+ \cup \{\infty\}$ such that for odd $N \geq N_o$,
    tiling is possible.

\item With the four-corners boundary condition: There exist computable $N_e, N_o \in \integer^+ \cup
\{\infty\}$ such that either there exists a tiling for any even $N \geq N_e$ or for no even $N \geq N_e$, and either
there exists a tiling for any odd $N \geq N_o$ or for no odd $N \geq N_o$.
\end{itemize}
\end{theorem}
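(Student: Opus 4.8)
The plan is to prove that, under rotation symmetry, the minimum cost $W_{\min}(N)$ of tiling the $N\times N$ grid is, for all sufficiently large $N$, given by an explicitly computable formula of the shape $2\mu N(N-1)+bN+c_{N\bmod 2}$, and then to read off the threshold behavior by comparing with $p(N)$ (which, after shifting all weights and using $p(N)=o(N^2)$, we may take to be $\alpha N+\beta$). First I would reduce to a single symmetric weight function $w$ on $T\times T$. The key observation is that reflection symmetry turns every edge $\{t_i,t_j\}$ into a length-$2$ cycle of mean weight $w(t_i,t_j)$, so the minimum mean cycle value $\mu:=\min\bigl(\min_i w(t_i,t_i),\ \min_{i\neq j}w(t_i,t_j)\bigr)$ is always realized by a cycle of length $1$ or $2$; hence the cheapest way to fill the bulk of the grid is a diagonal stripe pattern (as in figure~\ref{fig:rotation}) of period $1$ or $2$, and this is what forces the correction term above to have period dividing $2$.

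Next I would pin down $W_{\min}(N)$. For the lower bound, decompose each row and each column into a simple path plus simple cycles as in section~\ref{sec:oneDclassical}; every cycle has mean weight $\ge\mu$ and the residual simple path has $O(1)$ length, so the internal cost of each row and column is at least $\mu(N-1)-O(1)$, giving $W_{\min}(N)\ge 2\mu N(N-1)-O(N)$. For the matching upper bound I would give an explicit tiling: fill the interior with the optimal period-$\le 2$ diagonal stripes and tile all four sides identically (legitimate by rotation symmetry) using an optimal side configuration, whose cost is computable by the polynomial-time one-dimensional weighted tiling algorithm of section~\ref{sec:oneDclassical}, subject to the compatibility constraints between the boundary tiles and the adjacent interior stripe tiles. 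For the four-corners boundary condition the side configuration must additionally carry $t_1$ at its ends, and since closing up a period-$\le 2$ stripe around a square of side $N$ behaves differently for even and odd $N$, this contributes a parity-dependent additive constant --- the origin of the separate thresholds $N_e,N_o$. Assembled, this shows $W_{\min}(N)$ equals the claimed formula for all large $N$, with every constant computable from $(T,w,p)$.

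Finally I would convert this into the three dichotomies. If $\mu>0$ the grid is over budget for all large $N$; if $\mu<0$ it is tileable for all large $N$; if $\mu=0$ the sign of the residual linear-plus-periodic term decides, possibly differently on the two parity classes --- which is precisely the periodic case where all even $N$ work while odd $N$ need a zero-cost stripe able to wrap an odd torus. To remove transient exceptions at small $N$ and obtain genuine thresholds rather than merely a pattern that is eventually constant on each parity class, I would reuse the duplication/averaging argument from the constant-cost periodic proof: in any valid tiling of $N\times N$ there is an adjacent pair of rows and an adjacent pair of columns whose duplication raises the cost by at most $2W_{\min}(N)/(N-1)+O(1)$ per direction, and symmetrically one can delete a cheapest row and a cheapest column, so iterating these moves shows the set of tileable $N$ (within each parity class, for the relevant boundary condition) has no holes above the point where the asymptotic formula takes over. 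The main obstacle I anticipate is the bookkeeping of the $\mu=0$ case: tracking the period-$\le 2$ correction term exactly, handling the interaction between the optimal side configuration, the corner tile, and the interior stripes separately for each boundary condition, and verifying that the duplication argument really closes every gap rather than only the asymptotic tail.
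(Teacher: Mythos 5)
Your treatment of the open and periodic boundary conditions is essentially the paper's: with rotation symmetry the minimum mean cost is realized by a $1$- or $2$-cycle, the checkerboard/diagonal-stripe tiling (figure~\ref{fig:rotation}) achieves it, and the periodic case reduces to the one-dimensional weighted problem of section~\ref{sec:oneDclassical}. The gap is in the four-corners bullet, which is where all the actual work lies. After shifting weights, the only hard case is nonnegative weights with minimum $0$ and constant budget $p(N)=c$, and there your central claim --- that $W_{\min}(N)$ is, for all large $N$, an explicitly computable affine-plus-$2$-periodic function of $N$ --- is asserted rather than proved. Your lower bound only controls the leading $2\mu N(N-1)$ term up to $O(N)$, which says nothing once $\mu=0$; and your upper-bound ansatz (a ``frame'' of $1$-D-optimal side configurations carrying $t_1$ at the ends, wrapped around period-$\le 2$ stripes) is not shown to be optimal, nor even to capture feasibility: with budget $c$ the at most $c$ costly adjacencies may have to sit in genuinely two-dimensional patches near the corners (the paper's valid $k$-squares with $c\le k\le 2c$), and whether a cost-$\le c$ tiling exists hinges on a compatibility condition --- the four corner patches must have borders in a common zero-cost component $Z_\alpha$ and be joinable by zero-cost paths of the parity dictated by $N$. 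The paper proves both directions of this equivalence (pigeonhole over $k$-borders plus $c+1$ disjoint paths for necessity; the triangle-extension construction for sufficiency), and that is exactly what makes $N_e,N_o$ computable. Nothing in your outline substitutes for it.

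Your proposed fix via duplication/deletion also does not close this hole. Row/column duplication (with nonnegative weights and constant budget) does give upward closure within each parity class for $N>2c+1$, but the ``delete a cheapest row and column'' step is unsound: deleting an interior row makes two previously non-adjacent rows adjacent with uncontrolled cost, and deleting a boundary row destroys the corner condition, so you cannot push a tiling of a huge grid down to a bounded size. Consequently you have no terminating procedure to decide whether \emph{any} sufficiently large even (or odd) $N$ admits a cost-$\le c$ tiling: a brute-force search for the least tileable $N$ in a parity class never halts when the answer is ``none.'' Deciding that emptiness question is precisely what the corner-square compatibility criterion provides, so without it (or an argument establishing your eventual formula for $W_{\min}(N)$ with computable constants, including the parity-dependent constant term) the computability of $N_e$ and $N_o$ --- the substance of the four-corners statement --- remains unproven.
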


\begin{proof}

{\bf Rotation symmetry and open boundary conditions:} This case is essentially
trivial. We find the minimal weight $w$ between any pair of tiles, and simply tile
the square in a checkerboard pattern with those two tiles. That is certainly
the minimum cost achievable.  The resulting cost is $2N(N-1)w$.

\medskip

{\bf Rotation symmetry and periodic boundary conditions:} Find the lowest
weight of a cycle of length $N$ for the one-dimensional WEIGHTED \tiling\
problem. Call that $w$. The minimum achievable total cost for $2$-D WEIGHTED
\tiling\ with rotation symmetry and periodic boundary conditions is then $2Nw$
($N$ rows and $N$ columns each of cost $w$). This can actually always be done
using the same tiling as the unweighted rotation symmetry case
(figure~\ref{fig:rotation}).  Thus, the problem reduces to the one-dimensional
case, which we have argued is in $\Pclass$. However, with the additional reflection
symmetry, the argument can be simplified.
If all the weights are positive, then $w \geq N$
and tiling is only possible for small $N$ since we are assuming $p(N) = o(N)$.  If there exists a pair of tiles with negative
weight, then for sufficiently large $N$, the minimal cost path will use many
of that negative pair, and tiling is always possible for sufficiently large $N$.
If the smallest weight is $0$, then any sufficiently long one-dimensional cycle
with constant weight $w$ must use a $0$-cost pairing, and it is possible to
duplicate the pair of tiles used in that $0$-cost pairing.  Thus, the minimum
weight cycle of length $N+2$ is also at most $w$.  Therefore, we need only
determine the minimum-weight cycle of odd length containing a zero-cost pairing.  (When $N$ is
even and there is a zero-cost pairing, a zero-cost cycle is always possible
by alternating between the two tiles involved in the pairing.)

\medskip

{\bf Rotation symmetry and 4-corners boundary condition:} This case is more difficult, but still computationally easy (in $\Pclass$). For large $N$,
the idea is that we will again tile most of the square using a pair of tiles $t_i$, $t_j$ with minimal cost to be adjacent.  That determines the
asymptotic scaling of the total cost for large $N$. However, the details are more complicated because of the effect of the corners.

First, note that if the minimum weight for every pair of tiles is positive,
then it is certainly not possible to tile an $N \times N$ grid when $2N (N-1) > p(N)$.
Conversely, if there is a pair of tiles $(t_i, t_j)$ such that the cost for
having them adjacent is negative, then we can always tile an $N \times N$
grid for sufficiently large $N$ by taking a checkerboard of $t_i$ and $t_j$,
with only the corners different (if the designated corner tile $t_1$ is not
already $t_i$ or $t_j$).  Suppose that $w (t_1, t_i), w (t_1, t_j) \leq w$.
Then ``sufficiently large'' $N$ means $N$ such that $2N(N-1) - 8 \geq 8w-p(N)$.

The only potentially difficult case is thus when there are pairs of tiles $(t_i, t_j)$ with $w(t_i, t_j) = 0$,
but no pairs with a negative cost.  We can assume $p(N) = c$,
a constant, as we can always achieve a constant cost $8w$ using the strategy above ($0$ cost pairs
everywhere but the corners).  We will show that if there exists a tiling of constant cost, the vicinity
of the four corners must satisfy certain conditions.  Then by running through the possible configurations
near the corners, we can find a minimum-weight tiling.

Consider the graph whose nodes are tile types $t_i$ and edges
correspond to pairings of tiles with $0$ cost.  Let $Z_\alpha$ be a set of tiles comprising a connected
component of this graph with more than one tile, and let $Z = \cup_\alpha Z_\alpha$.  We will
show that any tiling of total cost at most $c$ must consist mostly of tiles from one particular
$Z_\alpha$, with only a constant number of different tiles, and that indeed it is sufficient to
place all the non-$Z_\alpha$ tiles near the corners of the grid.  Then by considering all
possible low-cost combinations of tiles near the corners, we will be able to determine the
minimum achievable cost for the grid.

We will define the {\em $k$-square for the upper left corner} to be the $k \times k$ square of tiles in the upper
left corner of the grid. The {\em upper-left $k$-border} consists of the rightmost column and bottom
row of the $k$-square for the upper left corner.
We make similar definitions for the other corners: the upper-right $k$-border is the
leftmost column and bottom row of the $k$-square for the upper right corner, the lower left $k$-border is the
rightmost column and top row of the $k$-square for the lower left corner
and the lower right $k$-border is the
leftmost column and top row of the $k$-square for the lower right corner.
An example is given in Figure \ref{fig:squares}.
We say that a square is {\em valid} if its border contains only zero-cost  adjacent pairs and
the corner of the square that is also the corner of the larger grid has the correct corner tile.
Since a $k$-border must be connected that means that the edges within the border are all contained
within a particular connected component $Z_{\alpha}$.

\begin{figure}
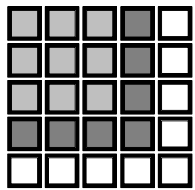

\begin{centering}
\begin{tabular}[t]{c@{\extracolsep{0.1em}}c@{}c@{}c@{}c}
\tileLight & \tileLight & \tileLight & \tileDark & \tileWhite \\
\tileLight & \tileLight & \tileLight & \tileDark & \tileWhite \\

\tileLight & \tileLight & \tileLight & \tileDark & \tileWhite \\

\tileDark & \tileDark & \tileDark & \tileDark & \tileWhite \\

\tileWhite & \tileWhite & \tileWhite & \tileWhite & \tileWhite \\

\end{tabular}
\caption{The shaded tiles are an upper-left $4$-square. The tiles in the darker shade
are the upper-left $4$-border.}
\label{fig:squares}
\end{centering}
\end{figure}

We will determine if there is an $N \times N$ tiling of the grid by considering all valid tilings
for the squares in each corner of the grid where we allow the sizes of the squares to go up
to $2c$.
For each such combination, we require that the total cost of the tiling be some
$c' \le c$ and that the tiles in each border come from the same connected component $Z_{\alpha}$.
We then
must determine whether it is possible to fill the remainder of the grid with
zero-cost adjacent pairs, possibly interrupted by a few additional adjacent pairs with
total cost at most $c-c'$ (although it will turn out that this is never necessary).
Suppose we have such a tiling, and consider tile $t_i \in Z_\alpha$ on the border of
one square, and tile $t_j \in Z_\alpha$ on the border of a different square.  Then, as shown below,
there must exist a path from $t_i$ to $t_j$ that only includes zero-cost adjacent pairs of tiles.  Note that
the parity of the length of the path does not depend on which path we choose.

More generally, given two $k$-squares $I_1$ and $I_2$ for different corners of the grid, we are
interested in whether it is \emph{possible} to create a path of zero-cost adjacent pairs to get
from $t_i \in I_1$ to $t_j \in I_2$, without necessarily being concerned about whether we
can tile the whole grid:

\begin{definition}
Suppose we have two $k$-squares $I_1$ and $I_2$ that are located
at different corners of the grid, with borders from $Z_\alpha$.  We say they are
\emph{compatible for $N$ and $Z_\alpha$} if, given tile $t_i$ on the
border of $I_1$ and tile $t_j$ on the border of $I_2$, there is a path
of zero-cost pairings from $t_i$ to $t_j$ of the correct
length for $I_1$ and $I_2$ to be placed together on an $N \times N$ grid.
We say a set of four squares $I_1$, $I_2$, $I_3$, and $I_4$ which fit
into the
four corners of the grid is \emph{compatible for $N$ and $Z_\alpha$} if
they are compatible in pairs.
\end{definition}


Note that it does not matter which tile we pick from the border to define compatibility.
This is because if we use $t_i'$ on the border of $I_1$ instead of $t_i$,
we can take a path from $t_i'$ to $t_i$ using only the zero-cost pairings
in the border of $I_1$, and concatenate it with the path from $t_i$ to $t_j$.
Similarly, if $I_1$, $I_2$, and $I_3$ are valid squares located at three
different corners, then if $I_1$ is compatible with $I_2$ and $I_2$ is
compatible with $I_3$, it follows that $I_1$ is compatible with $I_3$.
Also, if $I_1$ and $I_2$ are compatible for $N$, they are also compatible
for $N+2$: Because of the reflection symmetry, we can repeat a pair of adjacent tiles
in a path to lengthen it by $2$.

For large $N$, if there is a tiling of the $N \times N$ grid
with total cost at most $c$, then there must exist, for some $\alpha$,
valid squares in the four corners
$I_1$, $I_2$, $I_3$, and $I_4$ which are compatible
for $N$ and $Z_\alpha$ such that the total cost of $I_1$, $I_2$, $I_3$,
and $I_4$ is at most $c$ and the sizes of the squares are between $c$ and $2c$.
To see why this is true, note that there can be only $c$ non-zero-cost adjacent pairs
in the entire tiling. Therefore, as $k$ ranges from $c$ to $2c$, it must be the
case that for each corner at least one $k$-border contains only zero-cost adjacent
pairs. Furthermore, for each such pair of borders, there are at least $c+1$ disjoint paths
from a tile in one border to a tile in the other border. This follows from the fact
that the borders themselves have more than $c$ tiles.
At least one of these paths
must contain only zero-weight pairings.
Thus we know that for sufficiently large grids,
if there is a tiling of total weight at most $c$,
then there must be four valid squares for the four corners which
are compatible for $N$ and $Z_\alpha$.
The converse is also true:
\begin{lemma}
Let $I_1$, $I_2$, $I_3$, and $I_4$ be four squares which
are compatible for $N$ and $Z_\alpha$, $N$ sufficiently large.
Then there is a tiling of the $N \times N$ grid using
$I_1$, $I_2$, $I_3$, and $I_4$ in the four corners, with only tiles
from $Z_{\alpha}$ used in the rest of the grid.
\end{lemma}

\begin{proof}[ of lemma]

Assume without loss of generality that $I_1$ is in the upper left corner,
$I_2$ is in the upper right corner, $I_3$ is in the lower left corner, and
$I_4$ is in the lower right corner of the grid.

We will extend each  square to
create an isosceles right triangle whose outer diagonal border is composed
of just one kind of tile.  This can be done by progressively duplicating
the outer sides of the square, each time shifting by one space towards
the edge of the grid, as in figure~\ref{fig:shelftriangular}.  Finally,
we can add additional diagonal layers to make sure the outer diagonal border
is any particular tile from $Z_\alpha$, and to increase the size of the
triangle as much as desired.

\begin{figure}
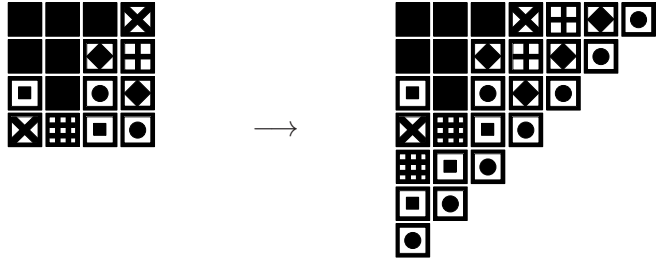

\begin{centering}
\begin{tabular}[t]{c@{\extracolsep{0.1em}}c@{}c@{}c}
\tileBlack & \tileBlack & \tileBlack & \tileX \\
\tileBlack & \tileBlack & \tileD     & \tileC \\
\tileSquare & \tileBlack & \tileCirc  & \tileD \\
\tileX & \tileCrosshatch & \tileSquare & \tileCirc \\
\end{tabular}
\qquad
\begin{tabular}[t]{c}
\\
\\
\\
$\longrightarrow$ \\
\end{tabular}
\qquad
\begin{tabular}[t]{c@{\extracolsep{0.1em}}c@{}c@{}c@{}c@{}c@{}c}
\tileBlack & \tileBlack & \tileBlack & \tileX & \tileC    & \tileD    & \tileCirc \\
\tileBlack & \tileBlack & \tileD     & \tileC & \tileD    & \tileCirc & \\
\tileSquare & \tileBlack & \tileCirc  & \tileD & \tileCirc &           & \\
\tileX & \tileCrosshatch & \tileSquare & \tileCirc &  &           & \\
\tileCrosshatch & \tileSquare & \tileCirc &        &           &           & \\
\tileSquare & \tileCirc  &            &        &           &           & \\

\tileCirc  &            &        &           &           &            &\\
\end{tabular}
\caption{Extending a square to make a corner triangle.  The black tiles have nonzero cost to be adjacent to any other tile.  All other pairings that
occur in the figure are zero cost.}
\label{fig:shelftriangular}
\end{centering}
\end{figure}

We can define compatibility of the newly created corner triangles
 in the same way as for the squares.
When squares $I$ and $J$ for two different corners
are compatible for $N$ and $Z_\alpha$,
then their corresponding triangles
remain compatible when extended as described above, at least when $N$ is large. In particular, there is a path of zero-cost
tile pairs  that will let us tile the top row of the grid between
$I_1$ and $I_2$.
Then we can copy the tiling of the interval between
triangles diagonally down and to the left, much as in figure~\ref{fig:rotation}
describing the unweighted case.  This fills in the upper left corner of the
grid, up to the diagonal line defined by the leftmost tile of the triangle for $I_2$
and the topmost tile of the triangle for $I_3$.
We can do the same thing for the lower right corner of the grid,
tiling the right edge between $I_2$ and $I_4$, and copying this
tiling diagonally down and to the left.  This leaves only a diagonal
strip between $I_2$ and $I_3$, and we can tile that using a checkerboard
pattern consisting of the outermost tile $t_i$ for the triangles and any
other tile from $Z_\alpha$ which has a zero-cost pairing with $t_i$.
See figure~\ref{fig:fourcornersrotation} for an example of a complete grid tiled in this way.

\begin{figure}
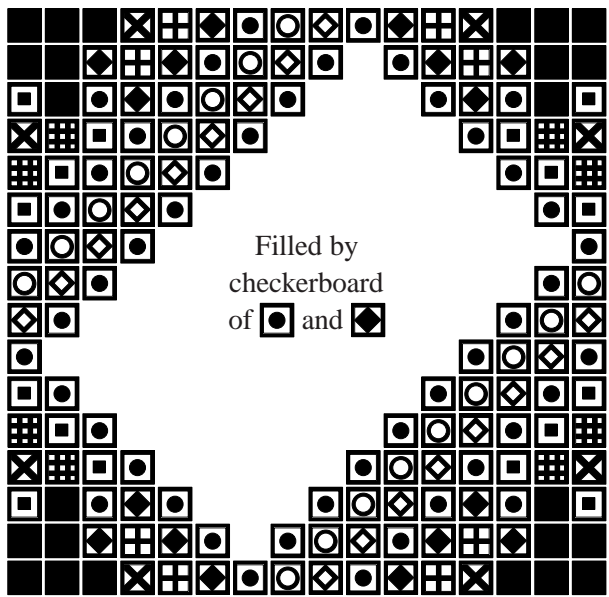

\begin{centering}
\begin{tabular}{c@{\extracolsep{0.1em}}c@{}c@{}c@{}c@{}c@{}c@{}c@{}c@{}c@{}c@{}c@{}c@{}c@{}c@{}c}
\tileBlack & \tileBlack & \tileBlack & \tileX & \tileC    & \tileD    & \tileCirc & \tileRing & \tileDHol & \tileCirc & \tileD & \tileC & \tileX &
\tileBlack & \tileBlack & \tileBlack \\
\tileBlack & \tileBlack & \tileD     & \tileC & \tileD    & \tileCirc & \tileRing & \tileDHol & \tileCirc &        & \tileCirc & \tileD & \tileC &
\tileD     & \tileBlack & \tileBlack \\
\tileSquare & \tileBlack & \tileCirc & \tileD & \tileCirc & \tileRing & \tileDHol & \tileCirc &        &           &        & \tileCirc & \tileD &
\tileCirc & \tileBlack & \tileSquare \\
\tileX & \tileCrosshatch & \tileSquare & \tileCirc & \tileRing & \tileDHol & \tileCirc &  &           &        &           &        & \tileCirc &
\tileSquare & \tileCrosshatch & \tileX \\
\tileCrosshatch & \tileSquare & \tileCirc & \tileRing & \tileDHol & \tileCirc &        &           &        &           &        &           &        &
\tileCirc & \tileSquare & \tileCrosshatch \\
\tileSquare & \tileCirc  & \tileRing & \tileDHol & \tileCirc &\multicolumn{6}{c}{ }                                         &           &
&        & \tileCirc & \tileSquare \\
\tileCirc  & \tileRing   & \tileDHol & \tileCirc &           &\multicolumn{6}{c}{Filled by}                                 &           &
&        &         & \tileCirc \\
\tileRing  & \tileDHol   & \tileCirc &         &             &\multicolumn{6}{c}{checkerboard}                              &          &
&        & \tileCirc & \tileRing \\
\tileDHol & \tileCirc   &        &           &               &\multicolumn{6}{c}{of $\tileCirc$ and $\tileD$}               &              &
& \tileCirc & \tileRing & \tileDHol \\
\tileCirc &             &        &           &           &         &           &        &           &        &          &               & \tileCirc
& \tileRing & \tileDHol & \tileCirc \\
\tileSquare & \tileCirc &        &           &        &           &        &           &            &        &          & \tileCirc & \tileRing
& \tileDHol & \tileCirc & \tileSquare \\
\tileCrosshatch & \tileSquare & \tileCirc &  &           &        &           &        &            &        & \tileCirc & \tileRing & \tileDHol &
\tileCirc & \tileSquare & \tileCrosshatch \\
\tileX & \tileCrosshatch & \tileSquare & \tileCirc &     &        &           &        &         & \tileCirc & \tileRing & \tileDHol & \tileCirc &
\tileSquare & \tileCrosshatch & \tileX \\
\tileSquare & \tileBlack & \tileCirc  & \tileD & \tileCirc &        &           &        & \tileCirc & \tileRing & \tileDHol & \tileCirc & \tileD &
\tileCirc & \tileBlack & \tileSquare \\
\tileBlack & \tileBlack & \tileD     & \tileC & \tileD    & \tileCirc &        & \tileCirc & \tileRing & \tileDHol & \tileCirc & \tileD & \tileC &
\tileD & \tileBlack & \tileBlack \\
\tileBlack & \tileBlack & \tileBlack & \tileX & \tileC    & \tileD    & \tileCirc & \tileRing & \tileDHol & \tileCirc & \tileD & \tileC & \tileX &
\tileBlack & \tileBlack & \tileBlack \\
\end{tabular}
\caption{A $16 \times 16$ grid tiled by extending triangles on the four corners.}
\label{fig:fourcornersrotation}
\end{centering}
\end{figure}
\end{proof}

Thus, we can use the following algorithm to determine what the lowest
achievable cost is for large even and odd grid sizes, when the lowest-cost
pairings are cost $0$: List the low-cost valid squares for each corner
up to size $2c$ and determine which
sets of four are compatible for large even $N$ and for large odd $N$.
The lowest total cost for each case tells us the minimum achievable cost.

\end{proof}

\section{The Quantum Case}
\label{sec:quantum}

\subsection{Preliminaries}

As in the 2-dimensional classical tiling problem, we make use of
a binary counting Turing machine $M_{BC}$. Because we are working
with quantum systems, we will require that $M_{BC}$ be reversible.
Bernstein and Vazirani \cite{bernsteinVazirani} have
shown that any deterministic Turing machine can
be made reversible, meaning that given a configuration of the Turing machine,
it has a unique predecessor in the computation.
There may be some additional overhead but it is not
significant.
We can still assume that  there is a
function $f: \mathbb{Z} \rightarrow \{0,1\}^*$ such that for some
constant $N_0$ and
every $N \ge N_0$, if $M_{BC}$ runs for $N$ steps, then the string $f_{BC}(N)$
will be written on the tape with the rest of the tape blank.
Moreover there are constants $c_1$ and $c_2$ such that
if $n$ is the length of the string $f_{BC}(N)$ and
$N \ge N_0$, then
$2^{c_1 n} \le N \le 2^{c_2 n}$.
We will also assume that for any binary string $x$, we can compute $N$
such that $f_{BC}(N) = x$ in time that is polynomial in the length of $x$.

We can reduce any language in $\QMAEXP$ to a language $L$ that
is accepted by a verifier who uses a witness of size $2^{c_1 n}$
and whose computation lasts for $2^{c_1 n}$ steps, where $n$ is the
length of the input.
This is the same reduction used in the classical case, in which the
input is padded to length $|x|^k/c_1$.
We can use standard boosting techniques to assume that the
probability of acceptance is at least $1 - \epsilon$ for a ``yes'' instance or at most
$\epsilon$ for a ``no'' instance, with $\epsilon = 1/{\mathrm{poly}}(N)$ \cite{Kitaev:book}.
Suppose we are given
an arbitrary verifier quantum Turing machine $V$ which takes as input
a classical/quantum pair $(x,\ket{\psi})$ such that $\ket{\psi}$
has $2^{c_1 n}$ qubits and halts in $2^{c_1 n}$ steps.
Based on $V$, we will produce a Hamiltonian term $H$ which acts on a pair of
finite-dimensional particles. We will also produce two polynomials $p$ and $q$.
The reduction for Theorem~\ref{th:qmaexp} will then take input string $x$
and output an integer $N$ such that $f_{BC}(N-3)=x$.
The Hamiltonian will have the property that for any $x$, if there exists
a $\ket{\psi}$ that causes $V$ to accept with probability
at least $1 - \epsilon$, then when $H$ is applied to every neighboring pair in
a chain of length $N$, the resulting system has a unique ground state whose energy
is at most $p(N)$. If for every $\ket{\psi}$, $V$ accepts with
probability at most $\epsilon$, then the ground state energy of the system is at least
$p(N) + 1/q(N)$.

Quantum Turing machines were first introduced in \cite{deutsch} and further
developed in \cite{bernsteinVazirani}. The latter paper showed that we can
make a number of simplifying assumptions about the form of a quantum Turing machine
and not restrict its power in a complexity-theoretic sense.
In particular, we can assume without loss of generality that the Turing machine $V$ has a one-way
infinite tape and that the head starts in designated start state $q_0$ at the left-most
end of the tape. We will also assume that on input $x$, after $2^{c_1 |x|}$ steps, the
Turing machine is in an accepting or rejecting
state and the head is again at the left-most end
of the tape. We will also assume
that the witness will be stored in a parallel track with the left-most
qubit in the left-most position of the tape.

We now describe the set of states for the particles.
A standard basis state for the whole system will be denoted by
the state for each particle. States \leftend\ and \rightend\ are
special bracket states that occur at the ends of the chain.

\begin{definition}
A standard basis state is {\em bracketed} if the left-most particle
is in state \leftend, the right-most particle is in state \rightend,
and no other particle in the chain is in state \leftend\ or \rightend.
$\calS_{br}$ is the space spanned by all bracketed states.
\end{definition}

We will restrict our attention for now to $\calS_{br}$ and add a term
later that gives an energy penalty to any state outside $\calS_{br}$.
The rest of the states will be divided into six tracks, so the state of a particle
is an ordered $6$-tuple with each entry specifying the state for a particular track.
The set of allowable states will not necessarily be the full cross product of the
states for each track.

Two of the tracks will implement a clock, with one track working as sort of a second hand
and another track as a minute hand. The other four tracks will be used to implement
two Turing machines which share a work tape.
Track 3 holds the work tape. Track 4 holds the state and head location for the first
Turing Machine (which is $M_{BC}$) and Track 5 holds the state and head location for
the second Turing Machine (which is $V$).
The sixth track will hold the quantum witness for $V$.
Since there is limited interaction between the tracks,
it will be simpler to describe the Hamiltonian as it acts on each track separately
and then describe how they interact. The figure below gives a picture of the
start state for the system. Each column represents the state of an individual
particle.

\begin{center}
\begin{small}
\begin{tabular}{|@{}c@{}|@{}c@{}|@{}c@{}|}
\hline
$\leftend$ &
\begin{tabular}{@{}c@{}|@{}c@{}|clc|@{}c@{}|@{}c@{}}
\arrRzero & \blankR & $\cdots$ & Track 1: Clock second hand & $\cdots$ & \blankR &  \blankR \\
\hline
\zeroB & \zero & $\cdots$ & Track 2: Clock minute hand & $\cdots$ & \zero &  \zero \\
\hline
\# & \# & $\cdots$  & Track 3: Turing machine work tape & $\cdots$ & \# &  \# \\
\hline
$q_0$ & \blankR  & $\cdots$ & Track 4: Tape head and state for TM $M_{BC}$ & $\cdots$ & \blankR   & \blankR   \\
\hline
$q_0$ & \blankR   & $\cdots$ & Track 5: Tape head and state for TM $V$ & $\cdots$ & \blankR   &  \blankR   \\
\hline
0/1 & 0/1& $\cdots$ & Track 6: Quantum witness for $V$ & $\cdots$ & 0/1&  0/1\\
\end{tabular}
& $\rightend$ \\
\hline
\end{tabular}
\end{small}
\end{center}

As is typical in hardness results for finding ground state energies,
the Hamiltonian applied to each pair
will consist of a sum of terms of which there are two types.
Type I terms will have the form
$\ket{ab}\bra{ab}$ where $a$ and $b$ are possible states.
This has the effect of adding an energy penalty to any
state which has a particle
in state $a$ to the immediate left of a particle in state $b$.
We will say a configuration is {\em legal} if it does not violate any
Type I constraints.
Type II terms will have the form:
$\frac{1}{2}
(\ketbra{ab}{ab} + \ketbra{cd}{cd} -
\ketbra{ab}{cd} - \ketbra{cd}{ab})$.
These terms enforce that for any eigenstate with zero
energy, if there is a configuration $A$ with two neighboring
particles in states $a$ and $b$, there must be a configuration $B$ with
equal amplitude that is the same as $A$ except that $a$ and $b$ are
replaced by $c$ and $d$.
Even though a Type II term is symmetric, we associate
a direction with it by denoting it with $ab \rightarrow cd$.
Type II terms are also referred to as
{\em transition rules}. We will say that configuration
$A$  transitions into configuration $B$ by rule $ab \rightarrow cd$
if $B$ can be
obtained from $A$ by replacing an occurrence of $ab$ with an occurrence of
$cd$. We say that the transition rule
applies to $A$ in the forward direction
and applies to $B$ in the backwards direction.
We will choose the terms so that for any legal configuration, at most one
transition rule applies to it in the forward direction and at most one
rule applies in the backwards direction. Thus, a state satisfying
all Type I and Type II constraints must consist of
an equal superposition of
legal configurations such that there is exactly one transition rule
that carries each configuration to the next.
The illegal pairs are chosen so that any state which satisfies the Type I and Type II
constraints corresponds to a process we would like to simulate or encode in the
ground state. In our case, the process is the execution of two Turing Machines
each for $N-3$ steps, where $N$ is the length of the chain.

We will make use of the following simple lemma throughout the construction
in limiting the set of  standard basis states in the support of the
ground state.

\begin{lemma}
For any regular expression over the set of particle states in which each
state appears at most once, we can use illegal pairs to ensure that
any legal standard basis state for the system is a substring of
a string in the regular set.
\end{lemma}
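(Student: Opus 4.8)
The plan is to exploit the structure of a regular expression in which each symbol appears at most once. Such a regular expression is, up to the usual constructions, recognized by a deterministic finite automaton whose state set can be taken to be very small — in fact, since no particle state $s$ can appear more than once in any string of the language, after reading an occurrence of $s$ the automaton ``knows'' exactly how far along the regular expression it is. The key combinatorial fact I would use is the following: for a regular expression $E$ in which each alphabet symbol occurs at most once, a string $w$ fails to be a substring of any word in $L(E)$ if and only if $w$ contains, as a \emph{contiguous substring of length two}, some pair $ab$ that does not appear consecutively anywhere in any word of $L(E)$, \emph{or} $w$ repeats some symbol, \emph{or} $w$ violates the ``ordering'' imposed by $E$ in a way that is again witnessed by an adjacent pair once we take the single-occurrence property into account. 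The point is that single-occurrence regular expressions are ``locally testable'': membership of $w$ as a factor of some word of $L(E)$ is determined by the set of length-$\le 2$ factors of $w$.

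Concretely, I would proceed as follows. First, define $P_{\mathrm{bad}}$ to be the set of ordered pairs $(a,b)$ of particle states such that $ab$ is \emph{not} a factor of any string in $L(E)$. For each $(a,b)\in P_{\mathrm{bad}}$, include the Type I term $\ketbra{ab}{ab}$ in the Hamiltonian; this forbids $ab$ from appearing adjacently in any legal standard basis state. Second, I claim that a standard basis state $w$ (the string of particle states across the chain) avoids all pairs in $P_{\mathrm{bad}}$ if and only if $w$ is a factor of some string in $L(E)$. The ``if'' direction is immediate from the definition of $P_{\mathrm{bad}}$. For the ``only if'' direction, I would argue by building a word of $L(E)$ containing $w$: using the single-occurrence property, each symbol of $E$ corresponds to at most one position, so the set of legal adjacent pairs (the complement of $P_{\mathrm{bad}}$) essentially \emph{is} the transition relation of the minimal automaton for $E$; a walk in this ``pair graph'' that uses each vertex at most once — which any factor of a single-occurrence word must, since repeated symbols are themselves bad pairs when $E$ has no symbol twice, or are handled by also putting $(a,a)$ into $P_{\mathrm{bad}}$ when $a$ cannot be repeated — extends to a full accepting walk, hence to a word of $L(E)$ containing $w$ as a factor.

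The main obstacle, and the place I would spend the most care, is handling the cases where $E$ genuinely allows a symbol to be adjacent to itself or allows branching (e.g.\ $E = x(y|z)^*$ type fragments restricted so each of $y,z$ appears once, or Kleene stars over single symbols). In those cases a naive ``forbid all pairs not appearing consecutively'' rule is still correct, but the proof that local consistency implies global realizability needs the observation that the only way a factor of a single-occurrence word can revisit the ``same place'' in $E$ is via a starred single symbol $x^*$, for which the pair $(x,x)$ \emph{is} allowed and poses no obstruction to extension. I would state the needed normal form for $E$ (each symbol appears at most once, so $E$ is built from distinct atoms by concatenation, union, and star, with stars applied only to sub-expressions whose language consists of words over disjoint alphabets) and then verify by induction on the structure of $E$ that: (i) the set of length-$2$ factors of $L(E)$ is computable and symmetric under the above operations, and (ii) any string all of whose length-$\le 2$ factors are factors of $L(E)$ is itself a factor of $L(E)$. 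Once (i) and (ii) are in hand, the lemma follows by taking the Type I terms indexed by $P_{\mathrm{bad}}$, and noting that ``legal'' (no Type I violation) for these terms is by construction exactly ``every adjacent pair is an allowed pair,'' which by (ii) is exactly ``is a substring of a word in the regular set.'' I would also remark that in all later applications $E$ is simple enough (a bounded concatenation of optional blocks and starred single states) that (i)–(ii) can be checked by inspection, so this lemma is really a convenience packaging of a routine argument rather than a deep statement.
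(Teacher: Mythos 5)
Your construction is exactly the paper's: since each state occurs at most once in the regular expression, the set of states that can follow a given state is well defined independently of position, and one adds an illegal pair $ab$ whenever $b$ cannot immediately follow $a$ in any word of the language; the paper's proof is just this observation together with the remark that substrings of words in the language contain no illegal pairs. One caveat: your intermediate claims that a legal factor visits each symbol at most once and that revisits can only arise from a starred single symbol $x^*$ are false in general (e.g.\ for $(ab)^*$ the factor $abab$ revisits both symbols with no repeated adjacent pair), but this does not damage the argument --- the position-automaton/local-language reachability argument you sketch, or your induction claims (i)--(ii), covers it, and in all of the paper's applications the stars are over single symbols anyway.
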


\begin{proof}
The alphabet for the regular expression is the set of particle states.
Since each character appears once in the regular expression, the set of
characters $b$ which can follow a particular character $a$ is well defined
and does not depend on where the character appears in the string.
Therefore, we can add an illegal pair $ab$ if $b$ is not one of the characters
which can follow $a$. Any substring of the regular expression has no illegal pairs.
\end{proof}

\subsection{Outline of the Construction}

We give now a brief outline of the construction and provide the full
details in the subsections that follow.
Illegal pairs are used to enforce that the state of Track 1 is always of the
form $\leftend \blankL^* ( \arrR + \arrL) \blankR^* \rightend$.  (The $+$ denotes the
regular expression {\sc OR} and not a quantum superposition.)
There is one arrow symbol on Track 1 that shuttles back and forth between the left end and the right end
and operates as a second hand for our clock. We call one round trip of the arrow
on Track 1 an {\em iteration}.
Every iteration has $2(N-2)$
distinct states and $2(N-2)$ transitions. Each iteration causes one change
in the configuration on Track 2 which acts then as a minute hand for the clock.
The Track 2  states are partitioned into two phases. The first phase is called the
{\em Counting Phase} and consists of all $N-2$ of the states
of the form $\leftend \one^* \zeroB \zero^*  \rightend$. The second phase is the
{\em Computation Phase} and consists of
all $N-2$ of the states of the form
$\leftend \one^* \oneB \two^*\rightend$ states.
The $\leftend \one^* \zeroB \zero^*  \rightend$ states are ordered according to the number of
particles in state $\one$ and
the $\leftend \one^* \oneB \two^*  \rightend$ states are ordered according to the number of
particles in state $\two$.
The state immediately after $\leftend \one^* \zeroB  \rightend$  in the ordering
is $\leftend \one^* \oneB  \rightend$.
The target ground state for the clock is the uniform superposition of all the clock
states, entangled appropriately with states of the other $4$ tracks.
We need to have illegal pairs that cause all other states to have an energy
cost. As is the case in other such proofs, it is not possible to disallow all states
directly with illegal pairs. Instead, we need to show that some states are
unfavorable because they evolve via forward or backwards transitions to
high energy states.

Each of the arrow states for Track 1 will come in three varieties: $\arrRzero$ and $\arrLzero$ will be
used during the initial minute of the clock when it is in state $\leftend \zeroB \zero^* \rightend$
and will be used to check initial conditions on the other tracks.
$\arrRone$ and $\arrLone$ will be used during the counting phase and
$\arrRtwo$ and $\arrLtwo$ will be used during the computation phase.
$\arrRone$ and $\arrRtwo$ will be used to trigger different actions on the other tapes.
Every time the $\arrRone$ sweeps from the left end of the chain
to the right end of the chain, it causes $M_{BC}$ to execute one
more step. Thus, $M_{BC}$ is run for exactly $N-2$ steps.
The $\arrRtwo$ symbol is what causes the Turing machine
$V$ to execute a step.
We then add a term that penalizes any state which is in the
final clock state and does not have an accepting Turing machine state.
Thus, only accepting computations will have low energy.

Finally we use an additional term to enforce the boundary conditions.
This is achieved by weighting the Hamiltonian terms for the illegal pairs
and transition rules by a factor of three.
Then an additional term is applied to each particle, which gives a benefit
to any particle that is in the \leftend\ or \rightend\ state.
Only the left-most and right-most particles can obtain this
energy benefit without incurring the higher cost of having an
endpoint state in the middle of the chain.

\begin{table}
\begin{centering}
\begin{tabular}{llccc}
Phase                & Function             & Track $1$ state  & Track $2$ state  & Number of steps  \\
\hline
Initialization Phase & Check start state & $\leftend \blankL^* ( \arrRzero + \arrLzero) \blankR^* \rightend$ &
    $\leftend \zeroB \zero^*  \rightend$        & $2(N-2)$ \\
Counting Phase       & Run $M_{BC}$         & $\leftend \blankL^* ( \arrRone + \arrLone) \blankR^* \rightend$   &
    $\leftend \one^* \zeroB \zero^*  \rightend$ & $(N-2)(2N-5)$ \\
Computation Phase    & Run $V$              & $\leftend \blankL^* ( \arrRtwo + \arrLtwo) \blankR^* \rightend$   &
    $\leftend \one^* \oneB \two^*\rightend$     & $(N-2)(2N-5)$
\end{tabular}
\end{centering}
\caption{The different clock phases in the quantum construction on a line. }
\label{table:qphases}
\end{table}

\subsection{The Clock Tracks}
\label{sec:clock}

There will be four types of states in Track 1:
\arrR, \arrL, \blankL, and \blankR.  (However, the \arrR\ and \arrL\ states each come in multiple varieties --- see below for the details.)
We will have illegal pairs that will enforce that any standard basis
state with no illegal pairs must be a substring of a string from the regular  expression:
$$\leftend \blankL^* ( \arrR + \arrL) \blankR^* \rightend.$$
Furthermore, any bracketed standard basis state must be exactly a string corresponding to this
regular expression.
The transition rules then move the arrows in the direction in which they are pointing:

\begin{enumerate}
\item $\arrR \blankR \rightarrow \blankL \arrR$, $\blankL \arrL \rightarrow \arrL \blankR$: in the forward
direction, arrows move in the direction they are pointing. In the reverse direction, they move in the
opposite direction.
\item $\arrR \rightend \rightarrow \arrL \rightend$, $\leftend \arrL \rightarrow \leftend \arrR$: arrows change
direction when they hit an endpoint.
\end{enumerate}

Note that the $\arrR$ symbol always interacts with the particle on its right in the forward direction and
the particle on its left in the reverse direction.
Similarly, the $\arrL$ symbol interacts with the particle on its left in the forward direction and the
particle on the right in the reverse direction.
Therefore, we know that every legal configuration for Track 1 has exactly one transition rule that applies
in the forward direction and one that applies in the reverse direction.
The arrow symbol on Track 1 shuttles back and forth between the left end and the right end as
shown below and operates as a second hand for our clock. We call one round trip of the arrow
on Track 1 an {\em iteration}.
Every iteration has $2(N-2)$
distinct states and $2(N-2)$ transitions. Each iteration causes one change
in the configuration on Track 2 which acts then as a minute hand for the clock.

\begin{center}
\begin{tabular}{cc}
\leftend \arrR \blankR \blankR \blankR \rightend &  \leftend   \blankL \blankL \arrL \blankR \rightend \\
\leftend  \blankL \arrR \blankR \blankR \rightend  & \leftend   \blankL \arrL \blankR \blankR \rightend \\
\leftend  \blankL \blankL \arrR \blankR \rightend  &  \leftend  \arrL \blankR \blankR \blankR \rightend \\
\leftend  \blankL \blankL \blankL \arrR \rightend  & \leftend  \arrR \blankR \blankR \blankR \rightend  \\
\leftend  \blankL \blankL \blankL \arrL \rightend &  \\
\end{tabular}
\end{center}

There are five  states for Track 2: \zero, \one, \two, and \zeroB\ and \oneB.
We will have illegal pairs that will enforce that any legal configuration
on Track 2 must have the form $\leftend \one^* (\zeroB \zero^* +\oneB \two^*) \rightend$.
We will impose an ordering on the set of all possible such states and select transitions
so that in one Track 1 iteration, the state on Track 2 will advance from
one configuration to the next in the ordering.
Each state for Track 2 corresponds to a minute in our clock and
the states are partitioned into two phases. The first phase is called the
{\em Counting Phase} and consists of all the states
of the form $\leftend \one^* \zeroB \zero^*  \rightend$. The second phase is the
{\em Computation Phase} and consists of
all of the states of the form
$\leftend \one^* \oneB \two^*\rightend$.
The $\leftend \one^* \zeroB \zero^*  \rightend$ states are ordered according to the number of
particles in state $\one$ and
the $\leftend \one^* \oneB \two^*  \rightend$ states are ordered according to the number of
particles in state $\two$.
The state immediately after $\leftend \one^* \zeroB  \rightend$  in the ordering
is $\leftend \one^* \oneB  \rightend$. The ordering of the states for Track 2 is shown
below for a six particle system.

\begin{center}
\begin{tabular}{cc}
\leftend \zeroB \zero \zero \zero \rightend &  \leftend   \one \one \oneB \two \rightend \\
\leftend  \one \zeroB \zero \zero \rightend  & \leftend   \one \oneB \two \two \rightend \\
\leftend  \one \one \zeroB \zero \rightend  &  \leftend  \oneB \two \two \two \rightend \\
\leftend  \one \one \one \zeroB \rightend  &  \\
\leftend  \one \one \one \oneB \rightend &  \\
\end{tabular}
\end{center}

The arrow particle on Track 1 will cause a transition from one Track 2 state to  the next.
Each of the arrow states for Track 1 will come in three varieties: $\arrRzero$ and $\arrLzero$ will be
used during the initial minute of the clock when it is in state $\leftend \zeroB \zero^* \rightend$
and will be used to check initial conditions on the other tracks.
$\arrRone$ and $\arrLone$ will be used during the counting phase and
$\arrRtwo$ and $\arrLtwo$ will be used during the computation phase.
We need to describe how these different Track 1 symbols trigger transitions on
Track 2. The transitions for Track 1 will remain as described in that
the transition $\arrR \blankR \rightarrow \blankL \arrL$ will always cause a transition
from one of the $\arrR$ symbols to one of the other $\arrR$ symbols, but we need to
specify which $\arrR$ symbols are used in order to ensure that forward and backwards
transitions remain unique for each standard basis state.

We will use ordered pairs to denote a combined Track 1 and Track 2 state for
a particle as in $[ \blankR, \one ]$. The states for the left-most and right-most particles
are not divided into tracks and are simply $\leftend$ or $\rightend$. As space
permits we will denote the states for the different tracks vertically aligned. Examples
for neighboring particle states are given below:
$$\threecellsL{\arrR}{\two} \hspace{1in}
\threecellsR{\arrR}{\two} \hspace{1in}
\fourcells{\arrR}{\one}{\blankR}{\oneB}.$$
We will use the symbol \variable\ to denote a variable state. Thus $[\arrR, \variable]$ is
used to denote any state in which the Track 1 state is \arrR.

{\bf Clock States for the Initialization:}
When Track 1 has a $\arrRzero$ or $\arrLzero$ symbol, we want to ensure that
Track 2 is in state $\leftend \zeroB \zero^* \rightend$.
Therefore, we disallow  $[{\arrRzero},{\variable}]$ and
$[{\arrLzero},{\variable}]$ for any \variable\ that is not \zero\ or \zeroB.
We have the usual transitions that advance the Track 1 arrow:
$$\fourcells{\arrRzero}{\variable}{ \blankR}{\variable} \rightarrow \fourcells{\blankL } {\variable}{\arrRzero}{\variable} \hspace{1in}
\fourcells{\blankL }{\variable}{\arrLzero}{\variable} \rightarrow \fourcells{\arrLzero}{\variable}{ \blankR}{\variable} \hspace{1in}
\threecellsR{\arrRzero }{\variable} \rightarrow \threecellsR{\arrLzero }{\variable}$$
These happen regardless of the values on the other tracks and do not change
the values on the other Tracks.
The state $\leftend [{\arrRzero},{\zeroB}]$ does not have a transition in the
reverse direction since this only occurs in the initial state.
Finally, we have the transition $$\threecellsL {\arrLzero}{\zeroB} \rightarrow
\threecellsL{\arrRone}{\zeroB}$$
The presence of the $\zeroB$ on Track 2 specifies that $\arrRone$ should transition
to $\arrLzero$ in the reverse direction instead of \arrLone. The initial iteration of the second hand
is demonstrated on a chain of length six in figure~\ref{fig:initialization}.

\begin{figure}[htbp]
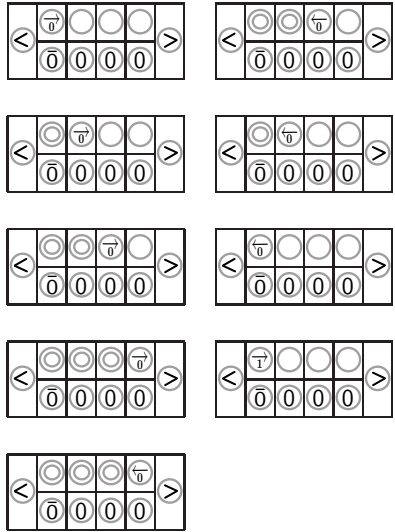

\begin{center}
\begin{tabular}{cc}
\eightcells{\arrRzero}{ \blankR}{ \blankR}{ \blankR}{\zeroB}{\zero}{\zero}{\zero}
&  \eightcells{\blankL}{ \blankL }{\arrLzero}{ \blankR}{\zeroB}{\zero}{\zero}{\zero} \\
& \\
\eightcells{  \blankL }{\arrRzero}{ \blankR}{ \blankR}{\zeroB}{\zero}{\zero}{\zero}
 & \eightcells{   \blankL }{\arrLzero}{ \blankR}{ \blankR }{\zeroB}{\zero}{\zero}{\zero} \\
& \\
\eightcells{  \blankL }{\blankL}{ \arrRzero }{\blankR }{\zeroB}{\zero}{\zero}{\zero}
&  \eightcells{  \arrLzero}{ \blankR}{ \blankR}{ \blankR }{\zeroB}{\zero}{\zero}{\zero} \\
& \\
\eightcells{  \blankL}{ \blankL}{ \blankL}{ \arrRzero}{\zeroB}{\zero}{\zero}{\zero}
  & \eightcells{  \arrRone}{ \blankR}{ \blankR}{ \blankR}{\zeroB}{\zero}{\zero}{\zero} \\
& \\
\eightcells{  \blankL}{ \blankL }{\blankL}{ \arrLzero }{\zeroB}{\zero}{\zero}{\zero}&  ~~\\
\end{tabular}
\caption{The initial iteration of the second hand on a chain of length six}
\label{fig:initialization}
\end{center}
\end{figure}

{\bf Clock States for the Counting Phase:}
During the counting phase, Track 2 will always be in some state of the form
$\leftend \one^* \zeroB \zero ^* \rightend$, so we will forbid the states
$[{\arrRone},{\variable}]$ and $[{\arrLone},{\variable}]$ when \variable\ is \two\ or \oneB.
The right-moving transitions will remain unchanged: $\arrRone \blankR \rightarrow \blankL \arrRone$.
These occur regardless of the contents of Track 2 and do not effect any change on Track 2.
The change in direction at the right end will depend on the contents of Track 2:
$$\threecellsR{\arrRone}{\zero}  \rightarrow \threecellsR{\arrLone}{\zero} \hspace{1in}
\threecellsR{\arrRone}{\zeroB}  \rightarrow \threecellsR{\arrLtwo}{\oneB} $$
The latter transition triggers the transition to the computation phase.
In the left-moving direction, the arrow will sweep past pairs of \zero\ particles
and pairs of \one\ particles:
$$\fourcells{\blankL}{\zero}{\arrLone}{\zero} \rightarrow \fourcells{\arrLone}{\zero}{\blankR}{\zero} \hspace{1in}
\fourcells{\blankL}{\one}{\arrLone}{\one} \rightarrow \fourcells{\arrLone}{\one}{\blankR}{\one}.$$
When the arrow on Track 1 sweeps left and meets the $\zeroB$ particle on Track 2,
it triggers an advance of the minute hand. This does not change the
transition on Track 1:
$$\fourcells{\blankL}{\zeroB}{\arrLone}{\zero} \rightarrow \fourcells{\arrLone}{\one}{\blankR}{\zeroB}.$$
Note that the $\arrLone$ never coincides with the $\zeroB$, so we will disallow the state $[\arrLone, \zeroB]$.
Finally, we have that \arrLone\ must turn at the left end:
$$ \threecellsL{\arrLone}{\one} \rightarrow  \threecellsL{\arrRone}{\one}.$$
We illustrate in figure~\ref{fig:counting} the first iteration of the counting phase and then the last iteration in
the counting phase. The very last transition
illustrates the transition to the computation phase.

\begin{figure}[htbp]
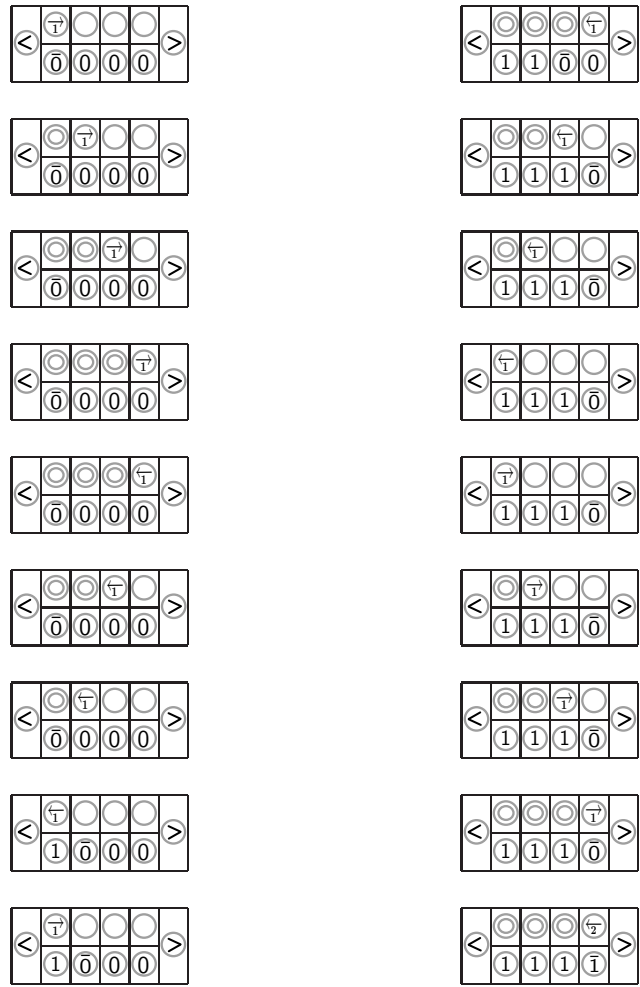

\begin{center}
\begin{tabular}{cc}

\begin{tabular}{c}
First iteration of the counting phase\\
\\
\eightcells{\arrRone}{ \blankR}{ \blankR}{ \blankR}{\zeroB}{\zero}{\zero}{\zero} \\  \\
\eightcells{\blankL}{ \arrRone}{ \blankR}{ \blankR}{\zeroB}{\zero}{\zero}{\zero} \\  \\
\eightcells{\blankL}{ \blankL}{ \arrRone}{ \blankR}{\zeroB}{\zero}{\zero}{\zero} \\  \\
\eightcells{\blankL}{ \blankL}{ \blankL}{ \arrRone}{\zeroB}{\zero}{\zero}{\zero} \\  \\
\eightcells{\blankL}{ \blankL}{ \blankL}{ \arrLone}{\zeroB}{\zero}{\zero}{\zero} \\  \\
\eightcells{\blankL}{ \blankL}{ \arrLone}{ \blankR}{\zeroB}{\zero}{\zero}{\zero} \\  \\
\eightcells{\blankL}{ \arrLone}{ \blankR}{ \blankR}{\zeroB}{\zero}{\zero}{\zero} \\  \\
\eightcells{\arrLone}{ \blankR}{ \blankR}{ \blankR}{\one}{\zeroB}{\zero}{\zero} \\  \\
\eightcells{\arrRone}{ \blankR}{ \blankR}{ \blankR}{\one}{\zeroB}{\zero}{\zero} \\  \\
\end{tabular}

\begin{tabular}{c}
Last iteration of the counting phase\\
\\
\eightcells{\blankL}{ \blankL}{ \blankL}{ \arrLone}{\one}{\one}{\zeroB}{\zero} \\  \\
\eightcells{\blankL}{ \blankL}{ \arrLone}{ \blankR}{\one}{\one}{\one}{\zeroB} \\  \\
\eightcells{\blankL}{ \arrLone}{ \blankR}{ \blankR}{\one}{\one}{\one}{\zeroB} \\  \\
\eightcells{\arrLone}{ \blankR}{ \blankR}{ \blankR}{\one}{\one}{\one}{\zeroB} \\  \\
\eightcells{\arrRone}{ \blankR}{ \blankR}{ \blankR}{\one}{\one}{\one}{\zeroB} \\  \\
\eightcells{\blankL}{ \arrRone}{ \blankR}{ \blankR}{\one}{\one}{\one}{\zeroB} \\  \\
\eightcells{\blankL}{ \blankL}{ \arrRone}{ \blankR}{\one}{\one}{\one}{\zeroB} \\  \\
\eightcells{\blankL}{ \blankL}{ \blankL}{ \arrRone}{\one}{\one}{\one}{\zeroB} \\  \\
\eightcells{\blankL}{ \blankL}{ \blankL}{ \arrLtwo}{\one}{\one}{\one}{\oneB} \\  \\
\end{tabular}

\end{tabular}
\caption{The first iteration and last iterations of the counting phase. The very last transition
illustrates the transition to the computation phase.}
\label{fig:counting}
\end{center}
\end{figure}

{\bf Clock States for the Computation Phase:}
During this time, Track 2 will always be in some state of the form
$\leftend \one^* \oneB \two ^* \rightend$, so we will forbid the states
$[{\arrRtwo},{\variable}]$ and $[{\arrLtwo},{\variable}]$ when \variable\
is \zero\ or \zeroB. Since $\arrRtwo$ should never be in the same configuration as
$\zeroB$, we will also disallow the pair $[\arrRtwo, \one][\blankR, \zeroB]$.
The left-moving transitions will remain unchanged: $\blankL \arrLtwo \rightarrow \arrLtwo \blankR $.
These occur regardless of the contents of Track 2 and do not effect any change on Track 2.
The change in direction at the left end will happen as long as there is a \one\ left in Track 2:
$$\threecellsL {\arrLtwo}{\one}   \rightarrow \threecellsL {\arrRtwo}{\one}.$$
When the state on Track 2 becomes $\leftend \oneB \two^* \rightend$ and the
left-moving arrow on Track 1 reaches the left end of the chain, we want the clock
to stop since this is the very last state. Thus,
there is no forward transition out of $\leftend [{\arrLtwo},{\oneB}]$.
In the right-moving direction, the arrow will sweep past pairs of \one\ particles
and pairs of \two\ particles:
$$\fourcells{\arrRtwo}{\one}{\blankR}{\one} \rightarrow \fourcells{\blankL}{\one}{\arrRtwo}{\one}
\hspace{1in}
\fourcells{\arrRtwo}{\two} {\blankR}{\two} \rightarrow \fourcells{\blankL}{\two}{\arrRtwo}{\two}$$
When the arrow on Track 1 sweeps right and meets the $\oneB$ particle on Track 2,
it triggers an advance of the minute hand:
$$\fourcells{\arrRtwo}{\one}{\blankR}{\oneB} \rightarrow
\fourcells{\blankL}{\oneB}{\arrRtwo}{\two}$$
Note that since the \arrRtwo\ never coincides with the \oneB,
we can make the state $[\arrRtwo,\oneB]$ illegal.
Finally, we have the turning at the right end:
$$  \threecellsR{\arrRtwo}{\two} \rightarrow   \threecellsR{\arrLtwo}{\two}$$
We illustrate in figure~\ref{fig:computation} the first iteration in the computation phase and then the last iteration. The very last state
shown is the final state for the clock.

\begin{figure}[htbp]
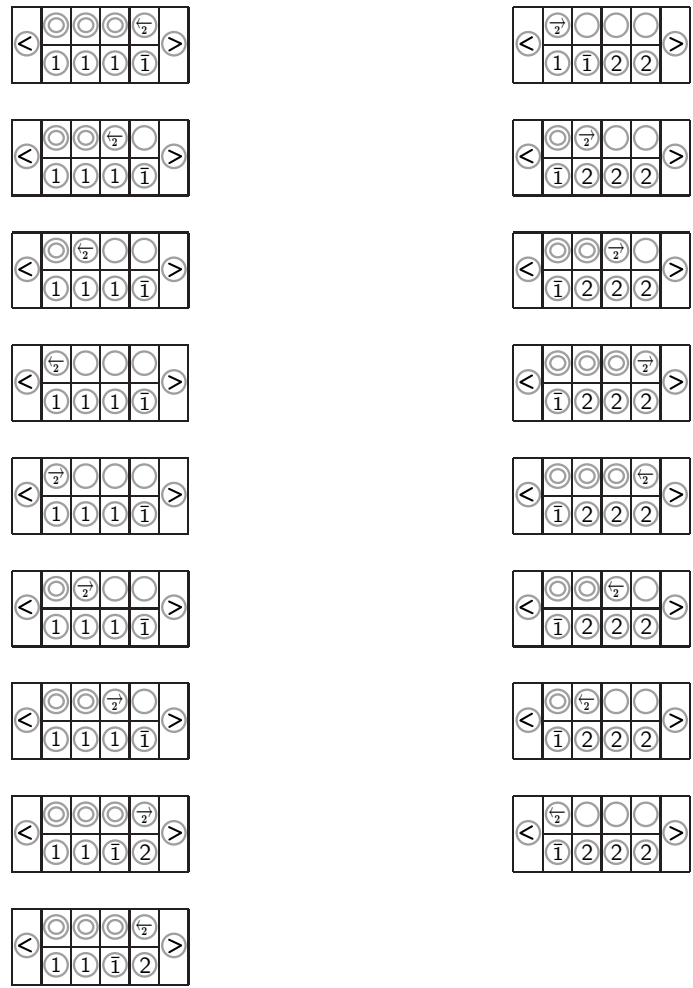

\begin{center}
\begin{tabular}{cc}

\begin{tabular}[t]{c}
First iteration of the Computation Phase\\
\\
\eightcells{\blankL}{ \blankL}{ \blankL}{ \arrLtwo}{\one}{\one}{\one}{\oneB} \\  \\
\eightcells{\blankL}{ \blankL}{ \arrLtwo}{ \blankR}{\one}{\one}{\one}{\oneB} \\  \\
\eightcells{\blankL}{ \arrLtwo}{ \blankR}{ \blankR}{\one}{\one}{\one}{\oneB} \\  \\
\eightcells{\arrLtwo}{ \blankR}{ \blankR}{ \blankR}{\one}{\one}{\one}{\oneB} \\  \\
\eightcells{\arrRtwo}{ \blankR}{ \blankR}{ \blankR}{\one}{\one}{\one}{\oneB} \\  \\
\eightcells{\blankL}{ \arrRtwo}{ \blankR}{ \blankR}{\one}{\one}{\one}{\oneB} \\  \\
\eightcells{\blankL}{ \blankL}{ \arrRtwo}{ \blankR}{\one}{\one}{\one}{\oneB} \\  \\
\eightcells{\blankL}{ \blankL}{ \blankL}{ \arrRtwo}{\one}{\one}{\oneB}{\two} \\  \\
\eightcells{\blankL}{ \blankL}{ \blankL}{ \arrLtwo}{\one}{\one}{\oneB}{\two} \\  \\
\end{tabular}

\begin{tabular}[t]{c}
Last iteration of the Computation Phase\\
\\
\eightcells{\arrRtwo}{ \blankR}{ \blankR}{ \blankR}{\one}{\oneB}{\two}{\two} \\  \\
\eightcells{\blankL}{ \arrRtwo}{ \blankR}{ \blankR}{\oneB}{\two}{\two}{\two} \\  \\
\eightcells{\blankL}{ \blankL}{ \arrRtwo}{ \blankR}{\oneB}{\two}{\two}{\two} \\  \\
\eightcells{\blankL}{ \blankL}{ \blankL}{ \arrRtwo}{\oneB}{\two}{\two}{\two} \\  \\
\eightcells{\blankL}{ \blankL}{ \blankL}{ \arrLtwo}{\oneB}{\two}{\two}{\two} \\  \\
\eightcells{\blankL}{ \blankL}{ \arrLtwo}{ \blankR}{\oneB}{\two}{\two}{\two} \\  \\
\eightcells{\blankL}{ \arrLtwo}{ \blankR}{ \blankR}{\oneB}{\two}{\two}{\two} \\  \\
\eightcells{\arrLtwo}{ \blankR}{ \blankR}{ \blankR}{\oneB}{\two}{\two}{\two} \\  \\
\end{tabular}

\end{tabular}
\caption{The first and last iterations of the computation phase.  The very last state
shown is the final state for the clock.}
\label{fig:computation}
\end{center}
\end{figure}

Before describing the states and transitions for the other  tracks, we will
pause to consider the Hamiltonian created so far.
A single two-particle term will be the sum of the terms from the transition rules and
the illegal pairs described so far.
Suppose that a state of a particle is described only by its state on Tracks 1 and 2.
Let $H_N$ be the system resulting from applying this term to every neighboring
pair in a chain of $N$ particles.
What is the dimension and structure of the ground state of $H_N$ restricted to
$\calS_{br}$?

\begin{definition}
We will say that a standard basis state is {\em well-formed} if the Track 1 state
corresponds to a string in the regular language $\leftend \blankL^* ( \arrR + \arrL) \blankR^* \rightend$
and if  the Track 2 state corresponds to a string in the regular language
$\leftend \one^* (\zeroB \zero^* +\oneB \two^*) \rightend$.
\end{definition}

Note that if a standard basis state is well-formed it must be in $\calS_{br}$.

\begin{lemma}
\label{lem:well-formed}
Consider a well-formed standard basis state.
There is at most one transition rule that applies to the state in the
forward direction and at most one that applies to the state in the reverse
direction. Furthermore, the set of
well-formed states is closed under the transition rules.
\end{lemma}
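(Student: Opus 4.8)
The plan is to verify the two claims --- uniqueness of the applicable forward/reverse transition, and closure of the well-formed set --- by a direct case analysis over the transition rules listed in Section~\ref{sec:clock}, organized by the phase that the Track~2 state sits in. First I would observe that a well-formed standard basis state is completely determined by three pieces of data: the position and variety ($\arrRzero/\arrRone/\arrRtwo$ or the left-moving versions) of the unique arrow on Track~1, the number of $\one$'s on Track~2, and whether the ``bar'' particle on Track~2 is $\zeroB$ or $\oneB$ (which determines the phase). Since the regular-expression structure of both tracks is already enforced, I only need to check which transition rules can fire and confirm each configuration admits at most one in each direction.

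The key steps, in order: (1) For the \emph{uniqueness in the forward direction}, split on the arrow variety and its location. The interior moves ($\arrR\,\blankR \to \blankL\,\arrR$ and $\blankL\,\arrL \to \arrL\,\blankR$, in all three varieties) each involve the unique boundary between the arrow and an adjacent blank, so at most one applies; the turning rules at $\leftend$ and $\rightend$ apply only when the arrow is at the corresponding end, which is mutually exclusive with the interior moves because the arrow is then adjacent to a bracket, not a blank. The phase-transition-triggering rules ($\threecellsR{\arrRone}{\zeroB}\to\threecellsR{\arrLtwo}{\oneB}$ and $\threecellsL{\arrLzero}{\zeroB}\to\threecellsL{\arrRone}{\zeroB}$) and the minute-hand-advance rules (e.g.\ $\fourcells{\blankL}{\zeroB}{\arrLone}{\zero}\to\fourcells{\arrLone}{\one}{\blankR}{\zeroB}$) are distinguished from the generic moves by the presence of the bar particle $\zeroB$ or $\oneB$ on Track~2 at the relevant location, and the illegal pairs (such as $[\arrRone,\two]$, $[\arrLone,\zeroB]$, $[\arrRtwo,\oneB]$, $[\arrRtwo,\one][\blankR,\zeroB]$) rule out the spurious overlaps between phases. (2) The \emph{reverse direction} is symmetric, with the one subtlety being the branch point at $\leftend$: a left-moving $\arrRone$ at the left end could in principle have come from $\arrLone$ or $\arrLzero$, but the rule $\threecellsL{\arrLzero}{\zeroB}\to\threecellsL{\arrRone}{\zeroB}$ versus $\threecellsL{\arrLone}{\one}\to\threecellsL{\arrRone}{\one}$ is disambiguated by whether the Track~2 particle there is $\zeroB$ or $\one$; similarly the $\arrLtwo$ at the right end coming from $\arrRtwo$ via the turning rule versus from $\arrRone$ via $\threecellsR{\arrRone}{\zeroB}\to\threecellsR{\arrLtwo}{\oneB}$ is disambiguated by the $\two$ versus $\oneB$ on Track~2. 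Also, $\leftend[\arrRzero,\zeroB]$ must be checked to have no reverse transition (it is the initial state). (3) For \emph{closure}, I would check that each transition rule, applied to a well-formed state, produces a state whose Track~1 string still matches $\leftend\blankL^*(\arrR+\arrL)\blankR^*\rightend$ and whose Track~2 string still matches $\leftend\one^*(\zeroB\zero^*+\oneB\two^*)\rightend$: the arrow-moving and arrow-turning rules manifestly preserve both regular languages; the minute-hand advances convert a $\zeroB$ to a $\one$ while shifting the $\zeroB$ one step right (or convert the $\oneB\two$ pattern appropriately), which preserves the Track~2 language; and the two phase-transition rules send $\leftend\one^*\zeroB\rightend$ to $\leftend\one^*\oneB\rightend$, again a legal Track~2 string.

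The main obstacle I anticipate is not any single hard argument but the bookkeeping of mutual exclusivity: ensuring that I have accounted for \emph{every} pair of rules that could conceivably both apply to some well-formed configuration, especially at the four ``special'' configurations (arrow at either end, in either the counting or computation phase, at the moment the minute hand is about to flip). The delicate point is that several rules share the same Track~1 fragment and are distinguished only by Track~2 contents or by an auxiliary illegal pair; I would need to be careful that the illegal pairs listed in Section~\ref{sec:clock} genuinely cover all these collisions --- in particular that states like $[\arrLone,\zeroB]$, $[\arrRtwo,\oneB]$, and the length-two forbidden pattern $[\arrRtwo,\one][\blankR,\zeroB]$ are exactly what is needed to keep forward and reverse transitions unique. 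Once those are verified, the closure claim follows quickly from inspection of each rule's effect on the two regular languages.
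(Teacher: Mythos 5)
Your proposal is correct and takes essentially the same route as the paper's own proof, which simply observes (via the rule summary in Table~\ref{table:transitionrules12}) that no combined Track~1/2 pair appears twice on the left-hand side of a transition rule, none appears twice on the right-hand side, and each rule visibly preserves the two regular languages; your case analysis by phase and arrow variety is just this check spelled out. One small caution: well-formedness does not presuppose legality, so you should not lean on the illegal pairs (e.g.\ $[\arrLone,\zeroB]$, $[\arrRtwo,\oneB]$) to dismiss phase-mismatched configurations --- this is harmless, though, because at those configurations no transition rule matches at all, so the ``at most one'' claim holds there regardless.
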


\begin{proof} The rules are summarized in table~\ref{table:transitionrules12}.
If only the Track 1 states are specified then the rule holds for any states on Track 2 and
does not alter the Track 2 states.
No pair appears twice
on the left-hand side of a transition rule
and no pair appears twice in the right-hand side of a transition rule.
It can be verified that the application of a transition rule maintains the condition of
being well-formed.

\begin{table}[htbp]
\[
\begin{array}{ccc}

\begin{array}[t]{|c|}
\hline
\mbox{Right-moving Control} \\
\hline
\arrRzero \rightend \rightarrow \arrLzero \rightend \\
\hline
\arrRzero \blankR \rightarrow \blankL \arrRzero \\
\hline
\\
\threecellsR{\arrRone}{\zero} \rightarrow \threecellsR{\arrLone}{\zero} \\
\\
\hline
\\
\threecellsR{\arrRone}{\zeroB} \rightarrow \threecellsR{\arrLtwo}{\oneB} \\
\\
\hline
\arrRone \blankR \rightarrow \blankL \arrRone \\
\hline
\arrRtwo \rightend \rightarrow \arrLtwo \rightend \\
\hline
\\
\fourcells{\arrRtwo}{\one}{\blankR}{\one} \rightarrow \fourcells {\blankL}{\one}{\arrRtwo}{\one} \\
\\
\hline
\\
\fourcells{\arrRtwo}{\two}{\blankR}{\two} \rightarrow \fourcells {\blankL}{\two}{\arrRtwo}{\two} \\
\\
\hline
\\
\fourcells{\arrRtwo}{\one}{\blankR}{\oneB} \rightarrow \fourcells {\blankL}{\oneB}{\arrRtwo}{\two} \\
\\
\hline
\\
\threecellsR{\arrRtwo}{\two} \rightarrow \threecellsR{\arrLtwo}{\two} \\
\\
\hline

\end{array}

\hspace{1in}

\begin{array}[t]{|c|}
\hline
\mbox{Left-moving Control} \\
\hline
\\
 \threecellsL {\arrLzero}{ \zeroB} \rightarrow  \threecellsL{ \arrRone}{\zeroB} \\
\\
\hline
\blankL \arrLzero  \rightarrow \arrLzero \blankR \\
\hline
\\
\threecellsL {\arrLone}{ \one} \rightarrow  \threecellsL{ \arrRone}{\one} \\
\\
\hline
\\
\fourcells{\blankL}{\zero}{\arrLone}{\zero} \rightarrow \fourcells {\arrLone}{\zero}{\blankR}{\zero} \\
\\
\hline
\\
\fourcells{\blankL}{\one}{\arrLone}{\one} \rightarrow \fourcells {\arrLone}{\one}{\blankR}{\one} \\
\\
\hline
\\
\fourcells{\blankL}{\zeroB}{\arrLone}{\zero} \rightarrow \fourcells {\arrLone}{\one}{\blankR}{\zeroB} \\
\\
\hline
\blankL \arrLtwo \rightarrow \arrLtwo \blankR \\
\hline
\\
\threecellsL{\arrLtwo}{\one} \rightarrow \threecellsL{\arrRtwo}{\one}\\
\\
\hline
\end{array}

\end{array}
\]
\caption{The transition rules for Tracks $1$ and $2$.}
\label{table:transitionrules12}
\end{table}

\end{proof}

Consider the set of states that correspond to the valid clock states beginning with
$\leftend \arrRzero \blankR^* \rightend$ on Track 1 and
$\leftend \zeroB \zero^* \rightend$ on Track 2 and ending with $\leftend \arrLtwo \blankR^* \rightend$
on Track 1 and $\leftend \oneB \two^* \rightend$ on Track 2.
There are exactly $4(N-2)^2$ such states in this sequence.
Let $\ket{\phi_{cl}}$ be  the uniform superposition of these states.

\begin{lemma}
\label{lem:clairvoyance}
Consider a well-formed standard basis state $s$ that is not in the
support of $\ket{\phi_{cl}}$. Then for some $i \le 2N$, $s$ will reach a state with an
illegal pair after $i$ applications of the transition rules in either
the forward or reverse direction.
\end{lemma}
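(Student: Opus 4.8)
The plan is to prove the contrapositive: if a well-formed standard basis state $s$ can be advanced in the forward direction (the direction its arrow points) for $2N$ transitions without any intermediate configuration containing an illegal pair, then $s$ is one of the $4(N-2)^2$ states in the support of $\ket{\phi_{cl}}$ (the reverse direction is entirely symmetric, so one direction suffices to obtain the lemma). First, if $s$ itself contains an illegal pair we are done with $i=0$, so assume $s$ is legal. By well-formedness, $s$ carries on Track~1 a string $\leftend\blankL^*(\arrR+\arrL)\blankR^*\rightend$ with a unique arrow of one of the six varieties $\arrRzero,\arrLzero,\arrRone,\arrLone,\arrRtwo,\arrLtwo$, and on Track~2 a string $\leftend\one^{j}\zeroB\zero^{k}\rightend$ or $\leftend\one^{j}\oneB\two^{k}\rightend$ with $j+k=N-3$. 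Our goal is to show $s$ must be a canonical clock state.

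The engine of the argument is that the arrow moves exactly one interior site per transition and, by inspection of Table~\ref{table:transitionrules12}, reverses direction \emph{only} at the two extreme interior positions (the physical ends $\leftend,\rightend$ and the marker turnarounds $\threecellsL{\arrLone}{\one}\rightarrow\threecellsL{\arrRone}{\one}$ and $\threecellsR{\arrRone}{\zeroB}\rightarrow\threecellsR{\arrLtwo}{\oneB}$, which sit adjacent to the ends). Hence, starting from $s$ and applying forward transitions, the arrow travels to one end, turns once, and sweeps to the other end, occupying every interior site within at most $(N-3)+1+(N-3)=2N-5<2N$ steps; under our hypothesis all these intermediate configurations are legal, and for a well-formed Track~2 the sweep rules never stall before reaching an end (the only legal well-formed dead ends being the canonical clock endpoints, see below). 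Now apply the cross-track illegal pairs: $[\arrRzero,\variable]$ and $[\arrLzero,\variable]$ are illegal unless $\variable\in\{\zero,\zeroB\}$; $[\arrRone,\variable]$, $[\arrLone,\variable]$ are illegal for $\variable\in\{\oneB,\two\}$ and $[\arrLone,\zeroB]$ is illegal; $[\arrRtwo,\variable]$, $[\arrLtwo,\variable]$ are illegal for $\variable\in\{\zero,\zeroB\}$ and $[\arrRtwo,\oneB]$ is illegal. Since the arrow passes over every Track~2 symbol while staying legal, Track~2 is forced to be uniformly compatible with the arrow variety: $\leftend\zeroB\zero^{N-3}\rightend$ for $\arrRzero/\arrLzero$, $\leftend\one^{j}\zeroB\zero^{k}\rightend$ for $\arrRone/\arrLone$, and $\leftend\one^{j}\oneB\two^{k}\rightend$ for $\arrRtwo/\arrLtwo$.

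With Track~2 pinned down, Lemma~\ref{lem:well-formed} makes the forward and backward orbit of $s$ completely deterministic, so it remains to check against Table~\ref{table:transitionrules12} that this orbit is exactly one iteration of the clock inside the canonical chain. In the counting case the sweeps are Track-2 agnostic, the right-end turn stays in the counting phase (producing $\arrLone$) when the final Track-2 symbol is $\zero$ and becomes the phase change $(\arrRone,\zeroB)\rightarrow(\arrLtwo,\oneB)$ when it is $\zeroB$, the left-end turn is the forced $(\arrLone,\one)\rightarrow(\arrRone,\one)$, and the unique minute-hand advance $\fourcells{\blankL}{\zeroB}{\arrLone}{\zero}\rightarrow\fourcells{\arrLone}{\one}{\blankR}{\zeroB}$ occurs when the left-moving arrow meets the marker, carrying Track~2 to $\leftend\one^{j+1}\zeroB\zero^{k-1}\rightend$; the initialization case ($\arrRzero/\arrLzero$) and computation case ($\arrRtwo/\arrLtwo$) are analogous, and one further verifies that the reverse transition out of the first computation-phase state reconstructs $\leftend\one^{N-3}\zeroB\rightend$ together with an $\arrRone$ arrow, so the counting and computation segments glue consistently. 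Thus $s$ lies on the canonical chain and is in the support of $\ket{\phi_{cl}}$; taking the contrapositive gives the lemma, with the promised index being $i\le 2N-5<2N$ (and $i=0$ if $s$ is already illegal).

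The main obstacle is the case bookkeeping in the last paragraph: for each of the three phases one must confirm that, once Track~2 is fixed, the Track-2–determined deterministic dynamics never branch off into a non-canonical ``parasitic'' orbit, that the counting/computation boundary is crossed consistently in both directions, and that the only legal well-formed configurations with no outgoing forward (resp.\ backward) transition are $\leftend[\arrLtwo,\oneB]\blankR^*\rightend$ (resp.\ $\leftend[\arrRzero,\zeroB]\blankR^*\rightend$), so that a sweep which stalls can only stall at a genuine clock endpoint. This is a finite but somewhat tedious verification over Table~\ref{table:transitionrules12} (together with setting aside degenerate small $N$, which never occur since we only use $N\ge N_0$); the conceptual core --- ``a full sweep of the arrow forces Track~2 into its canonical form, after which determinism does the rest'' --- is short.
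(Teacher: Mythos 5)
Your proposal is correct and is essentially the paper's own argument packaged contrapositively: both hinge on the fact that within one round trip of at most $2N$ steps the Track-1 arrow visits every interior site, so the cross-track illegal pairs detect any mismatch between the arrow variety and the Track-2 phase, while the phase-compatible well-formed states are exactly the support of $\ket{\phi_{cl}}$. One minor inaccuracy: your parenthetical claim that the only legal well-formed backward dead end is the initial clock state is not quite right (a mid-chain $[\arrLone,\one]$ with $[\blankR,\oneB]$ immediately to its right has no reverse transition), but since your argument uses only forward evolution this does not affect the proof.
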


\begin{proof}
We will argue that for each well-formed standard basis state, it is either in the support of
$\ket{\phi_{cl}}$, has an illegal pair or is within $2N$ transitions of a state which
contains an illegal pair. In some cases, we have added penalties for a particular
particle state (e.g., $[\arrRzero,\two]$).
This can easily be handled with illegal pairs by making any pair illegal which contains
that particular state.

Every Track 1 configuration of the form $\leftend \blankL^* ( \arrRzero + \arrLzero) \blankR^* \rightend$
occurs in $\ket{\phi_{cl}}$. Moreover, for every standard basis state in the support of $\ket{\phi_{cl}}$,
if the control state is $\arrRzero$ or $\arrLzero$, then Track 2 is in state
$\leftend \zeroB \zero^* \rightend$. Now consider a state in which
 Track 2 has a $\one$, $\oneB$ or $\two$ particle and
the control particle on Track 1 is $\arrRzero$ or $\arrLzero$. The control particle will transition
(in either the forward or reverse direction) towards the $\one$, $\oneB$ or $\two$ particle
on Track 2 and eventually they will coincide. This happens in fewer than $N$ moves and
will result in an illegal state.

Now consider the standard basis states whose control particle is $\arrRone$ or $\arrLone$.
Every possible combination of $\leftend \blankL^* ( \arrRone + \arrLone) \blankR^* \rightend$
occurs in $\ket{\phi_{cl}}$ with every possible combination of $\leftend \one^* \zeroB \zero^* \rightend$
except those where the control particle is $\arrLone$ and coincides with the
$\zeroB$. The state $[\arrLone, \zeroB]$ is an illegal particle state.
So we now need to take care of the case where the
control particle on Track 1 is $\arrRone$ or $\arrLone$ and  Track 2 is a $\leftend \one^* \oneB \two^* \rightend$
state. Since it is illegal for a $\arrRone$ or $\arrLone$ to coincide with a $\oneB$ or $\two$
on Track 2, we can assume that the control  particle on Track 1 is
over a $\one$ particle on Track 2.
In this case, it will transition in the forward direction in fewer than $2N$ steps (possibly
turning at the left end of the chain) until
it reaches the $\oneB$. This will result in a particle in state $[\arrRone, \oneB]$ which is illegal.

Finally, consider the case where the control particle is $\arrRtwo$ or $\arrLtwo$.
Every possible combination of the expression $\leftend \blankL^* ( \arrRtwo + \arrLtwo) \blankR^* \rightend$
occurs in $\ket{\phi_{cl}}$ with every possible combination of $\leftend \one^* \oneB \two^* \rightend$
except those where the control particle is $\arrRtwo$ and coincides with the
$\oneB$. The state $[\arrRtwo,\oneB]$ is illegal. So we now need to take care of the case where the
control particle on Track 1 is $\arrRtwo$ or $\arrLtwo$ and  Track 2 is a $\leftend \one^* \zeroB \zero^* \rightend$
state. Since it is illegal for a $\arrRtwo$ or $\arrLtwo$ to coincide with a $\zeroB$ or $\zero$
on Track 2, we can assume that the control  particle on Track 1 is
over a $\one$ particle on Track 2.
In this case, it will transition in the forward direction in fewer than $2N$ steps (possibly
turning at the left end of the chain) until
it is just before  the $\zeroB$ on Track 2. This will result in a pair $[\arrRtwo, \one][\blankR, \zeroB]$ which is illegal.

\end{proof}

\begin{lemma}
\label{lem:gap}
$\ket{\phi_{cl}}$ is the unique ground state of $H_N |_{\calS_{br}}$.
All other eigenstates have an energy that is at least $\Omega(1/N^3)$.
\end{lemma}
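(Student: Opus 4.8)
The plan is to carry out the standard spectral analysis of a ``clock'' Hamiltonian built from transition rules, adapted to the two-phase clock of Section~\ref{sec:clock}. Write $H_N|_{\calS_{br}} = H_{\mathrm{trans}} + H_{\mathrm{pen}}$, where $H_{\mathrm{trans}}$ is the sum over neighbouring pairs of all Type~II (transition) terms for Tracks~$1$ and~$2$, and $H_{\mathrm{pen}}$ is the sum of the Type~I (illegal-pair) terms; following the proof of Lemma~\ref{lem:clairvoyance}, the few forbidden single-particle states (such as $[\arrRzero,\two]$) are first folded into illegal pairs, so that $H_{\mathrm{pen}}$ is diagonal in the standard basis and nonnegative. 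Consider the undirected ``transition graph'' $G$ whose vertices are the bracketed standard basis states and whose edges join configurations related by a single transition rule. By Lemma~\ref{lem:well-formed} every well-formed vertex has at most one predecessor and at most one successor under the rules, and a direct inspection of the rule table of Table~\ref{table:transitionrules12} shows the same for the remaining bracketed configurations; hence every connected component of $G$ is a simple path or a simple cycle. Both $H_{\mathrm{trans}}$ and $H_{\mathrm{pen}}$ leave each component invariant, so $H_N|_{\calS_{br}} = \bigoplus_{\mathcal{C}} H_{\mathcal{C}}$, and on a component $\mathcal{C}$ one has $H_{\mathcal{C}} = \tfrac12 \Delta_{\mathcal{C}} + D_{\mathcal{C}}$, with $\Delta_{\mathcal{C}}$ the graph Laplacian of $\mathcal{C}$ and $D_{\mathcal{C}} \ge 0$ the diagonal matrix that is $1$ on every vertex of $\mathcal{C}$ carrying an illegal pair and $0$ elsewhere.

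Next I would single out the clock component. The $4(N-2)^2$ configurations appearing in $\ket{\phi_{cl}}$ form a simple path $\mathcal{C}_0$ in $G$: its first vertex ($\leftend\arrRzero\blankR^*\rightend$ on Track~$1$, $\leftend\zeroB\zero^*\rightend$ on Track~$2$) has no predecessor and its last vertex ($\leftend\arrLtwo\blankR^*\rightend$ on Track~$1$, $\leftend\oneB\two^*\rightend$ on Track~$2$) has no successor, so the component is exactly this path. Every configuration on it is a legal clock configuration and hence carries no illegal pair, so $D_{\mathcal{C}_0}=0$ and $H_{\mathcal{C}_0}=\tfrac12\Delta_{\mathcal{C}_0}$ is the Laplacian of a path on $T=4(N-2)^2$ vertices: its kernel is spanned by the uniform superposition $\ket{\phi_{cl}}$, and its least nonzero eigenvalue equals $1-\cos(\pi/T)=\Theta(1/T^2)=\Omega(1/\poly(N))$. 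For any other component $\mathcal{C}$, each vertex is either already penalised or is a well-formed configuration outside $\mathrm{supp}(\ket{\phi_{cl}})=\mathcal{C}_0$, and then by Lemma~\ref{lem:clairvoyance} reaches a penalised configuration within $2N$ transitions; that penalised configuration lies in $\mathcal{C}$ as well, so $D_{\mathcal{C}}\neq 0$ and $\lambda_{\min}(H_{\mathcal{C}})>0$. Thus $\mathcal{C}_0$ is the unique component with a zero-energy vector, which already yields that $\ket{\phi_{cl}}$ is the unique ground state of $H_N|_{\calS_{br}}$.

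To bound the gap it remains to show $\lambda_{\min}(H_{\mathcal{C}}) = \Omega(1/\poly(N))$ for the remaining components. Here I would use the Clairvoyance Lemma quantitatively: removing the penalised vertices of $\mathcal{C}$ cuts the path (or cycle) into ``legal runs'', and by Lemma~\ref{lem:clairvoyance} every vertex of a run lies within $2N$ steps of a penalised end of that run, so each run has length $O(N)$. Given a normalised $\ket{\psi}$ of energy $E$, the penalty part gives $\sum_{v\text{ penalised}}|\psi_v|^2 = \langle\psi|D_{\mathcal{C}}|\psi\rangle\le E$, so $\ket{\psi}$ has small mass at the ends of each run, while the Laplacian part gives $\langle\psi|\tfrac12\Delta_{\mathcal{C}}|\psi\rangle\le E$. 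Applying the path-Laplacian Poincar\'e inequality run by run then gives $\langle\psi|\Delta_{\mathcal{C}}|\psi\rangle\ge\Omega(1/N^2)\sum_{v\text{ legal}}|\psi_v|^2-O(E)$, where the $O(E)$ collects the boundary corrections (each penalised vertex touches at most two runs, so the corrections sum to $O(\sum_{v\text{ penalised}}|\psi_v|^2)=O(E)$). Combining this with $\sum_{v\text{ legal}}|\psi_v|^2\ge 1-E$ forces $E=\Omega(1/\poly(N))$; the same conclusion can be packaged via Kitaev's geometric lemma applied to $\tfrac12\Delta_{\mathcal{C}}$ and $D_{\mathcal{C}}$. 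Taking the minimum over $\mathcal{C}_0$ and the other components then establishes the polynomial spectral gap asserted by the lemma, a careful accounting of the path length $T$ and of the boundary corrections pinning down the exponent.

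The crux is this last step: producing a clean, quantitative lower bound on $\lambda_{\min}(H_{\mathcal{C}})$ for the ``bad'' components. The Clairvoyance Lemma supplies precisely the combinatorial fact that makes it possible --- that legal runs are short --- but turning that into a spectral estimate requires handling the cross terms between the smooth bulk of $\ket{\psi}$ and its amplitude on penalised vertices; the key is to charge those cross terms to the energy $E$ itself rather than to the number of penalised vertices, since otherwise the bound would deteriorate with the length of the chain. Everything else --- the path/cycle structure of the components, the identification of $\mathcal{C}_0$, the absence of illegal pairs along $\mathcal{C}_0$, and the path-Laplacian spectrum --- is routine bookkeeping on the transition rules.
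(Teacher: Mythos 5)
Your skeleton is the same as the paper's: split $H_N|_{\calS_{br}}$ into transition terms plus diagonal penalties, pass to the connected components of the transition graph, identify the clock path as the unique penalty-free component using Lemma~\ref{lem:clairvoyance}, and lower-bound the energy of every other component through the density of penalized states. Two places where your write-up diverges from the paper's proof and, as written, falls short. First, your claim that ``direct inspection of Table~\ref{table:transitionrules12}'' shows every bracketed configuration has in- and out-degree at most one is false: a non-well-formed state with, say, two right-pointing arrows on Track~1 admits two distinct forward transitions, so its component need not be a simple path or cycle. The paper sidesteps this by discarding non-well-formed bracketed states at the outset (each contains an illegal pair, and by Lemma~\ref{lem:well-formed} the span of well-formed states is invariant, so that block has energy at least $1$), and only then using the path decomposition, which is valid on well-formed states. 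Your argument is repaired the same way --- on a non-well-formed component every vertex is penalized, so $H_{\mathcal{C}} \ge D_{\mathcal{C}} \ge 1$ with no Laplacian analysis needed --- but the path/cycle structure you invoke in the Poincar\'e step is not justified for those components as you state it.

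Second, the quantitative bound. The paper simply cites the result of [QMA1D]: if a fraction $1/s$ of a path's states are illegal, every state in its span has energy $\Omega(1/s^3)$, and Lemma~\ref{lem:clairvoyance} gives $s \le 2N$, hence $\Omega(1/N^3)$. You instead sketch a self-contained run-by-run Poincar\'e argument; carried out carefully it does work (each legal run has length $O(N)$, and bounding the mass of a run by $O(N)$ times the adjacent penalty plus $O(N^2)$ times the hopping energy gives $\Omega(1/N^2)$ for the penalized paths), but you stop at ``$\Omega(1/\poly(N))$'' and explicitly defer pinning the exponent, so the specific $\Omega(1/N^3)$ asserted in the lemma is not actually established by your text. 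One point in your favor: your value $\Theta(1/T^2)=\Theta(1/N^4)$ for the gap within the clock path is the correct spectrum of the half-Laplacian of a path on $T=4(N-2)^2$ vertices (smallest nonzero eigenvalue $1-\cos(\pi/T)$), whereas the paper's proof asserts a gap of $1/L$ there; so the lowest excited clock states sit at $\Theta(1/N^4)$, and the exponent in the lemma should really be $1/N^4$. This discrepancy is harmless downstream, where only an inverse-polynomial gap is used, but it is worth flagging when you finish the quantitative step.
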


\begin{proof}
The proof of this lemma follows from the
standard techniques used for showing $\QMA$-completeness
of $1$-dimensional Hamiltonians \cite{QMA1D},
so we only give a brief sketch here.
The idea is that any standard basis state inside $\calS_{br}$ which is not
well-formed will have energy at least one from the illegal pairs, so we
can restrict our attention to well-formed states.
Now we create a graph over well-formed standard basis states. There is an edge from
state $a$ to state $b$ if $b$ can be reached from $a$ by the application
of one transition rule in the forward direction. Lemma \ref{lem:well-formed} implies that
this graph is composed of disjoint directed paths. Call $H_{trans}$ the Hamiltonian
resulting from the sum of all the terms from transition rules and $H_{legal}$ the
Hamiltonian from illegal pairs. The subspace spanned by the states in a single
path is closed under $H_{trans}$ and $H_{legal}$. Furthermore, the matrix for
$H_{trans}$ restricted to the states in a path looks like
\[
\left(
\begin{array}{rrrrrrr}
\smfrac{1}{2} & \mns \smfrac{1}{2} &0 & & \cdots& & 0 \\ \mns \smfrac{1}{2} & 1 & \mns \smfrac{1}{2} & 0 &
\ddots & & \vdots\\ 0 & \mns \smfrac{1}{2} & 1 & \mns \smfrac{1}{2} & 0 & \ddots & \vdots\\ & \ddots & \ddots
& \ddots & \ddots & \ddots & \\ \vdots& & 0 & \mns \smfrac{1}{2} &1 & \mns \smfrac{1}{2}& 0 \\ & & & 0 & \mns
\smfrac{1}{2} &1 & \mns \smfrac{1}{2} \\ 0& & \cdots& & 0&
\mns \smfrac{1}{2} & \smfrac{1}{2} \\
\end{array}
\right)
\]
where the dimension of the matrix is the length of the path.
The matrix for $H_{legal}$ is diagonal in the standard basis.
If the path contains no states with illegal pairs, then $H_{legal}$ is zero.
The unique ground state has zero energy and
is the uniform superposition of all standard basis states
in the path. Moreover, the next highest energy state has energy at least $1/L$,
where $L$ is the length of the path. In our case, Lemma \ref{lem:clairvoyance}
indicates there is exactly one path
with no illegal states which corresponds to $\ket{\phi_{cl}}$. The length of this path
is $O(N^2)$, so the next highest eigenvalue in this space is at least $\Omega(1/N^2)$.

Now consider a path with some illegal states and suppose that the ratio of illegal
states in the path to the length of the path is $1/s$. Again, it is known \cite{QMA1D}
that any state in
the subspace spanned by the states in this path will have energy be at least $1/s^3$.
By Lemma \ref{lem:clairvoyance}, we know that for any path which does not correspond
to $\ket{\phi_{cl}}$ the ratio of illegal states to the total number of
states at least $1/2N$, which means that any state
in this subspace will have energy at least $\Omega(1/N^3)$.
\end{proof}

The remaining subsections describe how the arrow on Track 1 interacts with
the other four tracks. The only role of Track 2 is to record the time.
It causes the control state on Track 1 to transition from the counting
phase to the computation phase and finally to stop iterating at the
end of the computation phase.

\subsection{Initialization Phase}

Track 3 will be used as the tape for both Turing machines. Therefore the set
of track 3 states for a particle consists of the union of the two alphabets
for the two Turing machines.
We want to ensure that this track is initialized to all blank symbols,
so
we disallow any state in which Track 1 is in \arrRzero\ and Track 3 has anything
but $\blankR$.

Track 4 will store the location and state for $M_{BC}$. Therefore, we will add
illegal pairs to ensure that the Track 4 state for the system always has
the form $\leftend \blankL^* \variable \blankR^* \rightend$, where $\variable$
is a state for $M_{BC}$.
We assume that $M_{BC}$ has a designated start state $\stateS$, so we want
to ensure that Track 4 starts out in the configuration
$\leftend \stateS \blankR^* \rightend$.
To do this, we make any state illegal in which the Track 1 state is
$\arrRzero$ and the Track 4 state is anything except $\stateS$ or $\blankR$.
Furthermore, for any pair in which the left particle is in state \leftend,
and the Track 1 state is $\arrRzero$, the Track 4 state must be \stateS.
For any pair in which the left particle is not in state \leftend,
and the Track 1 state is $\arrRzero$, the Track 4 state must be \blankR.
The initial conditions for Track 5 are similar, except that the starting
state for $V$ is used.
Since Track 6 holds the quantum witness for $V$, it can be any
state on $n$ qubits. A state of a particle is now specified by a $6$-tuple.

\subsection{Counting Phase}

Every time the $\arrRone$ sweeps from the left end of the chain
to the right end of the chain, it causes $M_{BC}$ to execute one
more step. Thus, $M_{BC}$ is run for exactly $N-2$ steps.
Since the classical, reversible Turing machine $M_{BC}$ is a special
case of a quantum Turing machine, we describe the details for the
more general case. The $\arrRtwo$ symbol is what causes the Turing machine
$V$ to execute a step.

\subsection{Computation Phase}

We will examine a particular quantum rule  and explain the desired behavior of our machine.

Consider a pair $(q,a)$, where $q \in Q$ and
$a \in \Sigma$. $a$ will encompass the state on the work tape as well
as the state on the witness tape. Since the \arrRtwo\ particle triggers
the execution of a step, we will consider particles as triplet states
of the form $[\arrRtwo, q, a]$, $[\blankR , q, a]$ and $[\blankL , q, a]$,
for $a \in \Sigma$ and $q \in Q \cup \{ \blankR \} \cup \{ \blankL \}$.
$q$ specifies the Track 5 state, and $a$ tells us the state of Track 3 and Track 6.
It will be convenient to refer to $q$ as a generic Track 5 state of a particle
which could be a state from $Q$ as well as \blankR\ or \blankL.
We will also be interested in the {\em computation state} of a particle
which just consists of a pair $[q,a]$, where
$q \in Q \cup \{ \blankR \} \cup \{ \blankL \}$ and $a \in \Sigma$.

If the TM is in state $q$ and in a location with an $a$
on the tape, the QTM defines a superposition of possible next moves.
Let $\delta(q,a,p,b,D)$ denote the amplitude that the next state
will be a configuration in which the state is $p$, the
tape symbol is overwritten by $b$ and the head moves in direction $D$.
We will use a fact established in \cite{bernsteinVazirani} that the states of
a quantum Turing machine $Q$ can be partitioned into two sets $Q_L$ and $Q_R$ such that
states in $Q_L$ can be reached only by moves in which the head moves left
and $Q_R$ can be reached only by moves in which the head moves right.
We will use $q_L$ to designate a generic element of $Q_L$ and
$q_R$  to designate a generic element of $Q_R$.

We will need to execute the moves in which the head moves left separately
from the moves in which the head moves right. In order to do this, we introduce a new state
$q_R'$ for every state $q_R \in Q_R$. We will call this set
$Q'_R$. It will be forbidden to have the Turing machine head
in a state from $Q'_R$  without also having the \arrRtwo\ state on Track 1
in the same location. It will also be forbidden to have the  \arrRtwo\ in the
same location as $q$ if $q \in Q_L$.
We will be interested in two particular subspaces defined on
the computation state for a pair of neighboring particles
$[q,a][p,b]$, where $a,b \in \Sigma$
and $q,p \in Q \cup \{ \blankR \} \cup \{ \blankL \}$.
$\calS_A$ is the space spanned by all such legal basis states such that
$q \not\in Q_L$, $p \not\in Q'_R$.  Note that if the Track 1 state for a pair of
particles is $\ket{ \arrRtwo \blankR}$, then
the only possible computation states for the pair are in $\calS_A$.
We also define $\calS_B$ to be the space spanned by all legal basis states
of the form $[q,a][p,b]$ such that
$p \not\in Q_L$, $q \not\in Q'_R$.
Similarly, if the Track 1 state for a pair of particles is $\ket{ \blankL \arrRtwo }$, then
the only possible legal computation states for the pair are in $\calS_B$.

We will describe a transformation $T$ that maps $\calS_A$ to $\calS_B$.
The extended map $T \otimes \ketbra{\blankL \arrRtwo}{\arrRtwo \blankR}$
as it is applied to each pair of particles in turn will implement a
step of the Turing machine. We will call this extended map a {\em transition  rule} since it
carries one state to another in the same way that the transition rules for the clock state did.
The difference now is that the states will in general be quantum states.

The transformation on $\calS_A$ works in two parts.
At the moment that the Track 1 arrow moves from the position to the left
of the head, we execute the move for that location. For every $(q_L,b,L)$,
with amplitude $\delta(q,a,q_L,b,L)$,
we write a $b$ in the old head location and move the head left into state $q_L$.
For every $(q_R,b,R)$, with amplitude $\delta(q,a,q_R,b,R)$, we write a $b$
in the old head location, transition to state $q_R'$ and leave the head
in the same  location.
We need to defer the action of moving the head right until we have access
to the new location. In the next step of the clock, when the \arrRtwo\
is aligned with the state $q_R'$, we move it right and convert it to $q_R$.

That is, we want a sequence of two transitions.
\begin{equation}
\label{eq:T1}
\left| \ \sixcells{\arrRtwo}{\blankR}{\blankL}{q}{\variable}{a} \ \right\rangle \rightarrow
\sum_{a,b,q_L,q}  \delta(q,a,q_L,b,L) \left| \ \sixcells{\blankL}{\arrRtwo}{q_L}{\blankR}{\variable}{b} \ \right\rangle
+ \sum_{a,b,q_R,q}  \delta(q,a,q_R,b,R) \left| \ \sixcells{\blankL}{\arrRtwo}{\blankL}{q_R'}{\variable}{b} \ \right\rangle.
\end{equation}
After this step, the configuration is a superposition of states in which the step has
been performed on the configurations with the head in the same location as the
\arrRtwo\ state, except that moving the head to the right has been deferred.
If the TM state is primed and is aligned with the location of
the \arrRtwo, that triggers the execution of the right-moving step.
\begin{equation}
\label{eq:T2}
\left| \ \sixcells{\arrRtwo}{\blankR}{q_R'}{\blankR}{\variable}{\variable} \ \right\rangle \rightarrow
 \left| \ \sixcells{\blankL}{\arrRtwo}{\blankL}{q_R}{\variable}{\variable} \ \right\rangle.
\end{equation}
Otherwise, the \arrRtwo\ state just sweeps to the right, leaving the other tracks
unchanged:
\begin{equation}
\label{eq:T3}
\left| \ \sixcells{\arrRtwo}{\blankR}{q_R}{\blankR}{\variable}{\variable} \ \right\rangle \rightarrow
 \left| \ \sixcells{\blankL}{\arrRtwo}{q_R}{\blankR}{\variable}{\variable} \ \right\rangle
\hspace{1in}
\left| \ \sixcells{\arrRtwo}{\blankR}{\blankR}{\blankR}{\variable}{\variable} \ \right\rangle \rightarrow
 \left| \ \sixcells{\blankL}{\arrRtwo}{\blankR}{\blankR}{\variable}{\variable} \ \right\rangle
\hspace{1in}
\left| \ \sixcells{\arrRtwo}{\blankR}{\blankL}{\blankL}{\variable}{\variable} \ \right\rangle \rightarrow
 \left| \ \sixcells{\blankL}{\arrRtwo}{\blankL}{\blankL}{\variable}{\variable} \ \right\rangle
\end{equation}
Note that transformations (\ref{eq:T1}), (\ref{eq:T2}) and
(\ref{eq:T3}) define $T$ on every state in $\calS_A$. Furthermore, the image of
$T|_{\calS_A}$ is in $\calS_B$.
A critical fact is that after $T$ is applied to a pair, the computation state of
that pair is in $\calS_B$, which implies that the next pair over is now in subspace
$\calS_A$ and $T$ can be applied to this new pair.
 We need to now show that $T$ has the
necessary properties to be expressed as a Type II Hamiltonian term with the $\ket{cd}$ final
state a superposition.
\begin{claim}
$T$ is a partial isometry (meaning it preserves inner products) when restricted to
$\calS_A$.
\end{claim}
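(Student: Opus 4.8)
The plan is to show that $T$ preserves inner products on $\calS_A$ by verifying it on the natural orthogonal decomposition of $\calS_A$ into pieces coming from the three transformation types (\ref{eq:T1}), (\ref{eq:T2}), (\ref{eq:T3}). First I would observe that the domain $\calS_A$, spanned by legal computation states $[q,a][p,b]$ with $q\notin Q_L$, $p\notin Q_R'$, breaks into mutually orthogonal subspaces according to which of the transformations applies: states of the form $[q,a][\blankR,b]$ with $q\in Q$ (handled by (\ref{eq:T1})), states $[q_R',a][\blankR,b]$ (handled by (\ref{eq:T2})), and the remaining states $[q_R,a][\blankR,b]$, $[\blankR,a][\blankR,b]$, $[\blankL,a][\blankL,b]$ (handled by (\ref{eq:T3})). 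Since these are distinguished by the standard-basis labels on the two particles, they are orthogonal, and it suffices to check that $T$ is an isometry on each piece and that the images of the three pieces are mutually orthogonal in $\calS_B$.

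For (\ref{eq:T3}) and (\ref{eq:T2}), $T$ simply relabels standard basis states bijectively (moving the arrow right, and in the (\ref{eq:T2}) case unpriming $q_R'$ to $q_R$), so it is trivially an isometry, and the images land in clearly distinguishable standard-basis sectors of $\calS_B$. The substantive step is (\ref{eq:T1}): here $T$ sends each basis state $\bigl|[q,a][\blankR,b_0]\bigr\rangle$ (the $\variable$, which I will call $b_0$, being the untouched symbol on the particle to the right) to $\sum_{q_L,b}\delta(q,a,q_L,b,L)\,\bigl|[q_L,b][\blankR,b_0]\bigr\rangle + \sum_{q_R,b}\delta(q,a,q_R,b,R)\,\bigl|[\blankL,b][q_R',b_0]\bigr\rangle$. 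Taking the inner product of the images of $\bigl|[q,a][\blankR,b_0]\bigr\rangle$ and $\bigl|[q',a'][\blankR,b_0']\bigr\rangle$: the two sums contribute to disjoint sets of standard basis states (left-move images have a $Q_L$ state on the left particle and $\blankR$ on the right; right-move images have $\blankL$ on the left and a $Q_R'$ state on the right), and $b_0 = b_0'$ is forced by a nonzero overlap, so the inner product equals $\delta_{b_0 b_0'}$ times $\sum_{q_L,b}\overline{\delta(q,a,q_L,b,L)}\delta(q',a',q_L,b,L) + \sum_{q_R,b}\overline{\delta(q,a,q_R,b,R)}\delta(q',a',q_R,b,R)$, which by the definition of $Q_L$/$Q_R$ as the full reachable-by-left/right partition is exactly $\sum_{p,b,D}\overline{\delta(q,a,p,b,D)}\,\delta(q',a',p,b,D)$. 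This is precisely the $(q,a),(q',a')$ entry of $U^\dagger U$ where $U$ is the time-evolution operator of the quantum Turing machine $V$ in its standard form, and unitarity of $U$ (guaranteed by the Bernstein--Vazirani normal form cited earlier) gives $\delta_{qq'}\delta_{aa'}$. Hence $T$ preserves inner products of basis states within the (\ref{eq:T1}) sector.

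The last thing to confirm is that images from the three sectors are pairwise orthogonal in $\calS_B$, which follows by inspection: the (\ref{eq:T1})-sector images have either a $Q_L$ state or a $Q_R'$ state somewhere, paired with an arrow-less $\blankL$ on the left or $\blankR$ on the right in the ``wrong'' slot relative to what (\ref{eq:T2}) and (\ref{eq:T3}) produce, and (\ref{eq:T2}) versus (\ref{eq:T3}) are distinguished by whether the head state is primed. I expect the main obstacle to be purely bookkeeping: being careful that the partition $Q = Q_L \sqcup Q_R$ really does make the reduced sum over $(q_L,b)$ plus the sum over $(q_R,b)$ reassemble into the \emph{complete} sum defining a column of $U$ (so that no cross terms or missing terms spoil the cancellation), and making sure the deferred right-move bookkeeping (the primed states) does not cause two distinct domain states to map to overlapping image states. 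Once those disjointness facts are pinned down, partial isometry on $\calS_A$ is immediate.
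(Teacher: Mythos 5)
Your proposal follows essentially the same route as the paper's proof: decompose $\calS_A$ into three orthogonal sectors according to which of (\ref{eq:T1})--(\ref{eq:T3}) applies, note that the latter two merely relabel standard basis states, reduce the first to unitarity of a single QTM step restricted to configurations with the head at a fixed location, and check that the three images are mutually orthogonal. Your expanded treatment of the (\ref{eq:T1}) sector --- the executed left-move images and the deferred, primed right-move images form disjoint families, and the two partial sums reassemble via the $Q_L$/$Q_R$ unidirectionality of $\delta$ into the full one-step overlap $\sum_{p,b,D}\overline{\delta(q,a,p,b,D)}\,\delta(q',a',p,b,D)=\delta_{qq'}\delta_{aa'}$ --- is precisely the content the paper compresses into the remark that the QTM move is ``unitary when restricted to the space of all configurations with the head in a particular location,'' so the substance of your argument is correct and you have correctly identified where the unidirection property is needed.

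There is, however, a bookkeeping slip you should fix. In (\ref{eq:T1}) the arrow approaches the head from the left, so the sector on which that rule acts is $\calS_{A1}$, spanned by states of the form $[\blankL,\variable][q,a]$ with the head on the \emph{right} particle of the pair, not $[q,a][\blankR,b]$ as you wrote; moreover the newly written symbol $b$ is deposited at the \emph{old} head location (the right particle), while on a left move the state $q_L$ appears on the left particle whose own tape symbol is untouched. As written, your ``(\ref{eq:T1}) sector'' $[q,a][\blankR,b]$ with $q\in Q$ overlaps your (\ref{eq:T3}) sector $[q_R,\variable][\blankR,\variable]$ (and for $q\in Q_L$ lies outside $\calS_A$ entirely), so your decomposition is not disjoint as stated and the map you analyze is not literally the $T$ of equation (\ref{eq:T1}). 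None of this affects the inner-product calculation itself, which uses only the disjointness of the two image families and the completeness of the sum over $(p,b,D)$, but the sector labels and image states should be corrected to match $\calS_{A1}$, $\calS_{A2}$, $\calS_{A3}$ as in the paper.
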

Since $\calS_A$ and $\calS_B$ have the same dimension, the claim implies that
$T^{\dagger}$ is well-defined and a partial isometry on $\calS_B$. We can extend $T$
to be a unitary map on the full Hilbert space.
Then we select the Hamiltonian term to be:
$$I_{\calS_A} \otimes \ketbra{\arrRtwo  \blankR}{\arrRtwo  \blankR}
+ I_{\calS_B} \otimes \ketbra{\blankL \arrRtwo}{\blankL \arrRtwo}
- T \otimes \ketbra{\blankL \arrRtwo}{\arrRtwo  \blankR}
- T^{\dagger} \otimes \ketbra{\arrRtwo  \blankR}{\blankL \arrRtwo} .$$

\begin{proof}[ of claim]
Expressions (\ref{eq:T1}), (\ref{eq:T2}) and (\ref{eq:T3})
each define $T$ on a different orthogonal subspace of $\calS_A$.
$\calS_{A1}$ is the space of the form $[\blankL,\variable][q,a]$,
$\calS_{A2}$ is the space of states of the form $[q'_R,\variable][\blankR,\variable]$
and $\calS_{A3}$ is the space spanned by states of the form
$[q_R,\variable][\blankR,\variable]$, $[\blankR,\variable][\blankR,\variable]$
and $[\blankL,\variable][\blankL,\variable]$.
We will argue that $T$ is a partial isometry when restricted to each of
$\calS_{A1}$, $\calS_{A2}$ and $\calS_{A3}$.
Furthermore the images of $T_{\calS_{A1}}$, $T_{\calS_{A2}}$ and $T_{\calS_{A3}}$
are mutually orthogonal.
This implies that $T$ is a partial isometry on $\calS_A$.

$T$ is clearly a partial isometry on each of $\calS_{A2}$ and $\calS_{A3}$.
We only need to examine the transformation given in (\ref{eq:T1}).
We know that the application of a move of the QTM is unitary. In particular,
it is unitary when restricted to the space of all configurations with
the head in a particular location. This means that the transformation (\ref{eq:T1})
($T$ restricted to states of the form $[{\blankL}, \variable][q,a]$) is a partial isometry.
\end{proof}

By construction, a single iteration will implement the correct transition rule for the
TM head in a single specific position.  We should also verify that it acts correctly when
the head is in a superposition of locations.
In particular, when the TM
head moves right from location $i$, it can end up in the same location as when it moves
left from location $i+2$.  In our implementation, when each step is implemented, these
steps are orthogonal because the Track 1 state differs.  They are also orthogonal in
the original TM rule, since after the head moves right, it will be in a state from $Q_R$,
whereas after it moves left, it will be in a state from $Q_L$, which is orthogonal to $Q_R$.
Thus, our implementation is globally performing the correct TM rule.

We have argued that the extended
transformations (or transition rules) which implement the Turing Machine behave as we
would like them to as long as the states to which they are applied obey certain
conditions. The following definition formalizes those conditions.

\begin{definition}
We say that a state is {\em invalid} if one of the following conditions holds:
\begin{enumerate}
\item The control particle on Track 1 is on site $i$ and is in state \arrRone. Furthermore,
for some particle other than $i$, its Track 4 state is $q'_R$ for some $q'_R \in Q'_R$.
\item The control particle on Track 1 is on site $i$ and is in state \arrRone. Furthermore,
the Track 4 state for particle $i$ is $q_L$ for some $q_L \in Q_L$.
\item The control particle on Track 1 is on site $i$ and is in state \arrRtwo. Furthermore,
for some particle other than $i$, its Track 5 state is $q'_R$ for some $q'_R \in Q'_R$.
\item The control particle on Track 1 is on site $i$ and is in state \arrRtwo. Furthermore,
the Track 5 state for particle $i$ is $q_L$ for some $q_L \in Q_L$.
\end{enumerate}
\end{definition}

We say that a standard basis state is {\em valid} if it is not invalid.
We can add illegal pairs so that any invalid state has an energy penalty of at least one.
The subspace of all valid states
is closed under the action of the transition rules.
As before, we call a state well-formed if its clock state (Track 1 and 2) is well-formed.
For any valid, well-formed state, the transition rules apply non-trivially to only
one pair of particles. Furthermore, the transition rules are reversible and norm-preserving
over the space spanned by  valid, well-formed states.

\subsection{Combining the Tracks}
\label{sec:combiningtracks}

We have described a set of transition rules for Tracks 1 and 2 which advance a clock
through $T=4(N-2)^2$ states.
We have also described a set of transition rules which implement two Turing Machines
on tracks 3 through 6. In each set of transition rules, the control particle of the clock
on Track 1 advances one space. Thus, when they are combined, each transition rule
advances the clock and implements part of a step on one of the Turing Machines.
Consider any state such that the states for Tracks 1 and 2 are clock states
and the states for Tracks 3 through 6 are a quantum state within the subspace of
valid states. If the clock state is not the final state, then the transition rules
apply to exactly one pair of particles and
take the state to a unique new state with the same norm in which the clock has
advanced by one tick. Similarly, if the clock state is not the initial state, then
the transition rules applied in the reverse direction apply to exactly one
pair of particles and take the state to a new
state with the same norm in which the clock has gone back by one tick.

We argued in Subsection \ref{sec:clock} that a zero eigenstate of the Hamiltonian
must be a superposition
of the sequence of clock states. Now that we are considering the states of the other tracks
as well, this is a higher-dimensional space. In order to define a basis for the zero eigenspace
of the Hamiltonian defined so far, we specify a standard basis state for Tracks 3 through 6
for the initial clock state. It will be convenient to separate the states of Tracks 3 through 5
from the state of Track 6. So let $X$ be a standard basis state for Tracks 3 through 5 and $Y$ be a
standard basis state for Track 6.
Let $\ket{t}$ be the state of the clock after $t$ transitions. Define $\ket{\phi_{t,X,Y}}$ denote
the state of Tracks 3 through 6 after $t$ applications of  the transition rules assuming that
Tracks 3 through 5 start in state $X$ and Track 6 starts in state $Y$. Define
$\ket{\phi_{X,Y}} = \sum_{t=0}^{T-1} \ket{t} \ket{\phi_{t,X,Y}}$.
The $\ket{\phi_{X,Y}}$ satisfy all the constraints due to transition rules and illegal states
for Tracks 1 and 2.
We will use $I$ to denote the desired initial configuration for Tracks 3 through 5.
That is, $\blankR^*$ on Track 3 and $\stateS \blankR^*$ and $q_0 \blankR^*$ on Tracks 4 and 5.
The energy for any $\ket{\phi_{X,Y}}$ where $X \neq I$ will be at least $\Omega(1/N^2)$.
This is because if $X \neq I$,
at least one state in the initialization phase will have an energy penalty
as the $\arrRzero$ control state sweeps to the right. So the subspace spanned by
$\{ \ket{\phi_{I,Y}} \}$ for all $Y$ form a basis of the zero eigenspace for the
Hamiltonian terms defined so far, restricted to $\calS_{br}$,
the subspace of all bracketed states. The next highest eigenstate in $\calS_{br}$
has energy $\Omega(1/N^3)$.

\subsection{Boundary Conditions}
\label{sec:quantumperiodic}

\subsubsection{The Finite Chain}

First we address the case of a finite chain and add a term that will enforce
that the ground state is in $\calS_{br}$.
In order to penalize states that are not bracketed, we weight $H$ by a factor of $3$
and add in the term $I - \ketbra{\rightend}{\rightend} - \ketbra{\leftend}{\leftend}$
to every particle. This has the effect of adding energy $N-2$ to every state
in $\calS_{br}$. The ground space is still spanned by the $\ket{\phi_{I,Y}}$
and the spectral gap remains $\Omega(1/N^3)$.
Any standard basis state outside $\calS_{br}$ will have a cost of at least $N-1$:
If there are $a$ particles in state $\rightend$ or $\leftend$\
in the middle of the chain, they will save
$a$ from the $I - \ketbra{\rightend}{\rightend} - \ketbra{\leftend}{\leftend}$
term. However,  there will be at least $\lceil a/2 \rceil$ illegal pairs
because the \leftend\ particles in the middle will each form an illegal pair with the
particle to their left and the \rightend\ particles in the middle will each
form an illegal pair
with the particles to their right. We can lower bound the additional cost by
counting the type (\leftend\ or \rightend) which is more plentiful,
yielding at least $\lceil a/2 \rceil$ distinct illegal pairs.
Thus, there will be a net
additional energy cost of at least $a/2$ when $a$ is even, or $(a+3)/2$ when $a$ is
odd.

\subsubsection{The Cycle}

If we instead have periodic boundary conditions, the problem is still $\QMAEXP$-complete.
We start by removing the penalty for the pair $\rightend \leftend$.
The set of legal and well-formed
states is exactly the same as it was for the finite chain
except that we can now have
more than one segment around the cycle.
For example, we could have
the following state wrapped around a cycle:
$$\underbrace{\leftend ~~~~ \cdots ~~~~  \rightend}_{\mbox{Segment~} 1}
\underbrace{\leftend ~~~~  \cdots ~~~~  \rightend}_{\mbox{Segment~} 2}
\underbrace{\leftend ~~~~  \cdots ~~~~  \rightend}_{\mbox{Segment~} 3}.$$
Note that the transition rules do not change the locations or numbers of $\leftend$ or
$\rightend$ sites, so we can restrict our attention to subspaces in which the
segments are fixed.
If a standard basis state is well-formed then every occurrence of $\rightend$
has a $\leftend$ to its immediate right and every occurrence of $\leftend$
has a $\rightend$ it its immediate left. Thus, we can assume that a standard basis
state in the support of a ground state can be divided into valid segments. Of course, it
is possible that there are no segments in which case the state could simply be a
 single particle state repeated
around the entire cycle.
We know that the states within a segment must be ground states for a chain
of that length. Otherwise, the energy of the state is at least $\Omega(1/l^3)$,
where $l$ is the length of the segment.
We need to add additional terms and states which give an energy penalty to states
with no segments or more than one segment.

It is most straightforward to achieve this by
using odd $N$ and adding an additional Track $0$.
There will be no transition rules that apply to Track 0, so the state of the Track 0
remains fixed even as transition rules apply to the state.
We will add some extra terms which are
diagonal in the standard basis for Track 0 and introduce energy penalties for certain
pairs of particles.
Fix a standard basis state for Track 0 and a set of segments and consider the space
spanned by these states. The Hamiltonian is closed over this space, so we just
need to examine the eigenstates restricted to each such subspace.
Track $0$ will
have three possible states $\stateA$, $\stateB$, and $\stateLine$.
Even the \rightend\ or \leftend\ particles have a Track 0 state, so the state
of a particle is either in $\{ \stateA, \stateB, \stateLine \} \times \{\leftend, \rightend\}$ or
it is a $7$-tuple denoting states for Tracks 0 through 6.

We add the terms $\ketbra{\stateA\stateA}{\stateA\stateA}$
and $\ketbra{\stateB\stateB}{\stateB\stateB}$, which gives an energy penalty of $1$ for pairs $\stateA\stateA$ or $\stateB\stateB$ on
Track 0. Since $N$ is odd, the state will have energy at least one unless there is
at least one particle whose Track 0 state is $\stateLine$.
Now we add an energy penalty of $1$ for any particle whose Track 0 state is $\stateLine$ which
is not $[\stateLine,\leftend]$. We also add an energy penalty for $[\stateA, \leftend]$ or $[\stateB,\leftend]$.
The only way for a state to have energy less than one is for there to be at least one
segment and to have a $\stateLine$ on Track 0 exactly at the locations where there is a $\leftend$
on the other tracks. Finally, we add an energy penalty of $1/2$ for any particle whose
Track 0 state is $\stateLine$. Thus, the only way for a state to
have energy less than one is for there to be exactly one
segment and to have a $\stateLine$ on Track 0 exactly at the single location where there is a $\leftend$
on the other tracks.

Consider the set of all basis states with exactly one segment and
whose Track 0 state has a $\stateLine$ co-located with the \leftend\ site and alternating $\stateA$'s and
$\stateB$'s elsewhere on Track 0.  Any eigenstate outside this space has energy at least one.
Note that there are $2N$ different low-energy ways to arrange the Track 0 states, since there are
$N$ possible locations for the $\stateLine$ site and the string of alternating $\stateA$'s and $\stateB$'s can
begin with $\stateA$ or $\stateB$.
Each such choice gives rise to a subspace that is closed over $H$ and otherwise has
identical behavior with respect to the rest of the terms, so we will make an arbitrary
choice and examine the resulting subspace.
Since there is one segment, the state $\ket{\phi_{I,Y}}$ from the previous
subsection is well defined. This state is a ground state for $H$ and has energy $1/2$.
All other eigenstates within the subspace have energy at least $1/2 + \Omega(1/N^3)$.

\subsection{Accepting States}

Finally, we add a term for  those $\ket{\phi_{I, \psi}}$ that do not lead to an
accepting computation for $V$. In the last configuration of the clock,
the two right-most particles are in state $[\leftend][\arrLtwo,\oneB]$.
We assume that an accepting computation will end with the head in the
left-most place in state $q_A$.
We add a penalty for a pair whose Track 1 and 2  state is $[\leftend][\arrLtwo,\oneB]$
such that the right particle in the pair does not have a Track 5 state of $q_A$.
Thus, an accepting computation will have an energy penalty of at most
$\epsilon/N^2$ for this term where $\epsilon$ can be made arbitrarily small.
If no witness leads to an accepting computation, all states will have energy
at least $(1-\epsilon)/N^2$ from this term.

\section{Quantum Case With Reflection Symmetry}
\label{sec:quantumreflection}

Many of the Hamiltonian terms in section~\ref{sec:quantum} had a left-right asymmetry,
allowing us to, for instance, start the Turing machine head at the left end of the
line of particles.  When we add reflection symmetry to the translational invariance,
this is no longer as straightforward.  However, by increasing the number of states
and with an appropriate choice of Hamiltonian, we will be able to spontaneously break
the reflection symmetry in the vicinity of the arrow states, showing that \rTIH{1}\ with
reflection symmetry is $\QMAEXP$-complete.  In this section, we present the construction
for periodic boundary conditions, but a similar construction works with open boundary
conditions.

The basic idea of the construction is to force the arrow states to have $\stateA$ on
one side and $\stateB$ on the other.  That creates an asymmetry between the two directions,
which we can use to define left transitions and right transitions.  At one time step, we
should consider $\stateA$ to be on the left and $\stateB$ to be on the right.  However,
in the next time step, the arrow has moved, and now $\stateB$ should be on the left and
$\stateA$ should be on the right.  Therefore, we will split the arrow state into pairs of
states that keep track of which direction $\stateA$ should be in.

Now for the details of the construction.  We assume that $N$, the number of particles in the cycle, is odd. There will be a
total of seven tracks, numbered 0 through 6. A state is then specified either as
$\stateLine$ or by a $7$-tuple, specifying the state for each Track. The state $\stateLine$
spans all seven tracks.
Some of the energy penalties we introduce will be a multiple of some constant
$c$ to be chosen later.

The states for each of the Tracks 1, 2, 4, and 5 will be divided into control
states and non-control states. The non-control states
will be divided into two types: A-states and B-states.
\begin{definition}
For any particular standard basis state, we
say that its {\em configuration} is defined by the following properties:
\begin{enumerate}
\item the number of control particles
on Tracks 1, 2, 4, and 5
\item the sequence of A-states and B-states (with control states removed)
on Tracks 1, 2, 4, and 5
\item the entire state of Track 0
\item  the locations
of the particles in state $\stateLine$.
\end{enumerate}
\end{definition}

We will select the transition rules so that they never change the configuration
of a standard basis state. Thus, if we consider the subspace spanned by all the
standard basis states with a particular configuration, the Hamiltonian will
be closed on that subspace. We can therefore analyze each such subspace
separately. We will specifically be interested in the following properties:

\begin{claim}
\label{claim:qreflectionproperties}
With an appropriate choice of Hamiltonians, we can ensure that the ground state
is a superposition of basis states whose configurations satisfy the following properties:
\begin{enumerate}
\item There is one particle in $\stateLine$.
\item For Track 0, the rest of the particles alternate $\stateA$ and $\stateB$.
\item Tracks 1, 2, 4, and 5 each have exactly one control particle.
\item The non-control particles for tracks 1, 2, 4, and 5 alternate between
A-states and B-states with an A-state or control state on either side of the $\stateLine$.
\item The control particle is flanked by an A-state on one side and a B-state
or a $\stateLine$ on the other.
\end{enumerate}
\end{claim}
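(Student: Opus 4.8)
The plan is to mimic the structure of the analysis in Section~\ref{sec:quantum} (Lemmas~\ref{lem:well-formed}--\ref{lem:gap}), but now applied to the seven-track, reflection-symmetric construction. First I would introduce the requisite tracks and states. Track~0 carries only the two non-control states $\stateA, \stateB$ (plus, globally, the $\stateLine$ state), with no transition rules acting on it; Tracks~1, 2, 4, 5 each carry control and non-control states, with each non-control state labelled an A-state or B-state. The key design decision is to double the arrow states: each control particle comes in two flavours according to whether it expects $\stateA$ on its left or on its right, and analogously for the head states on Tracks~4 and~5. Since a transition moves a control particle one site over, the flavour must flip on each step, which is exactly what lets us maintain a consistent notion of ``left'' and ``right'' even though the underlying rules are reflection-symmetric. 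I would state the rules so that (i) the only terms that are not manifestly reflection-symmetric are the diagonal penalty terms on Track~0 and the incompatibility penalties coupling a control particle's flavour to the adjacent Track~0 states, and (ii) no transition rule ever alters the \emph{configuration} (in the sense of the given definition).

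Next I would prove the claim in the same two-stage way as before. Stage one: using only illegal pairs (Type~I terms), force each well-formed basis state to have the claimed local structure. Concretely: add $\ketbra{\stateA\stateA}{\stateA\stateA} + \ketbra{\stateB\stateB}{\stateB\stateB}$ and a penalty for each $\stateLine$ particle; since $N$ is odd, a Track~0 string with no $\stateLine$ cannot alternate and pays energy, so a low-energy state has at least one $\stateLine$, and the $+1/2$-per-$\stateLine$ penalty (as in Section~\ref{sec:quantumperiodic}) forces exactly one. The regular-expression Lemma (the ``simple lemma'' in the Preliminaries) then handles properties 3--5: I would write down a regular expression over particle states in which each state appears once, describing a single segment of the form (control particle flanked by A on one side, B or $\stateLine$ on the other) with alternating A/B non-control states in between, and forbid all pairs not occurring in it. This simultaneously pins down: exactly one control particle per track, the A/B alternation, the flanking condition, and the requirement that an A-state or control state sit on either side of $\stateLine$. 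The incompatibility terms between Track~0 and the control-particle flavour enforce that the $\stateA$/$\stateB$ reported by the flavour matches Track~0, so the ``left/right'' labels are globally consistent around the cycle.

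Stage two: I would show that any well-formed basis state not lying in the target ground superposition must, within $O(N)$ transitions (forward or backward), hit an illegal configuration — the analogue of Lemma~\ref{lem:clairvoyance}. The argument is the same ``clairvoyance'' sweep: a misplaced control particle (wrong track-2 phase, head state in $Q_L$ coinciding with the arrow, a primed state away from the arrow, etc.) eventually collides with a forbidden neighbour as the arrow shuttles. The one genuinely new case is a flavour mismatch: if a control particle's left/right flavour is inconsistent with the alternation it will produce after moving, the transition rule either fails to apply or produces an $\stateA\stateA$/$\stateB\stateB$ pair, which is penalized. Having established that the bad paths have a constant fraction of illegal states, the spectral-gap bound follows verbatim from the path-Hamiltonian / Kitaev-geometrical-lemma machinery cited in the proof of Lemma~\ref{lem:gap}, giving a gap $\Omega(1/N^3)$ above the ground space spanned by the $\ket{\phi_{X,Y}}$ for valid configurations.

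The main obstacle I expect is checking that the flavour-doubling is actually consistent: one must verify that for every transition rule of the original construction there is a well-defined ``reflection-symmetrized, flavour-tracking'' version, that forward and backward transitions remain unique on well-formed valid states, and — most delicately — that at the endpoints of a segment (next to $\stateLine$), where the arrow turns around, the flavour flips the right way so that a $\stateB$ (not an $\stateA$) ends up adjacent to $\stateLine$ in the reflected copy without creating a forbidden pair. This is the place where the left-right asymmetry of Section~\ref{sec:quantum} was implicitly used (the head starts at a distinguished end), and it is where a naive symmetrization would break. I would handle it by making $\stateLine$ itself play the role of a ``virtual A-state'' for the alternation (property~4 already says an A-state or $\stateLine$ flanks the control particle there), so the turn-around at the $\stateLine$ site is forced into a unique flavour. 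Everything else — the Turing-machine simulation of Sections~\ref{sec:clock}--\ref{sec:combiningtracks}, the accepting-state penalty, and the $N$-odd argument isolating a single segment — then carries over from the non-symmetric construction with only cosmetic changes.
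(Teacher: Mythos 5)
There is a genuine gap, and it sits exactly where the content of the claim lies: property 3 (``exactly one control particle'' on each of Tracks 1, 2, 4, 5). You propose to get properties 3--5 from the regular-expression lemma, i.e.\ from illegal pairs alone. That cannot work: illegal pairs are purely local, so no choice of forbidden pairs can distinguish a cycle containing one control particle (flanked by an A-state and a B-state or $\stateLine$) from a cycle containing two or more such control particles, each locally flanked in the allowed way --- the two-control configuration contains no forbidden pair. (The lemma also formally requires each state to appear at most once in the regular expression and only yields ``substring of the regular set,'' which on a cycle gives no occurrence counts at all.) The paper's proof of this claim is therefore not a pair-constraint argument but a weighted counting argument: every control state carries an energy cost of $1$, the hard violations ($\stateA\stateA$, $\stateB\stateB$, $\stateB$ next to $\stateLine$, control next to control, each $\stateLine$) carry costs $c$ or $2c$, and the oddness of $N$ is used to show that a track with \emph{no} control particle must incur a cost of at least $c$ (an even number of alternating non-control sites between the two sides of the single $\stateLine$ forces a B-state against $\stateLine$). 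Hence the minimum-energy configurations have exactly one control particle per track, at total cost $c+4$, and choosing $c=5$ separates these from all violating configurations by at least $1$. Property 5 likewise is not imposed by any local rule you write down; it is a parity consequence of odd $N$ once properties 1--4 hold. Your proposal never introduces the per-control-state penalty or this counting, so as written it does not establish properties 3 and 5.

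A secondary point: the transition rules, flavour-doubled arrows, clairvoyance sweep, and path-Hamiltonian gap machinery you invoke are not needed for this claim. The ``configuration'' of a basis state is invariant under the transition rules and all the terms used here are diagonal, so the Hamiltonian is block-diagonal over configuration subspaces and a purely static energy comparison between configurations suffices --- which is why the paper's proof is a half-page counting argument. The dynamical issues you raise (uniqueness of forward/backward transitions for the A/B-flavoured arrows, correct turning at $\stateLine$, the analogue of Lemma~\ref{lem:clairvoyance}) are real, but they belong to the subsequent subsections on the Track 1 and Track 2 transitions, not to the proof of this claim.
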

We will therefore restrict our attention to
configurations that have these properties.

\begin{proof}[ of claim]
Towards this end, we add the following constraints:
Track 0 has states $\stateA$, $\stateB$ and $\stateLine$. There is a penalty of
$2c$ for the pairs
$\stateA \stateA$ and $\stateB \stateB$. There is a penalty of $c$ for any occurrence of $\stateLine$.

Any state that has no $\stateLine$ or more than one $\stateLine$ will have a cost of at least $2c$.
A configuration which satisfies properties 1 and 2
will have a cost of $c$.

For Tracks 1, 2, 4, and 5, we will enforce that there is exactly one particle
that is in a control state. This control state can move around, but there
should be just one.
This is enforced as follows. The control states for a track all cost 1.
There will be two states for each non-control state, an A version and
a B version. The state $\stateLine$ spanning Tracks 0 through 6 must go next to
an A-state or a control state on Tracks 1, 2, 4, and 5. The A-states cannot be next to
each other and the B-states cannot be next to each other. The
control states can be next to the $\stateLine$, A-states, or B-states, but
not next to each other. All of these forbidden pairs have a cost
of $c$.  These rules are summarized in table~\ref{table:qreflectionAB}.  The
transitions may cause the control particle to shift over by one
site by swapping places with a non-control particle. In these
transitions, the non-control particle will remain the same with
respect to whether it is an A or B state although it may change
other aspects of its state.

\begin{table}
\begin{centering}
\begin{tabular}{c|cccc}
               & $\stateLine$ &  A  &  B  & control \\
 \hline
$\stateLine$   &       c      &  0  &  c  &    0    \\
A              &       0      &  c  &  0  &    0    \\
B              &       c      &  0  &  c  &    0    \\
control        &       0      &  0  &  0  &    c    \\
\end{tabular}
\caption{Summary of the rules for A and B-states in Tracks 1, 2, 4, 5.  Also, all control
states have an additional cost of $1$.}
\label{table:qreflectionAB}
\end{centering}
\end{table}

Property 4 of the claim follows immediately from these constraints.

If there are no control states on a Track, the total cost of the Track must be
at least $c$ (in addition to the $c$ cost of having the one $\stateLine$).
This is a consequence of odd $N$ --- alternating A-states and B-states would
then produce an A-state on one side of $\stateLine$, but a B-state on the
other side.

Any configuration that avoids the cost of $c$ must have at least
one control state on each of Tracks 1, 2, 4, and 5.
Each control state incurs a cost of one.
Thus, the minimal-cost configuration has exactly one
control state on each track of 1, 2, 4, and 5 with a total cost of $c+4$.
Any other configuration will have a cost of at least $2c$. The minimum cost
configuration will have a control particle somewhere within a
sequence of alternating B's and A's with an A on each end next
to the $\stateLine$. In order to maintain the parity, property 5 must hold.

We can choose $c=5$ to guarantee that there is an energy gap of at least
one between configurations which obey properties 1 through 5 in the claim and
those which do not.
\end{proof}

\subsection{Track 1 Transitions}

We are now ready to describe the transitions for Track 1. Since these
rules satisfy reflective symmetry, whenever there is a rule:
$ab \rightarrow cd$, there is also a rule $ba \rightarrow dc$.
For the sake of brevity, we do not always give the reflected
version of the rule with the understanding that it is implicit.

There are four types of control states: $\arrRA$, $\arrRB$, $\arrLA$ and $\arrLB$.
When we discuss Track 2, these will each be expanded into three varieties as
was done in the construction without reflection symmetry.
There are only two non-control states. These are labelled $\stateA$ and $\stateB$.
We have the following transitions:

\begin{description}
\item $\arrRA \stateA \rightarrow \stateA \arrRB$
\item $\arrLA \stateA \rightarrow \stateA \arrLB$
\item $\arrRB \stateB \rightarrow \stateB \arrRA$
\item $\arrLB \stateB \rightarrow \stateB \arrLA$
\end{description}
That is, the control state swaps its location with an adjacent non-control state, and
the arrow stays pointing the same direction while the A/B label on the arrow switches.

We have the following transitions at the ends

\begin{description}
\item $\arrLB \stateLine \rightarrow \arrRA \stateLine$
\item $\arrRB \stateLine  \rightarrow \arrLA \stateLine$
\end{description}
That is, the arrow switches direction, and the A/B label on the arrow switches
as well.

It will be useful to fix a direction for the cycle by disconnecting
the cycle at the $\stateLine$ particle and stretching out from left to right.
\begin{definition}
Once this direction is fixed, we say that a standard basis state is
\emph{correctly oriented} if one of the properties hold:
\begin{description}
\item Control particle is $\arrRA$ and the particle to its right is in state $\stateA$.
\item Control particle is $\arrLA$ and the particle to its left is in state $\stateA$.
\item Control particle is $\arrRB$ and the particle to its right is in state $\stateB$ or $\stateLine$.
\item Control particle is $\arrLB$ and the particle to its left is in state $\stateB$ or $\stateLine$.
\end{description}
In other words, the arrow points to a state which matches the A/B label on the arrow.
\end{definition}

There are $2(n-1)$ standard basis states for Track 1 that satisfy properties 1 through 4
of claim~\ref{claim:qreflectionproperties}
which are also correctly oriented. The reflection of each of these states is
a standard basis state which  satisfies properties 1 through 4 but is not correctly
oriented.
The transitions form an infinite loop over the correctly oriented Track 1 states as
shown below. The $\stateLine$ particle is duplicated on each end to illustrate the
transitions.
\begin{center}
\begin{tabular}{cc}

\begin{tabular}{c}
\stateLine \arrRA \stateA \stateB $\cdots$ \stateB \stateA \stateLine \\
\stateLine \stateA \arrRB \stateB $\cdots$ \stateB \stateA \stateLine \\
\stateLine \stateA \stateB \arrRA $\cdots$ \stateB \stateA \stateLine \\
$\cdots$\\
\stateLine \stateA \stateB $\cdots$ \arrRB \stateB \stateA \stateLine \\
\stateLine \stateA \stateB $\cdots$ \stateB \arrRA \stateA \stateLine \\
\stateLine \stateA \stateB $\cdots$ \stateB \stateA \arrRB \stateLine \\
\end{tabular}

&

\begin{tabular}{c}
\stateLine \stateA \stateB $\cdots$ \stateB \stateA \arrLA \stateLine \\
\stateLine \stateA \stateB $\cdots$ \stateB \arrLB \stateA \stateLine \\
\stateLine \stateA \stateB $\cdots$ \arrLA \stateB \stateA \stateLine \\
$\cdots$\\
\stateLine \stateA \stateB \arrLB $\cdots$ \stateB \stateA \stateLine \\
\stateLine \stateA \arrLA \stateB $\cdots$ \stateB \stateA \stateLine \\
\stateLine \arrLB \stateA \stateB $\cdots$ \stateB \stateA \stateLine \\
\end{tabular}

\end{tabular}
\end{center}

Note that we always have a two-fold degeneracy which results from reflecting
each state in this sequence.
This results in a set of states which are not correctly oriented.
The transitions are closed on the two sets of states.
Furthermore,  the structure of the transitions is identical on each set,
so we can select one orientation and analyze that subspace.
For ease of exposition then we will restrict our attention to
states on Track 1 which are correctly oriented.

We no longer need distinct endmarkers $\rightend$ and $\leftend$ since the
state of the control particle next to an endmarker indicates whether it
will sweeping to the other end or changing directions first.
For example any transition  or illegal pair involving the pair
$\leftend \arrL$ would be replaced by $\stateLine \arrLB$.
The pair $\arrLB \stateLine$ will never occur in a state which
is correctly oriented. The table below shows the pairs involving
end states in the non-reflective construction and the corresponding
pairs in the reflecting construction that we are describing in this section.

\begin{center}
\begin{tabular}{|c|c|}
\hline
$\leftend \arrL$ & $\stateLine \arrLB$,  $\arrLB \stateLine$\\
\hline
$\leftend \arrR$ & $\stateLine \arrRA$,  $ \arrRA \stateLine$ \\
\hline
$\arrL \rightend $ & $\stateLine \arrLA $, $ \arrLA \stateLine$\\
\hline
$\arrR \leftend $ & $\stateLine \arrRB $, $ \arrRB \stateLine$\\
\hline
\end{tabular}
\end{center}

In the non-reflective construction, the Track 1 control symbol $\arrL$,
has three varieties $\arrLzero$, $\arrLone$ and $\arrLtwo$.
Now in  the construction with reflection symmetry, these in turn come
in an A-version and B-version giving six control symbols:
$\arrLAzero$, $\arrLAone$, $\arrLAtwo$, $\arrLBzero$, $\arrLBone$, $\arrLBtwo$.
We have a similar set of the right-pointing control states.

\subsection{Track 2}

As discussed above, we will enforce that Track 2 will have exactly one
particle in a control state. For Track 2, the control states are
$\zeroB$ and $\oneB$. The other states $\zero$, $\one$ and $\two$ will each have an
A-version and a B-version. The A-version will be denoted by a heavier outline
as in $\zeroVerA$ and the B-version will be denoted by a lighter outline as in $\zeroVerB$.
In addition to the constraints described below, the states satisfy the constraints
given in table~\ref{table:qreflectionAB}.

We need to have a transition for every possible combination of A's and B's.
Thus the transition:
$$\fourcells{\blankL}{\zero}{\arrLone}{\zero} \rightarrow \fourcells{\arrLone}{\zero}{\blankR}{\zero}$$
is replaced by four transitions:
$$\fourcells{\stateA}{\zeroVerA}{\arrLAone}{\zeroVerB} \rightarrow \fourcells{\arrLBone}{\zeroVerA}{\stateA}{\zeroVerB},
\hspace{.25in}
\fourcells{\stateA}{\zeroVerB}{\arrLAone}{\zeroVerA} \rightarrow \fourcells{\arrLBone}{\zeroVerB}{\stateA}{\zeroVerA},
\hspace{.25in}
\fourcells{\stateB}{\zeroVerA}{\arrLBone}{\zeroVerB} \rightarrow \fourcells{\arrLAone}{\zeroVerA}{\stateB}{\zeroVerB},
\hspace{.25in}
\fourcells{\stateB}{\zeroVerB}{\arrLBone}{\zeroVerA} \rightarrow \fourcells{\arrLAone}{\zeroVerB}{\stateB}{\zeroVerA}.$$
Similarly, the transition
$$\fourcells{\blankL}{\zeroB}{\arrLone}{\zero} \rightarrow \fourcells{\arrLone}{\one}{\blankR}{\zeroB}$$
is replaced by
$$\fourcells{\stateA}{\zeroB}{\arrLAone}{\zeroVerA} \rightarrow \fourcells{\arrLBone}{\oneVerA}{\stateA}{\zeroB},
\hspace{.25in}
\fourcells{\stateB}{\zeroB}{\arrLBone}{\zeroVerA} \rightarrow \fourcells{\arrLAone}{\oneVerA}{\stateB}{\zeroB},
\hspace{.25in}
\fourcells{\stateA}{\zeroB}{\arrLAone}{\zeroVerB} \rightarrow \fourcells{\arrLBone}{\oneVerB}{\stateA}{\zeroB},
\hspace{.25in}
\fourcells{\stateB}{\zeroB}{\arrLBone}{\zeroVerB} \rightarrow \fourcells{\arrLAone}{\oneVerB}{\stateB}{\zeroB}.$$
Finally, the transitions involving an end particle will be expressed using the appropriate control
character as expression in the table above. The rule
$$\threecellsL {\arrLtwo}{\one}   \rightarrow \threecellsL {\arrRtwo}{\one}$$
will become
$$\threecellsLRefl {\arrLBtwo}{\oneVerA}   \rightarrow \threecellsLRefl {\arrRAtwo}{\oneVerA}.$$

Although all of the illegal pairs and transitions have corresponding rules obtained by reflection, only
one will apply since the states are correctly oriented and the direction of the Track 0 control
particle is determined.

In the construction without reflection symmetry, we defined a Track 2 state to be
well-formed if it had the following form:
$\leftend \one^* (\zeroB \zero^* +\oneB \two^*) \rightend$. With the new construction,
we use the following definition instead:
\begin{definition}
A Track 2 state is \emph{well-formed} if it has the form:
$$\leftend (\oneVerA \oneVerB)^* (\zeroB (\zeroVerA \zeroVerB)^*\zeroVerA +
\oneB (\twoVerA \twoVerB)^* \twoVerA) \rightend \mbox{~~~~or~~~~}
\leftend (\oneVerA \oneVerB)^* \oneVerA (\zeroB (\zeroVerB \zeroVerA)^* +
\oneB (\twoVerB \twoVerA)^* ) \rightend.$$
\end{definition}

The set of well-formed basis states are all closed under the transition
rules. If we restrict our attention to Tracks 1 and 2 and
the subspace spanned by all well-formed
clock states, the results from the previous construction apply and we know
that the unique ground state is the uniform superposition over valid clock states.

The construction without reflection symmetry makes use of illegal pairs
with a particular orientation in enforcing that states must be well-formed.
We need a means of enforcing this same property with only illegal pairs
that obey reflection symmetry.
Since we have restricted our attention to states with only one control
particle, we know that the Track 2 state must have the form
$\stateLine (\zero + \one + \two)^* (\zeroB+\oneB) (\zero + \one + \two)^* \stateLine$.
We will call a state of this form {\em proper}.
Furthermore, the set of transition rules is closed over the subspace spanned by
this set.

We will first observe  that for any proper standard basis state,
at most one transition rule applies in the forward direction and at most
one applies in the reverse direction. This follows from the fact that the
control particle on Track 1 and the location of the A-state and B-state
on either side of the control particle uniquely determine which pair the
transition will apply to. Then the argument for the previous construction
(Lemma 4.4) carries over to establish that the transitions that apply in each
direction are unique.

This means that when we look at the state graph (graph of all standard basis states with
a directed edge for each transition), it forms a set of disjoint paths over all
proper standard basis states. Since the transition rules are closed over well-formed
states, we know that a path is either composed entirely of states that are
well-formed or states that are not well-formed.
Then we will add a set of illegal pairs which all have the desired reflection symmetry
and show that for every path that is not composed of well-formed states, a fraction of
$1/2n$ of the states must have a state with an illegal pair.
This will imply that any state in the subspace formed by the closure of standard basis
states long a path composed of states that are not well-formed
will have energy $1/poly(n)$ more than the uniform superposition of the valid clock states.

We add a set of illegal pairs, each with a cost of one.
For each pair specified, its reflection is also illegal.
This first group are
\zero \one, \zero \two, \one \two.
Each of these states have an A-version and a B-version and we implicitly
disallow any combination of these. That is the pair $\zero \one$
implicitly denotes the two pairs $\zeroVerA \oneVerB$ and $\zeroVerB \oneVerA$.
Recall that it is already illegal to have two A-states or two B-states next
to each other.
Furthermore, the control states can not be next to each other, so $\oneB \zeroB$ is an
illegal pair.
These rules enforce that any state without an illegal pair must be of the form:
$$\stateLine (\zero^* + \one^* + \two^*) (\zeroB + \oneB) (\zero^* + \one^* + \two^*).$$
We also add pairs $\zeroB \two$ and $\oneB \zero$ since these never appear in
a well-formed state.
Finally, we want to have the $\one^*$ on the left end of the chain
and the $\zero^*$ or $\two^*$ on the right end of the chain.
To do this, we add illegal pairs:
$$\threecellsLRefl {\arrLB}{\zero}  \hspace{.5in}
\threecellsLRefl {\arrLB}{\two} \hspace{.5in}
\threecellsRRefl {\arrRB}{\one}
.$$
We also add illegal pairs:
$$\threecellsLRefl {\arrRA}{\zero}  \hspace{.5in}
\threecellsLRefl {\arrRA}{\two} \hspace{.5in}
\threecellsRRefl {\arrLA}{\one}.$$
These are duplicated for the A-version and B-version of \zero, \one, and \two\
as well as  \arrLBzero, \arrLBone, and \arrLBtwo\ for \arrLB\ and
\arrLAzero, \arrLAone, and \arrLAtwo\ for \arrLA.
If we have a proper state that is not well-formed and has no illegal pair,
it must either have a $\one^*$ on the right end of the chain
or a $\zero^*$ or $\two^*$ on the left end of the chain.
What will happen, is  that the control symbol on Track 1 will
cycle around to the end of the chain where the violation occurs
and reach a state with an illegal pair.

\subsection{Turing Machine Initialization}

We have already described a way to enforce that there is one control state
on Tracks 4 and 5. The particle states which correspond to
Turing Machine states will be the control states for Tracks 4 and 5.
There is some designated start state $\stateS$ for each Turing Machine
and we want to ensure that the Turing Machine starts with the head in the
leftmost location in state \stateS.
Similar to the non-reflection case, we make any state illegal in which the Track 1
state is $\arrRAzero$ or $\arrRBzero$ and the Track 4 state is anything
except \stateS\ or \blankR.
As described above, we have already enforced that there is one state corresponding
to the head location.
Finally, we make any pair illegal in which one particle is in state $\stateLine$
and the other particle has a Track 1 state of $\arrRAzero$ and the Track 4 state
is not \stateS.
This ensures that the unique head location is at the left end of the chain.
The same is done for Track 5.

\subsection{Turing Machine Implementation}

All the transitions are expended as for Track 2 to include the A-version and
the B-version of the non-control states.
Transitions involving \rightend\ and \leftend\ are replaced by rules which use
\stateLine\ and the appropriate control particle for Track 1 which will
determine whether the other particle is to the left or the right of the
\stateLine.

All the transition rules for the execution of the Turing Machines are governed
by the direction in which the $\arrLA$ and $\arrLB$ control particles move.
We can add the reflection of each of these rules and since the direction
of the control particle is fixed, only the correct set of rules will apply.

\section{The Infinite Chain}
\label{sec:infinite}

One common approach to studying translationally-invariant systems in physics is to take the
limit of the number of particles going to $\infty$.  In this case, we have to alter our approach slightly in
order to define a sensible notion of complexity.  For one thing, we have to change the quantity we
wish to evaluate.  In general, the ground state energy will be infinite when there are
infinitely many particles.  Instead, we should study the energy density in the ground state:
the energy per particle.

There is an additional complication. Throughout the rest of the paper, we have fixed a single Hamiltonian
term and used it for all values of $N$, but if we were to do that for the infinite chain, the desired answer (the
ground state energy) would be just a single number, and there would be no language for us to study the
complexity of.

The solution is to vary the Hamiltonian term. We still constrain it to interact two particles (i.e., act on
a $d^2$-dimensional Hilbert space), but now we let the entries of the matrix vary. In particular, if we specify
terms in the Hamiltonian to $n = \Theta(\log N)$ bits of precision, we can cause the ground state of the infinite
chain to break up into segments each of size $N$. Within a single segment, the Hamiltonian will act like the
usual 1D Hamiltonian discussed in section~\ref{sec:quantum}.

One might be concerned that the problem is not well-defined on an infinite chain.  After all, the hardness
constructions used in this paper cause the ground state to change considerably as $N$ varies, suggesting that
there is no well-defined limit as $N \rightarrow \infty$.  This concern is largely addressed by separating
$N$ and the number of particles, since the ground state is now the tensor product of segments of finite size,
each of which is completely well-defined.  To be even more careful, one can take a finite number of particles $M \gg N$,
using either periodic or open boundary conditions.  If we write down the restriction to $M$ particles of the
infinite ground state considered below, there is a constant-size correction to the energy due to the boundary
conditions.  However, we wish to determine the energy \emph{per particle}, and the correction to that is only $O(1/M)$.
Thus, in the limit $M \rightarrow \infty$, the energy per particle of the ground state is well-defined.

The first step is to alter the Hamiltonian for an $N$-particle chain slightly so that
we can assume that if a chain of length $N$ has a zero energy state in $\calS_{br}$, then
it must be the case that $N=3 \bmod 4$.
This can be achieved by having an additional track with states A, B, C and D.
We will use illegal pairs to enforce that this additional track has states
that are a substring of $(ABCD)^*$. Then we add illegal pairs to enforce that
only an A can be adjacent to a \leftend\ or a \rightend. Therefore, in order to
avoid additional cost, the additional track must be of the form
$\leftend A (BCDA)^* \rightend$. In order to achieve the same expressive power
as we originally had, the verifier Turing Machine $V$ will begin by erasing
the two least significant bits of its input (the count from $M_{BC}$).
Thus, the new construction will yield a state with energy $\epsilon$ on a chain of
length $N$ if and only if $N = 3 \bmod 4$ and there is a quantum witness
which causes the verifier to accept a binary
representation of $ (N-3)/4 $ with probability $(1 - \epsilon)$
in $N-2-x$ steps, where $x$ is the number of steps
required for the subroutine which deletes the last two bits of the input.
$x$ is approximately $2 \log N$.
If either $N \neq 3 \bmod 4$ or every witness causes $V$ to reject with
high probability on
input $(N-3)/4 $, the ground state energy will be at least $\Omega(1/N^3)$.
For the remainder of this section, we will assume  that $N = 3 \bmod 4$.

Now we address the case of the infinite chain in earnest.
For the next step, we make a second small change to the set of illegal pairs to allow the pair \rightend \leftend.
\begin{definition}
For a particular state, the sequence of sites extending
from a  $\leftend$ site through the next  $\rightend$ site is
a \emph{segment}.
\end{definition}
The set of bracketed and well-formed
states is exactly the same as it was for the finite chain
except that we can now have
more than one segment along the chain.
If a standard basis state contains no illegal pairs,
there must be a control symbol on Tracks 1 and 2 between \leftend\ and
\rightend\ which means we cannot have segments of  length 2.
Also, every occurrence of $\rightend$
has a $\leftend$ to its immediate right and every occurrence of $\leftend$
has a $\rightend$ to its immediate left.
In addition, the illegal pairs exclude the possibility of sequences of sites containing two \leftend\ sites with no \rightend\ in between, or sequences with two \rightend\ sites without a \leftend.  We therefore get the following lemma:
\begin{lemma}
A standard basis state with no illegal pairs either contains no $\leftend$ or $\rightend$ sites anywhere in the chain, or it can be divided up into valid segments with no extra sites in between segments.
\end{lemma}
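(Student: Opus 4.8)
The plan is to read the decomposition straight off the structural facts just listed about where $\leftend$ and $\rightend$ sites can sit, and to do essentially no new work beyond combining them. First I would split into the two alternatives of the statement. If the chain contains no $\leftend$ and no $\rightend$ site anywhere, we are in the first alternative and there is nothing further to prove. Otherwise the chain contains at least one boundary marker; since every $\rightend$ has a $\leftend$ immediately to its right, I may as well assume it contains a $\leftend$, say at site $j$, and then the fact that every $\leftend$ has a $\rightend$ immediately to its left puts a $\rightend$ at site $j-1$.

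Next I would walk along the chain in both directions starting from this adjacent pair. Iterating the two adjacency facts (each $\leftend$ is preceded by a $\rightend$, each $\rightend$ is followed by a $\leftend$) shows that the boundary markers occur as an alternating sequence $\ldots\,\rightend\,\leftend\,\cdots\,\rightend\,\leftend\,\cdots$ in which every $\rightend$ is immediately followed by a $\leftend$ and only non-marker sites appear between a $\leftend$ and the next $\rightend$. The exclusion of two $\leftend$ sites with no $\rightend$ between them, and of two $\rightend$ sites with no $\leftend$ between them, then guarantees that between a $\leftend$ and the next $\rightend$ there is no further marker, so those sites form a segment in the sense of the definition; and since each $\rightend$ is immediately followed by a $\leftend$, there are no sites left over between consecutive segments. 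That yields the partition claimed in the second alternative.

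The remaining step is to check that each segment obtained this way is \emph{valid}, i.e.\ that its interior carries a well-formed clock state with a single control particle on each of Tracks $1$, $2$, $4$, $5$, exactly as in the finite-chain analysis of Section~\ref{sec:quantum}. Here I would invoke that all Hamiltonian terms are conditions on pairs of adjacent particles and that the only modification made for the infinite chain was to legalize the pair $\rightend\,\leftend$, which occurs solely at the junction between two segments. Hence the restriction of a no-illegal-pair state to a single segment satisfies precisely the same illegal-pair constraints as a no-illegal-pair state on a finite chain of that length, so the earlier characterization of well-formed segments applies verbatim --- including the observation that a segment cannot have length $2$, since a control symbol must appear on Tracks $1$ and $2$ strictly between the $\leftend$ and the $\rightend$.

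I expect almost all of the work to be routine bookkeeping with the adjacency facts; the one point that deserves a moment's care --- and the main, though modest, obstacle --- is the last one, namely confirming that no constraint internal to a segment is disturbed by the presence of other segments elsewhere on the chain. This follows from the two-particle (local) nature of every term together with the fact that the newly legal pair $\rightend\,\leftend$ lives only at segment boundaries, so nothing inside a segment ever ``sees'' a neighboring segment.
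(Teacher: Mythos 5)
Your proposal is correct and follows essentially the same route as the paper, which does not write out a separate proof but derives the lemma directly from the same three facts you use (every \leftend\ has a \rightend\ immediately to its left, every \rightend\ has a \leftend\ immediately to its right, and the illegal pairs excluding two \leftend\ or two \rightend\ sites without the other marker in between). Your extra step checking that each segment's interior obeys the same two-particle constraints as a finite bracketed chain (so ``valid'' segments cannot, e.g., have length $2$) is consistent with the paper's usage and is already implicit in its statement that the bracketed well-formed states are the same as in the finite-chain case.
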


Let $H_{TM}$ be the resulting two-particle term that includes illegal pairs and
transition rules. We omit for now the term which penalizes for a rejecting computation.
A state thus has zero energy for $H_{TM}$ if it corresponds to a
correctly executed computation that begins with the correct initial
configuration regardless of whether it accepts or not.
We also omit the terms for the boundary conditions described for the finite chain
and cycle. Instead we are going to add a new term $H_{size,N}$ which will be energetically favorable
to segments of length $N$.

Let $\calS$ be the subspace spanned by all well-formed states in the standard basis which have segments and cannot evolve via forward or backwards transitions to a state which does have an illegal pair.  Lemmas~\ref{lem:well-formed} and \ref{lem:clairvoyance} still apply since we have not changed the transition rules.
Thus, $\calS$ is the set of states which can be obtained
by starting each segment in the correct initial configuration (with an arbitrary
quantum state on the witness tape) and applying any number of transition
rules to each segment.
Since the transition rules do not alter the segments,
$H_{TM}$ is closed over $\calS$ as it was for the chain.
The final Hamiltonian $H$
will be the sum of $H_{TM}$ and a set of terms which
are all diagonal in the standard basis, which means that
$\calS$ will be closed under $H$ as well.
We will study $H|_{\calS}$.

We will add in another Hamiltonian $H_{size,N}$ that will
 be designed to be energetically favorable to segments for some specific size $N$.
We first construct a Hamiltonian $H$ of the form $H_{TM} + H_{size,N} /p(N)$
for some polynomial in $N$. In the final Hamiltonian, there will also be a term
which penalizes rejecting computations.
$H_{size,N}$ will have a fixed form (i.e., constant size to
specify) except  there will be several
coefficients which are inverse polynomials in $N$ and therefore require $O( \log N)$
bits to specify.

Since all two-particle terms are zero on the pair $\rightend \leftend$,
we can omit the two-particle terms which span two segments when considering
$H|_{\calS}$. Now $H$ can be divided into a sum of
terms, each of which acts on particles entirely within a segment.
Let $H^i$ be the sum of the terms which act on particles within segment $i$.
Let $l$ denote the length of the segment.
We can define $H^i_{size}$ and $H^i_{TM}$ similarly.
An eigenstate of $H$ in $\calS$ is then a tensor product of eigenstates
of each $H^i$ acting on the particles in segment $i$. The energy is the sum of the energies
of each $H^i$ on their corresponding eigenstate. The eigenstates for a segment
of length $l$ are exactly the same as the bracketed eigenstates for a chain of the same length.
We know that if $l \neq 3 \bmod 4$, then the ground state has energy at
least $\Omega(1/l^3)$.

We are now ready to define the final component of $H$.
Define $T_l$ to be the number of clock states for a finite chain
or segment of length $l$.
We determined in section~\ref{sec:quantum} that $T_l = 4(l-2)^2$.
Note that the symbol  \arrRzero\ appears in exactly $l-2$ clock states.
\begin{equation}
H_{size,N} = \frac{1}{N} I - 2
\ketbra{\leftend}{\leftend} + \frac{T_N}{N-2} \left( \ketbra{ \arrRzero }{ \arrRzero } \right).
\end{equation}
We will analyze the ground state energy of a segment as a function of its length.
We will need
to use the Projection Lemma from \cite{KempeKitaevRegev06}
which will allow us to focus on the
ground space of $H^i_{TM}$. Define $\ket{\phi_Y^l}$ to be the state
that corresponds to the uniform superposition of all states that can
be obtained by applying forward transition rules to the correct initial
state with quantum state $Y$ on the witness tape.
The ground space of $H^i_{TM}$ is within the span of all the $\ket{\phi_Y^l}$.
Note that $ \bra{\phi_Y^l} H^i_{size,N} \ket{\phi_Y^l} $ is the same
for all $Y$ since $H_{size,N}$ depends only on the clock tracks.

\begin{lemma}
\label{lem:proj}
\cite{KempeKitaevRegev06}
Let $H = H_1 + H_2$ be the sum of two Hamiltonians acting on a Hilbert space
$\calH = \calT + \calT^{\perp}$. The Hamiltonian $H_2$ is such that
$\calT$ is a zero eigenspace for $H_2$ and the eigenvectors in $\calT^{\perp}$
have value at least $J > 2 \lVert H_1 \rVert$. Then
$$\lambda(H_1 |_{\calT}) - \frac{ \lVert H_1 \rVert^2}{J - 2 \lVert H_1 \rVert}
\le \lambda(H) \le \lambda(H_1 |_{\calT}),$$
where $\lambda(H)$ is the ground energy of $H$.
\end{lemma}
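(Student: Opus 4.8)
\medskip
\noindent\textbf{Proof sketch (plan).}
The plan is to prove the two displayed inequalities separately. The upper bound $\lambda(H)\le\lambda(H_1|_{\calT})$ is the easy direction: I would take a unit vector $\ket{\eta}\in\calT$ achieving $\bra{\eta}H_1\ket{\eta}=\lambda(H_1|_{\calT})$, note that $H_2\ket{\eta}=0$, and conclude $\bra{\eta}H\ket{\eta}=\lambda(H_1|_{\calT})$, so the ground energy of $H$ cannot exceed this value. All the content is in the lower bound.

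For the lower bound, I would let $\ket{\psi}$ be a unit ground state of $H$ and split it along $\calH=\calT\oplus\calT^{\perp}$ as $\ket{\psi}=\cos\vartheta\,\ket{\alpha}+\sin\vartheta\,\ket{\beta}$, with unit vectors $\ket{\alpha}\in\calT$ and $\ket{\beta}\in\calT^{\perp}$ and $\vartheta\in[0,\pi/2]$. Because $\calT$ is an eigenspace of the Hermitian $H_2$, the complement $\calT^{\perp}$ is $H_2$-invariant, so the cross terms drop out of $\bra{\psi}H_2\ket{\psi}$ and $\bra{\psi}H_2\ket{\psi}=\sin^2\vartheta\,\bra{\beta}H_2\ket{\beta}\ge J\sin^2\vartheta$, using that every eigenvalue of $H_2$ on $\calT^{\perp}$ is at least $J$. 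Then I would expand $\bra{\psi}H_1\ket{\psi}$ into its diagonal and off-diagonal pieces and bound term by term via $\bra{\alpha}H_1\ket{\alpha}\ge\lambda(H_1|_{\calT})$, $\bra{\beta}H_1\ket{\beta}\ge-\lVert H_1\rVert$ and $|\mathrm{Re}\,\bra{\alpha}H_1\ket{\beta}|\le\lVert H_1\rVert$, and combine with $\cos\vartheta\sin\vartheta\le\sqrt{s}$ (where $s=\sin^2\vartheta$) and $|\lambda(H_1|_{\calT})|\le\lVert H_1\rVert$ to arrive at
\begin{equation*}
\lambda(H)=\bra{\psi}H_1\ket{\psi}+\bra{\psi}H_2\ket{\psi} \ge \lambda(H_1|_{\calT})+s\,\bigl(J-2\lVert H_1\rVert\bigr)-2\sqrt{s}\,\lVert H_1\rVert.
\end{equation*}

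The final step is to minimize the right-hand side over $s\in[0,1]$. Writing $t=\sqrt{s}$, the function $t\mapsto t^{2}\bigl(J-2\lVert H_1\rVert\bigr)-2t\lVert H_1\rVert$ is an upward parabola since $J>2\lVert H_1\rVert$; its minimum over $t\ge0$ (hence also over $t\in[0,1]$) is $-\lVert H_1\rVert^{2}/(J-2\lVert H_1\rVert)$, attained at $t=\lVert H_1\rVert/(J-2\lVert H_1\rVert)$, which gives exactly the claimed inequality. I do not anticipate a real obstacle, as this is a short variational estimate; the one subtlety worth flagging is the off-diagonal term $\bra{\alpha}H_1\ket{\beta}$, which cannot be discarded and is precisely what produces the $-\lVert H_1\rVert^{2}/(J-2\lVert H_1\rVert)$ correction rather than a cleaner bound. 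It is also convenient to minimize over the full half-line $t\ge0$ instead of $[0,1]$, which only weakens the bound and removes any need for case analysis on the location of the minimizer.
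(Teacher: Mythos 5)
Your proof is correct: the easy variational upper bound, the decomposition $\ket{\psi}=\cos\vartheta\,\ket{\alpha}+\sin\vartheta\,\ket{\beta}$, the term-by-term bounds yielding $\lambda(H)\ge\lambda(H_1|_{\calT})+s(J-2\lVert H_1\rVert)-2\sqrt{s}\,\lVert H_1\rVert$, and the minimization of the parabola in $t=\sqrt{s}$ over the half-line all go through (relaxing to $t\ge 0$ only weakens the bound, so no case analysis is needed). Note that the paper itself gives no proof of this lemma---it is quoted from Kempe--Kitaev--Regev---and your sketch is essentially the standard argument from that reference, so there is nothing to reconcile with the paper's text.
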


\begin{coro}
\label{cor:proj}
There is a polynomial $p(N)$ such that $p(N)$ is $O(N^{7})$ and for any
segment of size $l \le 2N$ and
$H^i = H^i_{TM} + H^i_{size,N}/p(N)$,
$$p(N) \lambda(H^i|_{\calS}) \ge \bra{\phi_Y^l} H^i_{size,N} \ket{\phi_Y^l} - 1/2 N^2.$$
\end{coro}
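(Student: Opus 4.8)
The plan is to apply the Projection Lemma (Lemma~\ref{lem:proj}) on the Hilbert space $\calH=\calS$, taking $H_1=H^i_{size,N}/p(N)$ and $H_2=H^i_{TM}$. This is legitimate because $\calS$ is closed under both summands of $H^i$: the transition rules do not leave $\calS$ (and $H^i_{TM}$ is assembled from transition-rule and illegal-pair terms), while $H^i_{size,N}$ is diagonal in the standard basis and $\calS$ is spanned by standard basis states. First I would identify the zero eigenspace $\calT$ of $H_2|_\calS$: since the transition rules, restricted to the single segment, trace out one directed path per witness, $\calT$ is exactly the span of the uniform-superposition states $\ket{\phi_Y^l}$ over all witness-tape states $Y$. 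By Lemmas~\ref{lem:well-formed}, \ref{lem:clairvoyance} and \ref{lem:gap} applied to the segment (which is indistinguishable from a finite chain of the same length $l$), every eigenvector of $H_2$ in the orthogonal complement $\calT^\perp$ of $\calT$ inside $\calS$ has energy at least $J$ for some $J=\Omega(1/l^3)$; since $l\le 2N$, this gives $J=\Omega(1/N^3)$, uniformly in $l$.

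Next I would evaluate $\lambda(H_1|_\calT)$. Because $H_{size,N}$ acts only on the clock tracks and is diagonal in the standard basis, it multiplies each standard basis state appearing in $\ket{\phi_Y^l}$ by a coefficient depending only on that state's clock configuration; summing over the sequence of clock states gives $H^i_{size,N}\ket{\phi_Y^l}=\bra{\phi_Y^l}H^i_{size,N}\ket{\phi_Y^l}\,\ket{\phi_Y^l}$ (with $\ket{\phi_Y^l}$ normalized), and, as already noted, this eigenvalue is the same for every $Y$. Since the $\ket{\phi_Y^l}$ for distinct $Y$ are mutually orthogonal and the transitions are norm preserving, $H^i_{size,N}|_\calT$ is a scalar multiple of the identity, so $\lambda(H_1|_\calT)=\bra{\phi_Y^l}H^i_{size,N}\ket{\phi_Y^l}/p(N)$.

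Then I would bound $\lVert H_1|_\calS\rVert$. On any standard basis state of a segment of length $l\le 2N$, the $\tfrac{1}{N}I$ term of $H^i_{size,N}$ contributes $l/N\le 2$, the single $\leftend$ site contributes $-2$, and the at most one $\arrRzero$ site contributes $T_N/(N-2)=4(N-2)$; hence $\lVert H^i_{size,N}|_\calS\rVert=O(N)$ and $\lVert H_1|_\calS\rVert=O(N)/p(N)$. Taking $p(N)=\Theta(N^7)$ with a large enough leading constant gives $2\lVert H_1|_\calS\rVert\le J/2$ (for which $p(N)=\Omega(N^4)$ already suffices), so $J-2\lVert H_1|_\calS\rVert\ge J/2=\Omega(1/N^3)$, and then the Lemma~\ref{lem:proj} error term satisfies $\lVert H_1|_\calS\rVert^2/(J-2\lVert H_1|_\calS\rVert)=O(N^5)/p(N)^2\le 1/(2N^2 p(N))$. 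Feeding these into the lower bound $\lambda(H^i|_\calS)\ge\lambda(H_1|_\calT)-\lVert H_1|_\calS\rVert^2/(J-2\lVert H_1|_\calS\rVert)$ of Lemma~\ref{lem:proj} and multiplying through by $p(N)$ gives exactly $p(N)\lambda(H^i|_\calS)\ge\bra{\phi_Y^l}H^i_{size,N}\ket{\phi_Y^l}-1/(2N^2)$, as required.

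The main obstacle is the simultaneous bookkeeping needed to keep $p(N)=O(N^7)$: one must verify that the Projection-Lemma error term, which after the gap $\Omega(1/N^3)$ enters the denominator scales like $N^5/p(N)^2$, is at most $1/(2N^2 p(N))$; that the hypothesis $2\lVert H_1|_\calS\rVert<J$ of Lemma~\ref{lem:proj} holds (needing only $p(N)=\Omega(N^4)$); and that the constants implicit in the $\Omega$ for the spectral gap of $H^i_{TM}$ and in $\lVert H^i_{size,N}|_\calS\rVert=O(N)$ are uniform over all segment lengths $l\le 2N$. Everything else is a direct substitution into Lemma~\ref{lem:proj}.
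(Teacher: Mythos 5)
Your proposal is correct and is essentially the paper's own argument: the paper applies Lemma~\ref{lem:proj} to $p(N)H^i$ with $H_2=p(N)H^i_{TM}$ and $H_1=H^i_{size,N}$, which is just your splitting rescaled by $p(N)$, and it uses the same ingredients ($\lVert H^i_{size,N}\rVert=O(N)$ on $\calS$, spectral gap $\Omega(1/N^3)$ for $H^i_{TM}$, hence $p(N)=O(N^7)$). One small imprecision: $\ket{\phi_Y^l}$ is not an eigenvector of $H^i_{size,N}$ (the $\arrRzero$ term weights different clock states differently), but what you actually need and what your orthogonality/norm-preservation remarks do establish is that the compression $P_{\calT}H^i_{size,N}P_{\calT}$ is the scalar $\bra{\phi_Y^l}H^i_{size,N}\ket{\phi_Y^l}$ times the identity on $\calT$, so the conclusion stands.
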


\begin{proof}
We use the projection lemma with
$H_2 = p(N) H^i_{TM}$ and $H_1 = H^i_{size, N}$. Note that $H_1$ need not
be positive, although it does need to be positive on
$\calT$ in order to yield a non-trivial lower bound.
We need to establish that $\lVert H^i_{size} \rVert = O(N)$.
Since $l \le 2N$, the first term is $O(1)$.
The Hilbert space $\calS$ is the set of all well-formed, bracketed states for that
segment, so there can be at most one \arrRzero\ site.
Thus the second two terms in $H^i_{size,N}$
are at most $T_N/N$ for any state in $\calS$,
which is also $O(N)$.
The spectral gap of $H^i_{TM}$ is $\Omega(1/N^3)$, so we can choose
 $p(N)$ so that $p(N)$ is $O(N^{7})$
and $J$ (the spectral gap of $p(N) H^i_{TM}$) is at least
$2 N^2 \lVert H_1 \rVert^2 + 2 \lVert H_1 \rVert$. Using Lemma \ref{lem:proj}, we
can lower bound $p(N) \lambda(H^i|_{\calS})$ by $\bra{\phi_Y^l} H^i_{size,N} \ket{\phi_Y^l}   - 1/2 N^2$.
\end{proof}

Note that we are not able to use the projection lemma for very large $l$ because the
$\Omega(1/N^3)$ gap will not be large enough. In the lemma below, we
determine the ground state energy of a segment
as a function of its length. Large $l$ (greater than $2N$, including infinite $l$) are dealt with
separately with an argument that does not require the projection lemma.

\begin{lemma}
\label{lem:size}
The operator $H^i$ acting on the $l$
particles of segment $i$
restricted to well-formed bracketed states will
have ground state energy at most $0$ and spectral gap $\Omega(1/N^4)$ per particle for $l=N$.
The ground state
energy is $\Omega(1/N^9)$ per particle for any other value of $l$.
\end{lemma}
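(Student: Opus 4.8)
The plan is to reduce the analysis of $H^i = H^i_{TM} + H^i_{size,N}/p(N)$ on a single length-$l$ segment to understanding a single real function of $l$, namely the energy that $H_{size,N}$ contributes to the clock-superposition states. First I would compute $\bra{\phi_Y^l} H^i_{size,N}\ket{\phi_Y^l}$. Since $H_{size,N}$ is diagonal in the standard basis, depends only on the clock tracks, and is unchanged by the Turing-machine evolution, this expectation does not depend on $Y$; summing the contribution $1/N$ from each of the $l$ particles, $-2$ from the unique $\leftend$ site, and $\tfrac{T_N}{N-2}$ times the fraction $\tfrac{l-2}{T_l}$ of clock states containing an $\arrRzero$ control symbol (with $T_l=4(l-2)^2$), it equals
\[
g(l) \;:=\; \frac{l}{N} - 2 + \frac{N-2}{l-2}.
\]
The same computation shows that on $\mathrm{span}\{\ket{\phi_Y^l}\}$ the operator $H^i_{size,N}$ acts as the scalar $g(l)$, so the variational principle (testing a vector in this subspace, on which $H^i_{TM}=0$) gives $\lambda(H^i)\le g(l)/p(N)$ for every $l$. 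For $l=N$ this already yields ground-state energy at most $g(N)/p(N)=0$.

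Next I would study $g(l)$ as a function of $l$. It is convex, satisfies $g(N)=0$, and clearing denominators shows $g(l)=0$ is equivalent to $(l-N)(l-N-2)=0$, so the only roots are $l=N$ and $l=N+2$; consequently $g(l)<0$ only on the open interval $(N,N+2)$ and $g(l)\ge 0$ otherwise. This is precisely where the reduction to $N\equiv 3\bmod 4$ segments pays off: a valid segment must have $l\equiv 3\bmod 4$ (the $(ABCD)^*$ track forces this), and since $N\equiv 3\bmod 4$, none of $N\pm 1,N\pm 2,N\pm 3$ is admissible, so the nearest admissible length other than $N$ is $N\pm 4$. A short calculation gives $g(N+4)=\tfrac{8}{N(N+2)}$ and $g(N-4)=\tfrac{24}{N(N-6)}$, both $\Theta(1/N^2)$, and by convexity $g(l)=\Omega(1/N^2)$ for every admissible $l\ne N$.

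For the lower bounds I would split into cases. When $l=N$: the $\Omega(1/N^3)$ gap of $H^i_{TM}$ for a length-$N$ chain (Lemma~\ref{lem:gap}, with the $(ABCD)^*$ track pinned to its unique valid configuration) together with $\lVert H^i_{size,N}/p(N)\rVert$ being $O(N)/p(N)$ on the relevant subspace — recall $p(N)=O(N^7)$ and $\lVert H^i_{size,N}\rVert=O(N)$ — gives, by a standard first-order perturbation bound, a surviving spectral gap $\Omega(1/N^3)$, i.e. $\Omega(1/N^4)$ per particle. When $l\ne N$ and $l\le 2N$ (so necessarily $l\equiv 3\bmod 4$): Corollary~\ref{cor:proj} gives $p(N)\,\lambda(H^i)\ge g(l)-\tfrac1{2N^2}=\Omega(1/N^2)$, hence a ground-state energy of the claimed $1/\mathrm{poly}(N)$ order. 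When $l>2N$: the projection lemma is unavailable, but Weyl's inequality $\lambda(A+B)\ge\lambda(A)+\lambda(B)$ with $A=H^i_{TM}\succeq 0$ and $B=H^i_{size,N}/p(N)$ diagonal gives $\lambda(H^i)\ge \tfrac1{p(N)}\big(\tfrac{l}{N}-2\big)\ge \tfrac1{N\,p(N)}$ (since $l>2N$ is an integer), again $1/\mathrm{poly}(N)$. Finally, a segment with $l\not\equiv 3\bmod 4$ carries an illegal pair on the static $(ABCD)^*$ track in every configuration, so $H^i_{TM}\succeq I$ there and its energy is $\Omega(1)$.

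The main obstacle is the behaviour of $g(l)$ near its minimum: $g$ actually dips below zero immediately past $l=N$, so without a restriction on the allowed segment lengths there would be lengths $l\ne N$ with strictly lower energy than $l=N$ and hence no separation. Confining $l$ to $l\equiv 3\bmod 4$ is exactly what forces admissible $l\ne N$ to sit at distance $\ge 4$ from $N$, where convexity makes $g(l)=\Omega(1/N^2)$; this is the gap that must survive division by $p(N)=O(N^7)$ to produce the promised $1/\mathrm{poly}(N)$ energy. A secondary point is that the clean projection-lemma argument of Corollary~\ref{cor:proj} only covers $l\le 2N$, so the regime $l>2N$ (including the genuinely infinite segment as a limit) has to be handled separately by the cruder operator inequality, using that $H^i_{size,N}$ is diagonal with every admissible diagonal entry at least $\tfrac{l}{N}-2$.
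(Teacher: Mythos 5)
Your proof follows essentially the same route as the paper's: the same expectation $g(l)=\frac{l}{N}-2+\frac{N-2}{l-2}$ of $H^i_{size,N}$ on the clock-superposition states $\ket{\phi_Y^l}$ (giving the variational upper bound $0$ at $l=N$), the projection lemma (Corollary~\ref{cor:proj}) for $l\le 2N$, $l\neq N$, the diagonal bound $\frac{l}{N}-2$ for $l>2N$, and the $\Omega(1/N^3)$ gap of $H^i_{TM}$ against $\lVert H^i_{size,N}\rVert/p(N)$ at $l=N$; your factorization $g(l)=\frac{(l-N)(l-N-2)}{N(l-2)}$ with the convexity remark is just a tidier packaging of the paper's two direct estimates, and it nicely explains why the $l\equiv 3\bmod 4$ restriction is needed. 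One small fix: in the $l>2N$ case the lemma's claim is per particle, so keep the intermediate bound $\bigl(\frac{l}{N}-2\bigr)/p(N)$ and divide by $l$ to get $\bigl(\frac{1}{N}-\frac{2}{l}\bigr)/p(N)=\Omega\bigl(1/(N^2 p(N))\bigr)$ per particle uniformly in $l$ (including the infinite segment), rather than weakening to the per-segment bound $1/(N p(N))$, which by itself does not control the energy density when $l$ is arbitrarily large.
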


\begin{proof}
We have assumed throughout that $N = 3 \bmod 4$.
For any $l$ such that $l \neq 3 \bmod 4$, the ground energy is at least $\Omega(1)$
which is at least $\Omega(1/N)$ per particle. We will restrict our
attention therefore to $l$ such that $l = 3 \bmod 4$, so for $l \neq n$,
we know that $|l-n| \ge 4$.

We consider four different cases based on the size of the segment $l$.
\begin{description}
\item { $\mathbf {l=N}$:}

Consider a state $\ket{\phi_Y^N}$.  Then $\bra{\phi_Y^N} H^i_{TM} \ket{\phi_Y^N} = 0$.
Recall that $\ket{\phi_Y^N}$ is a uniform superposition of states.
There are $T_N$ distinct configurations represented in the support of
$\ket{\phi_Y^N}$. Exactly $N-2$ of these contain a \arrRzero\ site.
All of them contain one particle in state \leftend.
Therefore
\begin{equation}
\bra{\phi_Y^N} H^i_{size,N} \ket{\phi_Y^N} = \frac N N - 2 +
\frac{T_N}{(N-2)} \frac{(N-2)}{T_N} = 0.
\end{equation}
Any state $\ket{\psi}$ that is orthogonal to the subspace
spanned by the $\ket{\phi_Y^N}$
and is also in the subspace spanned by the well-formed
states for segments of length $N$
will have
$\bra{\psi} H_{TM} \ket{\psi} \ge \Omega(1/N^3)$ and
$\bra{\psi} H_{size}/p(N) \ket{\psi} \ge -1/N^7$. Thus, the spectral gap of $H^i$ will be $\Omega(1/N^4)$
per particle.
\item { $\mathbf {l > 2N}$:}

Let $\psi$ be a state in the standard basis that is well-formed, bracketed
and has length $l$.
We will only lower bound $\bra{\psi} H^i_{size}/p(N) \ket{\psi}$. Since
$H^i_{TM}$ is non-negative, the lower bound will hold for all
of $H^i$. Furthermore, we will omit the last term in $H_{size}$ because this
only adds to the energy.
Every standard basis state in a bracketed well-formed segment of length $l$ has
exactly one occurrence of $\leftend$. Therefore the energy of a segment of length
$l$ will be at least $(l/N - 2)/p(N)$. This is at least $(1/N-2/l)/p(N)$ per particle. Since $l \ge 2N+1$, this will be at least
$\Omega(1/N^8)$ per particle. Note that this holds for infinite $l$ as well.
\item {$\mathbf {2N \ge l > n}$:}

Since we can assume that $l = 3 \bmod 4$, we know that $l \ge N+4$.
We
 will use the projection lemma for this case and show that
$\bra{\phi_Y^l}H^i_{size}\ket{\phi_Y^l} \ge 1/N^2$, which by
Corollary \ref{cor:proj}
will be enough to lower bound $\lambda(H^i)$ by
$1/(2 N^2p(N))$.
\begin{eqnarray*}
\bra{\phi_Y^l}H^i_{size}\ket{\phi_Y^l} & = &
(l-N)\left( \frac 1 {N} -\frac{1}{l-2} \right) \\
& \ge & (l-N)\left( \frac 1 {N} -\frac{1}{N+2} \right) \ge
\frac{1}{N^2}\\
\end{eqnarray*}
\item {$\mathbf {l < N}$:}

Since we can assume that $l = 3 \bmod 4$, we know that $l \le N-4$.
Now we will use the projection lemma for this case and show that
$\bra{\phi_Y^l}H^i_{size}\ket{\phi_Y^l} \ge 1/N^2$, which will be enough to lower bound $\lambda(H)$ by
$1/(2 N^2 p(N))$.
\begin{eqnarray*}
\bra{\phi_Y^l}H^i_{size}\ket{\phi_Y^l} & = &
(N-l)\left( \frac 1 {l-2} -\frac{1}{N} \right) \\
& \ge & (N-l)\left( \frac 1 {N-2} -\frac{1}{N} \right) \ge \frac{4}{N(N-2)} \ge
\frac{1}{N^2}\\
\end{eqnarray*}
\end{description}
\end{proof}

Finally, we add a term which adds energy $1$ to any state which is
not an accepting computation.
If $N = 3 \bmod 4$ and there is a quantum witness that causes the verifier to
accept with probability $(1-\epsilon)$, then there is a state with energy at most $\epsilon/N^3$ per particle.
This is the state which corresponds to correct computations, using the good witness,
within consecutive segments
of length $N$. We can assume the acceptance probability has been amplified sufficiently to make $\epsilon$
small --- smaller than $1/p(N)$ is sufficient.
Note that the ground space has an $N$-fold degeneracy which
corresponds to translations of this state.

Now suppose that there is no quantum witness which
causes the verifier to accept on input $(N-3)/4$ with probability greater than $\epsilon$.
Consider a possibly infinite set of locations for the \leftend\ and \rightend\ particles
and the subspace spanned by the standard basis states corresponding to  these assignments.
We will analyze the subspace spanned by these standard basis states and argue that the
ground state energy of the Hamiltonian restricted to each such subspace is
at least $\Omega(1/N^3)$ per particle. Therefore, we assume now that the locations of
the particles in the \leftend\ or \rightend\ state are fixed.
We can partition the infinite chain into maximal pieces whose leftmost particle is
in state \leftend\ and such that the piece has no other particles in state \leftend.
Note that it may be that the leftmost piece extends infinitely to the left.
We know that any piece that is longer than $2N$ incurs a cost of at least
$\Omega(1/N^8)$ per particle from $H_{size,N}$ alone. This follows from the same analysis
for proper segments of length more than $2N$ given above.
Now suppose that a piece has length at most $2N$.
If it is not a proper segment (i.e., beginning with a \leftend, ending with a \rightend,
and no \leftend\ or \rightend\ particles in between), then there must be at least one
illegal pair. The illegal pair will cause an energy cost of $1$. Since the energy from $H_{size,N}/p(N)$
will be at least $-2/p(N)$, the total energy per particle will be $\Omega(1)$.
We also know that if the piece is a proper segment and
the length of the segment is some $l \neq N$
or $l=N$ and $N \neq 3 \bmod 4$, then there will be a cost of
at least $\Omega(1/N^9)$ per particle on that segment.
Finally, even if $l=N$ and $N = 3 \bmod 4$, every quantum witness will cause
the verifier
to reject on input $(N-3)/4$ with probability at least $(1-\epsilon)$.
When $\epsilon$ is small enough (polynomially small is sufficient),
using standard $\QMA$-completeness proof techniques (see, e.g.~\cite{QMA1D}), we can show that there will be an energy penalty of at least $\Omega(1/T_N^2)$ for that segment,
which is $\Omega(1/N^5)$ per particle.

\section{Conclusion}

We have shown that a class of classical tiling problems and $1$-dimensional quantum Hamiltonian problems can be proven hard, even when the rules are
translationally-invariant and the only input is the size of the problem.  While this result was motivated by the desire to see if it could be hard to
find the ground state in some physically interesting system, it is true that the tiling problem and Hamiltonian problem for which we prove hardness are
not themselves particularly natural.  Still, given that very simple cellular automata can be universal, it seems quite possible that even some very
simple tiling and Hamiltonian problems are complete for $\NEXP$ and $\QMAEXP$ respectively.  Finding one would be very interesting.

More generally, one can ask which Hamiltonians are computationally hard and which are easy.  Our approach provides a framework to investigate the question, although our techniques are not powerful enough to answer it except in special cases.  Perhaps hard Hamiltonians are pervasive for particles of large enough dimension, or perhaps they are always outlying exceptional points.  The computational hardness structure of the space of Hamiltonians is clearly somewhat complicated.  For instance, for the infinite chain discussed in section~\ref{sec:infinite}, the hard class of Hamiltonians considered form a sequence (as $N \rightarrow \infty$) that converges to a Hamiltonian that is not itself hard: It lacks the $H_{size}$ term, so one can have an infinite chain with no segments or control particles, and the ground state energy is $0$.

It would also be nice to understand the finite-temperature behavior of these systems better.  The hard constructions should have a very complicated landscape of energy states, and the actual ground states are highly sensitive to the exact number of particles.  This suggests that even at finite temperature, these systems should have complicated dynamics and behave like a spin glass, but it is not completely clear.

There remain many additional variants of tiling and Hamiltonian problems that we have not studied.  Probably for most variations (e.g., triangular lattice instead of square lattice), we can expect similar results.  However, there are some borderline cases for which the answer is not clear.  The case of WEIGHTED \tiling\ with reflection symmetry and constant allowed cost remains open for the four-corners or open boundary conditions.  When we have translational invariance but no additional symmetry, \rTIH{2} is as hard as \rTIH{1}, but the construction for \rTIH{1} with reflection invariance does not generalize to \rTIH{2} with rotational invariance.  Can rotationally- and translationally-invariant Hamiltonians be hard in two or more dimensions?  Also, the hardness result for \ITIH\ uses quantum properties, and we do not have a similar classical result.

Another interesting avenue to pursue would be to apply a similar idea to other problems.  For instance, the game of go produces a $\PSPACE$-complete problem~\cite{papa95}.  However, the computational problem GO is defined by asking whether black can force a win given a particular board configuration as input.  However, there is no guarantee that these board configurations would appear in a regular game of go, which starts from a blank board.  A more natural problem arising from GO is to ask who wins with optimal play when starting from an empty $N \times N$ board.  As with our tiling and Hamiltonian problems, the only thing that varies is the size; the rules and initial configuration are fixed.  Thus, we would wish to show that this variant of GO is $\EXPSPACE$-complete.  Our techniques will not solve this problem, but at least our result points the way to ask the right question.

\section*{Acknowledgements}
We would like to thank John Preskill for pointing out the connection to $N$-REPRESENTABILITY, Cris Moore and Igor Pak for some information about
classical tiling problems, and Xiao-Gang Wen and Ignacio Cirac for suggesting some of the open problems mentioned in the conclusion.  D.G.\ was supported by Perimeter Institute for Theoretical Physics, CIFAR, and NSERC.  Research at Perimeter Institute is
supported by the Government of Canada through Industry Canada and by the Province of Ontario through the Ministry of Research \& Innovation.
S.I. was supported in part  by NSF Grant CCR-0514082.  Part of this work was done
while both authors were visiting the Institute for Quantum Information
at Caltech.


\end{document}